\documentclass[a4paper, titlepage, oneside, 11pt]{amsart}
\usepackage{amsaddr,amsmath,amssymb,amsfonts,amstext,amscd,cite,geometry,graphicx,latexsym,mathptmx,xcolor,pstricks,mathrsfs,amsthm,enumitem,multicol}
\usepackage[utf8]{inputenc}
\usepackage{newunicodechar}
\usepackage[all]{xy}
\usepackage[colorlinks=true,linkcolor=blue]{hyperref}
\AtBeginDocument{\hypersetup{citecolor=blue}}

\numberwithin{equation}{section}
\allowdisplaybreaks
\addtolength{\textheight}{10pt}

\newtheorem{lem}{Lemma}[section]
\newtheorem{prop}[lem]{Proposition}
\newtheorem{thm}[lem]{Theorem}
\newtheorem{cor}[lem]{Corollary}

\newtheorem{df}[lem]{Definition}
\newtheorem{rem}[lem]{Remark}

\newtheorem{assu}[lem]{Assumption}

\def\R{{\mathbb{R}}}

\def\Z{{\mathbb{Z}}}

\newcommand{\BA}{\circle{4}}
\newcommand{\BB}{\circle*{4}}

\makeatletter
\renewcommand*{\eqref}[1]{%
  \hyperref[{#1}]{\textup{\tagform@{\ref*{#1}}}}%
}
\makeatother

\newcommand{\eqbox}[2]{
\bigskip
\framebox{\parbox{0.9\textwidth}{\textbf{#1}
#2}}
\bigskip}

\begin{document}

\title[Bi-infinite solutions for KdV- and Toda-type discrete integrable systems]{Bi-infinite solutions for KdV- and Toda-type\\discrete integrable systems based on path encodings}

\author[D.~A.~Croydon]{\vspace{-10pt}David A. Croydon}
\address{Research Institute for Mathematical Sciences, Kyoto University, Kyoto 606--8502, Japan} \email{croydon@kurims.kyoto-u.ac.jp}

\author[M.~Sasada]{\vspace{-10pt}Makiko Sasada}
\address{Graduate School of Mathematical Sciences, University of Tokyo, 3-8-1, Komaba, Meguro-ku, Tokyo, 153--8914, Japan}
\email{sasada@ms.u-tokyo.ac.jp}

\author[S.~Tsujimoto]{\vspace{-10pt}Satoshi Tsujimoto}
\address{Department of Applied Mathematics and Physics, Graduate School of Informatics, Kyoto University, Sakyo-ku, Kyoto 606--8501, Japan}
\email{tujimoto@i.kyoto-u.ac.jp}

\begin{abstract}
\vspace{10pt}
We define bi-infinite versions of four well-studied discrete integrable models, namely the ultra-discrete KdV equation, the discrete KdV equation, the ultra-discrete Toda equation, and the discrete Toda equation. For each equation, we show that there exists a unique solution to the initial value problem when the given data lies within a certain class, which includes the support of many shift ergodic measures. Our unified approach, which is also applicable to other integrable systems defined locally via lattice maps, involves the introduction of a path encoding (that is, a certain antiderivative) of the model configuration, for which we are able to describe the dynamics more generally than in previous work on finite size systems, periodic systems and semi-infinite systems. In particular, in each case we show that the behaviour of the system is characterized by a generalization of the classical `Pitman's transformation' of reflection in the past maximum, which is well-known to probabilists. The picture presented here also provides a means to identify a natural `carrier process' for configurations within the given class, and is convenient for checking that the systems we discuss are all-time reversible. Finally, we investigate links between the different systems, such as showing that bi-infinite all-time solutions for the ultra-discrete KdV (resp.\ Toda) equation may appear as ultra-discretizations of corresponding solutions for the discrete KdV (resp.\ Toda) equation.
\bigskip

\noindent
\emph{Data sharing not applicable to this article as no datasets were generated or analysed during the current study.}
\end{abstract}

\keywords{KdV equation, Toda lattice, discrete integrable system, Pitman's transformation}

\subjclass[2010]{37K10 (primary), 35Q53, 37K60 (secondary)}

\date{\today}

\maketitle

\setcounter{tocdepth}{3}
\tableofcontents
\newpage

\section{Introduction}

Two central and widely-studied examples of classical integrable systems are the Korteweg-de Vries (KdV) equation and the Toda lattice. These were introduced as models for shallow water waves \cite{KdV} and a one-dimensional crystal \cite{Toda}, respectively, and are described precisely by the following equations.

\eqbox{Korteweg-de Vries (KdV) equation}{\begin{equation}\nonumber
\partial_t u = 6u \partial_x u -\partial_{xxx} u,
\end{equation}
where $u=(u(x,t))_{x,t\in\mathbb{R}}$.}

\vspace{-10pt}
\eqbox{Toda lattice equation}{\begin{equation}\nonumber
\begin{cases}
\frac{d}{dt} p_n &=e^{-(q_n-q_{n-1})}-e^{-(q_{n+1}-q_n)},\\
\frac{d}{dt} q_n &=p_n,
\end{cases}
\end{equation}
for all $n\in\mathbb{Z}$, where $p_n=(p_n(t))_{t\in\mathbb{R}}$, $q_n=(q_n(t))_{t\in\mathbb{R}}$.}

\noindent
For each of these systems, an important question concerns the existence and uniqueness of solutions of the defining equations for given initial data. In the case of the KdV equation, many results in this direction have been established when the system is started from a smooth function that is periodic or decaying at infinity (see for instance \cite{BonaSmith, Kenig, Kishi}). Moreover, in recent years, significant progress has been made in understanding the Cauchy problem for more general initial data. Indeed, in \cite{Kotani}, Kotani solved the problem for a class of initial data that incorporates ergodic functions. And, in \cite{KMV}, Killip, Murphy and Visan constructed a solution of the KdV equation started from white noise, and further showed the invariance in distribution of this solution under the KdV dynamics. We note that there has also been extensive work in solving the KdV equation, including demonstrating the invariance of white noise, on the one-dimensional torus \cite{Q, Bourgain}. As for the Toda lattice equation, existence and uniqueness of the dynamics is standard for initial data that increase at infinity at a sub-exponential rate \cite{LL}. Consequently, one has that for random initial configurations whose law is spatially shift ergodic, the dynamics are almost-surely uniquely well-defined. Additionally, the dynamics are known to admit a class of distributionally invariant random configurations whose laws are given by generalized Gibbs measures \cite{SP}. The initial value problem has also been considered for variations of the model that include the periodic Toda lattice and the finite Toda lattice (see \cite{Krichever, Moser}).

The purpose of this paper is to develop a general approach for constructing solutions of the corresponding initial value problems for a variety of discretizations of the KdV equation and Toda lattice, namely the discrete KdV equation \cite{H:DE1}, the ultra-discrete KdV equation \cite{TTMS,TH}, the discrete Toda lattice \cite{H:DE2,HTI}, and the ultra-discrete Toda lattice \cite{TM1995,NTS}. (See equations \eqref{DKDV}, \eqref{UDKDV}, \eqref{DTODA}, and \eqref{UDTODA} below for precise definitions, and Figure \ref{dis} for a schematic diagram showing connections between these and the KdV and Toda lattice equations. Moreover, for a brief overview of the main contributions of this article to the study of these equations, see the first sentences of Theorems \ref{udkdvthm}, \ref{dkdvthm}, \ref{udtodathm} and \ref{dtodathm}.) Such discrete systems have been well-studied, in terms of providing numerical algorithms for simulating the continuous systems, as integrable systems in their own right (see \cite{IKT, TTeng} for some introductory reviews of material in this direction), and with respect to their applicability to other problems in numerical analysis (see \cite{NTSeps, Sogo}, for example). On the other hand, in the context of dynamical systems, these discrete models have not been well-analyzed compared to the original continuous ones. Specifically, the existence and the uniqueness of solutions for general initial data has not been explored. Rather, the literature has tended to focus on periodic configurations, or those that decay quickly at infinity, including those based on solitons. Whilst it is anyway of interest to explore solutions beyond these special cases as we do in the present article, an important motivation for considering bi-infinite solutions specifically is to provide a framework for studying the dynamics of the discrete systems started from random initial configurations. Indeed, it is natural to ask what the invariant measures of the models in question are, and in a follow-up work we present some progress in this direction (see \cite{CSirf}, and the survey article \cite{CSmsj}).

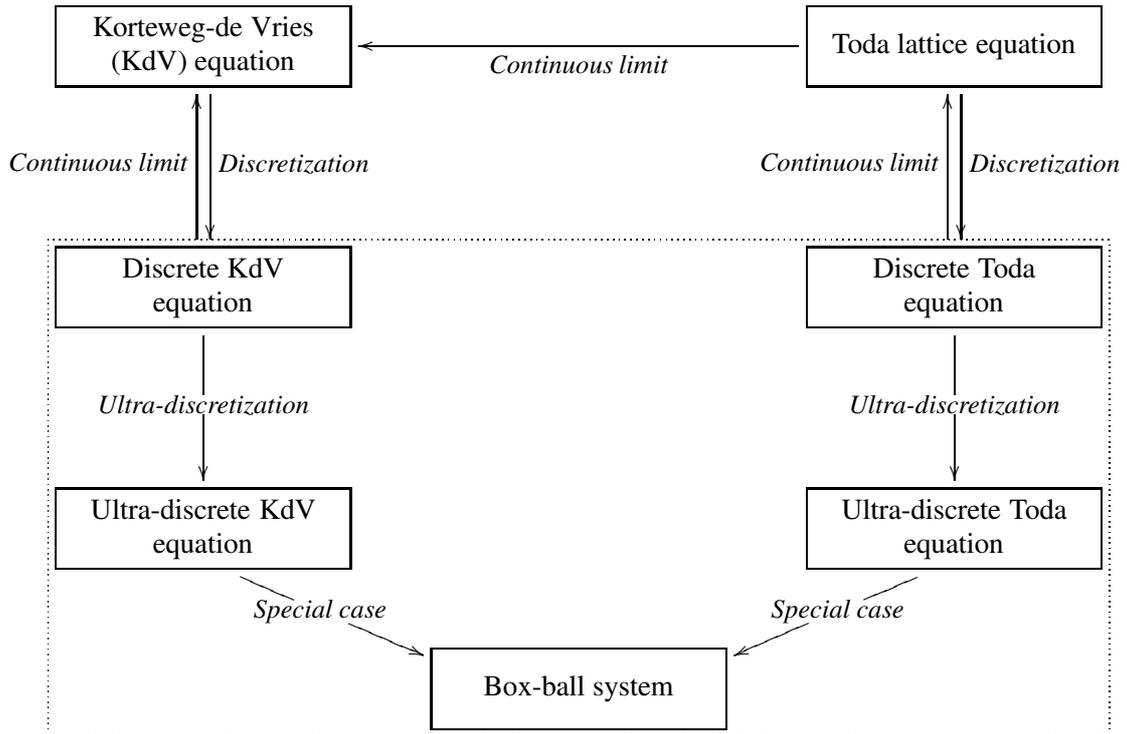
\begin{figure}
\centerline{\xymatrix{\framebox(110,30){\parbox{100pt}{\centering{Korteweg-de Vries (KdV) equation}}}
\ar@<.5ex>[dd]^<<<<<<<<<{\mbox{\small{\emph{Discretization}}}}&&\framebox(110,30){\parbox{100pt}{\centering{Toda lattice equation}}}\ar@<.5ex>[dd]^<<<<<<<<<{\mbox{\small{\emph{Discretization}}}}\ar@<-2.5ex>[ll]^{\mbox{\small{\emph{Continuous limit}}}}\\
&&\\
\framebox(110,30){\parbox{100pt}{\centering{Discrete KdV equation}}}\ar[dd]|<<<<<<<<<{\mbox{\small{\emph{Ultra-discretization}}}}\ar@<.5ex>[uu]^<<<<<<<<<<{\mbox{\small{\emph{Continuous limit}}}}&&\framebox(110,30){\parbox{100pt}{\centering{Discrete Toda equation}}}\ar[dd]|<<<<<<<<<{\mbox{\small{\emph{Ultra-discretization}}}}\ar@<.5ex>[uu]^<<<<<<<<<<{\mbox{\small{\emph{Continuous limit}}}}\\
&&\\
\framebox(110,30){\parbox{100pt}{\centering{Ultra-discrete KdV equation}}}\ar@<0ex>[dr]|<<<<<<<<<<<{\mbox{\small{\emph{Special case}}}}&&\framebox(110,30){\parbox{100pt}{\centering{Ultra-discrete Toda equation}}}\ar@<-0ex>[dl]|<<<<<<<<<<<{\mbox{\small{\emph{Special case}}}}\\
&\framebox(110,30){\parbox{100pt}{\centering{Box-ball system}}}&\hspace{109pt}\save "3,1"."6,3"*[F.]\frm{}\restore}}
\caption{Classical integrable systems derived from the KdV and Toda equations; the discrete integrable systems studied in this article are enclosed by the dotted line. Ultra-discretization is discussed within our framework in Subsection \ref{udsec}, and links to the box-ball system are recalled in Subsection \ref{specialsec}.}\label{dis}
\end{figure}
%DC - Removed self-convolution/split here - could add if we put that somewhere.

The program that we follow here will be one that generalizes that of \cite{CKST}, in which the initial value problem was studied for a special case of the ultra-discrete KdV equation called the box-ball system (BBS), as introduced in \cite{takahashi1990}. To help place the present work into context, let us briefly review the construction of BBS dynamics that appears in \cite{CKST}, in which the key observation is that each discrete-time step in the evolution of the BBS can be described by a certain transformation of an encoding of the current configuration by a nearest-neighbour path on $\mathbb{Z}$. More specifically, configurations of the BBS are sequences of the form $\eta=(\eta_n)_{n\in\mathbb{Z}}\in\{0,1\}^\mathbb{Z}$, where $\eta_n=1$ denotes the presence of a ball in the $n$th box, and $\eta_n=0$ the lack thereof. The configuration after one step of the dynamics, $T\eta$ say, is given by the following formal version of the ultra-discrete KdV equation:
\begin{equation}\label{bbsudkdv1}
(T\eta)_{n}=\min\left\{1-\eta_{n},\sum_{m=-\infty}^{n-1}\left(\eta_m - (T\eta)_m\right)\right\}.
\end{equation}
Although we can make sense of the above formulation for configurations that have $\eta_n=0$ eventually as $n\rightarrow\infty$ by supposing that $(T\eta)_n=0$ for integers that lie below the smallest $n$ such that $\eta_n=1$, describing the dynamics for configurations that lie outside this class of initial data requires further thought. In this direction, it is convenient to introduce an auxiliary sequence of variables $W=(W_n)_{n\in\mathbb{Z}}$ taking values in $\mathbb{Z}_+=\{0,1,\dots\}$, and observe that for the class of configurations just described, the equation at \eqref{bbsudkdv1} is equivalent to
\begin{equation}\label{bbsudkdv2}
\begin{cases}
(T\eta)_n& = \min\{1-\eta_n,W_{n-1}\},\\
W_n & =\eta_n+W_{n-1}-(T\eta)_n,
\end{cases}
\end{equation}
subject to the boundary condition $\lim_{n\rightarrow-\infty}W_n=0$; we note that the sequence $W$ can be interpreted as a `carrier process', with $W_n$ representing the number of balls shifted from $\{\dots,n-1,n\}$ to $\{n+1,n+2,\dots\}$, and the boundary condition at $-\infty$ means that the carrier is initially empty. For general $\eta$, the question then essentially becomes one of determining the existence and uniqueness of a corresponding carrier $W$ (as well as considering what the appropriate boundary condition for the carrier is as $n\rightarrow-\infty$). In \cite{CKST}, it was shown that one solution to this problem is given by first introducing a `path encoding' of the configuration, $S=(S_n)_{n\in\mathbb{{Z}}}\in \mathbb{Z}^\mathbb{Z}$ say, which is the antiderivative obtained by fixing $S_0=0$ and defining
\begin{equation}\label{incs}
S_n-S_{n-1}=1-2\eta_n,\qquad \forall n\in\mathbb{Z},
\end{equation}
and secondly using this to construct a specific carrier process via the formula
\begin{equation}\label{originalW}
W_n:=M_n-S_n,\qquad \forall n\in\mathbb{Z},
\end{equation}
where $M=(M_n)_{n\in\mathbb{Z}}$ is the `past maximum' of $S$, namely
\begin{equation}\label{originalM}
M_n:=\sup_{m \le n}S_m.
\end{equation}
As long as $M_0<\infty$, (which ensures $M_n<\infty$ for all $n\in\mathbb{Z}$,) this procedure yields a solution to \eqref{bbsudkdv2}. Indeed, it is not difficult to check that $W$ satisfies $W_n =\eta_n+W_{n-1}-\min\{1-\eta_n,W_{n-1}\}$, and one can then use the first equation at \eqref{bbsudkdv2} to determine $T\eta$. Moreover, it is easily seen that the updated configuration is encoded by the path
\begin{equation}\label{originalpitman}
TS:=2M-S
\end{equation}
in that, as at \eqref{incs}, the increments of $TS$ yield $T\eta$, i.e.\ $(TS)_n-(TS)_{n-1}=1-2(T\eta)_n$ for all $n\in\mathbb{Z}$. See Figure \ref{taufig} below
%DC: check positioning later
for a depiction of the BBS dynamics at the configuration and path encoding levels. The transformation at \eqref{originalpitman} of reflection in the past maximum is well-known in the probability literature as Pitman's transformation after its introduction in the fundamental work on stochastic processes \cite{Pitman}. See \cite{Bertoin, HMOC, Jeulin, MY0, MY, MY1, MY2, Rog, RogPit, Saisho} for examples of further studies concerning Pitman's transformation and its generalizations, \cite{DMO,HW,OCY2} for connections to queuing theory, and \cite{BBOC1,BBOC2, OC0,OC,OCY} for research relating Pitman's transformation and stochastic integrable systems. (We note that in \cite{CKST}, the operator $T$ was given by setting $TS=2M-S-2M_0$. Clearly dropping the constant shift does not affect the dynamics, and it will be convenient for our later arguments to do without this.) For one time step of the dynamics, the carrier is not uniquely defined by \eqref{bbsudkdv2}, see \cite[Remark 2.11]{CKST} for discussion of this point. However, in \cite{CKST}, a complete description was given of configurations for which the BBS dynamics can be extended and are time reversible (i.e.\ invertible) for all times (forwards and backwards), and apart from on a certain `critical' set,
for such configurations the choice of $W$ at \eqref{originalW} yields the unique solution to \eqref{bbsudkdv2} that stays within the domain of the dynamics. Thus the above construction of $W$ and $T\eta$ is the only viable choice if one seeks to define the global (i.e.\ all-time) dynamics of the BBS. We note that a similar description of bi-infinite BBS dynamics in terms of path encodings is provided in \cite{FNRW}, as is a description in terms of a so-called `slot decomposition', which keeps track of solitons.

\begin{figure}
\begin{center}
\BA\BA\BA\BA\BB\BA\BA\BA\BA\BA\BA\BB\BA\BB\BA\BA\BA\BA\BA\BA\BB\BB\BB\BB\BA\BB\BB\BB\BA\BA\BB\BA\BB\BB\BA\BB\BB\BA\BA\BA\BA\BA\BA\BA\BB\BB\BA\BB\BB\BB\BA\BB\BA\BB\BA\BA\BB\BA\BA\BA\BA\BA\BA\BB\BA\BB\BA\BB\BB\BB\BA\BB\BB\BA\BA\BA\BA\BA\BB\BA\BA\BA\BA\BA\BB\BA\BA\BA\BB\BA\BA\BB\BA\BA\BB\BB\BA\BB\BA\\
\vspace{-5pt}\BA\BA\BA\BB\BA\BA\BA\BA\BA\BB\BB\BA\BB\BA\BB\BB\BB\BB\BB\BB\BA\BA\BA\BA\BB\BA\BA\BA\BB\BB\BA\BB\BA\BA\BB\BA\BA\BA\BA\BA\BB\BB\BB\BB\BA\BA\BB\BA\BA\BA\BB\BA\BB\BA\BA\BB\BA\BA\BA\BB\BB\BB\BB\BA\BB\BA\BB\BA\BA\BA\BB\BA\BA\BA\BA\BA\BA\BB\BA\BA\BA\BA\BA\BB\BA\BA\BA\BB\BA\BB\BB\BA\BB\BB\BA\BA\BB\BA\BB\\
\vspace{-5pt}\BB\BB\BB\BA\BB\BB\BB\BB\BB\BA\BA\BB\BA\BB\BA\BA\BA\BA\BA\BA\BA\BA\BA\BB\BA\BA\BB\BB\BA\BA\BB\BA\BA\BB\BA\BA\BB\BB\BB\BB\BA\BA\BA\BA\BA\BB\BA\BA\BA\BB\BA\BB\BA\BB\BB\BA\BB\BB\BB\BA\BA\BA\BA\BB\BA\BB\BA\BA\BA\BB\BA\BA\BA\BA\BA\BA\BB\BA\BA\BA\BA\BB\BB\BA\BB\BB\BB\BA\BB\BA\BA\BB\BA\BA\BB\BB\BA\BB\BA\\
\vspace{-5pt}\BA\BA\BA\BB\BA\BA\BA\BA\BA\BA\BB\BA\BB\BA\BA\BA\BA\BA\BA\BA\BA\BA\BB\BA\BB\BB\BA\BA\BB\BB\BA\BB\BB\BA\BB\BB\BA\BA\BA\BA\BA\BA\BA\BB\BB\BA\BB\BB\BB\BA\BB\BA\BB\BA\BA\BB\BA\BA\BA\BA\BA\BA\BB\BA\BB\BA\BA\BA\BB\BA\BA\BA\BA\BA\BA\BB\BA\BB\BB\BB\BB\BA\BA\BB\BA\BA\BA\BB\BA\BB\BB\BA\BB\BB\BA\BA\BB\BA\BB\\
\vspace{-5pt}\BA\BA\BB\BA\BA\BA\BA\BA\BA\BB\BA\BB\BA\BA\BA\BA\BA\BA\BB\BB\BB\BB\BA\BB\BA\BA\BB\BB\BA\BA\BB\BA\BA\BB\BA\BA\BA\BA\BA\BB\BB\BB\BB\BA\BA\BB\BA\BA\BA\BB\BA\BB\BA\BA\BB\BA\BA\BA\BA\BA\BA\BB\BA\BB\BA\BA\BA\BB\BA\BA\BB\BB\BB\BB\BB\BA\BB\BA\BA\BA\BA\BB\BB\BA\BB\BB\BB\BA\BB\BA\BA\BB\BA\BA\BB\BB\BA\BB\BA\\
\vspace{-5pt}\BA\BB\BA\BA\BA\BA\BA\BA\BB\BA\BB\BA\BA\BA\BB\BB\BB\BB\BA\BA\BA\BA\BB\BA\BB\BB\BA\BA\BA\BB\BA\BA\BB\BA\BA\BB\BB\BB\BB\BA\BA\BA\BA\BA\BB\BA\BA\BA\BB\BA\BB\BA\BA\BB\BA\BA\BA\BA\BA\BB\BB\BA\BB\BA\BB\BB\BB\BA\BB\BB\BA\BA\BA\BA\BA\BB\BA\BB\BB\BB\BB\BA\BA\BB\BA\BA\BA\BB\BA\BA\BB\BA\BB\BB\BA\BA\BB\BA\BB\\
\vspace{-5pt}\BB\BA\BA\BA\BA\BA\BB\BB\BA\BB\BA\BB\BB\BB\BA\BA\BA\BA\BA\BA\BB\BB\BA\BB\BA\BA\BA\BB\BB\BA\BB\BB\BA\BB\BB\BA\BA\BA\BA\BA\BA\BA\BA\BB\BA\BB\BB\BB\BA\BB\BA\BB\BB\BA\BB\BB\BB\BB\BB\BA\BA\BB\BA\BB\BA\BA\BA\BB\BA\BA\BB\BB\BB\BB\BB\BA\BB\BA\BA\BA\BA\BB\BB\BA\BB\BB\BB\BA\BB\BB\BA\BB\BA\BA\BB\BB\BA\BB\BA\\
\vspace{-5pt}\BA\BA\BB\BB\BB\BB\BA\BA\BB\BA\BB\BA\BA\BA\BA\BA\BB\BB\BB\BB\BA\BA\BB\BA\BB\BB\BB\BA\BA\BB\BA\BA\BB\BA\BA\BB\BB\BB\BB\BB\BB\BB\BB\BA\BB\BA\BA\BA\BB\BA\BB\BA\BA\BB\BA\BA\BA\BA\BA\BB\BB\BA\BB\BA\BB\BB\BB\BA\BB\BB\BA\BA\BA\BA\BA\BB\BA\BB\BB\BB\BB\BA\BA\BB\BA\BA\BA\BB\BA\BA\BB\BA\BB\BB\BA\BA\BB\BA\BA\\
\vspace{-5pt}\BB\BB\BA\BA\BA\BA\BB\BB\BA\BB\BA\BB\BB\BB\BB\BB\BA\BA\BA\BA\BB\BB\BA\BB\BA\BA\BA\BB\BB\BA\BB\BB\BA\BB\BB\BA\BA\BA\BA\BA\BA\BA\BA\BB\BA\BA\BA\BB\BA\BB\BA\BA\BB\BA\BB\BB\BB\BB\BB\BA\BA\BB\BA\BB\BA\BA\BA\BB\BA\BA\BA\BB\BB\BB\BB\BA\BB\BA\BA\BA\BA\BA\BB\BA\BA\BA\BB\BA\BB\BB\BA\BB\BA\BA\BA\BB\BA\BA\BA\\
\vspace{-5pt}\BA\BA\BB\BB\BB\BB\BA\BA\BB\BA\BB\BA\BA\BA\BA\BA\BA\BB\BB\BB\BA\BA\BB\BA\BB\BB\BB\BA\BA\BB\BA\BA\BB\BA\BA\BA\BA\BA\BA\BA\BA\BB\BB\BA\BB\BB\BB\BA\BB\BA\BB\BB\BA\BB\BA\BA\BA\BA\BA\BA\BB\BA\BB\BA\BA\BB\BB\BA\BB\BB\BB\BA\BA\BA\BA\BB\BA\BA\BA\BA\BA\BB\BA\BA\BB\BB\BA\BB\BA\BA\BB\BA\BA\BA\BB\BA\BA\BA\BB\\

\vspace{-20pt}
\includegraphics[width=0.95\textwidth]{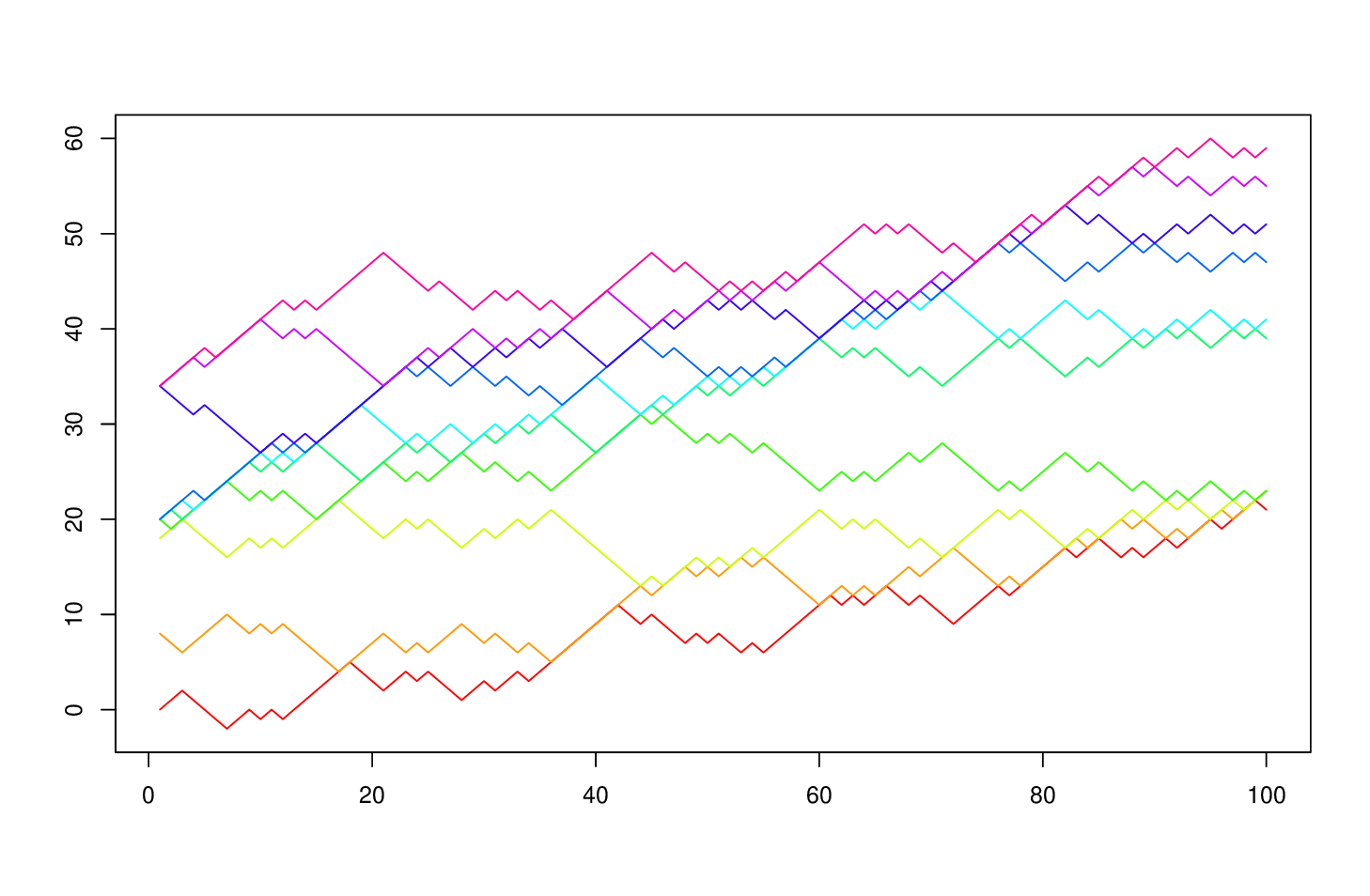}
\rput(-6.7,.7){$n$}
\rput(-13.7,5.2){$S_n^t$}
\end{center}
\caption{Configuration $(\eta_n^t)_{n,t\in\mathbb{Z}}$ and path encoding $(S_n^t)_{n,t\in\mathbb{Z}}$ for the same realization of the BBS, shown over the spatial interval $n=1,\dots,100$ and time interval $t=1,\dots,10$. In the configuration picture, time runs from bottom to top. In the path encoding picture, colours run from red to pink with increasing time.}\label{taufig}
\end{figure}

The principal contribution of the present article is to show that the Pitman transformation strategy of constructing solutions is also valid for the equations defining the discrete integrable systems of interest here, i.e.\ \eqref{DKDV}, \eqref{UDKDV}, \eqref{DTODA}, and \eqref{UDTODA}. That is, for each system, we have that the existence of a single time step solution is equivalent to existence of a certain carrier process (which can be seen as the analogue of the notion introduced for the BBS above), and we demonstrate that such a carrier can be constructed in terms of a functional of a certain path encoding, which also yields via a Pitman-type transformation the dynamics on the configuration space. Moreover, we check that our choice of carrier is the unique one for which we are to be able to iterate this process; see Section \ref{dissec} for our main results in this direction. Specifically, in each case, path encodings will be elements of the space
\begin{equation}\label{sdef}
\mathcal{S} :=\left\{ S : \Z \to \R\right\},
\end{equation}
and the dynamics are described similarly to \eqref{originalpitman}, but the past maximum of \eqref{originalM} is replaced by the path functionals given in Table \ref{Mtable}, and the Toda-type systems incorporate a spatial shift. Note that, in Table \ref{Mtable} and subsequently, we use $\theta$ to denote the usual left-shift, i.e.\ $\theta((x_n)_{n\in\mathbb{Z}}) = (x_{n+1})_{n\in\mathbb{Z}}$. See Figure \ref{pitmanpic} for illustrative examples of the path transformations that we introduce, and Section \ref{dissec} for details of how these relate to the relevant integrable systems. We note that, whilst $T^\vee$ and $T^{\vee^*}$ closely correspond to the original Pitman transformation defined at \eqref{originalpitman}, the operators $T^{\sum}$ and $T^{\sum^*}$ are related to the exponential version of Pitman's transformation, which is also familiar to probabilists (see \cite{MY1, MY2}, for example). It is not difficult to verify that the path operations described in Table \ref{Mtable} are well-defined on the set of asymptotically linear functions, as given by
\begin{equation}\label{slindef}
\mathcal{S}^{lin} :=\left\{ S  \in\mathcal{S} \::\: \lim_{n \to -\infty} \frac{S_n}{n} > 0\mbox{ and }\lim_{n \to +\infty} \frac{S_n}{n} > 0\right\}.
\end{equation}
Moreover, whilst we do not attempt to replicate the detailed study of \cite{CKST} and determine the full set of configurations upon which the dynamics can be extended and checked to be time reversible for all time, we will check that the latter properties hold for configurations with a path encoding that lies in $\mathcal{S}^{lin}$. (See Section \ref{dissec} for details.) We remark that the restriction to configurations with a path encoding in $\mathcal{S}^{lin}$ is relatively mild, and in particular yields that for initial configurations that have a shift ergodic distribution and satisfy a certain density condition (specific to each case), the dynamics of the discrete integrable system are uniquely defined for all times. Moreover, the class of configurations we deal with includes the previously studied examples of semi-infinite, periodic and rapidly decaying configurations, see Subsection \ref{knownsec}.

\begin{table}[t]
\begin{center}
\begin{tabular}{r|c|l}
  \emph{Model} & \emph{`Past maximum'} & \emph{Path encoding dynamics}\\
  \hline
 {udKdV}&  \raisebox{20pt}{\phantom{M}}\(\displaystyle M^{\vee}(S)_n:=\sup_{m \le n} \left(\frac{S_m+S_{m-1}}{2}\right)\) \raisebox{-20pt}{\phantom{M}}&  $T^{\vee}(S)=2M^{\vee}(S)-S$\\
             \hline
  {dKdV} & \raisebox{20pt}{\phantom{M}}\(\displaystyle M^{\sum}(S)_n:=\log \left(\sum_{m \le n} \exp\left(\frac{S_m+S_{m-1}}{2}\right)\right)\)\raisebox{-20pt}{\phantom{M}}& $T^{\sum}(S)=2M^{\sum}(S)-S$\\
  \hline
  {udToda}& \raisebox{20pt}{\phantom{M}}\(\displaystyle M^{\vee^*}(S)_{n}:=\left\{\begin{array}{ll}
 \sup_{m \le \frac{n-1}{2}}S_{2m}, & n\mbox{ odd}, \\
  \frac{M^{\vee^*}(S)_{n+1}+M^{\vee^*}(S)_{n-1}}{2}, &  n\mbox{ even},
  \end{array}
\right.\)\raisebox{-20pt}{\phantom{M}} &\parbox{3.6cm}{$\mathcal{T}^{\vee^*}(S)=\theta\circ{T}^{\vee^*}(S),$\\where\\
${T}^{\vee^*}(S):=2M^{\vee^*}(S)-S$}\\
  \hline
  {dToda} &\raisebox{20pt}{\phantom{M}}\(\displaystyle M^{\sum^*}(S)_n:=\left\{\begin{array}{ll}
 \log \left(\sum_{m \le \frac{n-1}{2}} \exp\left(S_{2m}\right) \right), & n\mbox{ odd}, \\
  \frac{M^{\sum^*}(S)_{n+1}+M^{\sum^*}(S)_{n-1}}{2}, &  n\mbox{ even}
  \end{array}
\right.\)&\parbox{3.6cm}{$\mathcal{T}^{\sum^*}(S)=\theta\circ{T}^{\sum^*}(S),$\\where\\
${T}^{\sum^*}(S):=2M^{\sum^*}(S)-S$}
\end{tabular}
\bigskip
\end{center}
\caption{`Past maximum' operators and description of path encoding dynamics for the four discrete integrable systems studied in this article. Note that the operator $M^\vee$ that naturally arises in our study of \eqref{UDKDV} is not the same as the operator $M$ appearing in previous studies of the original BBS. However, as we discuss in Remark \ref{origbbsrem}(a), when considered for the latter model, $M^\vee$ and $M$ differ only by a constant shift and lead to the same system dynamics.}\label{Mtable}
\end{table}

\begin{figure}
\begin{center}
\includegraphics[width=0.43\textwidth]{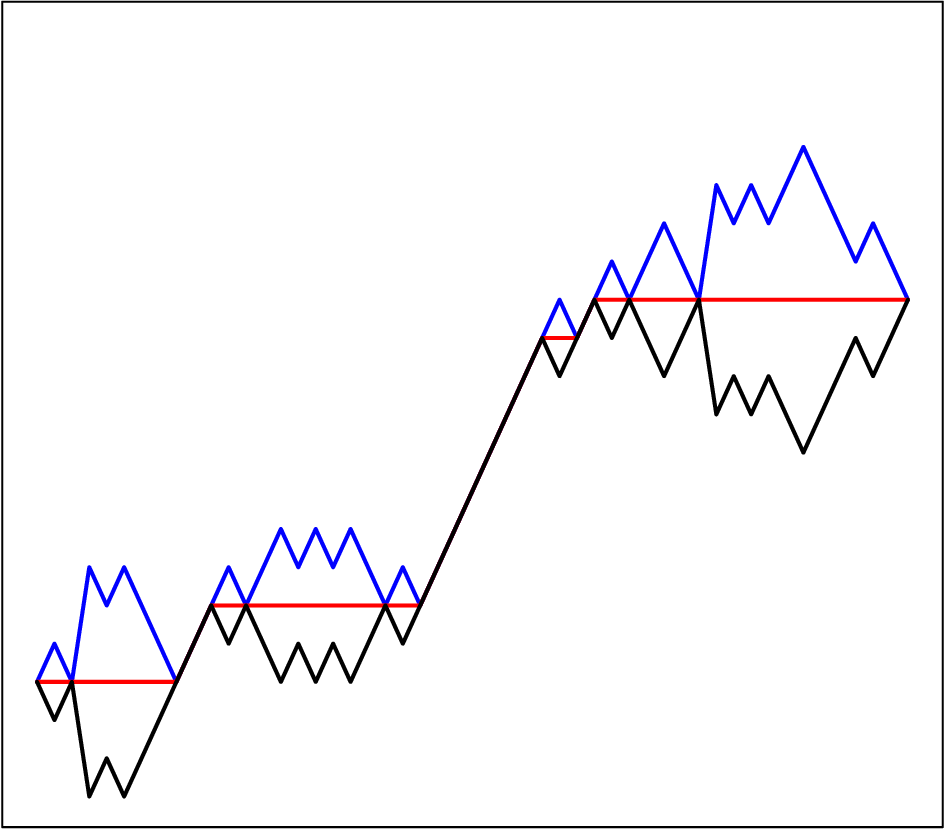}
\rput(-6.05,5.15){$\boxed{T}$}
\bigskip

\includegraphics[width=0.43\textwidth]{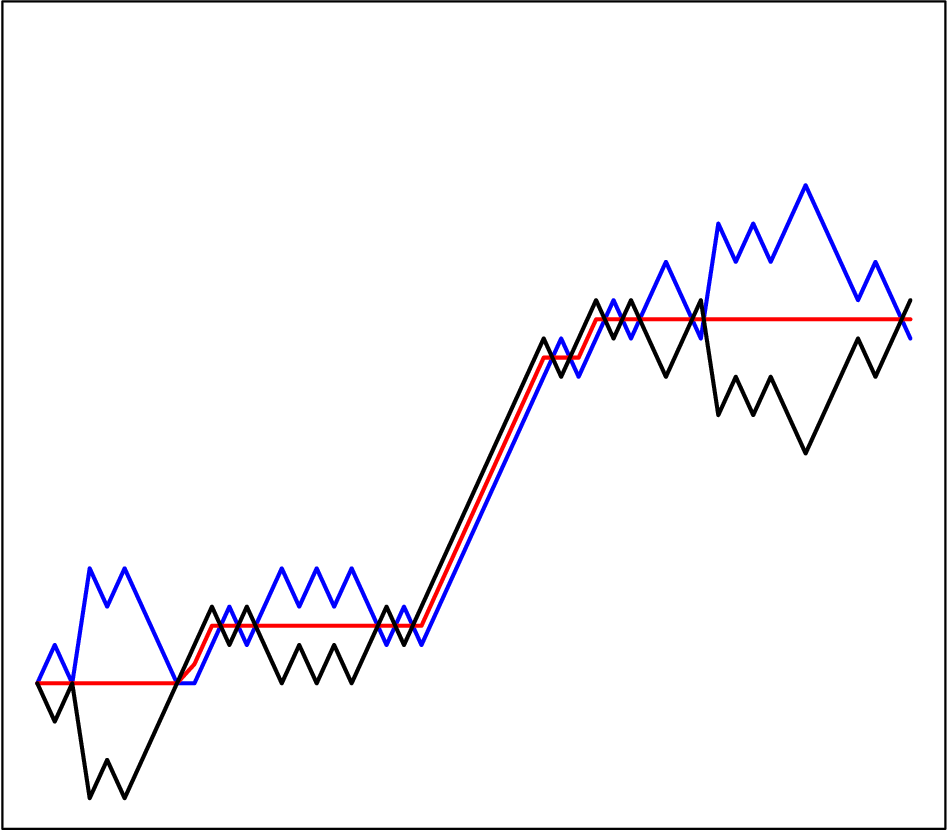}\:\:\:\includegraphics[width=0.43\textwidth]{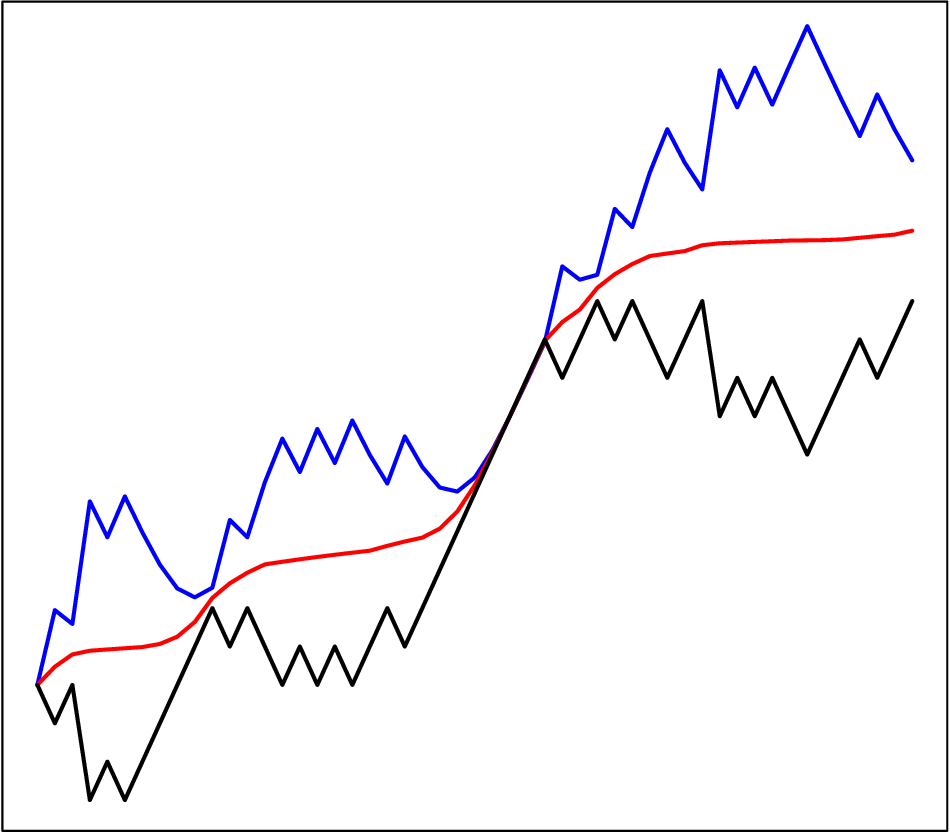}
\rput(-12.4,5.1){$\boxed{T^\vee}$}
\rput(-5.9,5.1){$\boxed{T^{\sum}}$}
\bigskip

\includegraphics[width=0.43\textwidth]{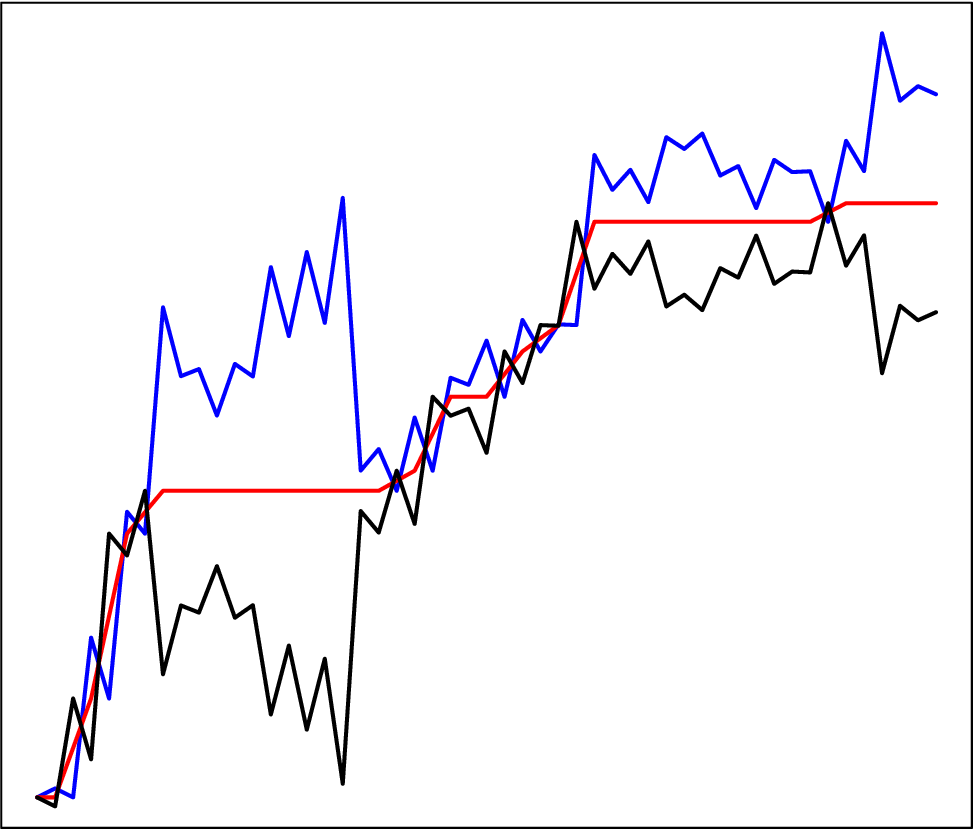}\:\:\:\includegraphics[width=0.43\textwidth]{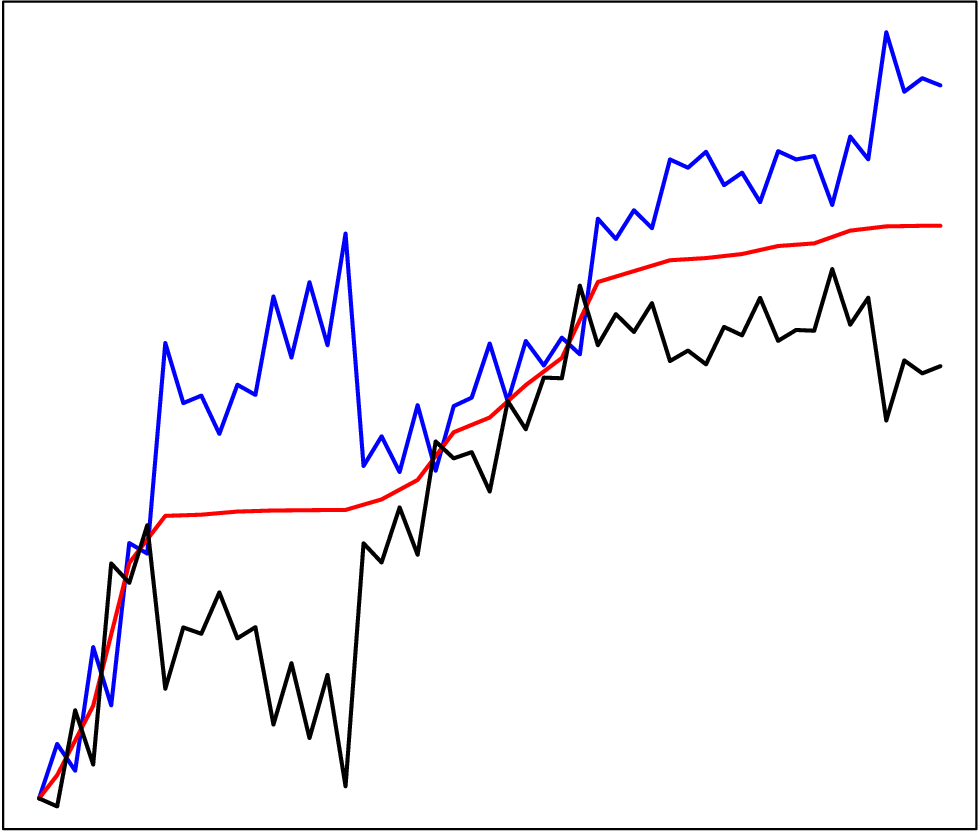}
\rput(-12.3,4.95){$\boxed{T^{\vee^*}}$}
\rput(-5.8,4.95){$\boxed{T^{\sum^*}}$}
\end{center}
\caption{Examples of Pitman-type transformations. In each picture, $S$ is shown in black, $M(S)$ is shown in red, and $T(S)$ is shown in blue. The initial paths are the same in the figures for the KdV-type operators $T$, $T^\vee$ and $T^{\sum}$, as they are in the figures for the Toda-type operators $T^{\vee^*}$ and $T^{\sum^*}$.}\label{pitmanpic}
\end{figure}

Once the path encodings and Pitman-type transformations have been identified for each of the discrete integrable systems, it is not a hugely challenging exercise to check the existence and uniqueness of solutions to the equations \eqref{DKDV}, \eqref{UDKDV}, \eqref{DTODA}, and \eqref{UDTODA}. However, rather than do this individually for each model, we prefer to present a unified approach; see Section \ref{general} for details of how, within a general framework of locally-defined dynamics, we define notions of path encodings, carriers, past maxima and Pitman-type transformations. Although the abstraction does make the proofs slightly more technical, it is perhaps more informative as to why we see the Pitman-type transformations that we do in the first place, since it gives the perspective that such transformations can be understood as encoding conservation laws for the underlying lattice equations. Moreover, our framework gives a means to developing similar theory in other applications. (Indeed, it also covers the case of the BBS with finite box and/or ball capacity, as studied in \cite{CS}.) Other advantages are that it allows us to treat issues such as time reversibility for each of the models simultaneously, and makes clear various symmetries of and links between the models. For instance, we check that the ultra-discretization procedure extends to the general class of configurations that we consider (see Subsection \ref{udsec}). Whilst reading Section \ref{general}, we recommend readers keep their favourite example in mind, as most of the statements become quite transparent when converted into the notation of an explicit system. To this end, it might be helpful to note that the relevant translations for the \eqref{DKDV}, \eqref{UDKDV}, \eqref{DTODA} and \eqref{UDTODA} systems are described in Section \ref{proofsec}.

The remainder of the article is organized as follows. In Section \ref{dissec}, we introduce the equations \eqref{DKDV}, \eqref{UDKDV}, \eqref{DTODA} and \eqref{UDTODA}, and present our main results concerning their solution via path encodings. Towards motivating our general approach, in Section \ref{lmsec} we describe how each of the systems in question can be represented in terms of lattice maps, and recall some relevant properties of these. Our unified framework for studying discrete integrable systems is then developed in what can be considered the heart of the article -- Section \ref{general}, before being applied to prove our principal conclusions for the four specific models of interest in Section \ref{proofsec}. Finally, we collect together various points of discussion concerning our main results and modifications of them in Section \ref{discussionsec}. Before starting to present this content, we highlight that it is our notational convention to distinguish between $\mathbb{Z}_+:=\{0,1,\dots\}$ and $\mathbb{N}:=\{1,2,\dots\}$. We will also define $\mathbb{Z}_-:=-\mathbb{N}=\{\dots,-2,-1\}$, and write:
\begin{itemize}
  \item $\lim_{n \to \pm\infty}a_n<A$, $\lim_{n \to \pm\infty}a_n>A$, etc.\ to mean that both $\lim_{n \to -\infty}a_n$ and $\lim_{n \to +\infty}a_n$ exist and satisfy the relevant inequality (i.e.\ we do not insist that the limits at $-\infty$ and $+\infty$ are equal, cf.\ \eqref{slindef});
  \item for $n_1=n_2+1$, $\sum_{m=n_1}^{n_2}:=0$, and for $n_1>n_2+1$, $\sum_{m=n_1}^{n_2}:=-\sum_{m=n_2+1}^{n_1-1}$.
\end{itemize}

\section{Initial value problems for KdV- and Toda-type discrete integrable systems}\label{dissec}

In this section, we present our main results for the four discrete integrable systems focussed upon in this article. In particular, in each case, we: introduce precisely the model; describe a path encoding for the configuration; give a Pitman-type transformation for the path encoding; and then explain how the latter operation yields the dynamics of the original model and enables us to solve an initial value problem (see Theorems \ref{udkdvthm}, \ref{dkdvthm}, \ref{udtodathm} and \ref{dtodathm}). As noted in the introduction, path encodings will be elements of the space $\mathcal{S}$ defined at \eqref{sdef}, and, as well as the subset $\mathcal{S}^{lin}$ defined at \eqref{slindef}, the following subset will also arise in our discussion:
\begin{equation}\label{s0def}
\mathcal{S}^0 :=\left\{ S \in\mathcal{S}\::\:S_0=0\right\}.
\end{equation}
We note that, although there is a parallel between the descriptions of the different integrable systems considered and we give a common argument for the proofs of the theorems, the four examples can be read independently.

\subsection{Ultra-discrete KdV equation}\label{udsec1} For a fixed $L\in\mathbb{R}$, the ultra-discrete KdV equation is described as follows, where the variables $(\eta_n^t,U_n^t)_{n,t\in\mathbb{Z}}$ can in general take values in $\mathbb{R}$,
though one can also consider specializations, some of which we discuss below (see Subsections \ref{specialsec}, \ref{knownsec}). For example, the original BBS of \cite{takahashi1990} corresponds to setting $L=1$ and restricting to configurations taking values in $\{0,1\}^\mathbb{Z}$.

\eqbox{Ultra-discrete KdV equation}{\begin{equation}\tag{udKdV}\label{UDKDV}
\begin{cases}
\eta_n^{t+1}& = \min\{L-\eta_n^t,U_{n-1}^t\},\\
U_n^{t} & =\eta_n^t+U_{n-1}^t-\eta_n^{t+1},
\end{cases}
\end{equation}
for all $n,t\in\mathbb{Z}$.}

By analogy with the box-ball model, we think of $\eta^t=(\eta^t_n)_{n\in\mathbb{Z}}$ as the configuration of the system at time $t$, and $U^t=(U^t_n)_{n\in\mathbb{Z}}$ as an auxiliary variable representing a carrier-type process, which captures how mass is shifted from left to right at time $t$. For certain boundary conditions, it is clear that the two-dimensional lattice equation \eqref{UDKDV} has a unique solution (see Subsection \ref{boundarysec} for discussion of this point).
However, for the initial value problem with boundary condition given by $\eta^0=\eta$ for some $\eta\in\mathbb{R}^\mathbb{Z}$, both the existence and the uniqueness of the solution is in question. Indeed, as per the comment in the introduction regarding the BBS, it is not true in general if one considers the evolution over a single time step. In this subsection, we tackle the problem for configurations $\eta$ that fall within the class
\begin{equation}\label{cudkdef}
\mathcal{C}_{udK}^{(L)}:=\left\{\eta \in \R^{\Z}\::\: \lim_{n \to \pm\infty}\frac{\sum_{m=1}^{n}\eta_m}{n}<\frac{L}{2}\right\},
\end{equation}
where we note that, if one is viewing $\eta_m$ as the mass of the configuration at site $m$, then the limit in the definition of $\mathcal{C}_{udK}^{(L)}$ can be interpreted as a density condition. In particular, the main result in this subsection, Theorem \ref{udkdvthm}, (partially) generalizes the construction of bi-infinite dynamics for the original BBS from \cite{CKST} to the more general \eqref{UDKDV} setting. As commented below Table \ref{Mtable}, and discussed in more detail in Subsection \ref{specialsec} below, although the past maximum operator that appears in the discussion below at \eqref{mveedef} is different to that at \eqref{originalM}, it gives the same dynamics for the original BBS.

We continue to introduce the path encoding for a configuration $\eta\in\mathbb{R}^\mathbb{Z}$. In particular, we define $S_{udK}^{(L)}:\mathbb{R}^\mathbb{Z}\rightarrow \mathcal{S}^0$ by taking $S=S_{udK}^{(L)}(\eta)$ to be the element of $\mathcal{S}^0$ given by
\begin{equation}\label{1pe}
S_{n}-S_{n-1}:=L-2\eta_n,\qquad \forall n \in\mathbb{Z}.
\end{equation}
Clearly this is a bijection, though it will also be convenient to extend the definition of inverse to the whole of $\mathcal{S}$. We do this by setting
\begin{equation}\label{extendinv}
\left(S_{udK}^{(L)}\right)^{-1}(S):=\left(S_{udK}^{(L)}\right)^{-1}(S-S_0),\qquad \forall S\in\mathcal{S}.
\end{equation}
We further note that $S_{udK}^{(L)}:\mathcal{C}_{udK}^{(L)}\rightarrow\mathcal{S}^0\cap\mathcal{S}^{lin}$ is a bijection. Moreover, $(S_{udK}^{(L)})^{-1}(\mathcal{S}^{lin})=\mathcal{C}_{udK}^{(L)}$.

As for the path dynamics, we set: for $S\in\mathcal{S}^{lin}$,
\begin{equation}\label{tveedef}
T^{\vee}(S):=2M^{\vee}(S)-S,
\end{equation}
where
\begin{equation}\label{mveedef}
M^{\vee}(S)_n:=\sup_{m \le n} \left(\frac{S_m+S_{m-1}}{2}\right).
\end{equation}
We will check later that $T^{\vee}:\mathcal{S}^{lin}\rightarrow\mathcal{S}^{lin}$ is a bijection (cf.\ Corollary \ref{c25}), and thus we further have that $(T^{\vee})^{-1}$ is well-defined. To describe the carrier, similarly to \eqref{originalW}, we write
\begin{equation}\label{wveedef}
W^{\vee}(S):=M^{\vee}(S)-S.
\end{equation}
However, to place the variables $(U_n^t)_{n,t\in\mathbb{Z}}$ on the same scale as the variables generated by such functions, we also introduce a simple bijection on $\mathbb{R}^\mathbb{Z}$ given by setting
\[W_{udK}^{(L)}(u)=\left(u_n-\frac{L}{2}\right)_{n\in\Z},\qquad \forall u\in\mathbb{R}^\Z.\]
With these preparations in place, we are now ready to state the main result of this subsection, which links the preceding operations to \eqref{UDKDV}, showing that the time evolution of the configuration is given by applying $T^{\vee}$ to its path encoding.

\begin{thm}\label{udkdvthm} If $\eta=(\eta_n)_{n\in\mathbb{Z}}\in\mathcal{C}_{udK}^{(L)}$, then there is a unique solution $(\eta_n^t,U_n^t)_{n,t\in\mathbb{Z}}$ to \eqref{UDKDV} that satisfies the initial condition $\eta^0=\eta$. This solution is given by setting, for $t\in\mathbb{Z}$,
\begin{align*}
\eta^t&:=\left(S_{udK}^{(L)}\right)^{-1}\circ \left(T^\vee\right)^t\circ S_{udK}^{(L)}(\eta),\\
U^t&:=\left(W_{udK}^{(L)}\right)^{-1}\circ W^{\vee}\circ \left(T^\vee\right)^t\circ S_{udK}^{(L)}(\eta).
\end{align*}
Or, more explicitly,
\begin{equation}\nonumber
\eta_n^t:=\frac{L-S_n^t+S_{n-1}^{t}}{2},\qquad U_n^t:=\frac{S_n^{t+1}-S_n^t+L}{2},\qquad \forall n,t\in\mathbb{Z},
\end{equation}
where $S^0:=S_{udK}^{(L)}(\eta)$, and $S^t:=(T^\vee)^t(S^0)$ for all $t\in\mathbb{Z}$. Moreover, $\eta^t\in\mathcal{C}_{udK}^{(L)}$ for all $t\in\mathbb{Z}$.
\end{thm}

\subsection{Discrete KdV equation}\label{dsec1} For a fixed $\delta\in(0,\infty)$, the discrete KdV equation has the following form, where the variables $(\omega_n^t,U_n^t)_{n,t\in\mathbb{Z}}$ take values in $(0,\infty)$. For a discussion of a related closed form equation involving only the variables $(\omega_n^t)_{n,t\in\mathbb{Z}}$, see Subsection \ref{cfsec} below.

\eqbox{Discrete KdV equation}{\begin{equation}\tag{dKdV}\label{DKDV}
\begin{cases}
\omega_n^{t+1} & =\left(\delta \omega_n^t +(U_{n-1}^t)^{-1}\right)^{-1},\\
U_n^t & = U_{n-1}^t\omega_n^t (\omega_n^{t+1})^{-1},
\end{cases}
\end{equation}
for all $n,t\in\mathbb{Z}$.}

As for the ultra-discrete model, for each time $t$, we will view the variables in terms of a configuration, which in this case is denoted $\omega^t$, and a carrier, which is denoted $U^t$. Moreover, the basic initial value problem is as in the previous subsection, namely to solve \eqref{DKDV} given a boundary condition of the form $\omega^0=\omega$ for some $\omega\in(0,\infty)^\mathbb{Z}$. We will do this for $\omega$ taking values in the following set:
\[\mathcal{C}_{dK}^{(\delta)}:=\left\{\omega \in (0,\infty)^{\Z}\::\: \lim_{n \to \pm\infty}\frac{\sum_{m=1}^{n}\log\omega_m}{n}<\frac{-\log\delta}{2}\right\},\]
where the limit can again be viewed as a density condition.

Regarding the path encoding, we define $S_{dK}^{(\delta)}:(0,\infty)^\mathbb{Z}\rightarrow \mathcal{S}^0$ by taking $S=S_{dK}^{(\delta)}(\omega)$ to be the element of $\mathcal{S}^0$ given by
\begin{equation}\label{2pe}
S_{n}-S_{n-1}:=-\log\delta-2\log\omega_n,\qquad \forall n \in\mathbb{Z}.
\end{equation}
Paralleling the properties of $S_{udK}^{(L)}$, we have that $S_{dK}^{(\delta)}$ is a bijection, and we extend its inverse to the whole of $\mathcal{S}$ similarly to \eqref{extendinv}. Moreover, $S_{dK}^{(\delta)}:\mathcal{C}_{dK}^{(\delta)}\rightarrow\mathcal{S}^0\cap\mathcal{S}^{lin}$ is a bijection, and $(S_{dK}^{(\delta)})^{-1}(\mathcal{S}^{lin})=\mathcal{C}_{dK}^{(\delta)}$.

Similarly to the path dynamics of the previous subsection, we set: for $S\in\mathcal{S}^{lin}$,
\[T^{\sum}(S):=2M^{\sum}(S)-S,\]
where
\[M^{\sum}(S)_n:=\log \left(\sum_{m \le n} \exp\left(\frac{S_m+S_{m-1}}{2}\right)\right).\]
We note that, just as $T^{\vee}$ is a discrete, two-step average version of Pitman's transformation, $T^{\sum}$ is a discrete, two-step average of the exponential version of Pitman's transformation, which was studied in \cite{MY1}, for example. Moreover, as for $T^\vee$, we will check later that $T^{\sum}:\mathcal{S}^{lin}\rightarrow\mathcal{S}^{lin}$ is bijection (cf.\ Corollary \ref{c25}), a result which implies that $(T^{\sum})^{-1}$ is well-defined. For describing the carrier, we write
\begin{equation}\label{wsumdef}
W^{\sum}(S):=M^{\sum}(S)-S,
\end{equation}
(see Figure \ref{dkdvcarrierfig} for a graphical example,) and introduce the following bijection from $(0,\infty)^\mathbb{Z}$ to $\mathbb{R}^\Z$:
\[W_{dK}^{(\delta)}(u)=\left(\log u_n+\frac{\log \delta}{2}\right)_{n\in\Z},\qquad \forall u\in(0,\infty)^\Z.\]
Our main result concerning \eqref{DKDV} is as follows.

\begin{figure}
\begin{center}
\includegraphics[width = 0.9\textwidth]{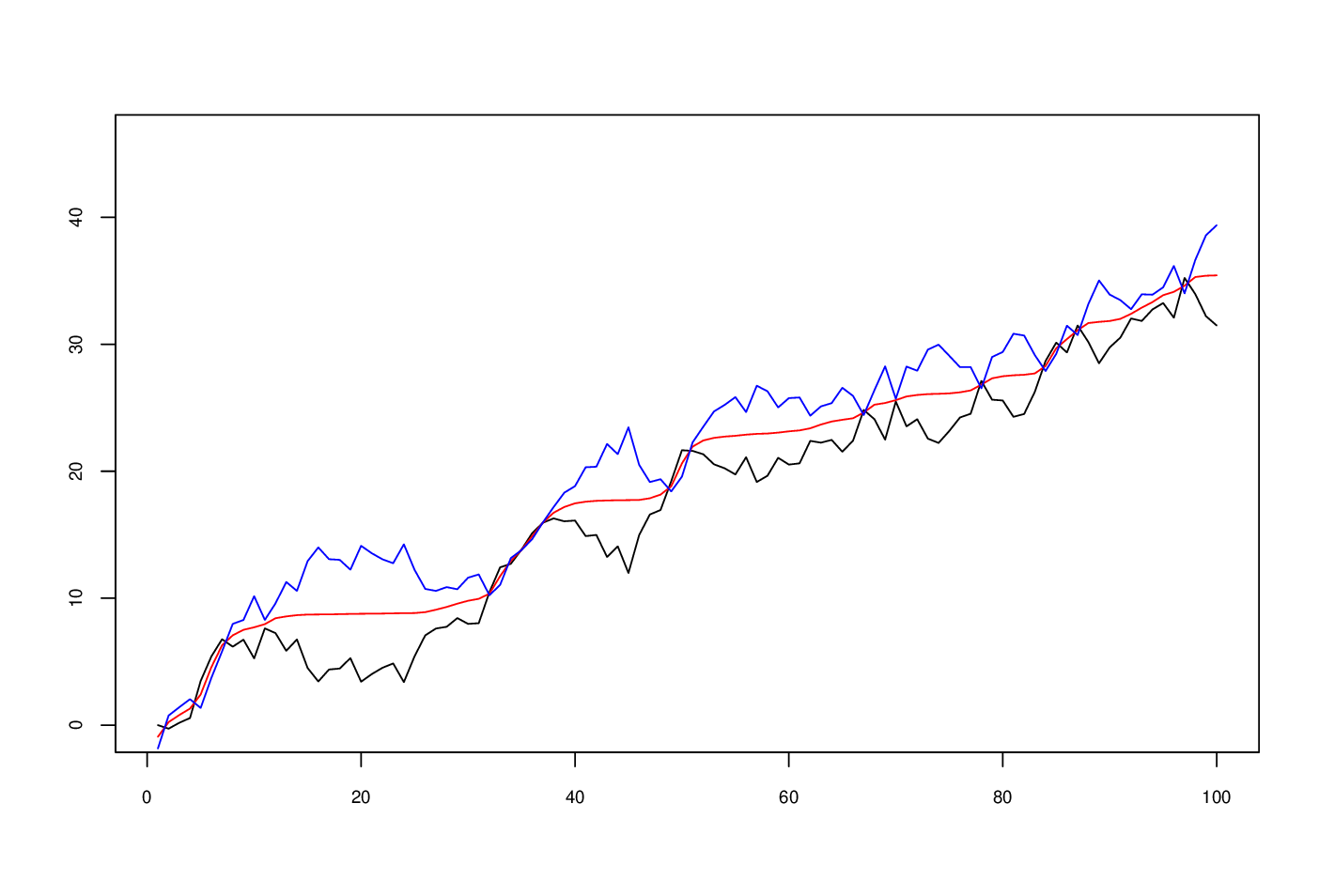}
\end{center}
\vspace{-40pt}
\begin{center}
\includegraphics[width = 0.9\textwidth]{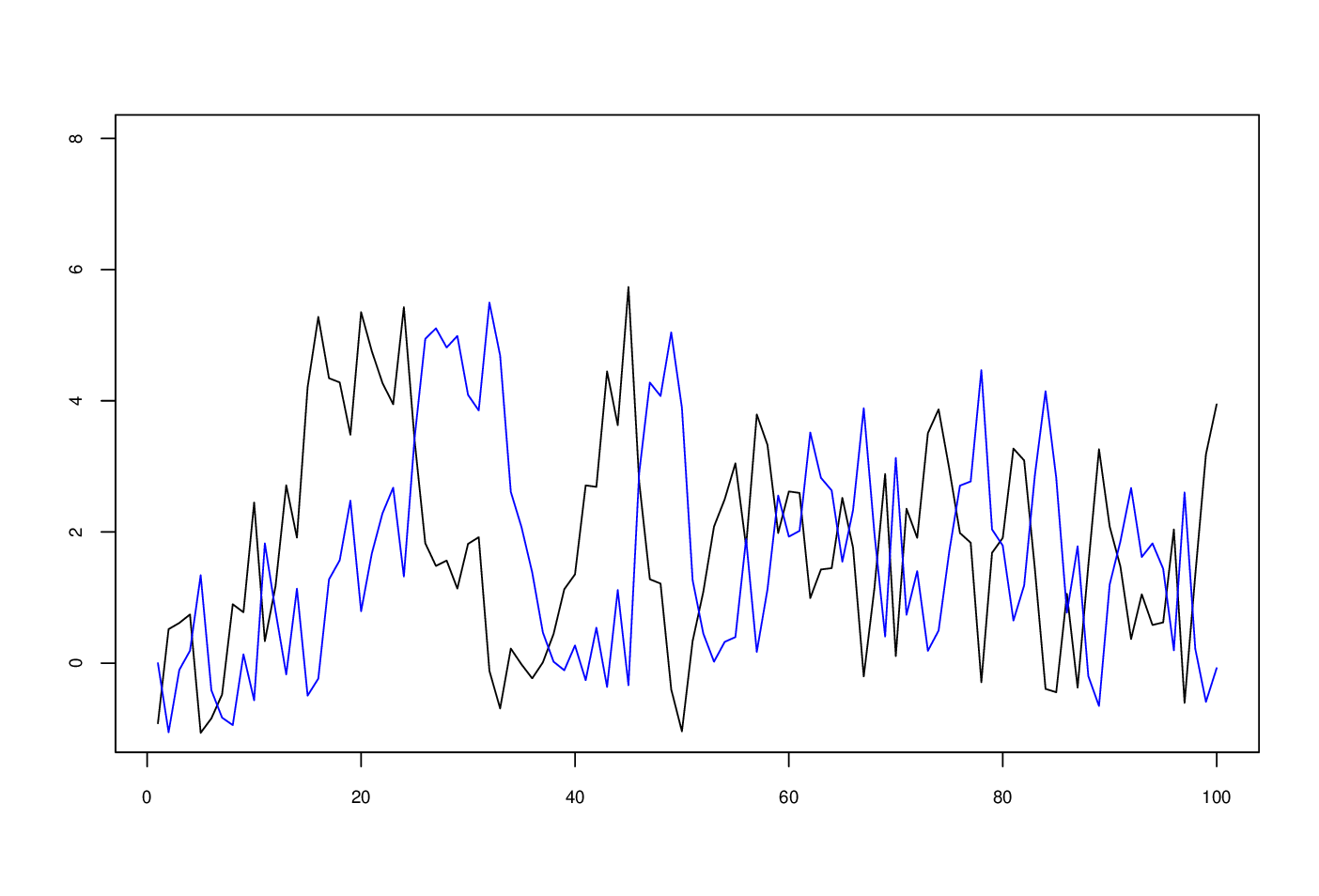}
\end{center}
\caption{The top figure shows an example sample path of $S$ (black), $M^{\sum}(S)$ (red) and $T^{\sum}(S)$ (blue). The bottom figure shows the corresponding carriers $W^{\sum}(S)$ (black) and $W^{\sum}(T^{\sum}(S))$.}\label{dkdvcarrierfig}
\end{figure}

\begin{thm}\label{dkdvthm} If $\omega=(\omega_n)_{n\in\mathbb{Z}}\in\mathcal{C}_{dK}^{(\delta)}$, then there is a unique solution $(\omega_n^t,U_n^t)_{n,t\in\mathbb{Z}}$ to \eqref{DKDV} that satisfies the initial condition $\omega^0=\omega$. This solution is given by setting, for $t\in\mathbb{Z}$,
\begin{align}
\omega^t&:=\left(S_{dK}^{(\delta)}\right)^{-1}\circ \left(T^{\sum}\right)^t\circ S_{dK}^{(\delta)}(\omega),\label{omegatdef}\\
U^t&:=\left(W_{dK}^{(\delta)}\right)^{-1}\circ W^{\sum}\circ \left(T^{\sum}\right)^t\circ S_{dK}^{(\delta)}(\omega).\nonumber
\end{align}
Or, more explicitly,
\begin{equation}\nonumber
\omega_n^t:=\exp\left(\frac{-\log\delta-S_n^t+S_{n-1}^{t}}{2}\right),\qquad U_n^t:=\exp\left(\frac{S_n^{t+1}-S_n^t-\log\delta}{2}\right),\qquad \forall n,t\in\mathbb{Z},
\end{equation}
where $S^0:=S_{dK}^{(\delta)}(\eta)$, and $S^t:=(T^{\sum})^t(S^0)$ for all $t\in\mathbb{Z}$. Moreover, $\omega^t\in\mathcal{C}_{dK}^{(\delta)}$ for all $t\in\mathbb{Z}$.
\end{thm}

\subsection{Ultra-discrete Toda equation}

We now turn to Toda-type equations, and the ultra-discrete Toda equation in particular, which has variables $(Q_n^t,E_n^t,U_n^t)_{n,t\in\mathbb{Z}}$ that take values in $\mathbb{R}$.

\eqbox{Ultra-discrete Toda (lattice) equation}{\begin{equation}\tag{udToda}\label{UDTODA}
\begin{cases}
Q_{n}^{t+1}=\min \{U_{n}^t,E_n^t\}, \\
E_{n}^{t+1}=Q_{n+1}^t+E_{n}^t-Q_{n}^{t+1},\\
U_{n+1}^t=U_{n}^t+Q_{n+1}^t-Q_{n}^{t+1},
\end{cases}
\end{equation}
for all $n,t\in\mathbb{Z}$.}

For the ultra-discrete Toda model, the configuration at time $t$ is given by the variables $Q^t=(Q^t_n)_{n\in\mathbb{Z}}$ and $E^t=(E^t_n)_{n\in\mathbb{Z}}$, with $Q^t_n$ representing the length of the $n$th interval containing mass, and $E^t_n$ representing the length of the $n$th empty interval. Of course, the terminology `length' only really makes physical sense if we restrict the variables to strictly positive values, see Theorem \ref{specialthm} below for treatment of this case. Other specific choices of configuration will also be discussed later (see Subsections \ref{specialsec}, \ref{knownsec}). As in previous cases, $U^t$ is viewed as an auxiliary variable representing a carrier process.
In particular, the initial value problem is to solve \eqref{UDTODA} given a boundary condition of the form $(Q^0,E^0)=(Q,E)$ for some $(Q,E)\in(\mathbb{R}^2)^\mathbb{Z}$. We will do this for $(Q,E)$ taking values in the following set:
\[\mathcal{C}_{udT}:=\left\{(Q,E)\in(\mathbb{R}^2)^\mathbb{Z}\::\:
 \begin{array}{l}
   \lim_{n \to \infty}\frac{\sum_{m=1}^{n}(Q_m-E_m)}{n} =  \lim_{n \to \infty}\frac{\sum_{m=1}^{n}(Q_m-E_m)+Q_{n+1}}{n} <0,\\
\lim_{n \to -\infty}\frac{\sum_{m=1}^{n}(Q_m-E_m)}{n}=  \lim_{n \to -\infty}\frac{\sum_{m=1}^{n}(Q_m-E_m)+E_{n}}{n} <0
 \end{array}\right\},\]
where the limit criteria can once again be viewed as a density condition.

Regarding the path encoding, we define $S_{udT}:(\mathbb{R}^2)^\mathbb{Z}\rightarrow \mathcal{S}^0$ by taking $S=S_{udT}(Q,E)$ to be the element of $\mathcal{S}^0$ given by
\begin{equation}\label{3pe}
S_{n}-S_{n-1}:=\left\{\begin{array}{ll}
                  -Q_{(n+1)/2}, & \mbox{$n$ odd},\\
E_{n/2}, & \mbox{$n$ even}.\\
                 \end{array}\right.
\end{equation}
Similarly to the previous cases, we have that $S_{udT}$ is a bijection, and we extend its inverse to the whole of $\mathcal{S}$ similarly to \eqref{extendinv}. Moreover, $S_{udT}:\mathcal{C}_{udT}\rightarrow\mathcal{S}^0\cap\mathcal{S}^{lin}$ is a bijection, and $(S_{udT})^{-1}(\mathcal{S}^{lin})=\mathcal{C}_{udT}$. We note that this path encoding of ultra-discrete Toda configurations is different to that used in \cite{CST}, which involved concatenating line segments of gradient alternating between $-1$ or $+1$, with lengths given by the elements of $Q$ and $E$, respectively (and is thus only applicable when $Q_n$ and $E_n$ take strictly positive values).

Reflecting the alternating nature of the definition of the path encoding, the dynamics involve the operator given by setting: for $S\in\mathcal{S}^{lin}$,
\[T^{\vee^*}(S):=2M^{\vee^*}(S)-S,\]
where
\[M^{\vee^*}(S)_{n}:=\left\{\begin{array}{ll}
 \sup_{m \le \frac{n-1}{2}}S_{2m}, & n\mbox{ odd}, \\
  \frac{M^{\vee^*}(S)_{n+1}+M^{\vee^*}(S)_{n-1}}{2}, &  n\mbox{ even}.
  \end{array}
\right.\]
It will further be convenient to introduce a shifted version of $T^{\vee^*}$, and so we define: for $S\in\mathcal{S}^{lin}$,
\[\mathcal{T}^{\vee^*}(S):=\theta\circ{T}^{\vee^*}(S),\]
where, as noted in the introduction, $\theta$ is the usual left-shift. We show later that $T^{\vee^*}:\mathcal{S}^{lin}\rightarrow\mathcal{S}^{lin}$ is bijection (cf.\ Corollary \ref{c25}), a result which implies that $(T^{\vee^*})^{-1}$ is well-defined, and clearly the same conclusions hold for $\mathcal{T}^{\vee^*}$. We also write
\begin{equation}\label{wveestardef}
W^{\vee^*}(S):=M^{\vee^*}(S)-S;
\end{equation}
in this case, we do not need a scale change to connect these to the carrier variables. Our main result concerning \eqref{UDTODA} is as follows, demonstrating that the dynamics of the configuration correspond to those of the path encoding under $\mathcal{T}^{\vee^*}$.

\begin{thm}\label{udtodathm} If $(Q,E)=(Q_n,E_n)_{n\in\mathbb{Z}}\in\mathcal{C}_{udT}$, then there is a unique solution $(Q_n^t,E_n^t,U_n^t)_{n,t\in\mathbb{Z}}$ to \eqref{UDTODA} that satisfies the initial condition $(Q^0,E^0)=(Q,E)$. This solution is given by setting, for $n,t\in\mathbb{Z}$,
\begin{align*}
(Q^t,E^t)&:=\left(S_{udT}\right)^{-1}\circ \left(\mathcal{T}^{\vee^*}\right)^t\circ S_{udT}(Q,E),\\
U^t_n&:=W^{\vee^*}\circ \left(\mathcal{T}^{\vee^*}\right)^t\circ S_{udT}(Q,E)_{2n-1}.
\end{align*}
Or, more explicitly,
\begin{equation}\nonumber
Q^t_n:=S^t_{2n-2}-S^t_{2n-1},\qquad E^t_n=S^t_{2n}-S^t_{2n-1},\qquad U^t_n:=\frac{S^{t+1}_{2n-2}-S^t_{2n-1}}{2},\qquad \forall n,t\in\mathbb{Z},
\end{equation}
where $S^0:=S_{udT}(Q,E)$, and $S^t:=(\mathcal{T}^{\vee^*})^t(S^0)$ for all $t\in\mathbb{Z}$. Moreover, $(Q^t,E^t)\in\mathcal{C}_{udT}$ for all $t\in\mathbb{Z}$.
\end{thm}

\begin{rem}
We highlight that, due to the nature of the path encoding in this case, whereby the pair $(E_n^t,Q_{n+1}^t)$ yields two steps of the path encoding, the carrier variable $U_n^t$ corresponds to the $n$th odd-indexed variable of the carrier process. The even-indexed variables of the carrier process appear as intermediate values in an enriched lattice system, see Corollary \ref{c57} (and the proof of the above result) for details. A similar comment applies in the case of the discrete Toda system, see Corollary \ref{c58} (and the proof of Theorem \ref{dtodathm}).
\end{rem}

\subsection{Discrete Toda equation}\label{dsec1toda}

As our fourth and final discrete integrable system, we come to the discrete Toda equation, which involves variables $(I_n^t,J_n^t,U_n^t)_{n,t\in\mathbb{Z}}$ that take values in $(0,\infty)$. For a discussion of a related closed form equation involving only the variables $(I_n^t,J_n^t)_{n,t\in\mathbb{Z}}$, see Subsection \ref{cfsec} below.

\eqbox{Discrete Toda (lattice) equation}{\begin{equation}\tag{dToda}\label{DTODA}
\begin{cases}
I_n^{t+1}=J_n^t+U_n^t,\\
J_n^{t+1}={I_{n+1}^{t}J_n^{t}}(I_n^{t+1})^{-1},\\
U_{n+1}^t={I_{n+1}^{t}U_n^{t}}(I_n^{t+1})^{-1},
\end{cases}
\end{equation}
for all $n,t\in\mathbb{Z}$.}

Similarly to the ultra-discrete Toda model, the configuration at time $t$ is given by the variables $I^t=(I^t_n)_{n\in\mathbb{Z}}$ and $J^t=(J^t_n)_{n\in\mathbb{Z}}$, and $U^t$ is an auxiliary variable. The initial value problem is to solve \eqref{DTODA} given a boundary condition of the form $(I^0,J^0)=(I,J)$ for some $(I,J)\in((0,\infty)^2)^\mathbb{Z}$. We will do this for $(I,J)$ taking values in the following set:
\begin{align*}
\lefteqn{\mathcal{C}_{dT}:=}\\
&\left\{(I,J)\in((0,\infty)^2)^\mathbb{Z}\::\:
 \begin{array}{l}
   \lim_{n \to \infty}\frac{\sum_{m=1}^{n}(\log J_m-\log I_m)}{n} =  \lim_{n \to \infty}\frac{\sum_{m=1}^{n}(\log J_m-\log I_m)-\log I_{n+1}}{n} <0,\\
\lim_{n \to -\infty}\frac{\sum_{m=1}^{n}(\log J_m-\log I_m)}{n}=  \lim_{n \to -\infty}\frac{\sum_{m=1}^{n}(\log J_m-\log I_m)-\log J_{n}}{n} <0
 \end{array}\right\},
 \end{align*}
where the limit criteria can once again be viewed as a density condition.

Regarding the path encoding, we define $S_{dT}:((0,\infty)^2)^\mathbb{Z}\rightarrow \mathcal{S}^0$ by taking $S=S_{dT}(I,J)$ to be the element of $\mathcal{S}^0$ given by
\begin{equation}\label{4pe}
S_{n}-S_{n-1}:=\left\{\begin{array}{ll}
                  \log I_{(n+1)/2}, & \mbox{$n$ odd},\\
-\log J_{n/2}, & \mbox{$n$ even}.\\
                 \end{array}\right.
\end{equation}
As before, we have that $S_{dT}$ is a bijection, and we extend its inverse to the whole of $\mathcal{S}$ similarly to \eqref{extendinv}. Moreover, $S_{dT}:\mathcal{C}_{dT}\rightarrow\mathcal{S}^0\cap\mathcal{S}^{lin}$ is a bijection, and $(S_{dT})^{-1}(\mathcal{S}^{lin})=\mathcal{C}_{dT}$.

For the discrete Toda system, the dynamics involve the operator given by setting: for $S\in\mathcal{S}^{lin}$,
\[T^{\sum^*}(S):=2M^{\sum^*}(S)-S,\]
where
\[M^{\sum^*}(S)_n:=\left\{\begin{array}{ll}
 \log \left(\sum_{m \le \frac{n-1}{2}} \exp\left(S_{2m}\right) \right), & n\mbox{ odd}, \\
  \frac{M^{\sum^*}(S)_{n+1}+M^{\sum^*}(S)_{n-1}}{2}, &  n\mbox{ even}.
  \end{array}
\right.\]
As in the ultra-discrete Toda case, we further introduce a shifted version of $T^{\sum^*}$, and so we define: for $S\in\mathcal{S}^{lin}$,
\[\mathcal{T}^{\sum^*}(S):=\theta\circ{T}^{\sum^*}(S),\]
where $\theta$ is again the left-shift. We show in later that $T^{\sum^*}:\mathcal{S}^{lin}\rightarrow\mathcal{S}^{lin}$ is bijection (cf.\ Corollary \ref{c25}), a result which implies that $(T^{\sum^*})^{-1}$ is well-defined, and clearly the same conclusions hold for $\mathcal{T}^{\sum^*}$. For the carrier, we also write
\begin{equation}\label{wsumstardef}
W^{\sum^*}(S):=M^{\sum^*}(S)-S,
\end{equation}
and define a scale change from $(0,\infty)^\Z$ to $\mathbb{R}^\Z$ by setting
\[W_{dT}(u)=\left(-\log u_n\right)_{n\in\Z},\qquad \forall u\in(0,\infty)^{\Z}.\]
Completing our collection of results concerning the solution of initial value problems for discrete integrable systems, we have the following.

\begin{thm}\label{dtodathm} If $(I,J)=(I_n,J_n)_{n\in\mathbb{Z}}\in\mathcal{C}_{dT}$, then there is a unique solution $(I_n^t,J_n^t,U_n^t)_{n,t\in\mathbb{Z}}$ to \eqref{DTODA} that satisfies the initial condition $(I^0,J^0)=(I,J)$. This solution is given by setting, for $n,t\in\mathbb{Z}$,
\begin{align}
(I^t,J^t)&:=\left(S_{dT}\right)^{-1}\circ \left(\mathcal{T}^{\sum^*}\right)^t\circ S_{dT}(I,J),\label{ijij}\\
U^t_n&:=\left(W_{dT}\right)^{-1}\circ W^{\sum^*}\circ \left(\mathcal{T}^{\sum^*}\right)^t\circ S_{dT}(I,J)_{2n-1}.\nonumber
\end{align}
Or, more explicitly, for $n,t\in\mathbb{Z}$,
\begin{equation}\nonumber
I^t_n:=\exp\left(S^t_{2n-1}-S^t_{2n-2}\right),\qquad
 J^t_n=\exp\left(-\left(S^t_{2n}-S^t_{2n-1}\right)\right),\qquad
U^t_n:=\exp\left(\frac{S^t_{2n-1}-S^{t+1}_{2n-2}}{2}\right),
\end{equation}
where $S^0:=S_{dT}(I,J)$, and $S^t:=(\mathcal{T}^{\sum^*})^t(S^0)$ for all $t\in\mathbb{Z}$. Moreover, $(I^t,J^t)\in\mathcal{C}_{dT}$ for all $t\in\mathbb{Z}$.
\end{thm}

\section{Lattice maps and their symmetries}\label{lmsec}

We have at several points above used the terminology `lattice equation' in describing the \eqref{UDKDV}, \eqref{DKDV}, \eqref{UDTODA} and \eqref{DTODA} systems. Towards formulating our general framework, as we do in the next section, here we make the lattice picture more precise for the systems of interest, with Table \ref{lmtable} (see below)
%DC - check positioning later
presenting details of the lattice structure and corresponding dynamics. In particular, using the notation of Table \ref{lmtable}, and ordering components of the maps as follows:
\begin{center}
\begin{tabular}{cc}
 \(\boxed{F(\cdot,\cdot)}\hspace{-15pt}\xymatrix@C-15pt@R-15pt{ & F^{(1)}(a,b) & \\
            b \ar[rr]& & F^{(2)}(a,b),\\
             & a \ar[uu]&}\) & \(\boxed{F(\cdot,\cdot,\cdot)}\hspace{-15pt}\xymatrix@C-15pt@R-15pt{ &F^{(1)}(a,b,c) &F^{(2)}(a,b,c)  &\\
            c \ar[rrr]& & & F^{(3)}(a,b,c), \\
& b \ar[uu]&a\ar[uu]&}\) \\
\end{tabular}
\end{center}
\noindent
we can reexpress \eqref{UDKDV}, \eqref{DKDV}, \eqref{UDTODA} and \eqref{DTODA} in the following compact way:
\[ \left.\begin{array}{ll}
   \hfill \eqref{UDKDV}:  & {(\eta_n^{t+1},U_n^t)=F_{udK}^{(L)}(\eta_n^t,U_{n-1}^t);} \medskip\\
   \hfill \eqref{DKDV}:   & {(\omega_n^{t+1},U_n^t)=F_{dK}^{(\delta)}(\omega_n^t,U_{n-1}^t);} \medskip\\
    \hfill\eqref{UDTODA}: & {(Q_{n}^{t+1},E_n^{t+1},U_{n+1}^t)=F_{udT}(Q_{n+1}^t,E_n^t,U_{n}^t);} \medskip\\
    \hfill\eqref{DTODA}:  & {(I_{n}^{t+1},J_n^{t+1},U_{n+1}^t)=F_{dT}(I_{n+1}^t,J_n^t,U_{n}^t).}
  \end{array}\right.\]
In the remainder of this section, we highlight some of the natural symmetries of these four maps, which will be relevant to solving initial value problems. Before doing this, however, we further remark that it is possible to decompose the Toda lattice structure from a single map with three inputs and three outputs to two maps, each with two inputs and two outputs (we postpone the details until Section \ref{proofsec}, see Lemmas \ref{ls1} and \ref{ls2}, and Remarks \ref{rs1} and \ref{rs2}, in particular). The latter picture will allow us to place both the KdV- and Toda-type systems into the same general framework that is developed in Section \ref{general}.

\begin{table}
\begin{center}
\begin{tabular}{r|c|c l}
  \emph{Model} & \emph{Lattice structure} & \multicolumn{2}{l}{\emph{Local dynamics}} \\
  \hline
  {udKdV} & \(\displaystyle\xymatrix@C-15pt@R-15pt{ & \eta_n^{t+1} & \\
            U_{n-1}^t \ar[rr]& & U_n^t\\
             & \eta_n^t \ar[uu]&}\) & \framebox(25,22){$F_{udK}^{(L)}$}&\hspace{-5pt}\(\displaystyle\xymatrix@C-15pt@R-15pt{ & \min\{L-a,b\} & \\
            b \ar[rr]& & ^{a+b}_{-\min\{L-a,b\}}\\
             & a \ar[uu]&}\)  \\
             \hline
  {dKdV} & \(\displaystyle\xymatrix@C-15pt@R-15pt{ & \omega_n^{t+1}  & \\
            U_{n-1}^t \ar[rr]& & U_n^t\\
             & \omega_n^t\ar[uu]&}\) & \framebox(25,22){$F_{dK}^{(\delta)}$}&\hspace{-5pt}\(\displaystyle\xymatrix@C-15pt@R-15pt{ & \frac{1}{\delta a+ b^{-1}} & \\
            b \ar[rr]& & ab\left(\delta a+b^{-1}\right)\\
             & a \ar[uu]&}\) \\
             \hline
  {udToda} & \(\displaystyle\xymatrix@C-15pt@R-15pt{ & Q_n^{t+1} & E_n^{t+1}& \\
            U_n^{t} \ar[rrr] & && U_{n+1}^t\\
             & E_n^{t}\ar[uu]& Q_{n+1}^{t}\ar[uu]&}\) &  \framebox(25,22){$F_{udT}$}&\hspace{-5pt}\(\displaystyle\xymatrix@C-15pt@R-15pt{ & \min\{b,c\} & ^{a+b}_{-\min\{b,c\}}& \\
            c \ar[rrr] & && ^{a+c}_{-\min\{b,c\}}\\
             & b \ar[uu]&a\ar[uu]&}\)\\
             \hline
  {dToda} & \(\displaystyle\xymatrix@C-15pt@R-15pt{ &I_n^{t+1}&J_n^{t+1} &\\
            U_n^{t} \ar[rrr]& & & U_{n+1}^{t}\\
& J_n^t \ar[uu]&I_{n+1}^t\ar[uu]&}\)  &  \framebox(25,22){$F_{dT}$}&\hspace{-5pt}\(\displaystyle\xymatrix@C-15pt@R-15pt{ &b+c&\frac{ab}{b+c} &\\
            c \ar[rrr]& & & \frac{ac}{b+c}\\
& b \ar[uu]&a\ar[uu]&}\)
\end{tabular}
\end{center}
\caption{Lattice structures for the \eqref{UDKDV}, \eqref{DKDV}, \eqref{UDTODA} and \eqref{DTODA} systems, and the corresponding local dynamics.}\label{lmtable}
\end{table}

\subsection{Self-inverse maps}\label{sisec} The first symmetry we discuss is that the maps are all involutions, i.e.\ they are all bijections, and it holds that
\[\left(F_{udK}^{(L)}\right)^{-1}=F_{udK}^{(L)},\qquad \left(F_{dK}^{(\delta)}\right)^{-1}=F_{dK}^{(\delta)},\qquad \left(F_{udT}\right)^{-1}=F_{udT},\qquad\left(F_{dT}\right)^{-1}=F_{dT};\]
checking this involves elementary computations, which are omitted. Consequently, if we reverse the directions of the arrows in the lattice structures shown in Table \ref{lmtable}, then, in each case, the corresponding local dynamics are described by the same map as in the original system. It follows that the backwards in time evolution can be understood in exactly the same way as the forward in time evolution. In Subsection \ref{fbeq} below, we explain how this observation can be applied to reduce the all-time initial value problems considered in Section \ref{dissec} to forward and backward versions. Moreover, a general treatment of the time-reversal of a system of locally-defined dynamics and the relevance of a system being self-reverse to solving corresponding initial value problems is given in Subsection \ref{s42}.

\subsection{Conserved quantities}\label{conssec} Being integrable systems, each of the models \eqref{UDKDV}, \eqref{DKDV}, \eqref{UDTODA} and \eqref{DTODA} admits many globally conserved quantities. Additionally, for each of the maps $F_{udK}^{(L)}$, $F_{dK}^{(\delta)}$, $F_{udT}$ and $F_{dT}$, one can identify a number of locally-defined quantities that are preserved, and we now introduce those that will be important for this article.
\begin{description}
  \item[\eqref{UDKDV}] For $F_{udK}^{(L)}$, one has that $a+b$ is conserved, i.e.
  \[\left(F_{udK}^{(L)}\right)^{(1)}(a,b)+\left(F_{udK}^{(L)}\right)^{(2)}(a,b)=a+b,\]
  where we write $F_{udK}^{(L)}(a,b)=((F_{udK}^{(L)})^{(1)}(a,b),(F_{udK}^{(L)})^{(2)}(a,b))$. Using the viewpoint that $a$ represents the original mass at the relevant lattice location, $b$ represents the mass brought to the site by the carrier, $(F_{udK}^{(L)})^{(1)}(a,b)$ represents the mass left after the carrier has passed, and $(F_{udK}^{(L)})^{(2)}(a,b)$ represents the mass moved onwards by the carrier, we see that the preservation of $a+b$ can be interpreted as a conservation of mass property. (Of course, we allow the variables $a$ and $b$ to take negative values, so this is only a heuristic in general.)
  \item[\eqref{DKDV}] For $F_{dK}^{(\delta)}$, one readily sees that $ab$ is conserved. Equivalently, we have that $\log a +\log b$ is conserved, and so if we view the $\log$-transformed variables as representing mass, then we can again view this as a conservation of mass.
  \item[\eqref{UDTODA}] For $F_{udT}$, one has that both $a+b$ and $a+c$ are conserved. Both of these can be interpreted physically, with $a+b$ representing the `length' of the spatial interval to which the local dynamics applies, and $a+c$ representing the mass (assuming mass is placed at a unit density on non-empty intervals). The conserved quantity that will arise naturally in our study is a combination of these, specifically being given by $(a+b)-2(a+c)=b-a-2c$.
  \item[\eqref{DTODA}] For $F_{dT}$, one similarly has that both $ab$ and $ac$ are conserved, and again taking $\log$-transformations allows a physical interpretation in terms of `length' and `mass' corresponding to that of the ultra-discrete Toda lattice. Moreover, the conserved quantity that will be most relevant to us is given by $\log b-\log a-2\log c$.
\end{description}

As we will set out in Subsection \ref{s44}, for bijections $F:\mathbb{R}^2\rightarrow\mathbb{R}^2$ given by $(a,b)\mapsto F(a,b)$ for which $a-2b$ is a conserved quantity, we have a natural way to relate the associated lattice dynamics to a Pitman-type transformation of a certain path encoding of the configuration. The path-encoding picture turns out to be extremely useful for identifying a `canonical carrier' for configurations within a certain class, and this allows us to identify solutions to the corresponding initial value problems. By making appropriate changes of variables, we can map each of the systems \eqref{UDKDV}, \eqref{DKDV}, \eqref{UDTODA} and \eqref{DTODA} into the latter setting; the conserved quantities that are relevant for the four discrete integrable systems in question are discussed further in Remark \ref{consrem}. As a result, we obtain the connection between the models of interest and the Pitman-type transformations of Table \ref{Mtable}, and a means for establishing our main results (see Subsection \ref{s44} and Section \ref{proofsec} for details).

\subsection{Duality}

A further symmetry that we will not explore here, but which relates to our results, is the duality of the KdV systems under the exchange of the role of the configuration and carrier. Indeed, reflecting the lattices in the diagonal gives dual lattices with structure and dynamics given by
\begin{equation}\label{duala}
\xymatrix@C-15pt@R-15pt{ & U_n^t& \\
           \eta_n^t \ar[rr]& & \eta_n^{t+1}, \\
             &  U_{n-1}^t \ar[uu]&}\qquad\xymatrix@C-15pt@R-15pt{ &^{a+b}_{-\min\{L-b,a\}} & \\
            b \ar[rr]& &  \min\{L-b,a\},\\
             & a \ar[uu]&}
             \end{equation}
in the ultra-discrete case, and
\begin{equation}\label{dualb}
\xymatrix@C-15pt@R-15pt{ & U_n^t & \\
            \omega_n^t\ar[rr]& & \omega_n^{t+1}, \\
             &  U_{n-1}^t\ar[uu]&}\qquad\xymatrix@C-15pt@R-15pt{ & ab\left(a^{-1}+\delta b\right) & \\
            b \ar[rr]& &\frac{1}{a^{-1}+\delta b},\\
             & a\ar[uu]&}
             \end{equation}
in the discrete case.

In contrast to the \eqref{UDKDV} system that we study here, which has configuration capacity $L$ and carrier capacity $\infty$, the system at \eqref{duala} represents the corresponding model with configuration capacity $\infty$ and carrier capacity $L$. Given this relationship, it is clear that the solutions we construct to \eqref{UDKDV} in Theorem \ref{udkdvthm} also give solutions to the dual model with the initial condition being the state of the carrier at spatial location 0 (rather than the state of the configuration at time 0, as in the original model). The property of duality was applied in \cite{CS} to explore the invariant measures of BBS($J$,$K$) for $J,K\in\mathbb{N}\cup\{\infty\}$, that is, box-ball systems of box capacity $J$ and carrier capacity $K$, for which BBS($J$,$K$) and BBS($K$,$J$) are dual. Interestingly, the path encoding of \cite{CKST} (and this article) is only appropriate for BBS($J$,$K$) with $J<K$, and so \cite{CS} applied duality to understand the dual systems, i.e.\ those with $J>K$. Since the  BBS($J$,$K$) is a special case of a two-parameter version of the ultra-discrete KdV equation, it is natural to extend the results of \cite{CS} to the more general model, and such a study forms part of the follow-up work \cite{CSirf}.

For the \eqref{DKDV} system, again one has a two-parameter version \eqref{DKDV}$(\alpha,\beta)$, with our original model corresponding to the \eqref{DKDV}$(\delta,0)$ model, and the system at \eqref{dualb} corresponding to \eqref{DKDV}$(0,\delta)$. Hence, similarly to the ultra-discrete case, Theorem \ref{dkdvthm} yields solutions to \eqref{DKDV}$(0,\delta)$ with initial condition being the state of the carrier at spatial location 0. In \cite{CSirf}, duality between the invariant measures of  \eqref{DKDV}$(\delta,0)$ and  \eqref{DKDV}$(0,\delta)$ are explored. For the discrete systems, we are only able to give a useful path encoding for the model \eqref{DKDV}$(\delta,0)$, and it remains an interesting question as to whether one can say anything in this direction for \eqref{DKDV}$(\alpha,\beta)$ with $\alpha>\beta>0$.

Finally, given the specific structure of the Toda lattice, the relevance of duality is less clear. Whilst one might conceive of techniques for constructing two auxiliary carrier variables for each configuration variable in the dual model, it is not clear to us how the spatial shift that the dynamics incorporates should be handled.

\section{Pitman-type transformation maps and path encodings}\label{general}

The aim of this section is to develop the framework that will allow us to prove the main results of the article, as presented in Section \ref{dissec}. In particular, we introduce a general definition of locally-defined dynamics on a configuration space, and relate solutions of a corresponding initial value problem to the existence of what we term a canonical carrier process -- Subsection \ref{s41} deals with the forward problem (see Theorem \ref{thm:uni-ex-f-gp}), and in Subsection \ref{s42} we introduce the time-reversal of the system, which allows us to extend the techniques to cover both forward and backward time evolution (see Theorem \ref{thm:uni-ex-f-gp-2}). Moreover, in Subsection \ref{s43}, we go on to describe how solutions to initial value problems can be transferred to related systems via a change of coordinates (see Theorem \ref{thm:uni-ex-f-gp-3}). The principal weakness of the aforementioned results is that they depend on the identification of a canonical carrier. This issue is addressed in Subsection \ref{s44}, where a criteria is provided for a special class of locally-defined dynamics in terms of path encodings of configurations (see Assumption \ref{a1}), and it is further explained how the dynamics of the system can be seen as a Pitman-type transformations on path space in these cases (see Theorems \ref{t16} and \ref{t17}). Finally, in Subsection \ref{s45}, we apply the results to four important examples of locally-defined dynamics (see Corollary \ref{c25}), which correspond to unparameterized versions of the \eqref{UDKDV}, \eqref{DKDV}, \eqref{UDTODA} and \eqref{DTODA} systems, as will be detailed in the subsequent section.

\subsection{Initial value problem for locally-defined dynamics}\label{s41}

Let us start by introducing the initial value problem that is the focus of this subsection. Firstly, we consider a configuration space $\mathcal{X}:= \Pi_{ n \in \mathbb{Z}} \mathcal{X}_n$ given as an infinite product of sets; configurations will typically be written as $x=(x_n)_{n\in\mathbb{Z}} \in \mathcal{X}$. Secondly, the dynamics will be locally-defined, according to a collection of maps that satisfy the following definition. Note that this definition involves the factors of a space $\mathcal{U}:= \Pi_{ n \in \mathbb{Z}} \mathcal{U}_n$, in which realizations of a carrier process will exist (for a precise description of a carrier and the associated dynamics, see Definition \ref{carrierdef} below).

\begin{df}\label{ldd}
We say that we have dynamics on $\mathcal{X}$ that are \emph{locally-defined} if they are given by a shift parameter $m\in\mathbb{Z}$ and collection of maps $(F_n)_{n\in\mathbb{Z}}$ such that, for each $n$,
\[F_n: \mathcal{X}_n \times \mathcal{U}_{n-1} \to \mathcal{X}_{n-m} \times \mathcal{U}_{n}\]
is a bijection.
\end{df}

Throughout this subsection, we will suppose that we have locally-defined dynamics given by $m$ and $(F_n)_{n\in\mathbb{Z}}$, as per the preceding definition, and study the existence and uniqueness of solutions to the following forward problem: given $x\in \mathcal{X}$, find $((x^t,u^t))_{t\in\mathbb{Z}_+}\in(\mathcal{X}\times\mathcal{U})^{\mathbb{Z}_+}$ such that
\begin{equation}\label{IVP}
\begin{cases}
x^0=x, \\
\left(x^{t+1}_{n-m},u^t_n\right)=F_n\left(x^{t}_{n},u^t_{n-1}\right),\qquad \forall n \in \Z,\:\ t \in \Z_+.
\end{cases}
\end{equation}
We note that in our applications to KdV-type discrete integrable systems, we will take $m=0$, and $\mathcal{X}_n$, $\mathcal{U}_n$ and $F_n$ will not depend on $n$. (In particular, for the various versions of BBS covered by our framework, $F_n$ is given by the combinatorial $R$, as described in \cite[Section 2.2]{KLO}, for example, with the order of the output components exchanged.) The principal motivation for presenting the generalized setting is to handle Toda-type systems, for which we will take $m=1$, and $\mathcal{X}_n$, $\mathcal{U}_n$ and $F_n$ will alternate between odd and even $n$.
Our solution to \eqref{IVP} is presented as Theorem \ref{thm:uni-ex-f-gp} below, and will be expressed in terms of `canonical carrier functions', which we now define. In the following definition, we also describe the dynamics associated with a particular carrier.

\begin{df}\label{carrierdef}
(a) We say $u\in \mathcal{U}$ is a \emph{carrier} for $x\in\mathcal{X}$ if
\[u_n=F_n^{(2)}\left(x_n,u_{n-1}\right),\qquad\forall n\in\mathbb{Z},\]
where we write $F_n(x_n,u_{n-1})=(F_n^{(1)}(x_n,u_{n-1}),F_n^{(2)}(x_n,u_{n-1}))$. The associated dynamics $x\mapsto \mathcal{T}^ux$ are given by setting
\[\left(\mathcal{T}^ux\right)_{n-m}=F_n^{(1)}\left(x_n,u_{n-1}\right).\]
(b) We say $U:\mathcal{X}^U\rightarrow\mathcal{U}$, where $\mathcal{X}^U\subseteq\mathcal{X}$, is a \emph{carrier function} if, for all $x\in \mathcal{X}^U$, it holds that $U(x)$ is a carrier for $x$.\\
(c) We say $U:\mathcal{X}^U\rightarrow\mathcal{U}$, where $\mathcal{X}^U\subseteq\mathcal{X}$, is a \emph{canonical carrier function} if it is a carrier function and moreover the following two properties are satisfied:\\
(i) it holds that $\mathcal{T}(\mathcal{X}^U)\subseteq \mathcal{X}^U$, where
\begin{equation}\label{curlyt}
\mathcal{T}(x):=\mathcal{T}^{U(x)}x,\qquad \forall x\in \mathcal{X}^U;
\end{equation}
(ii) for any $x \in \mathcal{X}^U$ and $u\in\mathcal{U}$ such that $u$ is a carrier for $x$, if $u \neq U(x)$, then there is no carrier for $\mathcal{T}^ux$.
\end{df}

\begin{rem} The definition of `canonical' above differs from the corresponding definitions in \cite{CKST, CS}. Indeed, in the latter papers (which concern the BBS and its finite capacity versions), more transparent conditions were given, and justified on the basis of their physical relevance. Nonetheless, for the systems where multiple definitions apply, the resulting carrier is the same under each definition.
\end{rem}

Clearly, property (i) in the definition of a canonical carrier function yields that for $x\in\mathcal{X}^U$, it is possible to iterate the dynamics given $\mathcal{T}$ repeatedly to obtain $\mathcal{T}^t(x)$ for any $t\in\mathbb{Z}_+$. Moreover, property (ii) ensures that the choice of carrier $U(x)$ for $x\in\mathcal{X}^U$ is literally `canonical', in that for any other choice of carrier, we can not continue to define the dynamics beyond a single time step. As a consequence, we find that in the case a canonical carrier function $U:\mathcal{X}^U\rightarrow\mathcal{U}$ exists, the only solution to \eqref{IVP} with initial data $x\in\mathcal{X}^U$ is given by the dynamics associated with $U$. This is the content of the following theorem, which also explains why we choose not to add a superscript $U$ in the definition of $\mathcal{T}$ at \eqref{curlyt}.

\begin{thm}\label{thm:uni-ex-f-gp} Suppose that $U:\mathcal{X}^U\rightarrow\mathcal{U}$, where $\mathcal{X}^U\subseteq\mathcal{X}$, is a canonical carrier function for the locally-defined dynamics given by $m\in\mathbb{Z}$ and maps $(F_n)_{n\in\mathbb{Z}}$. It then holds that, for each $x\in\mathcal{X}^U$, the initial value problem at \eqref{IVP} has a unique solution, which is given by setting
\begin{equation}\label{xudef}
x^t:=\mathcal{T}^t(x),\qquad u^t:=U(x^t),\qquad \forall t\in\mathbb{Z}_+,
\end{equation}
where $\mathcal{T}$ is defined as at \eqref{curlyt}.
\end{thm}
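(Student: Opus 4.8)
The plan is to show the two required facts separately: first, that the pair defined at \eqref{xudef} is a solution of \eqref{IVP}, and second, that it is the only one. Both will be done by induction on $t\in\mathbb{Z}_+$, with the canonical carrier properties (i) and (ii) supplying exactly what is needed at each step.

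First I would verify \emph{existence}. Fix $x\in\mathcal{X}^U$. Property (i) in Definition \ref{carrierdef}(c) tells us that $\mathcal{T}(\mathcal{X}^U)\subseteq\mathcal{X}^U$, so by an immediate induction $x^t:=\mathcal{T}^t(x)$ is a well-defined element of $\mathcal{X}^U$ for every $t\in\mathbb{Z}_+$; consequently $u^t:=U(x^t)$ is well-defined (the domain condition $x^t\in\mathcal{X}^U$ being satisfied) and, since $U$ is a carrier function, $u^t$ is a carrier for $x^t$. It then remains to check that the recursion in \eqref{IVP} holds. By the definition of a carrier, $u^t_n=F_n^{(2)}(x^t_n,u^t_{n-1})$ for all $n$, which is the second component of the required identity. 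For the first component, the definition of the associated dynamics $\mathcal{T}^{u}$ gives $(\mathcal{T}^{u^t}x^t)_{n-m}=F_n^{(1)}(x^t_n,u^t_{n-1})$, and by \eqref{curlyt} together with $u^t=U(x^t)$ we have $\mathcal{T}^{u^t}x^t=\mathcal{T}(x^t)=x^{t+1}$; hence $x^{t+1}_{n-m}=F_n^{(1)}(x^t_n,u^t_{n-1})$. Combining the two components, $(x^{t+1}_{n-m},u^t_n)=F_n(x^t_n,u^t_{n-1})$ for all $n\in\mathbb{Z}$ and $t\in\mathbb{Z}_+$, and $x^0=x$ by construction, so \eqref{xudef} solves \eqref{IVP}.

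For \emph{uniqueness}, suppose $((\tilde{x}^t,\tilde{u}^t))_{t\in\mathbb{Z}_+}$ is any solution of \eqref{IVP} with $\tilde{x}^0=x$. I claim by induction on $t$ that $\tilde{x}^t=x^t$, $\tilde{x}^t\in\mathcal{X}^U$, and $\tilde{u}^t=U(\tilde{x}^t)$. The base case $\tilde{x}^0=x=x^0\in\mathcal{X}^U$ holds by hypothesis. For the inductive step, assume $\tilde{x}^t=x^t\in\mathcal{X}^U$. The equation $(\tilde{x}^{t+1}_{n-m},\tilde{u}^t_n)=F_n(\tilde{x}^t_n,\tilde{u}^t_{n-1})$ forces, via its second component, $\tilde{u}^t_n=F_n^{(2)}(\tilde{x}^t_n,\tilde{u}^t_{n-1})$ for all $n$, i.e.\ $\tilde{u}^t$ is a carrier for $\tilde{x}^t$; and its first component then gives $\tilde{x}^{t+1}=\mathcal{T}^{\tilde{u}^t}\tilde{x}^t$. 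Now here is the crux of the argument, and the step I expect to be the main obstacle, since it is where property (ii) must be invoked with care: if $\tilde{u}^t\neq U(\tilde{x}^t)$, then property (ii) of Definition \ref{carrierdef}(c) says there is \emph{no} carrier for $\mathcal{T}^{\tilde{u}^t}\tilde{x}^t=\tilde{x}^{t+1}$. But the solution continues to time $t+2$, so $\tilde{u}^{t+1}$ must be a carrier for $\tilde{x}^{t+1}$ (by the same second-component argument applied at level $t+1$) --- a contradiction. Hence $\tilde{u}^t=U(\tilde{x}^t)$, and therefore $\tilde{x}^{t+1}=\mathcal{T}^{U(\tilde{x}^t)}\tilde{x}^t=\mathcal{T}(\tilde{x}^t)=\mathcal{T}(x^t)=x^{t+1}$, which also lies in $\mathcal{X}^U$ by property (i). This closes the induction and shows $(\tilde{x}^t,\tilde{u}^t)=(x^t,u^t)$ for all $t$, completing the uniqueness proof. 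The only delicate point to get right is that the argument for $\tilde{u}^t=U(\tilde{x}^t)$ at step $t$ uses the existence of the solution one step later (at step $t+1$), so the contradiction is genuinely "local in time" but requires the solution to be defined on all of $\mathbb{Z}_+$; one should note explicitly that this is why a one-time-step solution need not be unique, consistent with the discussion preceding the theorem.
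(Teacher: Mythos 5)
Your proof is correct and follows essentially the same route as the paper's: existence is read off directly from the definitions of a carrier function and the associated dynamics, and uniqueness is established recursively by noting that $\tilde{u}^t$ is a carrier for $\tilde{x}^t$, that $\tilde{x}^{t+1}=\mathcal{T}^{\tilde{u}^t}\tilde{x}^t$ admits the carrier $\tilde{u}^{t+1}$, and hence by property (ii) of a canonical carrier $\tilde{u}^t=U(\tilde{x}^t)$. Your write-up merely makes explicit the inductive bookkeeping that the paper compresses into "repeating the same argument gives the conclusion."
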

\begin{proof} By Definition \ref{carrierdef}, it is clear that $((x^t,u^t))_{t\in\mathbb{Z}_+}$, as defined at \eqref{xudef}, is a solution to \eqref{IVP}. We will establish the uniqueness claim recursively. If $((\tilde{x}^t,\tilde{u}^t))_{t\in\mathbb{Z}_+}$ is a solution of \eqref{IVP}, then $\tilde{u}^0$ is a carrier for $\tilde{x}^0=x\in\mathcal{X}^{U}$. Moreover, $\tilde{x}^1=\mathcal{T}^{\tilde{u}^0}x$ has a carrier $\tilde{u}^1$. Hence, by property (ii) of a canonical carrier, we must have that $\tilde{u}^0=U(x)=u^0$. It follows that $\tilde{x}^1=x^1$, and repeating the same argument gives the conclusion.
\end{proof}

\subsection{Reversibility and all-time solutions to the initial value problem}\label{s42}

The aim of this subsection is to extend the discussion of the previous subsection to negative times $t$, and thereby identify solutions to the initial value problem that are valid for all times (forward and backward), see Theorem \ref{thm:uni-ex-f-gp-2} in particular. Throughout we will suppose we have spaces $\mathcal{X}$ and $\mathcal{U}$, and locally-defined dynamics given by a shift parameter $m\in\mathbb{Z}$ and collection of maps $(F_n)_{n\in\mathbb{Z}}$, as in Definition \ref{ldd}. We start by considering the following backward problem: given $x\in \mathcal{X}$, set $x^0=x$ and find $((x^{t},u^{t}))_{t\in\mathbb{Z}_-}\in(\mathcal{X}\times\mathcal{U})^{\mathbb{Z}_-}$ such that
\begin{equation}\label{bIVP}
\left(x^{t+1}_{n-m},u^t_n\right)=F_n\left(x^{t}_{n},u^t_{n-1}\right),\qquad \forall n \in \Z,\:\ t \in \Z_-,
\end{equation}
where we recall the convention that $\mathbb{Z}_-:=-\mathbb{N}=\{\dots,-2,-1\}$. We will approach this via a corresponding reversed problem, as introduced in the subsequent definition.

\begin{df} (a) The \emph{reversed configuration space} is given by $\mathcal{X}^R:=\prod_{n\in\mathbb{Z}}\mathcal{X}^R_n$, where $\mathcal{X}_n^R:=\mathcal{X}_{m+1-n}$. We define a corresponding \emph{configuration reversal operator} $R^\mathcal{X}:\mathcal{X}\rightarrow\mathcal{X}^R$ by setting
\[R^\mathcal{X}(x)_n:=x_{m+1-n},\qquad \forall n\in\mathbb{Z}.\]
(b) The \emph{reversed carrier space} is given by $\mathcal{U}^R:=\prod_{n\in\mathbb{Z}}\mathcal{U}^R_n$, where $\mathcal{U}_n^R:=\mathcal{U}_{2m-n}$. We define a corresponding \emph{carrier reversal operator} $R^\mathcal{U}:\mathcal{U}\rightarrow\mathcal{U}^R$ by setting
\[R^\mathcal{U}(u)_n:=u_{2m-n},\qquad \forall n\in\mathbb{Z}.\]
(c) The \emph{reversed locally-defined dynamics} are given by shift parameter $m$ and maps $(F_n^R)_{n\in\mathbb{Z}}$, where
\[F_n^R: \mathcal{X}^R_n \times \mathcal{U}^R_{n-1} \to \mathcal{X}^R_{n-m} \times \mathcal{U}^R_{n}\]
is defined by $F_n^R:=F_{2m+1-n}^{-1}$.\\
(d) The \emph{forward problem for the reversed locally-defined dynamics} is as follows: given $x\in \mathcal{X}^R$, find $((x^t,u^t))_{t\in\mathbb{Z}_+}\in(\mathcal{X}^R\times\mathcal{U}^R)^{\mathbb{Z}_+}$ such that
\begin{equation}\label{rIVP}
\begin{cases}
x^0=x, \\
\left(x^{t+1}_{n-m},u^t_n\right)=F^R_n\left(x^{t}_{n},u^t_{n-1}\right),\qquad \forall n \in \Z,\:\ t \in \Z_+.
\end{cases}
\end{equation}
\end{df}

Specifically, given initial data $x^0$, the backward problem at \eqref{bIVP} is to determine the variables in the lower-half plane of the following lattice:
\[\xymatrix@C-15pt@R-15pt{     & \boxed{F_m}            & x_0^0           &         &      &  \boxed{F_{n+m}}  & x_n^0  & &      \\
                          \dots&  u_{m-1}^{-1}  \ar[rr] &                 &u_m^{-1} &\dots &  u_{n+m-1}^{-1}\ar[rr] &        & u_{n+m}^{-1}&\dots.  \\
                               &                        & x_m^{-1} \ar[uu]&         &      &   & x_{n+m}^{-1}\ar[uu]& &\\
                               &&\vdots&&&&\vdots&&}\]
Reversing the directions of the arrows, rotating by $180^\circ$ about $x_0^0$, and shifting by $m+1$ positions horizontally to the left, this problem is equivalent to determining the variables in the upper-half plane of the picture:
\[\xymatrix@C-15pt@R-15pt{  &&\vdots&&&&\vdots&&\\
   & \boxed{F_{2m+1}^{-1}}            & x_{2m+1}^{-1}           &         &      &  \boxed{F_{2m+1-n}^{-1}}  & x_{2m+1-n}^{-1}  & &      \\
                          \dots&  u_{2m+1}^{-1}  \ar[rr] &                 &u_{2m}^{-1} &\dots &  u_{2m+1-n}^{-1}\ar[rr] &        & u_{2m-n}^{-1}&\dots.  \\
                               &                        & x_{m+1}^{0} \ar[uu]&         &      &   & x_{m+1-n}^{0}\ar[uu]& &}\]
It is then an elementary exercise in relabelling to rewrite the variables and maps as:
\[\xymatrix@C-15pt@R-15pt{  &&\vdots&&&&\vdots&&\\
   & \boxed{F_0^R}            & R^\mathcal{X}(x^{-1})_{-m}           &         &      &  \boxed{F_{n}^R}  & R^\mathcal{X}(x^{-1})_{n-m}  & &      \\
                          \dots\hspace{-15pt}&  R^\mathcal{U}(u^{-1})_{-1} \hspace{-2pt} \ar[rr] &                 & R^\mathcal{U}(u^{-1})_{0} \hspace{-9pt}&\hspace{-22pt}\dots\hspace{-22pt} &\hspace{-9pt}  R^\mathcal{U}(u^{-1})_{n-1}\hspace{-2pt}\ar[rr] &        & R^\mathcal{U}(u^{-1})_{n}&\hspace{-15pt}\dots.  \\
                               &                        & R^\mathcal{X}(x^0)_{0} \ar[uu]&         &      &   & R^\mathcal{X}(x^0)_{n}\ar[uu]& &}\]
In this manner, one arrives at the following proposition.

\begin{prop}\label{fbprop} The sequence $((x^{t},u^{t}))_{t\in\mathbb{Z}_-}\in(\mathcal{X}\times\mathcal{U})^{\mathbb{Z}_-}$ is a solution to the backward problem \eqref{bIVP} with initial condition $x^0\in\mathcal{X}$ if and only if the sequence $((x^{R,t},u^{R,t}))_{t\in\mathbb{Z}_+}\in(\mathcal{X}\times\mathcal{U})^{\mathbb{Z}_+}$ is a solution to the forward problem for the reversed locally-defined dynamics \eqref{rIVP} with initial condition $R^\mathcal{X}(x^0)\in\mathcal{X}^R$, where
\[\left(x^{R,t},u^{R,t}\right):=\left(R^{\mathcal{X}}(x^{-t}),R^{\mathcal{U}}(u^{-t-1})\right),\qquad \forall t\in\mathbb{Z}_+.\]
\end{prop}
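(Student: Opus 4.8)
The plan is to prove the proposition by a purely formal change of variables: I will show that, relation by relation, the defining equations of the backward problem \eqref{bIVP} are carried onto the defining equations of the forward problem \eqref{rIVP} for the reversed dynamics, under the pair of reversal maps $R^{\mathcal{X}}$, $R^{\mathcal{U}}$ together with the time reindexing $t\mapsto -t$ on configurations and $t\mapsto -t-1$ on carriers. Since each of the $F_n$, the $R^{\mathcal{X}}$, $R^{\mathcal{U}}$ and the reindexings is a bijection, every manipulation will be an equivalence, so the two directions of the ``if and only if'' come out simultaneously and need no separate treatment. Concretely, this is exactly the content of the three displayed lattice diagrams preceding the statement, and the proof just makes those relabellings precise.

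\textbf{Step 1: index bookkeeping.} First I would record the elementary identities that will be used repeatedly. Writing $n':=2m+1-n$, one has $n-m=m+1-n'$, and hence, for any $x\in\mathcal{X}$ and $u\in\mathcal{U}$,
\[
x_{n-m}=R^{\mathcal{X}}(x)_{n'},\qquad x_n=R^{\mathcal{X}}(x)_{n'-m},\qquad u_n=R^{\mathcal{U}}(u)_{n'-1},\qquad u_{n-1}=R^{\mathcal{U}}(u)_{n'},
\]
the last two using $R^{\mathcal{U}}(u)_k=u_{2m-k}$. I would also check, by substitution into the type of $F_n$ from Definition \ref{ldd} and the definitions $\mathcal{X}^R_n=\mathcal{X}_{m+1-n}$, $\mathcal{U}^R_n=\mathcal{U}_{2m-n}$, that $F_{2m+1-n}$ is a bijection from $\mathcal{X}^R_{n-m}\times\mathcal{U}^R_n$ onto $\mathcal{X}^R_n\times\mathcal{U}^R_{n-1}$, so that $F^R_n:=F^{-1}_{2m+1-n}$ genuinely has the type $\mathcal{X}^R_n\times\mathcal{U}^R_{n-1}\to\mathcal{X}^R_{n-m}\times\mathcal{U}^R_n$ ascribed to it in the definition of the reversed dynamics.

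\textbf{Step 2: transporting the equations.} Given $((x^t,u^t))_{t\in\mathbb{Z}_-}$, fix $j\in\mathbb{Z}_+$ and apply the backward relation at time $t=-(j+1)$ (so $t+1=-j$): it reads $(x^{-j}_{n-m},u^{-(j+1)}_n)=F_n(x^{-(j+1)}_n,u^{-(j+1)}_{n-1})$ for every $n\in\mathbb{Z}$. Since $F_n$ is a bijection and $F^{-1}_n=F^R_{n'}$, this is equivalent to $F^R_{n'}(x^{-j}_{n-m},u^{-(j+1)}_n)=(x^{-(j+1)}_n,u^{-(j+1)}_{n-1})$. Feeding in the identities of Step 1 and recalling $x^{R,j}:=R^{\mathcal{X}}(x^{-j})$ and $u^{R,j}:=R^{\mathcal{U}}(u^{-j-1})$, the inputs become $(x^{R,j}_{n'},u^{R,j}_{n'-1})$ and the outputs become $(x^{R,j+1}_{n'-m},u^{R,j}_{n'})$, so the relation is equivalent to
\[
\bigl(x^{R,j+1}_{n'-m},\,u^{R,j}_{n'}\bigr)=F^R_{n'}\bigl(x^{R,j}_{n'},\,u^{R,j}_{n'-1}\bigr),\qquad\forall n'\in\mathbb{Z}.
\]
As $t$ ranges over $\mathbb{Z}_-$ the index $j$ ranges over $\mathbb{Z}_+$, and $n\mapsto n'$ is a bijection of $\mathbb{Z}$; hence the full system \eqref{bIVP} is equivalent to the full system just displayed, which is precisely the dynamics part of \eqref{rIVP} for $((x^{R,t},u^{R,t}))_{t\in\mathbb{Z}_+}$. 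The initial conditions match with nothing to prove, since $x^{R,0}=R^{\mathcal{X}}(x^0)$. For the reverse implication one simply reads the same chain of equivalences the other way and recovers $((x^t,u^t))_{t\in\mathbb{Z}_-}$ from $((x^{R,t},u^{R,t}))_{t\in\mathbb{Z}_+}$ via $x^{-t}=(R^{\mathcal{X}})^{-1}(x^{R,t})$, $u^{-t-1}=(R^{\mathcal{U}})^{-1}(u^{R,t})$, which is legitimate because $R^{\mathcal{X}}$ and $R^{\mathcal{U}}$ are bijections.

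\textbf{Expected difficulty.} There is no conceptual obstacle: the proposition is in essence a bookkeeping lemma, and the only thing demanding care is getting the two reflections and the time offset exactly right. The most error-prone point is the half-step shift in time for the carrier --- the carrier row $u^t$ sits between the configuration rows $x^t$ and $x^{t+1}$, so under the top-to-bottom flip the carrier that should accompany the reversed pair $(x^{R,j},x^{R,j+1})$ is $u^{-j-1}$ rather than $u^{-j}$ --- together with keeping straight that $R^{\mathcal{X}}$ reflects about $m+1-n$ while $R^{\mathcal{U}}$ reflects about $2m-n$. Writing out one or two rungs of the lattice, as in the diagrams already in the text, is the cleanest safeguard against index slips.
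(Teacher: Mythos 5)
Your proposal is correct and takes essentially the same route as the paper: the paper establishes the proposition by reversing the arrows, rotating the lattice about $x_0^0$, shifting by $m+1$, and declaring the rest "an elementary exercise in relabelling," and your Step 1--Step 2 argument is precisely that relabelling carried out explicitly (with the index substitution $n'=2m+1-n$ and the time offset $t=-(j+1)$ both handled correctly).
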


From this result, we obtain the following corollary, which demonstrates that if a canonical carrier function for the reversed dynamics $U^R:\mathcal{X}^{R,U^R}\rightarrow\mathcal{U}^R$ exists, the only solution to \eqref{bIVP} with initial data $x\in (R^\mathcal{X})^{-1}(\mathcal{X}^{R,U^R})$ is given by the dynamics associated with $U^R$.

\begin{cor}\label{bivpun} Suppose that $U^R:\mathcal{X}^{R,U^R}\rightarrow\mathcal{U}^R$, where $\mathcal{X}^{R,U^R}\subseteq\mathcal{X}^R$, is a canonical carrier function for the reversed locally-defined dynamics given by $m\in\mathbb{Z}$ and maps $(F^R_n)_{n\in\mathbb{Z}}$. It then holds that, for each $x\in (R^\mathcal{X})^{-1}(\mathcal{X}^{R,U^R})$, the backward problem at \eqref{bIVP} has a unique solution, which is given by setting
\[x^t:=(R^\mathcal{X})^{-1}\circ(\mathcal{T}^R)^{-t}\circ R^\mathcal{X}(x),\qquad  u^t:=(R^\mathcal{U})^{-1}\circ U^R\circ(\mathcal{T}^R)^{-t-1}\circ R^\mathcal{X}(x),\qquad \forall t\in\mathbb{Z}_-,\]
where $\mathcal{T}^R(\tilde{x}):=\mathcal{T}^{U^R(\tilde{x})}(\tilde{x})$ for $\tilde{x}\in\mathcal{X}^{R,U^R}$.
\end{cor}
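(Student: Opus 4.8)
\emph{Proof proposal.} The plan is to deduce the statement by combining Proposition \ref{fbprop}, which recasts the backward problem \eqref{bIVP} as a forward problem for the reversed dynamics, with Theorem \ref{thm:uni-ex-f-gp}, which resolves the latter as soon as a canonical carrier function is available. Since the hypothesis is precisely that $U^R:\mathcal{X}^{R,U^R}\rightarrow\mathcal{U}^R$ is a canonical carrier function for the reversed locally-defined dynamics given by $m$ and $(F_n^R)_{n\in\mathbb{Z}}$, Theorem \ref{thm:uni-ex-f-gp} (applied verbatim to this reversed system) tells us that for every $\tilde{x}\in\mathcal{X}^{R,U^R}$ the forward problem \eqref{rIVP} with initial condition $\tilde{x}$ has a unique solution, namely $x^{R,t}=(\mathcal{T}^R)^t(\tilde{x})$ and $u^{R,t}=U^R\big((\mathcal{T}^R)^t(\tilde{x})\big)$ for $t\in\mathbb{Z}_+$, with $\mathcal{T}^R(\tilde{x})=\mathcal{T}^{U^R(\tilde{x})}(\tilde{x})$. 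Property (i) of a canonical carrier function guarantees $(\mathcal{T}^R)^t(\tilde{x})\in\mathcal{X}^{R,U^R}$ for all $t\in\mathbb{Z}_+$, so these iterates are well-defined.

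Next I would fix $x\in(R^\mathcal{X})^{-1}(\mathcal{X}^{R,U^R})$ and set $\tilde{x}:=R^\mathcal{X}(x)\in\mathcal{X}^{R,U^R}$. By Proposition \ref{fbprop}, a sequence $((x^t,u^t))_{t\in\mathbb{Z}_-}$ solves \eqref{bIVP} with initial condition $x$ if and only if the sequence $((x^{R,t},u^{R,t}))_{t\in\mathbb{Z}_+}$ given by $x^{R,t}=R^\mathcal{X}(x^{-t})$, $u^{R,t}=R^\mathcal{U}(u^{-t-1})$ solves \eqref{rIVP} with initial condition $\tilde{x}$. Because $R^\mathcal{X}$ and $R^\mathcal{U}$ are bijections (being coordinate reflections), this is a bijection between the two solution sets, so both existence and uniqueness transfer directly: the unique solution of \eqref{rIVP} with initial condition $\tilde{x}$ produced in the first paragraph corresponds to a unique solution of \eqref{bIVP} with initial condition $x$.

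It then remains only to translate the explicit formulas back through the reversal operators. From $x^{R,t}=R^\mathcal{X}(x^{-t})$ and $x^{R,t}=(\mathcal{T}^R)^t(R^\mathcal{X}(x))$ one gets $x^{-t}=(R^\mathcal{X})^{-1}\circ(\mathcal{T}^R)^t\circ R^\mathcal{X}(x)$ for $t\in\mathbb{Z}_+$; substituting $t\mapsto -t$ with $t\in\mathbb{Z}_-$ yields $x^t=(R^\mathcal{X})^{-1}\circ(\mathcal{T}^R)^{-t}\circ R^\mathcal{X}(x)$, as required. Similarly, $u^{R,t}=R^\mathcal{U}(u^{-t-1})$ together with $u^{R,t}=U^R\big((\mathcal{T}^R)^t(R^\mathcal{X}(x))\big)$ gives $u^{-t-1}=(R^\mathcal{U})^{-1}\circ U^R\circ(\mathcal{T}^R)^{t}\circ R^\mathcal{X}(x)$ for $t\in\mathbb{Z}_+$; substituting $t\mapsto -t-1$ with $t\in\mathbb{Z}_-$ gives $u^t=(R^\mathcal{U})^{-1}\circ U^R\circ(\mathcal{T}^R)^{-t-1}\circ R^\mathcal{X}(x)$, completing the proof.

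The argument is essentially bookkeeping, and the only point requiring a little care — and the main (modest) obstacle — is exactly this last step: tracking which shift of which reversal operator applies, and checking after the substitutions $t\mapsto -t$ and $t\mapsto -t-1$ that the exponents appearing on $\mathcal{T}^R$ are non-negative, so that they refer only to the forward iterates of $\mathcal{T}^R$ that have already been shown to be well-defined on $\mathcal{X}^{R,U^R}$. Once this is confirmed, the conclusion is immediate from Proposition \ref{fbprop} and Theorem \ref{thm:uni-ex-f-gp}.
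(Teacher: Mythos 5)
Your proposal is correct and follows exactly the paper's own argument: apply Theorem \ref{thm:uni-ex-f-gp} to the reversed dynamics to solve \eqref{rIVP} uniquely from the initial condition $R^\mathcal{X}(x)$, then transfer existence and uniqueness back to \eqref{bIVP} via the bijective correspondence of Proposition \ref{fbprop}. The index bookkeeping in your final paragraph (including the check that the exponents $-t$ and $-t-1$ are non-negative for $t\in\mathbb{Z}_-$) is accurate and matches the formulas in the statement.
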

\begin{proof} By Theorem \ref{thm:uni-ex-f-gp}, we have that the unique solution to the forward problem for the reversed locally-defined dynamics \eqref{rIVP} with initial condition $R^\mathcal{X}(x)\in \mathcal{X}^{R,U^R}$ is given by
\[x^{R,t}:=(\mathcal{T}^R)^t\circ R^\mathcal{X} (x),\qquad u^{R,t}:=U^R(x^{R,t}),\qquad \forall t\in\mathbb{Z}_+.\]
Hence, by Proposition \ref{fbprop}, we obtain that, for each $x\in (R^\mathcal{X})^{-1}(\mathcal{X}^{R,U^R})$, the backward problem at \eqref{bIVP} has a unique solution, which is given by setting
\[x^t:=(R^\mathcal{X})^{-1}(x^{R,-t}),\qquad u^t:=(R^\mathcal{U})^{-1}(u^{R,-t-1}),\qquad \forall t\in\mathbb{Z}_-,\]
and the result follows.
\end{proof}

In the next result, we relate the original and reversed dynamics when we have canonical carriers for both that operate on corresponding parts of the original and reversed configuration spaces (in the sense made precise by \eqref{compatible}). In particular, we show that the original and reversed dynamics are essentially inverses, and obtain identities linking the original and reversed carriers. See Remark \ref{graphical} below for a graphical presentation of the result.

\begin{prop}\label{ppp} Suppose that $U:\mathcal{X}^{U}\rightarrow\mathcal{U}$, where $\mathcal{X}^{U}\subseteq\mathcal{X}$, is a canonical carrier function for the original dynamics, and $U^R:\mathcal{X}^{R,U^R}\rightarrow\mathcal{U}^R$, where $\mathcal{X}^{R,U^R}\subseteq\mathcal{X}^R$, is a canonical carrier function for the reversed dynamics, such that
\begin{equation}\label{compatible}
R^{\mathcal{X}}\left(\mathcal{X}^U\right)=\mathcal{X}^{R,U^R}.
\end{equation}
It then holds that $\mathcal{T}$ is a bijection from $\mathcal{X}^U$ to itself, and
\begin{equation}\label{trel}
\mathcal{T}^{-1}=(R^{\mathcal{X}})^{-1}\circ\mathcal{T}^R\circ R^{\mathcal{X}}.
\end{equation}
Similarly, $\mathcal{T}^R$ is a bijection from $\mathcal{X}^{R,U^R}$ to itself, and
\[(\mathcal{T}^R)^{-1}=R^{\mathcal{X}}\circ\mathcal{T}\circ (R^{\mathcal{X}})^{-1}.\]
Moreover,
\begin{equation}\label{urel}
U(x)=(R^{\mathcal{X}})^{-1}\circ U^R \circ R^{\mathcal{X}}\circ\mathcal{T}(x),\qquad \forall x\in\mathcal{X}^U,
\end{equation}
and similarly
\[U^R(x)=R^{\mathcal{X}}\circ U \circ (R^{\mathcal{X}})^{-1}\circ\mathcal{T}^R(x),\qquad \forall x\in\mathcal{X}^{R,U^R}.\]
\end{prop}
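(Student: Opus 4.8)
The plan is to derive everything from the one-time-step content of Corollary \ref{bivpun}, combined with the fact that reversal is an involution, so that the ``reversed'' halves of the statement are obtained by applying the ``original'' halves to the reversed dynamics. First I would record the facts powering this symmetry: reversing twice returns to the original configuration and carrier spaces ($\mathcal{X}^{R,R}=\mathcal{X}$, $\mathcal{U}^{R,R}=\mathcal{U}$) and recovers the identity, so that the reversal operator of $\mathcal{X}^R$ is $(R^{\mathcal{X}})^{-1}$ and that of $\mathcal{U}^R$ is $(R^{\mathcal{U}})^{-1}$; the reversed-reversed dynamics satisfy $(F_n^R)^R=F_n$ (using $F_n^R=F_{2m+1-n}^{-1}$) with the same shift parameter $m$; and, applying $(R^{\mathcal{X}})^{-1}$ to \eqref{compatible}, $(R^{\mathcal{X}})^{-1}(\mathcal{X}^{R,U^R})=\mathcal{X}^U$. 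Hence $U^R$ is a canonical carrier for the reversed dynamics, $U$ is a canonical carrier for the reversed-reversed dynamics, and the compatibility condition holds in reversed form too, so any conclusion proved for the original dynamics may be re-applied verbatim to the reversed dynamics. Throughout I write $G:=(R^{\mathcal{X}})^{-1}\circ\mathcal{T}^R\circ R^{\mathcal{X}}$; since $\mathcal{T}^R$ preserves $\mathcal{X}^{R,U^R}$ by property (i) of a canonical carrier and $R^{\mathcal{X}}(\mathcal{X}^U)=\mathcal{X}^{R,U^R}$, the map $G$ sends $\mathcal{X}^U$ into itself, as does $\mathcal{T}$.

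The core step is to show $\mathcal{T}\circ G=\mathrm{id}_{\mathcal{X}^U}$ together with the carrier identity. Fix $y\in\mathcal{X}^U$; then $R^{\mathcal{X}}(y)\in\mathcal{X}^{R,U^R}$ by \eqref{compatible}, so Corollary \ref{bivpun} applies to the backward problem \eqref{bIVP} with initial condition $y$, and its unique solution has first backward step $x^{-1}=G(y)$ and corresponding carrier $u^{-1}=(R^{\mathcal{U}})^{-1}\circ U^R\circ R^{\mathcal{X}}(y)$. The $t=-1$ instance of \eqref{bIVP}, that is $(y_{n-m},u^{-1}_n)=F_n(G(y)_n,u^{-1}_{n-1})$ for all $n$, says precisely that $u^{-1}$ is a carrier for $G(y)$ and (by Definition \ref{carrierdef}(a)) that $\mathcal{T}^{u^{-1}}(G(y))=y$. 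Since $G(y)\in\mathcal{X}^U$ it admits the canonical carrier $U(G(y))$, and $\mathcal{T}^{u^{-1}}(G(y))=y\in\mathcal{X}^U$ admits a carrier; property (ii) of the canonical carrier $U$ at $G(y)$ therefore forces $u^{-1}=U(G(y))$, whence $\mathcal{T}(G(y))=\mathcal{T}^{U(G(y))}(G(y))=y$. This gives $\mathcal{T}\circ G=\mathrm{id}_{\mathcal{X}^U}$, and reading off the carrier shows $U(G(y))=(R^{\mathcal{U}})^{-1}\circ U^R\circ R^{\mathcal{X}}(y)$ for all $y\in\mathcal{X}^U$.

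To upgrade this to bijectivity I would apply the previous paragraph to the reversed dynamics, which yields $\mathcal{T}^R\circ H=\mathrm{id}_{\mathcal{X}^{R,U^R}}$, where $H:=R^{\mathcal{X}}\circ\mathcal{T}\circ(R^{\mathcal{X}})^{-1}$ is the ``reversed $G$''. As $H$ is the conjugate of $\mathcal{T}$ by the bijection $R^{\mathcal{X}}$, and $\mathcal{T}^R\circ H=\mathrm{id}$ makes $H$ injective, $\mathcal{T}$ is injective; together with surjectivity of $\mathcal{T}$ on $\mathcal{X}^U$ (from $\mathcal{T}\circ G=\mathrm{id}$), $\mathcal{T}$ is a bijection of $\mathcal{X}^U$, and then $\mathcal{T}\circ G=\mathrm{id}$ forces $G=\mathcal{T}^{-1}$, which is \eqref{trel}. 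The same argument run for the reversed dynamics shows $\mathcal{T}^R$ is a bijection of $\mathcal{X}^{R,U^R}$ with inverse $H=R^{\mathcal{X}}\circ\mathcal{T}\circ(R^{\mathcal{X}})^{-1}$. Finally, putting $y=\mathcal{T}(x)$ in the carrier identity of the previous paragraph and using $G=\mathcal{T}^{-1}$ gives $U(x)=(R^{\mathcal{U}})^{-1}\circ U^R\circ R^{\mathcal{X}}\circ\mathcal{T}(x)$, i.e.\ \eqref{urel}, and its reversed counterpart follows by the same symmetry.

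I expect the only genuinely delicate point to be the index bookkeeping already buried inside Corollary \ref{bivpun}: reconciling the relabellings induced by $R^{\mathcal{X}}$ (shift $m+1$) and $R^{\mathcal{U}}$ (shift $2m$) with the identity $F_n^R=F_{2m+1-n}^{-1}$ and the shift parameter $m$. Since that computation was carried out once and for all in Proposition \ref{fbprop} and Corollary \ref{bivpun}, here it only needs to be invoked; if one preferred a self-contained argument, the same one-step identity can be obtained directly by applying $F_n^{-1}$ to the relations defining $\mathcal{T}(x)$ and relabelling, at which point precisely that bookkeeping reappears.
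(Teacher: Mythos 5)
Your proof is correct and takes essentially the same route as the paper's: both arguments rest on Corollary \ref{bivpun}, establish the two one-sided composition identities for $G:=(R^{\mathcal{X}})^{-1}\circ\mathcal{T}^R\circ R^{\mathcal{X}}$ (you prove $\mathcal{T}\circ G=\mathrm{id}$ first by invoking property (ii) of the canonical carrier at $G(y)$, the paper proves $G\circ\mathcal{T}=\mathrm{id}$ first by citing uniqueness of the backward problem, which encapsulates the same one-step argument), combine them with their reversed counterparts for bijectivity, and read the carrier identity off the backward-problem formula. One minor point in your favour: you state the carrier identity with $(R^{\mathcal{U}})^{-1}$, which is the type-correct form since $U^R$ takes values in $\mathcal{U}^R$; the $(R^{\mathcal{X}})^{-1}$ appearing in \eqref{urel} and in the paper's proof is evidently a slip (Corollary \ref{bivpun}, from which it is derived, uses $(R^{\mathcal{U}})^{-1}$).
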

\begin{proof} For $x\in\mathcal{X}^U$, we have that $((x^t,u^t))_{t\in\mathbb{Z}_+}$, as defined at \eqref{xudef}, is the unique solution to the initial value problem at \eqref{IVP} with initial condition $x$. Moreover, $x^1=\mathcal{T}(x)\in (R^{\mathcal{X}})^{-1}(\mathcal{X}^{R,U^R})$, and so by the uniqueness of the solution to the backward problem at \eqref{bIVP} given by Corollary \ref{bivpun}, we find that
\begin{equation}\label{star1}
x=(R^{\mathcal{X}})^{-1}\circ\mathcal{T}^R\circ R^{\mathcal{X}}\circ\mathcal{T}(x),\qquad \forall x\in\mathcal{X}^U.
\end{equation}
Similarly,
\begin{equation}\label{star2}
x=R^{\mathcal{X}}\circ\mathcal{T}\circ (R^{\mathcal{X}})^{-1}\circ\mathcal{T}^R(x),\qquad \forall x\in\mathcal{X}^{R,U^R}.
\end{equation}
From \eqref{star1}, the injectivity of $\mathcal{T}:\mathcal{X}^U\rightarrow\mathcal{X}^U$ is clear. As for the surjectivity, let $x\in\mathcal{X}^U$. Then $R^\mathcal{X}(x)\in \mathcal{X}^{R,U^R}$, and so \eqref{star2} implies
\[R^{\mathcal{X}}\circ\mathcal{T}\circ (R^{\mathcal{X}})^{-1}\circ\mathcal{T}^R\circ{R}^\mathcal{X}(x)={R}^\mathcal{X}(x),\]
which is equivalent to
\[\mathcal{T}\circ (R^{\mathcal{X}})^{-1}\circ\mathcal{T}^R\circ{R}^\mathcal{X}(x)=x.\]
Since $(R^{\mathcal{X}})^{-1}\circ\mathcal{T}^R\circ{R}^\mathcal{X}(x)\in \mathcal{X}^U$, this confirms that $\mathcal{T}$ is indeed surjective. The bijectivity of $\mathcal{T}^R$ is established in a similar manner.

For the carrier identities, we again compare the results of Theorem \ref{thm:uni-ex-f-gp} and Corollary \ref{bivpun}. In particular, these results yield that, for $x\in \mathcal{X}^U$,
\[U(x)=u^0=(R^{\mathcal{X}})^{-1}\circ U^R \circ (\mathcal{T}^R)^0 \circ R^{\mathcal{X}}(x^1)=(R^{\mathcal{X}})^{-1}\circ U^R \circ R^{\mathcal{X}}\circ\mathcal{T}(x),\]
and the final part of the proposition is obtained similarly.
\end{proof}

\begin{rem}\label{graphical} The original dynamics started from $x$ have the following form:
\[\xymatrix@C-15pt@R-15pt{&\boxed{F_0}&\mathcal{T}(x)_{-m}&\boxed{F_{1}}&\mathcal{T}(x)_{1-m}& & &  \\
                          \dots& U(x)_{-1} \ar[rr] &&U(x)_0\ar[rr]  & & U(x)_1&\dots.  \\
                               &                   & x_0 \ar[uu]&   & x_{1}\ar[uu] &&}\]
Rotating the picture and reversing the direction of the arrows gives the reversed dynamics, starting from $R^\mathcal{X}\circ\mathcal{T}(x)$:
\[\xymatrix@C-15pt@R-15pt{&\boxed{F^R_0}&R^\mathcal{X}(x)_{-m}&\boxed{F^R_{1}}&R^\mathcal{X}(x)_{1-m}& &\\
                          \dots& R^\mathcal{U}\circ U(x)_{-1} \ar[rr] &&R^\mathcal{U}\circ U(x)_0\ar[rr]  & & R^\mathcal{U}\circ U(x)_1&\dots.  \\
                               &                   &R^\mathcal{X}\circ\mathcal{T}( x)_0 \ar[uu]&   & R^\mathcal{X}\circ\mathcal{T}(x)_{1}\ar[uu]&&}\]
In particular, we see that $\mathcal{T}^R\circ R^\mathcal{X}\circ\mathcal{T}(x)=R^\mathcal{X}(x)$, which yields \eqref{trel}. Moreover, one can read off that $U^R\circ R^\mathcal{X}\circ\mathcal{T}(x)=R^\mathcal{X}\circ U(x)$, from which we readily obtain \eqref{urel}. The other relationships given in Proposition \ref{ppp} can be explained similarly.
\end{rem}

To complete this subsection, we consider the special case when the original and reversed locally-defined dynamics are the same, which will cover each of the four integrable systems \eqref{DKDV}, \eqref{UDKDV}, \eqref{DTODA} and \eqref{UDTODA}. In particular, for finite configurations in the original BBS, the result recovers the observation
that the inverse of the dynamics is given by running the carrier from right to left, rather than from left to right, which was essentially made in \cite{takahashi1990}.

\begin{thm}\label{thm:uni-ex-f-gp-2} Suppose that $m$ and $(F_n)_{n\in\mathbb{Z}}$ describe self-reverse locally-defined dynamics, i.e.\ $\mathcal{X}^R=\mathcal{X}$, $\mathcal{U}^R=\mathcal{U}$, and $F_n^R=F_n$ for all $n\in\mathbb{Z}$. Moreover, suppose that $U:\mathcal{X}^U\rightarrow\mathcal{U}$, where $\mathcal{X}^U\subseteq\mathcal{X}$, is a canonical carrier function for the locally-defined dynamics that satisfies
\[R^{\mathcal{X}}(\mathcal{X}^U)=\mathcal{X}^U.\]
The following statements then hold.\\
(a) The operator $\mathcal{T}$ is a bijection from $\mathcal{X}^U$ to itself, and
\[\mathcal{T}^{-1}=R^{\mathcal{X}}\circ\mathcal{T}\circ R^{\mathcal{X}}.\]
(b) On $\mathcal{X}^U$, it holds that
\[U=R^{\mathcal{X}}\circ U \circ R^{\mathcal{X}}\circ\mathcal{T}.\]
(c) For each $x\in\mathcal{X}^U$, the initial value problem
\[\begin{cases}
x^0=x, \\
\left(x^{t+1}_{n-m},u^t_n\right)=F_n\left(x^{t}_{n},u^t_{n-1}\right),\qquad \forall n \in \Z,\:\ t \in \Z.
\end{cases}\]
has unique solution $((x^t,u^t))_{t\in\Z}$ given by
\[x^t:=\mathcal{T}^t(x),\qquad u^t:=U(x^t),\qquad \forall t\in\mathbb{Z}.\]
\end{thm}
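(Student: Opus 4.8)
The plan is to obtain all three parts from the general results of this subsection, namely Proposition \ref{ppp}, Theorem \ref{thm:uni-ex-f-gp} and Corollary \ref{bivpun}, by observing that in the self-reverse setting one may take the reversed canonical carrier function to be $U$ itself. First I would record two routine observations: since $\mathcal{X}^R=\mathcal{X}$ and $\mathcal{U}^R=\mathcal{U}$, the operators $R^{\mathcal{X}}$ and $R^{\mathcal{U}}$ are involutions (a direct computation from their definitions, as $R^{\mathcal{X}}(R^{\mathcal{X}}(x))_n=x_{m+1-(m+1-n)}=x_n$, and similarly for $R^{\mathcal{U}}$); and since $F_n^R=F_n$ for all $n$, the reversed locally-defined dynamics literally coincide with the original ones, so $U$ is a canonical carrier function for the reversed dynamics as well. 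Taking $U^R:=U$ and $\mathcal{X}^{R,U^R}:=\mathcal{X}^U$, the compatibility hypothesis \eqref{compatible} of Proposition \ref{ppp} becomes exactly the standing hypothesis $R^{\mathcal{X}}(\mathcal{X}^U)=\mathcal{X}^U$, and moreover $\mathcal{T}^R=\mathcal{T}$ by construction of these operators.

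Parts (a) and (b) then follow by specialising Proposition \ref{ppp} to this choice. Bijectivity of $\mathcal{T}$ on $\mathcal{X}^U$ is immediate; the identity \eqref{trel} reads $\mathcal{T}^{-1}=(R^{\mathcal{X}})^{-1}\circ\mathcal{T}^R\circ R^{\mathcal{X}}=R^{\mathcal{X}}\circ\mathcal{T}\circ R^{\mathcal{X}}$ once we substitute $\mathcal{T}^R=\mathcal{T}$ and use the involution property, which is (a); and \eqref{urel} collapses in the same way to $U=R^{\mathcal{X}}\circ U\circ R^{\mathcal{X}}\circ\mathcal{T}$ on $\mathcal{X}^U$, which is (b). For part (c), the starting point is that the stated two-sided initial value problem is equivalent to solving simultaneously the forward problem \eqref{IVP} (its equations for $t\in\mathbb{Z}_+$ together with $x^0=x$) and the backward problem \eqref{bIVP} (its equations for $t\in\mathbb{Z}_-$), these sharing only the datum $x^0=x$. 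Since $x\in\mathcal{X}^U$, Theorem \ref{thm:uni-ex-f-gp} supplies the unique forward solution $x^t=\mathcal{T}^t(x)$, $u^t=U(x^t)$ for $t\in\mathbb{Z}_+$; and since $x\in\mathcal{X}^U=(R^{\mathcal{X}})^{-1}(\mathcal{X}^{R,U^R})$, Corollary \ref{bivpun} supplies the unique backward solution $x^t=(R^{\mathcal{X}})^{-1}\circ(\mathcal{T}^R)^{-t}\circ R^{\mathcal{X}}(x)$ and $u^t=(R^{\mathcal{U}})^{-1}\circ U^R\circ(\mathcal{T}^R)^{-t-1}\circ R^{\mathcal{X}}(x)$ for $t\in\mathbb{Z}_-$.

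The remaining work is to rewrite these backward expressions in the claimed uniform form and to confirm that the two pieces glue. From part (a) one derives $R^{\mathcal{X}}\circ\mathcal{T}^{-1}\circ R^{\mathcal{X}}=\mathcal{T}$, hence $\mathcal{T}^k\circ R^{\mathcal{X}}=R^{\mathcal{X}}\circ\mathcal{T}^{-k}$ for every $k\in\mathbb{Z}$; taking $k=-t$ turns the backward formula for $x^t$ into $\mathcal{T}^t(x)$, so $x^t=\mathcal{T}^t(x)$ for all $t\in\mathbb{Z}$, and taking $k=-t-1$ together with part (b) (and the analogous $U$, $U^R$, $R^{\mathcal{X}}$, $R^{\mathcal{U}}$ identity from Proposition \ref{ppp}) turns the backward formula for $u^t$ into $U(x^t)$. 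Compatibility of the forward and backward halves at the junction $t=0$ is automatic, since both are built from the single canonical carrier function $U$ via the bijections $F_n$, so the glued sequence is a genuine solution of the two-sided problem; uniqueness is inherited from the uniqueness statements in Theorem \ref{thm:uni-ex-f-gp} and Corollary \ref{bivpun}. I expect the only delicate point to be this last bookkeeping step — keeping straight the conjugations by $R^{\mathcal{X}}$ and $R^{\mathcal{U}}$ in the backward formulas, and verifying that the forward and backward solutions fit together into one solution of the all-time problem rather than merely coinciding at $t=0$.
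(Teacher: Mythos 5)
Your proposal is correct and follows essentially the same route as the paper: setting $U^R=U$ and invoking Proposition \ref{ppp} for parts (a) and (b), then combining Theorem \ref{thm:uni-ex-f-gp} (forward) with Corollary \ref{bivpun} (backward) and the conjugation identities from (a) and (b) to put the backward solution into the uniform form $x^t=\mathcal{T}^t(x)$, $u^t=U(x^t)$. The ``gluing'' step you flag as delicate is immediate, since the all-time system of equations is exactly the disjoint union of the forward equations ($t\in\mathbb{Z}_+$) and backward equations ($t\in\mathbb{Z}_-$) sharing only the datum $x^0=x$.
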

\begin{proof} Setting $U^R=U$, parts (a) and (b) are easy consequences of Proposition \ref{ppp}. As for part (c), the forward part of the claim, i.e.\ for $t\in\mathbb{Z}_+$, is given by Theorem \ref{thm:uni-ex-f-gp}. For the backward part, we apply Corollary \ref{bivpun} and parts (a) and (b) to obtain, for $t\in\mathbb{Z}_-$,
\[x^t=R^\mathcal{X}\mathcal{T}^{-t}R^\mathcal{X}(x)=\left(R^\mathcal{X}\mathcal{T}R^\mathcal{X}\right)^{-t}(x)=\left(\mathcal{T}^{-1}\right)^{-t}(x)=\mathcal{T}^{t}(x),\]
and
\[u^t=R^\mathcal{X}U\mathcal{T}^{-t-1}R^{\mathcal{X}}(x)=R^\mathcal{X}UR^\mathcal{X}\mathcal{T}^{t+1}(x)=U(x^t),\]
which completes the proof.
\end{proof}

\subsection{Change of coordinates}\label{s43} In our study of the dynamics of the systems \eqref{DKDV}, \eqref{UDKDV}, \eqref{DTODA} and \eqref{UDTODA}, it will be convenient to first consider parameter-free versions of the models obtained via a change of coordinates. To apply the results we obtain for the latter models to the original setting, we will appeal to the main result of this subsection (see Theorem \ref{thm:uni-ex-f-gp-3} below), which describes how solutions of initial value problems of the kind discussed hitherto can be transferred under such a change of coordinates.

To present our conclusion, we now suppose that, in addition to the objects introduced so far, we have spaces $\tilde{\mathcal{X}}=\prod_{n\in\mathbb{Z}}\tilde{\mathcal{X}}_n$ and $\tilde{\mathcal{U}}=\prod_{n\in\mathbb{Z}}\tilde{\mathcal{U}}_n$, and also maps
\begin{eqnarray*}
\mathcal{A}:\tilde{\mathcal{X}}&\rightarrow&\mathcal{X}\\
\tilde{x}&\mapsto&\left(\mathcal{A}_n(\tilde{x}_n)\right)_{n\in\mathbb{Z}},
\end{eqnarray*}
\begin{eqnarray*}
\mathcal{B}:\tilde{\mathcal{U}}&\rightarrow&{\mathcal{U}}\\
\tilde{u}&\mapsto&\left(\mathcal{B}_n(\tilde{u}_n)\right)_{n\in\mathbb{Z}},
\end{eqnarray*}
where $\mathcal{A}_n:\tilde{\mathcal{X}}_n\rightarrow{\mathcal{X}}_n$, $\mathcal{B}_n:\tilde{\mathcal{U}}_n\rightarrow{\mathcal{U}}_n$, $n\in\mathbb{Z}$ are injections. For the statement of the result, we note that the restrictions $\mathcal{A}:\tilde{\mathcal{X}}\rightarrow\mathcal{A}(\tilde{\mathcal{X}})$  and $\mathcal{B}:\tilde{\mathcal{U}}\rightarrow\mathcal{B}(\tilde{\mathcal{U}})$ are in fact bijections with well-defined inverses $\mathcal{A}^{-1}$ and $\mathcal{B}^{-1}$, respectively. We moreover suppose we have locally-defined dynamics on $\tilde{\mathcal{X}}$ given by $m$ and $(\tilde{F}_n)_{n\in\mathbb{Z}}$ such that
\[\left(\mathcal{A}_{n-m}\times\mathcal{B}_n\right)\circ \tilde{F}_n={F}_n\circ\left(\mathcal{A}_n\times\mathcal{B}_{n-1}\right),\qquad \forall n\in\mathbb{Z},\]
where $(\mathcal{A}_k\times\mathcal{B}_l)(\tilde{x}_k,\tilde{u}_l):=(\mathcal{A}_k(\tilde{x}_k),\mathcal{B}_l(\tilde{u}_l))$ for all $\tilde{x}_k\in\tilde{\mathcal{X}}_k$, $\tilde{u}_l\in\tilde{\mathcal{U}}_l$.

\begin{thm}\label{thm:uni-ex-f-gp-3} (a) Suppose that ${U}:{\mathcal{X}}^{{U}}\rightarrow{\mathcal{U}}$, where ${\mathcal{X}}^{{U}}\subseteq{\mathcal{X}}$, is a canonical carrier function for the locally-defined dynamics on ${\mathcal{X}}$ given by $m$ and $({F}_n)_{n\in\mathbb{Z}}$ such that
\begin{equation}\label{inc}
{\mathcal{X}}^{{U}}\subseteq\mathcal{A}\left(\tilde{\mathcal{X}}\right),\qquad U\left({\mathcal{X}}^{{U}}\right)\subseteq\mathcal{B}\left(\tilde{\mathcal{U}}\right).
\end{equation}
For each $\tilde{x}\in\mathcal{A}^{-1}({\mathcal{X}}^{{U}})$, the initial value problem
\begin{equation}\label{anotherivp}
\begin{cases}
\tilde{x}^0=\tilde{x}, \\
\left(\tilde{x}^{t+1}_{n-m},\tilde{u}^t_n\right)=\tilde{F}_n\left(\tilde{x}^{t}_{n},\tilde{u}^t_{n-1}\right),\qquad \forall n \in \Z,\:\ t \in \Z_+.
\end{cases}
\end{equation}
has unique solution $((\tilde{x}^t,\tilde{u}^t))_{t\in\Z_+}$ given by
\begin{equation}\label{eee}
\tilde{x}^t:=(\mathcal{A}^{-1}\circ{\mathcal{T}}\circ\mathcal{A})^t(\tilde{x}),\qquad \tilde{u}^t:=\mathcal{B}^{-1}\circ{U}\circ\mathcal{A}(\tilde{x}^t),\qquad \forall t\in\mathbb{Z}_+,
\end{equation}
where ${\mathcal{T}}$ is the operator on $\mathcal{X}^U$ giving the dynamics associated with $U$.\\
(b) Suppose that the locally-defined dynamics on $\mathcal{X}$ given by $m$ and $(F_n)_{n\in\mathbb{Z}}$ are self-reverse (in the sense described in the statement of Theorem \ref{thm:uni-ex-f-gp-2}), and $U:\mathcal{X}^U\rightarrow\mathcal{U}$, where $\mathcal{X}^U\subseteq\mathcal{X}$, is an associated canonical carrier function that satisfies $R^{\mathcal{X}}(\mathcal{X}^U)=\mathcal{X}^U$. If \eqref{inc} holds, then for each $\tilde{x}\in\mathcal{A}^{-1}({\mathcal{X}}^{{U}})$, the all-time initial value problem given by \eqref{anotherivp}, but replacing $t\in\mathbb{Z}_+$ by $t\in\mathbb{Z}$, has unique solution $((\tilde{x}^t,\tilde{u}^t))_{t\in\Z}$ given by \eqref{eee}, again replacing $t\in\mathbb{Z}_+$ by $t\in\mathbb{Z}$.
\end{thm}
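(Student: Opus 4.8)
The plan is to transport the whole problem through the intertwining maps $\mathcal{A}$ and $\mathcal{B}$ and thereby reduce both parts to the results already established on $\mathcal{X}$. The backbone of the argument is a \emph{correspondence between solutions}: fix $\tilde{x}\in\mathcal{A}^{-1}(\mathcal{X}^U)$, put $x:=\mathcal{A}(\tilde{x})\in\mathcal{X}^U$, and let $\mathcal{I}$ denote either $\mathbb{Z}_+$ or $\mathbb{Z}$. I claim that $((\tilde{x}^t,\tilde{u}^t))_{t\in\mathcal{I}}\mapsto((\mathcal{A}(\tilde{x}^t),\mathcal{B}(\tilde{u}^t)))_{t\in\mathcal{I}}$ is an injection from solutions of \eqref{anotherivp} (over $\mathcal{I}$, with $\tilde{x}^0=\tilde{x}$) into solutions of the corresponding problem on $\mathcal{X}$ (over $\mathcal{I}$, with $x^0=x$), and that it is inverted, on those $\mathcal{X}$-solutions that happen to take values in $\mathcal{A}(\tilde{\mathcal{X}})\times\mathcal{B}(\tilde{\mathcal{U}})$, by componentwise application of $\mathcal{A}^{-1}$ and $\mathcal{B}^{-1}$. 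Both directions fall straight out of the hypothesis $(\mathcal{A}_{n-m}\times\mathcal{B}_n)\circ\tilde{F}_n=F_n\circ(\mathcal{A}_n\times\mathcal{B}_{n-1})$: one applies $\mathcal{A}_{n-m}\times\mathcal{B}_n$ (respectively its inverse, which is legitimate because the relevant $\mathcal{X}$-values lie in the images of $\mathcal{A}$ and $\mathcal{B}$) to the local evolution equation and uses that every $\mathcal{A}_n$, $\mathcal{B}_n$ is an injection to read off the transported evolution equation; preservation of the initial condition is immediate.

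For part (a) I would then invoke Theorem \ref{thm:uni-ex-f-gp}, which gives that the forward problem on $\mathcal{X}$ started from $x\in\mathcal{X}^U$ has the unique solution $x^t=\mathcal{T}^t(x)$, $u^t=U(x^t)$ for $t\in\mathbb{Z}_+$. Property (i) of a canonical carrier function keeps this solution in $\mathcal{X}^U$, so by \eqref{inc} we have $x^t\in\mathcal{X}^U\subseteq\mathcal{A}(\tilde{\mathcal{X}})$ and $u^t\in U(\mathcal{X}^U)\subseteq\mathcal{B}(\tilde{\mathcal{U}})$ for every $t$; applying $\mathcal{A}^{-1}$ and $\mathcal{B}^{-1}$ and using the inverse direction of the correspondence then produces a solution of \eqref{anotherivp}, which a trivial induction identifies with \eqref{eee} (the inclusions $\mathcal{T}(\mathcal{X}^U)\subseteq\mathcal{X}^U\subseteq\mathcal{A}(\tilde{\mathcal{X}})$ are exactly what makes $(\mathcal{A}^{-1}\circ\mathcal{T}\circ\mathcal{A})^t$ meaningful). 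Uniqueness follows because any solution of \eqref{anotherivp} maps, under the forward direction of the correspondence, to a solution of the forward $\mathcal{X}$-problem from $x$, hence to $((x^t,u^t))_{t\in\mathbb{Z}_+}$, and injectivity of $\mathcal{A}$ and $\mathcal{B}$ pins it down.

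For part (b) the same argument runs verbatim with $\mathbb{Z}_+$ replaced by $\mathbb{Z}$, appealing to Theorem \ref{thm:uni-ex-f-gp-2} in place of Theorem \ref{thm:uni-ex-f-gp}: its hypotheses (self-reverse dynamics on $\mathcal{X}$, and $R^{\mathcal{X}}(\mathcal{X}^U)=\mathcal{X}^U$) are precisely those assumed here, so the all-time problem on $\mathcal{X}$ from $x$ has the unique solution $x^t=\mathcal{T}^t(x)$, $u^t=U(x^t)$ for $t\in\mathbb{Z}$, while Theorem \ref{thm:uni-ex-f-gp-2}(a) (that $\mathcal{T}$ is a bijection of $\mathcal{X}^U$) keeps this solution in $\mathcal{X}^U$ at negative as well as positive times, so that \eqref{inc} again permits the transfer. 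The one wrinkle worth flagging is that one should \emph{not} attempt to apply Theorem \ref{thm:uni-ex-f-gp-2} directly on $\tilde{\mathcal{X}}$, since the transported dynamics $(\tilde{F}_n)_{n\in\mathbb{Z}}$ need not themselves be self-reverse; transferring the already-constructed all-time solution on $\mathcal{X}$ sidesteps this entirely.

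There is no genuine analytic obstacle here — the proof is essentially bookkeeping — and the only point that demands care is the one emphasised above: ensuring that the $\mathcal{X}$-solutions furnished by Theorems \ref{thm:uni-ex-f-gp} and \ref{thm:uni-ex-f-gp-2} remain inside $\mathcal{X}^U$ at every time step, so that condition \eqref{inc} genuinely allows $\mathcal{A}^{-1}$ and $\mathcal{B}^{-1}$ to be applied term by term and the transported sequences \eqref{eee} are well-defined.
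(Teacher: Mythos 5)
Your proposal is correct and follows essentially the same route as the paper: transport the problem to $\mathcal{X}$ via the intertwining relation, invoke Theorem \ref{thm:uni-ex-f-gp} (resp.\ Theorem \ref{thm:uni-ex-f-gp-2}) there, and pull the unique solution back through $\mathcal{A}^{-1}$ and $\mathcal{B}^{-1}$, with uniqueness obtained by pushing an arbitrary solution of \eqref{anotherivp} forward and using injectivity. Your added emphasis on the inclusions $\mathcal{T}(\mathcal{X}^U)\subseteq\mathcal{X}^U\subseteq\mathcal{A}(\tilde{\mathcal{X}})$ and $U(\mathcal{X}^U)\subseteq\mathcal{B}(\tilde{\mathcal{U}})$ being what makes the pullback well-defined at every time step is exactly the point the paper relies on implicitly.
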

\begin{proof} Towards proving part (a), let $(({x}^t,{u}^t))_{t\in\mathbb{Z}_+}$ be the unique solution of the initial value forward problem for the for the locally-defined dynamics on ${\mathcal{X}}$ given by $m$ and $({F}_n)_{n\in\mathbb{Z}}$ with initial condition ${x}:=\mathcal{A}(\tilde{x})$, as described in Theorem \ref{thm:uni-ex-f-gp}. Defining $((\tilde{x}^t,\tilde{u}^t))_{t\in\Z_+}$, by \eqref{eee} we have that $(\tilde{x}^t,\tilde{u}^t)=(\mathcal{A}^{-1}({x}^t),\mathcal{B}^{-1}({u}^t))$, and so
\begin{eqnarray*}
\left(\mathcal{A}_{n-m}\times\mathcal{B}_n\right)\circ \tilde{F}_n\left(\tilde{x}_n^t,\tilde{u}_{n-1}^t\right)&=&
\left(\mathcal{A}_{n-m}\times\mathcal{B}_n\right)\circ \tilde{F}_n\left(\mathcal{A}^{-1}({x}^t)_n,\mathcal{B}^{-1}({u}^t)_{n-1}\right)\\
&=&{F}_n\left({x}^t_n,{u}^t_{n-1}\right)\\
&=&\left({x}^{t+1}_{n-m},{u}^t_{n}\right)\\
&=&\left(\mathcal{A}_{n-m}\times\mathcal{B}_n\right)\left(\tilde{x}^{t+1}_{n-m},\tilde{u}^t_{n}\right).
\end{eqnarray*}
Since $\mathcal{A}_{n-m}$ and $\mathcal{B}_n$ are injective, we thus obtain that $\tilde{F}_n(\tilde{x}_n^t,\tilde{u}_{n-1}^t)=(\tilde{x}^{t+1}_{n-m},\tilde{u}^t_{n})$. Moreover, it clearly holds that $\tilde{x}^0=\tilde{x}$, and so we have confirmed that $((\tilde{x}^t,\tilde{u}^t))_{t\in\Z_+}$ is indeed a solution to \eqref{anotherivp}. Similar manipulations allow us to check that if $((\bar{x}^t,\bar{u}^t))_{t\in\Z_+}$ is an arbitrary solution to \eqref{anotherivp}, then $((\mathcal{A}(\bar{x}^t),\mathcal{B}(\bar{u}^t)))_{t\in\Z_+}$ solves the corresponding problem on ${\mathcal{X}}$. We must therefore have that $((\mathcal{A}(\bar{x}^t),\mathcal{B}(\bar{u}^t)))_{t\in\Z_+}$ is equal to $(({x}^t,{u}^t))_{t\in\mathbb{Z}_+}$, and this implies in turn that $((\bar{x}^t,\bar{u}^t))_{t\in\Z_+}$ is equal to $((\tilde{x}^t,\tilde{u}^t))_{t\in\Z_+}$. This completes the proof of part (a), and the proof of part (b) is essentially the same, only appealing to Theorem \ref{thm:uni-ex-f-gp-2} in place of Theorem \ref{thm:uni-ex-f-gp}.
\end{proof}

\subsection{Pitman-type locally-defined dynamics and path encodings}\label{s44}

In this subsection, we specialize to the case when $\mathcal{X}=\mathcal{U}=\mathbb{R}^\mathbb{Z}$, and the locally-defined dynamics are given by Pitman-type transformation maps, as introduced in the following definition. For this setting, we present a criteria for identifying a canonical carrier function in terms of a path encoding for the configuration (see Assumption \ref{a1} and Proposition \ref{linearsp}), and show that the associated dynamics can be described in terms of Pitman-type transformations (see Theorems \ref{t16} and \ref{t17}).

\begin{df}\label{P-mapdef}
(a) We say $F: \mathbb{R}^2 \to \mathbb{R}^2$ is a \emph{Pitman-type transformation map (P-map)} if the following statements hold:\\
(i) $F$ is a bijection;\\
(ii) $P \circ F =P$, where
\[P(a,b):=a-2b,\qquad \forall (a,b) \in \mathbb{R}^2.\]
(b) We say locally-defined dynamics on $\mathbb{R}^\mathbb{Z}$ given by shift parameter $m$ and maps $(F_n)_{n\in\mathbb{Z}}$ are of \emph{Pitman-type} if $F_n$ is a P-map for each $n\in\mathbb{Z}$.
\end{df}

It will be useful to note for later that if $F$ is a P-map, then so is $F^{-1}$. Moreover, although the importance of the condition in (a)(ii) above may not be clear at this point, it represents a natural conservation property in the integrable systems that we are interested in (recall the discussion from Subsection \ref{conssec} and see also Remark \ref{consrem} below). For example, for the original BBS of \cite{takahashi1990}, $a$ will represent $1-2\eta^t_n$ and $b$ will represent $U^t_{n-1}-\frac{1}{2}$, where we recall $\eta^t_n$ is the configuration and $U^t_{n-1}$ is the carrier load arriving at the relevant space-time point (see Subsection \ref{515151} for details), meaning that the conservation of $a-2b$ under the relevant $F$ is equivalent to having the conservation of mass: $\eta^t_n+U^t_{n-1}=\eta^{t+1}_n+U_n^t$. Moreover, the conservation property of condition (a)(ii) will be the key to the connection with Pitman-type transformations. Towards introducing these, we next define the path encoding of a configuration, which is a certain anti-derivative, and the associated dynamics. We recall the space $\mathcal{S}^0$ from \eqref{s0def}.

\begin{df}\label{pidef} (a) The \emph{path encoding} of a configuration $x \in \R^{\Z}$ is a function $S^x\in\mathcal{S}^0$ defined by
\[S^x_n-S^x_{n-1}=x_n.\]
We let $\pi:\R^{\Z}\rightarrow\mathcal{S}^0$ be the map given by $\pi(x)=S^x$, and note that this is clearly a bijection.\\
(b) If $u\in\R^{\Z}$ is a carrier for $x\in\R^{\Z}$, then define the associated path-encoding dynamics $S^x\mapsto\mathcal{T}^u_SS^x$ by setting
\[\mathcal{T}_S^uS^x:=S^{\mathcal{T}^{u}(x)},\]
i.e.\ $\mathcal{T}_S^u:=\pi \circ\mathcal{T}^{u}\circ\pi^{-1}$.
\end{df}

In the following definition, we introduce the Pitman-type transformation of the path encoding to which a particular carrier gives rise. The function $M^u$ plays the role that the past maximum did in the original version of Pitman's transformation, as recalled at \eqref{originalpitman}.

\begin{df} For $u\in\R^{\Z}$, let $M^u:\mathcal{S}\rightarrow\mathcal{S}$ be the operator given by
\[M^u(S)_n=S_n+u_n,\qquad \forall n\in\mathbb{Z}.\]
Given this, we introduce the associated \emph{Pitman-type transformation} $T^u:\mathcal{S}\rightarrow\mathcal{S}$ by setting
\[T^u(S)=2M^u(S)-S.\]
\end{df}

Part (b) of the next lemma provides the connection between the conservation law that is assumed to hold for Pitman-type locally-defined dynamics and the Pitman-type transformation on path encodings. Part (a) will be useful for obtaining an explicit expression for carrier functions when we study concrete examples in Subsection \ref{s45}.

\begin{lem}\label{lem:pitman}
(a) It holds that $u\in\mathbb{R}^\mathbb{Z}$ is a carrier for $x\in\mathbb{R}^\mathbb{Z}$ if and only if
\begin{equation}\label{eqeqeq}
M^u(S^x)_n=S^x_n+F_n^{(2)}\left(S^x_n-S^x_{n-1}, M^u(S^x)_{n-1}-S^x_{n-1}\right),\qquad \forall n\in\mathbb{Z}.
\end{equation}
(b) For Pitman-type locally-defined dynamics on $\mathbb{R}^\mathbb{Z}$, if $u\in\mathbb{R}^\mathbb{Z}$ is a carrier for $x\in\mathbb{R}^\mathbb{Z}$, then
\begin{equation}\label{pitu}
\mathcal{T}^u_S\left(S^x\right)=\theta^m\circ T^u(S^x)-\theta^m\circ T^u(S^x)_0.
\end{equation}
\end{lem}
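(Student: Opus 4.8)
The plan is to verify the two displayed equations in Lemma \ref{lem:pitman} by unwinding the definitions, first handling part (a), which is purely a rewriting of the carrier condition, and then using it together with the P-map conservation law to extract part (b).

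\emph{Part (a).} Recall that $u$ is a carrier for $x$ precisely when $u_n = F_n^{(2)}(x_n, u_{n-1})$ for all $n\in\mathbb{Z}$. The plan is to translate both sides into path-encoding language. By Definition \ref{pidef}(a) we have $x_n = S^x_n - S^x_{n-1}$, and by definition of $M^u$ we have $u_n = M^u(S^x)_n - S^x_n$ and $u_{n-1} = M^u(S^x)_{n-1} - S^x_{n-1}$. Substituting these into the carrier identity gives
\[
M^u(S^x)_n - S^x_n = F_n^{(2)}\left(S^x_n - S^x_{n-1},\, M^u(S^x)_{n-1} - S^x_{n-1}\right),
\]
which rearranges to exactly \eqref{eqeqeq}. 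Since every substitution is reversible, this is an equivalence, giving part (a). This step is routine.

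\emph{Part (b).} Here I would use the conservation property $P\circ F_n = P$, i.e.\ $F_n^{(1)}(a,b) - 2F_n^{(2)}(a,b) = a - 2b$ for every P-map $F_n$. First observe that by Definition \ref{carrierdef}(a) the new configuration satisfies $(\mathcal{T}^u x)_{n-m} = F_n^{(1)}(x_n, u_{n-1})$; combining this with the definition of a carrier, $u_n = F_n^{(2)}(x_n, u_{n-1})$, and the conservation law applied with $a = x_n$, $b = u_{n-1}$, we get
\[
(\mathcal{T}^u x)_{n-m} - 2u_n = x_n - 2u_{n-1}.
\]
Now I would compute the increments of $T^u(S^x)$: by definition $T^u(S^x)_k = 2M^u(S^x)_k - S^x_k = 2(S^x_k + u_k) - S^x_k = S^x_k + 2u_k$, hence
\[
T^u(S^x)_k - T^u(S^x)_{k-1} = (S^x_k - S^x_{k-1}) + 2(u_k - u_{k-1}) = x_k + 2u_k - 2u_{k-1}.
\]
Using the conservation identity above in the form $x_k + 2u_k - 2u_{k-1} = (\mathcal{T}^u x)_{k-m} + 2u_{k-1} - 2u_{k-1}$... more carefully, from $(\mathcal{T}^u x)_{k-m} - 2u_k = x_k - 2u_{k-1}$ we get $x_k + 2u_k - 2u_{k-1} = (\mathcal{T}^u x)_{k-m}$. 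Therefore the increments of $T^u(S^x)$, shifted by $m$, are exactly the configuration $\mathcal{T}^u x$: that is, $\theta^m\circ T^u(S^x)_k - \theta^m\circ T^u(S^x)_{k-1} = T^u(S^x)_{k+m} - T^u(S^x)_{k+m-1} = (\mathcal{T}^u x)_k$. Since $S^{\mathcal{T}^u x} \in \mathcal{S}^0$ is the unique element of $\mathcal{S}^0$ with these increments, and $\theta^m\circ T^u(S^x) - \theta^m\circ T^u(S^x)_0$ is the unique element of $\mathcal{S}^0$ having the same increments as $\theta^m\circ T^u(S^x)$, the two must coincide. Recalling $\mathcal{T}^u_S(S^x) = S^{\mathcal{T}^u(x)}$ by Definition \ref{pidef}(b), this gives \eqref{pitu}.

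\emph{Main obstacle.} There is no serious obstacle; the one point requiring mild care is bookkeeping the shift index $m$ — making sure that "$(\mathcal{T}^u x)_{n-m} = F_n^{(1)}(x_n,u_{n-1})$" is correctly converted into a statement about increments of $\theta^m\circ T^u(S^x)$ rather than of $T^u(S^x)$ itself — and the normalisation by subtracting the value at $0$ so that the result lies in $\mathcal{S}^0$, matching $S^{\mathcal{T}^u(x)}$. Everything else is a direct substitution using the P-map conservation law.
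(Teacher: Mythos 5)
Your proposal is correct and follows essentially the same route as the paper: part (a) by direct substitution of $x_n=S^x_n-S^x_{n-1}$ and $u_n=M^u(S^x)_n-S^x_n$ into the carrier relation, and part (b) by computing the increments $T^u(S^x)_k-T^u(S^x)_{k-1}=x_k+2u_k-2u_{k-1}$ and invoking the P-map conservation law to identify these (after the shift by $m$) with $(\mathcal{T}^ux)_k$, then matching the normalisation at $0$. The only blemish is the momentary false start in the displayed identity of part (b), which you immediately correct.
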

\begin{proof} Applying the definitions of $S^x$ and $M^u$, it is straightforward to check that the collection of equations $u_n=F_n^{(2)}(x_n,u_{n-1})$, $n\in\mathbb{Z}$, is equivalent to \eqref{eqeqeq}, which establishes (a). For part (b), we start by noting that since $\mathcal{T}_S^u(S^x)_0=0=\theta^m\circ T^u(S^x)_0-\theta^m\circ T^u(S^x)_0$, we only need to prove
\[\mathcal{T}_S^u\left(S^x\right)_n-\mathcal{T}_S^u\left(S^x\right)_{n-1}=\theta^m\circ T^u(S^x)_n-\theta^m\circ T^u(S^x)_{n-1},\qquad \forall n\in\Z.\]
By the definition of $T^u$, we have that
\begin{eqnarray*}
\theta^m\circ T^u(S^x)_n-\theta^m\circ T^u(S^x)_{n-1}&=&2M^u(S^x)_{n+m}-S^x_{n+m}-2M^u(S^x)_{n+m-1}+S^x_{n+m-1}\\
&=&S^x_{n+m}+2u_{n+m}-S^x_{n+m-1}-2u_{n+m-1}\\
&=&x_{n+m}+2u_{n+m}-2u_{n+m-1}.
\end{eqnarray*}
Now, since we are assuming Pitman-type dynamics, we have that
\[x_{n+m}-2u_{n+m-1}=\mathcal{T}^u(x)_{n}-2u_{n+m},\]
and so we obtain
\[\theta^m\circ T^u(S^x)_n-\theta^m\circ T^u(S^x)_{n-1}=\mathcal{T}^u(x)_{n}=\mathcal{T}_S^u\left(S^x\right)_n-\mathcal{T}_S^u\left(S^x\right)_{n-1},\]
as desired.
\end{proof}

For a given system of Pitman-type locally-defined dynamics on $\mathbb{R}^\mathbb{Z}$, we now introduce the assumption that, as we will show in the following proposition, ensures the existence of a canonical carrier function on a certain subset of the configuration space. In order to describe the latter, we present some notation. Specifically, recall $\mathcal{S}$ from \eqref{sdef}, and, for $c>0$, let
\[\mathcal{S}^{lin(c)}_-:=\left\{S\in \mathcal{S}\::\:\lim_{n \to - \infty}\frac{S_n}{n}=c\right\},\qquad \mathcal{S}^{lin(c)}_+:=\left\{S\in \mathcal{S}\::\:\lim_{n \to+ \infty}\frac{S_n}{n}=c\right\},\]
be the set of asymptotically linear path encodings with gradient $c$ at $-\infty$ or $+\infty$, respectively, and
\[\mathcal{S}^{lin}_-:=\bigcup_{c>0}\mathcal{S}^{lin(c)}_-,\qquad\mathcal{S}^{lin}_+:=\bigcup_{c>0}\mathcal{S}^{lin(c)}_+,\qquad \mathcal{S}^{lin}:=\mathcal{S}^{lin}_-\cap\mathcal{S}^{lin}_+;\]
where we note that this definition of $\mathcal{S}^{lin}$ matches that given at \eqref{slindef}. Moreover, we define the corresponding parts of the configuration space by setting, for $c>0$,
\[\mathcal{X}^{lin(c)}_-:=\left\{x\in \mathbb{R}^\Z\::\:S^x\in \mathcal{S}^{lin(c)}_-\right\},\qquad\mathcal{X}^{lin(c)}_+:=\left\{x\in \mathbb{R}^\Z\::\:S^x\in \mathcal{S}^{lin(c)}_+\right\},\]
and also
\[\mathcal{X}^{lin}_-:=\bigcup_{c>0}\mathcal{X}^{lin(c)}_-,\qquad\mathcal{X}^{lin}_+:=\bigcup_{c>0}\mathcal{X}^{lin(c)}_+,\qquad \mathcal{X}^{lin}:=\mathcal{X}^{lin}_-\cap\mathcal{X}^{lin}_+.\]

\begin{assu}\label{a1}
(a) For any $x\in\mathbb{R}^\Z$ satisfying
\[\lim_{n \to -\infty}S^x_n=\infty,\]
there does not exist a carrier for $x$.\\
(b) For any $x\in\mathcal{X}^{lin}_{-}$, there exists a carrier $w$ such that
\[\liminf_{n \to -\infty}M^w(S^x)_n=-\infty.\]
Moreover, if $u$ is any other carrier for $x$, then
\[\liminf_{n \to -\infty}M^u(S^x)_n >-\infty.\]
In particular, setting $W(x)=w$ yields a well-defined carrier function on $\mathcal{X}^{lin}_{-}$.\\
(c) For any $c>0$ and $x\in\mathcal{X}^{lin(c)}_{-}$, it holds that
\[\lim_{n \to -\infty}\frac{M^{W(x)}(S^x)_n}{n} =c.\]
(d) For any $c>0$ and $x\in\mathcal{X}^{lin}_{-}\cap\mathcal{X}^{lin(c)}_{+}$, it holds that
\[\lim_{n \to \infty}\frac{M^{W(x)}(S^x)_n}{n} =c.\]
\end{assu}

\begin{rem}
  In Assumption \ref{a1}, condition (a) gives a criteria for the degeneracy of a configuration (with respect to the existence of a carrier). Condition (b) ensures the existence of at least one carrier, and a means through which to select one of these uniquely. Condition (c) will allow us to check that the forward dynamics of the system are well-defined, and condition (d) similarly for the backward dynamics.
\end{rem}

\begin{prop}\label{linearsp} Suppose that we have a system of Pitman-type locally-defined dynamics on $\mathbb{R}^\mathbb{Z}$ such that Assumption \ref{a1}(a)-(c) holds. It is then the case that $W:\mathcal{X}^{lin}_{-}\rightarrow \R^{\Z}$ is a canonical carrier function. If it is moreover the case that  Assumption \ref{a1}(d) holds, then $W:\mathcal{X}^{lin}\rightarrow \R^{\Z}$ is a canonical carrier function.
\end{prop}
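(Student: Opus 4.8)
The plan is to check the three requirements in Definition~\ref{carrierdef}(c) directly, with $\mathcal{X}^U=\mathcal{X}^{lin}_-$ for the first assertion and $\mathcal{X}^U=\mathcal{X}^{lin}$ for the second. That $W$ is a carrier function is immediate, since this is precisely what Assumption~\ref{a1}(b) asserts on the relevant domain; so the substance lies in verifying property~(i) (that $\mathcal{T}$ preserves the domain) and property~(ii) (that the choice $W(x)$ is canonical among carriers).

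For property~(i), I would fix $c>0$ and $x\in\mathcal{X}^{lin(c)}_-$ and apply Lemma~\ref{lem:pitman}(b): the path encoding of $\mathcal{T}(x)=\mathcal{T}^{W(x)}(x)$ equals $\theta^m\circ T^{W(x)}(S^x)$ up to an additive constant, where $T^{W(x)}(S^x)_n=2M^{W(x)}(S^x)_n-S^x_n$. Since neither the shift $\theta^m$ nor an additive constant changes asymptotic slopes, the slope of $S^{\mathcal{T}(x)}$ at $-\infty$ is $\lim_{n\to-\infty}\bigl(2M^{W(x)}(S^x)_n-S^x_n\bigr)/n=2c-c=c$, using $S^x\in\mathcal{S}^{lin(c)}_-$ and Assumption~\ref{a1}(c). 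Hence $\mathcal{T}(x)\in\mathcal{X}^{lin(c)}_-\subseteq\mathcal{X}^{lin}_-$, giving $\mathcal{T}(\mathcal{X}^{lin}_-)\subseteq\mathcal{X}^{lin}_-$. When Assumption~\ref{a1}(d) also holds, the same slope computation at $+\infty$ --- now invoking part~(d) in place of part~(c), with the $+\infty$-slope of $x$ --- shows $\mathcal{T}(x)\in\mathcal{X}^{lin}_+$ as well, and therefore $\mathcal{T}(\mathcal{X}^{lin})\subseteq\mathcal{X}^{lin}$.

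For property~(ii), I would take $x\in\mathcal{X}^{lin}_-$ and an arbitrary carrier $u\neq W(x)$ for $x$. The second part of Assumption~\ref{a1}(b) gives $\liminf_{n\to-\infty}M^u(S^x)_n>-\infty$, so $M^u(S^x)_n$ is bounded below for all sufficiently negative $n$; meanwhile $S^x\in\mathcal{S}^{lin}_-$ forces $S^x_n\to-\infty$ as $n\to-\infty$. Consequently $T^u(S^x)_n=2M^u(S^x)_n-S^x_n\to+\infty$ as $n\to-\infty$, and by Lemma~\ref{lem:pitman}(b) the path encoding of $\mathcal{T}^u(x)$ satisfies $\lim_{n\to-\infty}S^{\mathcal{T}^u(x)}_n=+\infty$. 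Assumption~\ref{a1}(a) then shows that $\mathcal{T}^u(x)$ has no carrier, which is exactly property~(ii). As this uses only the behaviour of $S^x$ near $-\infty$, it applies verbatim on $\mathcal{X}^{lin}\subseteq\mathcal{X}^{lin}_-$, so both parts of the proposition follow.

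I do not expect a real obstacle here: the argument is bookkeeping organised around Lemma~\ref{lem:pitman}(b) and the hypotheses. The step needing most care is the translation between a configuration and its path encoding --- in particular, confirming that the shift $\theta^m$ and the additive constant appearing in Lemma~\ref{lem:pitman}(b) are genuinely irrelevant to the asymptotic-slope computations in property~(i), and correctly passing from the $\liminf$ hypothesis in Assumption~\ref{a1}(b) to the divergence $T^u(S^x)_n\to+\infty$ in property~(ii).
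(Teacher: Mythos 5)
Your proposal is correct and follows essentially the same route as the paper's proof: both verify Definition~\ref{carrierdef}(c)(i) via Lemma~\ref{lem:pitman}(b) and the slope computation $2c-c=c$ using Assumption~\ref{a1}(c) (resp.\ (d) at $+\infty$), and both verify (c)(ii) by combining the $\liminf$ bound of Assumption~\ref{a1}(b) with $S^x_n\to-\infty$ to force $S^{\mathcal{T}^u(x)}_n\to\infty$ and then invoking Assumption~\ref{a1}(a). The two points you flag as needing care (irrelevance of $\theta^m$ and the additive normalisation, and the passage from the $\liminf$ bound to divergence) are exactly the points the paper handles, and your treatment of them is sound.
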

\begin{proof} Since $W:\mathcal{X}^{lin}_{-}\rightarrow \R^{\Z}$ is a carrier function by definition, we only need to check that properties (i) and (ii) of Definition \ref{carrierdef}(c) hold in each of the cases. Suppose $c>0$ and $x\in \mathcal{X}^{lin(c)}_{-}$. By Lemma \ref{lem:pitman}, we have that $S^{\mathcal{T}^{W(x)}(x)}=\mathcal{T}^{W(x)}_S(S^x)=\theta^m\circ T^{W(x)}(S^x)-\theta^m\circ T^{W(x)}(S^x)_0$, and so
\[\lim_{n \to - \infty}\frac{S^{\mathcal{T}^{W(x)}(x)}_n}{n}=\lim_{n\to-\infty}\frac{T^{W(x)}(S^x)_n}{n}=\lim_{n \to - \infty}\frac{2M^{W(x)}_n(S^x)-S^x_n}{n}=c,\]
where we have applied Assumption \ref{a1}(c) to obtain the final equality. In particular, this establishes that $\mathcal{T}^{W(x)}(x)\in \mathcal{X}^{lin(c)}_{-}$, and thus Definition \ref{carrierdef}(c)(i) holds for $W:\mathcal{X}^{lin}_{-}\rightarrow \R^{\Z}$. Moreover, if Assumption \ref{a1}(d) holds, then one similarly obtains that for $c>0$ and $x\in\mathcal{X}^{lin}_{-}\cap\mathcal{X}^{lin(c)}_{+}$, $\mathcal{T}^{W(x)}(x)\in \mathcal{X}^{lin(c)}_{+}$, and thus Definition \ref{carrierdef}(c)(i) holds for $W:\mathcal{X}^{lin}\rightarrow \R^{\Z}$ as well. Finally, we show Definition \ref{carrierdef}(c)(ii). Suppose $u$ is a carrier for $x\in \mathcal{X}^{lin}_{-}$ and $u\neq W(x)$. By Assumption \ref{a1}(b), we therefore have that $\liminf_{n \to -\infty}M^u(S^x)_n >-\infty$, and consequently
\begin{align*}
\liminf_{n \to -\infty}S^{\mathcal{T}^{u}(x)}&=\liminf_{n \to -\infty}\left(T^u(S^x)_{n+m}-T^u(S^x)_m\right)\\
&=\liminf_{n \to -\infty}  (2M^u(S^x)_n-S^x_n)-(2M^u(S^x)_m-S^x_m)\\
& =\infty,
\end{align*}
where we have again applied Lemma \ref{lem:pitman}. Hence $T^u(x)$ does not have a carrier by Assumption \ref{a1}(a).
\end{proof}

In view of the preceding result (and Theorem \ref{thm:uni-ex-f-gp}), it makes sense to write $\mathcal{T}=\mathcal{T}^{W(x)}$ whenever Assumption \ref{a1}(a)-(c) holds, and we will do this henceforth. We note that this operator on configurations also gives rise to an operator $\mathcal{T}_S=\pi\circ\mathcal{T}\circ\pi^{-1}$ on path encodings, which can be viewed as a Pitman-type transformation, as per the description at \eqref{pitu} with $u=W(x)$. Moreover, combining Theorem \ref{thm:uni-ex-f-gp} with Proposition \ref{linearsp} yields the following theorem, which provides a means of identifying solutions to the forward problem. Apart from the fact they include a path encoding description of the dynamics, a key difference of this and the subsequent theorem to our earlier results is that they make explicit the set of initial conditions for which we can solve the initial value problem.

\begin{thm}\label{t16} Suppose $m$ and $(F_n)_{n\in\Z}$ describe a system of Pitman-type locally-defined dynamics on $\mathbb{R}^\mathbb{Z}$ such that Assumption \ref{a1}(a)-(c) holds. It is then the case that, for each $x\in\mathcal{X}^{lin}_-$, the initial value problem
\[\begin{cases}
x^0=x, \\
\left(x^{t+1}_{n-m},u^t_n\right)=F_n\left(x^{t}_{n},u^t_{n-1}\right),\qquad \forall n \in \Z,\:\ t \in \Z_+,
\end{cases}\]
has a unique solution $((x^t,u^t))_{t\in\mathbb{Z}_+}$. This is given by setting $x^t:=\pi^{-1}(S^t)$ and $u^t:=W(x^t)$, where
\[S^t:=\mathcal{T}_S^t(S^x),\qquad \forall t\in\mathbb{Z}_+,\]
i.e.\ $(S^t)_{t\in\mathbb{Z}_+}$ is obtained from $S^x$ by applying iteratively the Pitman-type transformation associated with $W$ .
\end{thm}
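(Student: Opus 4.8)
The plan is to deduce the result directly from the abstract machinery already assembled. First I would invoke Proposition \ref{linearsp}: since the dynamics are of Pitman-type and Assumption \ref{a1}(a)--(c) holds, the map $W:\mathcal{X}^{lin}_-\rightarrow\mathbb{R}^{\mathbb{Z}}$ is a canonical carrier function in the sense of Definition \ref{carrierdef}(c). Taking $\mathcal{X}^U=\mathcal{X}^{lin}_-$ and $U=W$, Theorem \ref{thm:uni-ex-f-gp} then applies verbatim and yields, for each $x\in\mathcal{X}^{lin}_-$, that the initial value problem \eqref{IVP} has a unique solution, namely $x^t=\mathcal{T}^t(x)$ and $u^t=W(x^t)$ for $t\in\mathbb{Z}_+$, where $\mathcal{T}=\mathcal{T}^{W(x)}$ is the configuration operator associated with the canonical carrier. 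This already settles existence and uniqueness; it remains only to recast the dynamics at the level of path encodings.

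For the second step, I would set $S^t:=\pi(x^t)=S^{x^t}$ and observe that, since $\mathcal{T}_S:=\pi\circ\mathcal{T}\circ\pi^{-1}$ and $\pi$ is a bijection, we have $S^t=\mathcal{T}_S^t(S^x)$. It then suffices to identify $\mathcal{T}_S$ with the Pitman-type transformation attached to $W$, which is exactly the content of Lemma \ref{lem:pitman}(b): for $x\in\mathcal{X}^{lin}_-$ with canonical carrier $w=W(x)$, one has $\mathcal{T}_S^{w}(S^x)=\theta^m\circ T^{w}(S^x)-\theta^m\circ T^{w}(S^x)_0$, so applying $\mathcal{T}_S$ amounts to applying $T^{W(\cdot)}$, followed by the shift $\theta^m$ and a vertical normalisation to return to $\mathcal{S}^0$. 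Iterating this $t$ times --- using that, since $W$ is a canonical carrier function, property (i) of Definition \ref{carrierdef}(c) gives $\mathcal{T}(\mathcal{X}^{lin}_-)\subseteq\mathcal{X}^{lin}_-$, so that each $x^s$ lies in $\mathcal{X}^{lin}_-$, $W(x^s)$ is defined, and the formula of Lemma \ref{lem:pitman}(b) may be re-applied at each step --- gives the stated description of $(S^t)_{t\in\mathbb{Z}_+}$ as the orbit of $S^x$ under the Pitman-type transformation associated with $W$, together with $x^t=\pi^{-1}(S^t)$ and $u^t=W(x^t)$.

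I do not anticipate a genuine obstacle: the substantive work has been carried out in Proposition \ref{linearsp} and Theorem \ref{thm:uni-ex-f-gp}. The only points requiring a little care are bookkeeping for the shift parameter $m$ and the additive normalisation constant in Lemma \ref{lem:pitman}(b) when iterating --- one should check that the left-shift $\theta^m$ and the vertical shift compose correctly over $t$ steps and do not interfere with the domain of applicability, which is ensured because membership of $\mathcal{X}^{lin}_-$ (indeed of $\mathcal{X}^{lin(c)}_-$, by Assumption \ref{a1}(c)) is preserved under $\mathcal{T}$. Beyond that, the argument is a routine concatenation of the cited results.
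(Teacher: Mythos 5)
Your proposal is correct and follows exactly the route the paper takes: Theorem \ref{t16} is stated there as an immediate consequence of combining Proposition \ref{linearsp} (which gives that $W$ is a canonical carrier function on $\mathcal{X}^{lin}_-$) with Theorem \ref{thm:uni-ex-f-gp}, together with the identification $\mathcal{T}_S=\pi\circ\mathcal{T}\circ\pi^{-1}$ and Lemma \ref{lem:pitman}(b) for the path-encoding description. Your additional care over the shift $\theta^m$, the vertical normalisation, and the preservation of $\mathcal{X}^{lin}_-$ under $\mathcal{T}$ is exactly the right bookkeeping and matches the paper's treatment.
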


We also have the following all-time version of the result, which follows from Theorem \ref{thm:uni-ex-f-gp-2} and Proposition \ref{linearsp}. For the statement, we recall the definition of self-reverse locally-defined dynamics from the former of the aforementioned results, and also introduce a path encoding reversal operator $R^\mathcal{S}:\mathcal{S}\rightarrow\mathcal{S}$ by setting
\[R^\mathcal{S}(S)_n:=-S_{m-n},\]
which we note satisfies $R^\mathcal{S}(S^x)=S^{R^\mathcal{X}(x)}$ for all $x\in\mathbb{R}^\mathbb{Z}$.

\begin{thm}\label{t17} Suppose $m$ and $(F_n)_{n\in\Z}$ describe a system of Pitman-type self-reverse locally-defined dynamics on $\mathbb{R}^\mathbb{Z}$ such that Assumption \ref{a1} holds. The following statements then hold.\\
(a) The operator $\mathcal{T}$ is a bijection from $\mathcal{X}^{lin}$ to itself, and
\[\mathcal{T}^{-1}=R^{\mathcal{X}}\circ\mathcal{T}\circ R^{\mathcal{X}}.\]
Or, in terms of path encodings, $\mathcal{T}_S$ is a bijection from $\mathcal{S}^0\cap\mathcal{S}^{lin}$ to itself, and
\begin{equation}\label{tsinv}
\mathcal{T}_S^{-1}=R^{\mathcal{S}}\circ\mathcal{T}_S\circ R^{\mathcal{S}}.
\end{equation}
(b) On $\mathcal{X}^{lin}$, it holds that
\[W=R^{\mathcal{X}}\circ W \circ R^{\mathcal{X}}\circ\mathcal{T}.\]
(c) For each $x\in\mathcal{X}^{lin}$, the initial value problem
\[\begin{cases}
x^0=x, \\
\left(x^{t+1}_{n-m},u^t_n\right)=F_n\left(x^{t}_{n},u^t_{n-1}\right),\qquad \forall n \in \Z,\:\ t \in \Z,
\end{cases}\]
has a unique solution $((x^t,u^t))_{t\in\mathbb{Z}}$. This is given by setting $x^t:=\pi^{-1}(S^t)$ and $u^t:=W(x^t)$, where
\[S^t:=\mathcal{T}_S^t(S^x),\qquad \forall t\in\mathbb{Z}.\]
\end{thm}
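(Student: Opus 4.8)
The plan is to read the statement off directly from Theorem \ref{thm:uni-ex-f-gp-2}, together with Proposition \ref{linearsp} and Lemma \ref{lem:pitman}, so that the only real work is bookkeeping between the configuration level and the path-encoding level. First I would verify that the hypotheses of Theorem \ref{thm:uni-ex-f-gp-2} are met with $\mathcal{X}^U := \mathcal{X}^{lin}$ and $U := W$. That the dynamics are self-reverse is assumed, and that $W:\mathcal{X}^{lin}\to\mathbb{R}^\mathbb{Z}$ is a canonical carrier function is precisely the content of Proposition \ref{linearsp} under the full Assumption \ref{a1}, which is in force here. The only remaining point is the symmetry $R^{\mathcal{X}}(\mathcal{X}^{lin}) = \mathcal{X}^{lin}$, which I would establish through the path-encoding reversal $R^\mathcal{S}$, using $R^\mathcal{S}(S^x) = S^{R^\mathcal{X}(x)}$ (valid up to an irrelevant additive constant when $m\neq 0$) and a short limit computation: if $S\in\mathcal{S}^{lin(c)}_-$, then $R^\mathcal{S}(S)_n/n = -S_{m-n}/n \to c$ as $n\to+\infty$, and symmetrically, so $R^\mathcal{S}$ interchanges $\mathcal{S}^{lin(c)}_-$ and $\mathcal{S}^{lin(c)}_+$ and hence preserves $\mathcal{S}^{lin}$; transporting this through the bijection $\pi$ gives $R^{\mathcal{X}}(\mathcal{X}^{lin}) = \mathcal{X}^{lin}$.

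With the hypotheses verified, part (a) (configuration version) and part (b) are immediate restatements of Theorem \ref{thm:uni-ex-f-gp-2}(a) and (b). For part (c), Theorem \ref{thm:uni-ex-f-gp-2}(c) already supplies existence and uniqueness of the all-time solution and the formulas $x^t = \mathcal{T}^t(x)$, $u^t = W(x^t)$; it remains only to rewrite the spatial evolution at the path-encoding level. Here I would invoke the identity $\mathcal{T}_S = \pi\circ\mathcal{T}\circ\pi^{-1}$ from Definition \ref{pidef} and the discussion following Proposition \ref{linearsp} (which also records that $\mathcal{T}_S$ is the Pitman-type transformation associated with $W$ in the sense of \eqref{pitu}), so that $\pi(x^t) = \mathcal{T}_S^t(S^x) =: S^t$, i.e.\ $x^t = \pi^{-1}(S^t)$, as claimed.

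For the path-encoding version of part (a): since $\pi$ is a bijection $\mathbb{R}^\mathbb{Z}\to\mathcal{S}^0$ restricting to a bijection $\mathcal{X}^{lin}\to\mathcal{S}^0\cap\mathcal{S}^{lin}$, the conjugate $\mathcal{T}_S = \pi\circ\mathcal{T}\circ\pi^{-1}$ is a bijection of $\mathcal{S}^0\cap\mathcal{S}^{lin}$ onto itself, with $\mathcal{T}_S^{-1} = \pi\circ\mathcal{T}^{-1}\circ\pi^{-1} = \pi\circ R^{\mathcal{X}}\circ\mathcal{T}\circ R^{\mathcal{X}}\circ\pi^{-1}$ by part (a); substituting the intertwining relation $\pi\circ R^{\mathcal{X}} = R^{\mathcal{S}}\circ\pi$ (again up to the harmless constant, which is killed by the shift-subtraction built into $\mathcal{T}_S$) yields $\mathcal{T}_S^{-1} = R^{\mathcal{S}}\circ\mathcal{T}_S\circ R^{\mathcal{S}}$, which is \eqref{tsinv}.

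I expect no substantial obstacle: the theorem is essentially an assembly of Proposition \ref{linearsp}, Theorem \ref{thm:uni-ex-f-gp-2} and Lemma \ref{lem:pitman}. The only slightly delicate point is the careful handling of the two reversal operators $R^\mathcal{X}$ and $R^\mathcal{S}$ and the additive-constant discrepancy between $R^\mathcal{S}(S^x)$ and $S^{R^\mathcal{X}(x)}$ when the shift parameter $m$ is nonzero; since a vertical shift of a path encoding changes neither the configuration it encodes nor (by \eqref{pitu}) the induced dynamics, this discrepancy is immaterial, but it should be acknowledged explicitly.
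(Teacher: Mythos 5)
Your proposal is correct and matches the paper's intended argument: Theorem \ref{t17} is stated there as a direct consequence of Theorem \ref{thm:uni-ex-f-gp-2} and Proposition \ref{linearsp}, and the details you supply (checking $R^{\mathcal{X}}(\mathcal{X}^{lin})=\mathcal{X}^{lin}$ via the asymptotic slopes of $R^{\mathcal{S}}(S)$, and conjugating by $\pi$ to get the path-encoding statements) are exactly the right ones. Your explicit acknowledgement of the additive-constant discrepancy between $R^{\mathcal{S}}(S^x)$ and $S^{R^{\mathcal{X}}(x)}$ when $m\neq 0$ is in fact more careful than the paper's own remark.
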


\begin{rem}
As we will show subsequently, each of the integrable systems \eqref{UDKDV}, \eqref{DKDV}, \eqref{UDTODA}, \eqref{DTODA} can be transformed so that the locally-defined dynamics are given by P-maps, and in particular satisfy a conservation property as in Definition \ref{P-mapdef}(a)(ii). It would be an interesting problem to explore which other integrable systems can be handled in a similar fashion, and which other Pitman-type transformations on path encodings arise in this way.
\end{rem}

\subsection{Key examples of Pitman-type locally-defined dynamics}\label{s45}

To complete this section, we consider four explicit examples of locally-defined dynamics on $\R^{\Z}$. Precisely, these are given as follows:
\begin{itemize}
\item $\mathbb{K}^{\vee}=(m,(K_n)_{n \in \Z})$, where $m=0$, and $K_n=K^{\vee}$ for all $n$, with
\[K^{\vee}(a,b)  :=\left(-\min\{a,2b\},b-\frac{a}{2}-\frac{\min\{a,2b\}}{2}\right),\qquad\forall(a,b)\in\mathbb{R}^2;\]
\item $\mathbb{K}^{\sum}=(m,(K_n)_{n \in \Z})$, where $m=0$ and $K_n=K^{\sum}$ for all $n$, with
\[K^{\sum}(a,b)  :=\left(2\log(e^{-\frac{a}{2}}+e^{-b}),b-\frac{a}{2}+\log(e^{-\frac{a}{2}}+e^{-b})\right),\qquad\forall(a,b)\in\mathbb{R}^2;\]
\item $\mathbb{K}^{\vee^*}=(m,(K_n)_{n \in \Z})$, where $m=1$, and $K_{n}=K^{\vee^*}$ for $n$ even, and $K_{n}=(K^{\vee^*})^{-1}$ for $n$ odd, with
\[K^{\vee^*}(a,b):= \left(-\min\{a,b\},b-\frac{a}{2}-\frac{\min\{a,b\}}{2}\right),\qquad\forall(a,b)\in\mathbb{R}^2;\]
\item $\mathbb{K}^{\sum^*}=(m,(K_n)_{n \in \Z})$, where $m=1$ and $K_{n}=K^{\sum^*}$ for $n$ even, and $K_{n}=(K^{\sum^*})^{-1}$ for $n$ odd, with
\[K^{\sum^*}(a,b) := \left(\log(e^{-a}+e^{-b}),b-\frac{a}{2}+\frac{\log(e^{-a}+e^{-b})}{2}\right),\qquad\forall(a,b)\in\mathbb{R}^2.\]
\end{itemize}
In later discussion, we will show that $\mathbb{K}^{\vee}$, $\mathbb{K}^{\sum}$, $\mathbb{K}^{\vee^*}$ and $\mathbb{K}^{\sum^*}$ correspond to \eqref{UDKDV}, \eqref{DKDV}, \eqref{UDTODA} and \eqref{DTODA}, respectively (see Section \ref{proofsec} in particular). By applying Proposition \ref{linearsp}, we will describe explicitly the canonical carrier functions for these systems on $\mathcal{X}^{lin}_-$ and $\mathcal{X}^{lin}$ (see Corollary \ref{c20}), and also describe the solutions to the corresponding initial value problems (see Corollary \ref{c25}). We start by checking that the maps $K^{\vee}$, $K^{\sum}$, $K^{\vee^*}$ and $K^{\sum^*}$ are P-maps, which ensures the inverses that appear in the above definition are well-defined, and that the systems are invariant under reversal.

\begin{lem}\label{lem:explicitmaps} (a) It holds that $K^{\vee}$, $K^{\sum}$, $K^{\vee^*}$ and $K^{\sum^*}$ are bijections, with inverses given by
\begin{eqnarray*}
(K^{\vee})^{-1}(a,b)&=&K^{\vee}(a,b),\\
(K^{\sum})^{-1}(a,b)&=&K^{\sum}(a,b),\\
(K^{\vee^*})^{-1} (a,b) &=&\left(-a-\min\{0,2b\}, -a-\min\{-b,0\}\right),\\
(K^{\sum^*})^{-1}(a,b)&=&\left(-a-2\log\left(\frac{\sqrt{1+4e^{2b}}-1}{2e^b}\right),-a-\log\left(\frac{\sqrt{1+4e^{2b}}-1}{2e^{2b}}\right)\right).
\end{eqnarray*}
(b) Each of the maps $K^{\vee}$, $K^{\sum}$, $K^{\vee^*}$ and $K^{\sum^*}$ are P-maps.\\
(c) Each of the systems of locally-defined dynamics $\mathbb{K}^{\vee}$, $\mathbb{K}^{\sum}$, $\mathbb{K}^{\vee^*}$ and $\mathbb{K}^{\sum^*}$ is self-reverse (as per the description in Theorem \ref{thm:uni-ex-f-gp-2}).
\end{lem}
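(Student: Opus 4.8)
The plan is to prove part (a) first, since parts (b) and (c) follow from it with little extra work. For the two KdV-type maps the claim is that $K^{\vee}$ and $K^{\sum}$ are involutions, and I would verify this by direct substitution, computing $K^{\vee}(K^{\vee}(a,b))$ and $K^{\sum}(K^{\sum}(a,b))$ and checking both equal $(a,b)$. For $K^{\vee}$ this splits into the two cases $a\le 2b$ and $a>2b$; in each case the first component of the double application is immediately $a$, and the second component collapses to $b$ once the $\min$ terms are cancelled. For $K^{\sum}$ no case split is needed — it is a clean algebraic identity involving $\log(e^{-a/2}+e^{-b})$. For the two Toda-type maps I would verify the stated formulas for $(K^{\vee^*})^{-1}$ and $(K^{\sum^*})^{-1}$ by checking that composing each with $K^{\vee^*}$ (resp.\ $K^{\sum^*}$) on both sides gives the identity, which simultaneously establishes bijectivity. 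For $K^{\vee^*}$ this splits on the sign of the relevant coordinate ($b\ge 0$ versus $b<0$) and is otherwise elementary.

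The one genuinely non-routine inversion, and the step I expect to be the main obstacle, is $K^{\sum^*}$. Given $(a',b')=K^{\sum^*}(a,b)$, the idea is to use the conservation identity $a'-2b'=a-2b$ (i.e.\ the P-map property established in part (b), which is why I would prove (b) before finishing (a)) to eliminate $a$, substitute into $e^{a'}=e^{-a}+e^{-b}$, and thereby obtain a quadratic equation in $e^{-b}$; solving this and taking the positive root introduces the factor $\sqrt{1+4e^{2b'}}$ of the claimed formula, after which $a$ is recovered from the conservation identity. Along the way one should note that $\sqrt{1+4e^{2b}}-1>0$ for all $b\in\mathbb{R}$, so the logarithms appearing in the inverse formula are everywhere well-defined.

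For part (b), once bijectivity is in hand from (a), it remains only to verify $P\circ F=P$, i.e.\ $F^{(1)}(a,b)-2F^{(2)}(a,b)=a-2b$, for each of the four maps. In every case the second component of $F$ was constructed precisely so that twice it cancels the $\min$ (resp.\ logarithm) term occurring in the first component, leaving $a-2b$; these are one-line verifications, and it is convenient to record this identity early since it is used in inverting $K^{\sum^*}$.

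For part (c), recall that a system $(m,(F_n)_{n\in\mathbb{Z}})$ on $\mathbb{R}^\mathbb{Z}$ is self-reverse exactly when $F_n^R:=F_{2m+1-n}^{-1}$ equals $F_n$ for every $n$, the conditions $\mathcal{X}^R=\mathcal{X}$ and $\mathcal{U}^R=\mathcal{U}$ being automatic here as all factors equal $\mathbb{R}$. For $\mathbb{K}^{\vee}$ and $\mathbb{K}^{\sum}$ we have $m=0$ and $F_n$ independent of $n$, so $F_n^R=F_{1-n}^{-1}=(K^{\vee})^{-1}$ (resp.\ $(K^{\sum})^{-1}$), which equals $F_n$ by the involution property from (a). For $\mathbb{K}^{\vee^*}$ and $\mathbb{K}^{\sum^*}$ we have $m=1$, so $F_n^R=F_{3-n}^{-1}$; since $n\mapsto 3-n$ reverses parity, $F_{3-n}$ is $(K^{\vee^*})^{-1}$ (resp.\ $(K^{\sum^*})^{-1}$) when $n$ is even and $K^{\vee^*}$ (resp.\ $K^{\sum^*}$) when $n$ is odd, so in both cases $F_n^R=F_n$. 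This completes the proof.
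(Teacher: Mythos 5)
Your proposal is correct and follows exactly the route the paper has in mind: the paper's own proof simply declares part (a) an elementary computation with details omitted, part (b) obvious from the definitions, and part (c) immediate, and your direct verifications (including the case splits for $K^{\vee}$ and $K^{\vee^*}$, the use of the conservation identity $a'-2b'=a-2b$ to reduce the inversion of $K^{\sum^*}$ to a quadratic in $e^{-b}$ with a unique positive root, and the parity check $n\mapsto 2m+1-n$ for self-reversibility) supply precisely those omitted details. I checked the computations and they all go through as you describe.
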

\begin{proof} Part (a) can be checked by elementary computation, the details of which we omit.
Given part (a), to establish part (b) we simply need to check that property (ii) of Definition \ref{P-mapdef}(a) holds in each case, but this is obvious from the definitions of the maps in question. Finally, part (c) is immediate from the previous parts of the lemma.
\end{proof}

Our next step is to prove that the models $\mathbb{K}^{\vee}$, $\mathbb{K}^{\sum}$, $\mathbb{K}^{\vee^*}$ and $\mathbb{K}^{\sum^*}$  satisfy Assumption \ref{a1}, with the carrier functions mapping $\mathcal{X}^{lin}_-$ into $\mathbb{R}^\Z$ being given by, respectively:
\begin{eqnarray}
W_\pi^{\vee}(x)_n&=&M^{\vee}(S^x)_n-S^x_n,\label{wpiv}\\
W_\pi^{\sum}(x)_n&=&M^{\sum}(S^x)_n-S^x_n,\nonumber\\
W_\pi^{\vee^*}(x)_n&=&M^{\vee^*}(S^x)_n-S^x_n,\nonumber\\
W_\pi^{\sum^*}(x)_n&=&M^{\sum^*}(S^x)_n-S^x_n,\nonumber
\end{eqnarray}
where $M^{\vee}$, $M^{\vee^*}$, $M^{\sum}$ and $M^{\sum^*}$ are defined as in Table \ref{Mtable}; the descriptions of the latter functions explain our choice of superscripts in the model definitions. For future reference, we note that the above definitions correspond to the configuration versions of the carrier functions introduced in Section \ref{dissec}, which are defined on $\mathcal{S}^{lin}_-$. More specifically, we have $W_\pi^{\vee}=W^{\vee}\circ\pi$, $W_\pi^{\sum}=W^{\sum}\circ\pi$, $W_\pi^{\vee^*}=W^{\vee^*}\circ\pi$ and $W_\pi^{\sum^*}=W^{\sum^*}\circ\pi$, where $W^{\vee}$, $W^{\sum}$, $W^{\vee^*}$ and $W^{\sum^*}$ were defined at \eqref{wveedef}, \eqref{wsumdef}, \eqref{wveestardef} and \eqref{wsumstardef}, respectively.

\begin{thm}\label{thm:examples} The models $\mathbb{K}^{\vee}$, $\mathbb{K}^{\sum}$, $\mathbb{K}^{\vee^*}$ and $\mathbb{K}^{\sum^*}$  satisfy Assumption \ref{a1} with carrier functions being given by $W_\pi^{\vee}$, $W_\pi^{\sum}$, $W_\pi^{\vee^*}$ and $W_\pi^{\sum^*}$, respectively.
\end{thm}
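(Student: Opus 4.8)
The plan is to treat all four models in parallel, exploiting their common structure. Fix a configuration $x$, write $S=S^x$, and suppose $u\in\mathbb{R}^\mathbb{Z}$ is a carrier for $x$; by Lemma \ref{lem:pitman}(a) the function $M:=M^u(S)$ (which satisfies $M_n=S_n+u_n$) is then characterised by the recursion $M_n=S_n+F_n^{(2)}(S_n-S_{n-1},M_{n-1}-S_{n-1})$ for all $n\in\mathbb{Z}$. The first step is to substitute the explicit maps of Lemma \ref{lem:explicitmaps} and simplify. For $\mathbb{K}^{\vee}$ one obtains $M_n=\max\{M_{n-1},(S_n+S_{n-1})/2\}$; for $\mathbb{K}^{\sum}$ one obtains $e^{M_n}=e^{M_{n-1}}+e^{(S_n+S_{n-1})/2}$; for $\mathbb{K}^{\vee^*}$ the odd and even steps combine to give $M_{2k+1}=\max\{M_{2k-1},S_{2k}\}$, with $M_{2k}=\max\{M_{2k-1},(M_{2k-1}+S_{2k})/2\}$ the interpolation between $M_{2k-1}$ and $M_{2k+1}$; and for $\mathbb{K}^{\sum^*}$ they combine to give $e^{M_{2k+1}}=e^{M_{2k-1}}+e^{S_{2k}}$, with $M_{2k}=(M_{2k-1}+M_{2k+1})/2$. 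The only awkward computation is the last, since $(K^{\sum^*})^{-1}$ is unwieldy; there it is cleanest to verify the carrier equation by direct substitution of the candidate $M^{\sum^*}(S)$, using that for $n$ odd the relation $M_n=S_n+((K^{\sum^*})^{-1})^{(2)}(S_n-S_{n-1},M_{n-1}-S_{n-1})$ is equivalent to the existence of $z$ with $(S_n-S_{n-1},M_{n-1}-S_{n-1})=K^{\sum^*}(z,M_n-S_n)$, from which the clean form follows after eliminating $z$.

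In each of the four recursions the governing quantity ($M_n$, $e^{M_n}$, or their odd-indexed subsequences) is non-decreasing in $n$, so a solution is pinned down among solutions by its limit as $n\to-\infty$, and the minimal solution — obtained by sending that limit to its smallest admissible value — is precisely the function $M^{\vee}(S)$, $M^{\sum}(S)$, $M^{\vee^*}(S)$, $M^{\sum^*}(S)$ of Table \ref{Mtable}. For $x\in\mathcal{X}^{lin}_-$ this minimal solution is finite, because $S_m/m\to c>0$ as $m\to-\infty$ forces the relevant $\sup$ or $\sum$ over $\{m\le n\}$ to converge; moreover along the minimal solution $M^u(S)_n\to-\infty$ as $n\to-\infty$, whereas every other solution has $\liminf_{n\to-\infty}M^u(S)_n>-\infty$. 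This is precisely Assumption \ref{a1}(b), and it identifies the associated carrier functions as $W_\pi^{\vee}$, $W_\pi^{\sum}$, $W_\pi^{\vee^*}$, $W_\pi^{\sum^*}$ respectively (cf.\ \eqref{wpiv} and the lines following it). Assumption \ref{a1}(a) follows from the same monotonicity: if $\lim_{n\to-\infty}S^x_n=\infty$ then the $\sup$ or $\sum$ appearing in the minimal solution equals $+\infty$ for every $n$, while any carrier's $M^u(S)_n$ would dominate it yet be finite — a contradiction.

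It remains to verify the asymptotic conditions (c) and (d). For $x\in\mathcal{X}^{lin(c)}_-$ (in case (c), as $n\to-\infty$) or $x\in\mathcal{X}^{lin}_-\cap\mathcal{X}^{lin(c)}_+$ (in case (d), as $n\to+\infty$), one estimates $M^{W(x)}(S^x)_n$ — that is $M^{\vee}(S)_n$, $M^{\sum}(S)_n$, $M^{\vee^*}(S)_n$ or $M^{\sum^*}(S)_n$ — by sandwiching. A lower bound comes from a single term: $M^{\vee}(S)_n\ge(S_n+S_{n-1})/2$ (and likewise $M^{\vee^*}(S)_n$ dominates $S_{2m}$ for $m=\lfloor(n-1)/2\rfloor$, and the exponential versions dominate the corresponding single exponentials), and this term already has ratio to $n$ tending to $c$. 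An upper bound comes from bounding each term $(S_m+S_{m-1})/2$ (resp.\ $S_{2m}$) by $(c\pm\varepsilon)m$ once $|m|$ is large, so that the defining $\sup$ is at most $(c\pm\varepsilon)n$ plus a fixed constant while the defining $\sum$ is at most a geometric-type series whose logarithm divided by $n$ tends to $c$; either way, dividing by $n$ and letting $\varepsilon\downarrow0$ yields $M^{W(x)}(S^x)_n/n\to c$. For the Toda models the even-indexed values of $M$ are trapped between consecutive odd-indexed ones, so the odd-subsequence estimates suffice. With Assumption \ref{a1}(a)--(d) in hand, Proposition \ref{linearsp} gives that $W_\pi^{\vee}$, $W_\pi^{\sum}$, $W_\pi^{\vee^*}$, $W_\pi^{\sum^*}$ are canonical carrier functions, which is the assertion. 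The only real obstacle is the algebra in the $\mathbb{K}^{\sum^*}$ case indicated above; the remaining arguments are routine and are formally identical between each Toda model and its KdV counterpart, the sole difference being that the recursion and the $\sup$/$\sum$ live on the even-indexed entries of $S$, with even-indexed values of $M$ recovered by interpolation.
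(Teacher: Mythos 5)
Your proposal is correct and follows essentially the same route as the paper: Lemma \ref{lem:pitman}(a) is used to show that $u$ is a carrier if and only if $M^u(S^x)$ satisfies the recursions you list (which are precisely \eqref{m-kdv}, \eqref{m-kdv-d}, \eqref{m-toda} and \eqref{m-toda-d}), after which Assumption \ref{a1}(a) follows because any carrier's $M^u(S^x)_n$ would have to dominate a divergent supremum or sum, (b) follows from the dichotomy between the minimal solution (which is the candidate of Table \ref{Mtable} and tends to $-\infty$) and every other solution (whose $\liminf$ is finite), and (c),(d) are checked directly from the definitions of the `past maximum' operators. The only real variation is your treatment of the odd steps of $\mathbb{K}^{\sum^*}$ via the equivalence $(a',b')=(K^{\sum^*})^{-1}(a,b)\Leftrightarrow(a,b)=K^{\sum^*}(a',b')$ with elimination of the auxiliary variable, which sidesteps the explicit square-root formula for $(K^{\sum^*})^{-1}$ that the paper substitutes directly; both routes produce the same recursion once the odd and even equations are combined.
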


Combining the previous result with Proposition \ref{linearsp}, yields the following corollary.

\begin{cor}\label{c20} For the models  $\mathbb{K}^{\vee}$, $\mathbb{K}^{\sum}$, $\mathbb{K}^{\vee^*}$ and $\mathbb{K}^{\sum^*}$, it holds that $W_\pi^{\vee}, W_\pi^{\sum}, W_\pi^{\vee^*}$ and $W_\pi^{\sum^*}$, respectively, are canonical carrier functions on $\mathcal{X}^{lin}_-$. Moreover, they are also canonical carrier functions on $\mathcal{X}^{lin}$.
\end{cor}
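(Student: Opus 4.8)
The plan is to deduce Corollary \ref{c20} directly by feeding the conclusion of Theorem \ref{thm:examples} into Proposition \ref{linearsp}, with essentially no further computation. First I would record that each of $\mathbb{K}^{\vee}$, $\mathbb{K}^{\sum}$, $\mathbb{K}^{\vee^*}$ and $\mathbb{K}^{\sum^*}$ is a system of Pitman-type locally-defined dynamics on $\mathbb{R}^{\mathbb{Z}}$; this is exactly what Lemma \ref{lem:explicitmaps}(b) provides, since it shows that all of $K^{\vee}$, $K^{\sum}$, $K^{\vee^*}$, $K^{\sum^*}$ are P-maps, and (as noted there and in Definition \ref{P-mapdef}) the inverse of a P-map is again a P-map, so the maps appearing at odd sites in the Toda-type models are also P-maps. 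Hence the hypothesis ``Pitman-type locally-defined dynamics'' in Proposition \ref{linearsp} is met in all four cases.

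Next I would invoke Theorem \ref{thm:examples}, which asserts that these four models satisfy Assumption \ref{a1} in full — parts (a), (b), (c) \emph{and} (d) — with carrier functions $W_\pi^{\vee}$, $W_\pi^{\sum}$, $W_\pi^{\vee^*}$, $W_\pi^{\sum^*}$ on $\mathcal{X}^{lin}_-$, respectively. Applying the first assertion of Proposition \ref{linearsp} (which only uses Assumption \ref{a1}(a)--(c)) then gives that each of these functions, viewed as a map $\mathcal{X}^{lin}_-\rightarrow\mathbb{R}^{\mathbb{Z}}$, is a canonical carrier function for the associated system. Applying the second assertion of Proposition \ref{linearsp} (which additionally uses Assumption \ref{a1}(d)) upgrades this to the statement that each of them is a canonical carrier function as a map $\mathcal{X}^{lin}\rightarrow\mathbb{R}^{\mathbb{Z}}$, which is precisely the second sentence of the corollary.

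I do not expect a genuine obstacle here: the result is a bookkeeping corollary, and the one point deserving a moment's care is simply to check that the hypotheses line up — namely that we really are in the Pitman-type setting (Lemma \ref{lem:explicitmaps}(b)) and that Theorem \ref{thm:examples} delivers the \emph{whole} of Assumption \ref{a1}, condition (d) included, since that is what distinguishes the $\mathcal{X}^{lin}$ conclusion from the weaker $\mathcal{X}^{lin}_-$ one. All of the substantive work — verifying the four parts of Assumption \ref{a1} separately for each of the four explicit models, and in particular identifying $M^{\vee}(S^x)_n-S^x_n$ and its analogues as the canonical carriers — is carried out in the proof of Theorem \ref{thm:examples} and is not repeated here.
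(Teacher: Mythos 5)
Your proposal is correct and matches the paper's own argument: the corollary is obtained precisely by combining Theorem \ref{thm:examples} (which verifies Assumption \ref{a1} in full for all four models with the stated carrier functions) with the two assertions of Proposition \ref{linearsp}. Your additional remark that Lemma \ref{lem:explicitmaps}(b) supplies the Pitman-type hypothesis is a sensible, if routine, piece of bookkeeping that the paper leaves implicit.
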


The claims of Theorem \ref{thm:examples} will be proved for each model separately across the subsequent four lemmas.

\begin{lem} If the locally-defined dynamics are given by $\mathbb{K}^{\vee}$, then:\\
(a) for $x\in\mathbb{R}^\Z$, $u$ is a carrier for $x$ if and only if
\begin{equation}\label{m-kdv}
M^u(S^x)_n=\max\left\{ M^u(S^x)_{n-1}, \frac{S^x_n+S^x_{n-1}}{2}\right\},\qquad \forall n\in\mathbb{Z};
\end{equation}
(b) Assumption \ref{a1} holds.
\end{lem}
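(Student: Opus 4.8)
The plan is to reduce everything to the analysis of a single scalar recursion for the sequence $M_n:=M^u(S^x)_n$. For part (a), I apply Lemma \ref{lem:pitman}(a), which identifies ``$u$ is a carrier for $x$'' with the validity of
\[
M^u(S^x)_n = S^x_n + (K^{\vee})^{(2)}\!\left(S^x_n-S^x_{n-1},\,M^u(S^x)_{n-1}-S^x_{n-1}\right),\qquad\forall n\in\mathbb{Z}.
\]
Substituting $(K^{\vee})^{(2)}(a,b)=b-\tfrac{a}{2}-\tfrac12\min\{a,2b\}$ with $a=S^x_n-S^x_{n-1}$, $b=M^u(S^x)_{n-1}-S^x_{n-1}$ and simplifying, the right-hand side reduces to $M^u(S^x)_{n-1}+\tfrac12(S^x_n-S^x_{n-1})-\tfrac12\min\{S^x_n-S^x_{n-1},\,2M^u(S^x)_{n-1}-2S^x_{n-1}\}$. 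A two-line case split according to whether $\tfrac12(S^x_n+S^x_{n-1})\le M^u(S^x)_{n-1}$ (i.e.\ which term attains the minimum) shows this is precisely $\max\{M^u(S^x)_{n-1},\tfrac12(S^x_n+S^x_{n-1})\}$, which is \eqref{m-kdv}. This is a routine computation.

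For part (b), the key structural step is a complete description of the solutions of \eqref{m-kdv}. Unrolling the recursion over $n_0<n$ gives $M_n=\max\{M_{n_0},\,\max_{n_0<m\le n}\tfrac12(S^x_m+S^x_{m-1})\}$, so any solution satisfies $M_n\ge M^{\vee}(S^x)_n$ and is nondecreasing; hence $c:=\lim_{n\to-\infty}M_n$ exists in $[-\infty,\infty)$. Conversely, for fixed $n$ one looks at the largest index $n_0\le n$ (if any) at which $M$ strictly increases, where \eqref{m-kdv} forces $M_{n_0}=\tfrac12(S^x_{n_0}+S^x_{n_0-1})\le M^{\vee}(S^x)_n$ and $M$ is then constant on $[n_0,n]$, while the remaining case ($M$ constant on $(-\infty,n]$) gives $M_n=c$; either way $M_n\le\max\{c,M^{\vee}(S^x)_n\}$, so $M_n=\max\{c,M^{\vee}(S^x)_n\}$. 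One checks directly that this formula solves \eqref{m-kdv} for every $c\in[-\infty,\infty)$ (with $c=-\infty$ meaning $M_n=M^{\vee}(S^x)_n$), provided $M^{\vee}(S^x)_n$ is finite for all $n$. In summary, the carriers of $x$ are exactly the sequences $u_n=\max\{c,M^{\vee}(S^x)_n\}-S^x_n$, $c\in[-\infty,\infty)$.

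With this classification in hand the four conditions of Assumption \ref{a1} follow. For (a): if $S^x_n\to\infty$ as $n\to-\infty$, any carrier would need $M^u(S^x)_n\ge\tfrac12(S^x_n+S^x_{n-1})\to\infty$ while also $M^u(S^x)_n\le M^u(S^x)_{n_0}$ for $n\le n_0$, a contradiction. For (b): on $\mathcal{X}^{lin}_-$ we have $\tfrac12(S^x_m+S^x_{m-1})\to-\infty$ as $m\to-\infty$, so $M^{\vee}(S^x)_n$ is finite for every $n$ and tends to $-\infty$; hence $W^{\vee}_\pi(x)_n=M^{\vee}(S^x)_n-S^x_n$ is a carrier with $\liminf_{n\to-\infty}M^{W^{\vee}_\pi(x)}(S^x)_n=-\infty$, and since finite $c$ corresponds bijectively to carriers $\ne W^{\vee}_\pi(x)$, every other carrier has $\liminf_{n\to-\infty}M^u(S^x)_n=c>-\infty$. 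Conditions (c) and (d) amount to showing $M^{\vee}(S^x)_n/n\to c$ as $n\to-\infty$ for $x\in\mathcal{X}^{lin(c)}_-$, respectively as $n\to+\infty$ for $x\in\mathcal{X}^{lin}_-\cap\mathcal{X}^{lin(c)}_+$: one bound is $M^{\vee}(S^x)_n\ge\tfrac12(S^x_n+S^x_{n-1})$, and the matching bound uses that, since $S^x_m\sim cm$, the terms $\tfrac12(S^x_m+S^x_{m-1})$ are up to $o(m)$ increasing in $m$ (so the supremum over $m\le n$ is essentially the term at $m=n$ when $n\to-\infty$, while for $n\to+\infty$ the supremum splits into a bounded part below a threshold plus a part governed by the $+\infty$ slope).

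The main obstacle is the ``only these'' half of the classification in part (b) — establishing that \eqref{m-kdv} admits no solutions beyond $\max\{c,M^{\vee}(S^x)\}$ — because this is exactly what pins down the canonical carrier and gives the uniqueness statement in Assumption \ref{a1}(b). Everything else, including the asymptotic estimates for (c) and (d), is elementary once the explicit identity $M^{W^{\vee}_\pi(x)}(S^x)=M^{\vee}(S^x)=\sup_{m\le\cdot}\tfrac12(S^x_m+S^x_{m-1})$ is available.
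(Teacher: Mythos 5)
Your proposal is correct and follows essentially the same route as the paper: part (a) is the same direct computation via Lemma \ref{lem:pitman}(a), and part (b) rests on the same facts (any solution of \eqref{m-kdv} dominates $M^{\vee}(S^x)$ and is nondecreasing, a strict excess at some $n_0$ propagates backwards as a constant, and the asymptotics of $M^{\vee}$ give conditions (c) and (d)). The only difference is presentational: you package these facts into a full classification of carriers as $u_n=\max\{c,M^{\vee}(S^x)_n\}-S^x_n$, whereas the paper extracts only the pieces of that classification it needs.
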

\begin{proof} By Lemma \ref{lem:pitman}(a) and the definition of $K^{\vee}$, $u$ is a carrier for $x$ if and only if
\begin{align*}
M^u(S^x)_n & =S^x_n+M^u(S^x)_{n-1}-S^x_{n-1}-\frac{S^x_n-S^x_{n-1}}{2}-\frac{\min\left\{S^x_n-S^x_{n-1},2(M^u(S^x)_n-S^x_{n-1})\right\}}{2}\\
& = \frac{S^x_n-S^x_{n-1}}{2}+M^u(S^x)_{n-1} -\min \left\{ \frac{S^x_n-S^x_{n-1}}{2}, M^u(S^x)_n-S^x_{n-1}\right\} \\
& = \max\left\{ M^u(S^x)_{n-1}, \frac{S^x_n+S^x_{n-1}}{2}\right\},
\end{align*}
which establishes part (a) of the lemma. For part (b), we check the various requirements of Assumption \ref{a1} as follows:\\
(a) Suppose $x\in\mathbb{R}^{\Z}$ satisfies $\lim_{n \to -\infty}S^x_n=\infty$. From \eqref{m-kdv}, if $u$ is a carrier, then $M^u(S^x)_n \ge \sup_{m \le n}  \frac{S_m+S_{m-1}}{2}$ for all $n$. Hence $M^u(S^x)_n = \infty$ for all $n$, but this contradicts $M^u(S^x)_n=S^x_n+u_n<\infty$. Thus there does not exist a carrier for $x$.\\
(b) Let $x\in\mathcal{X}^{lin}_-$. Since $M^{\vee}(S^x)=M^{W_\pi^{\vee}(x)}(S^x)$ satisfies $M^{\vee}(S^x)_n=\max\{M^{\vee}(S^x)_{n-1},\frac{S^x_n+S^x_{n-1}}{2}\}$, $W_\pi^{\vee}(x)$ is a carrier for $x$. Also, that $\liminf_{n \to -\infty}M^{\vee}(S^x)=-\infty$ is obvious from the definition. Next, suppose that $u$ is a carrier for $x$, but $u \neq W_\pi^{\vee}(x)$. Since $M^u$ must satisfy \eqref{m-kdv}, $M^u(S^x)_n \ge M^{\vee}(S^x)_n$ for all $n$. In conjunction with the assumption that $u \neq W_\pi^{\vee}(x)$, this implies the existence of an $n_0$ such that $M^u(S^x)_{n_0} > M^\vee(S^x)_{n_0}$. Hence, by applying \eqref{m-kdv} again, we find that $M^u(S^x)_{n}=M^u(S^x)_{n_0}$ for all $n \le n_0$, and so $\liminf_{n \to -\infty}M^u(S^x)_n > -\infty$.\\
(c),(d) These properties are straightforward to check from the definition of $M^{\vee}$.
\end{proof}

\begin{lem} If the locally-defined dynamics are given by $\mathbb{K}^{\sum}$, then:\\
(a) for $x\in\mathbb{R}^\Z$, $u$ is a carrier for $x$ if and only if
\begin{equation}\label{m-kdv-d}
\exp\left( M^u(S^x)_n\right)= \exp\left( M^u(S^x)_{n-1}\right)+ \exp\left(\frac{S^x_n+S^x_{n-1}}{2}\right),\qquad \forall n\in\mathbb{Z};
\end{equation}
(b) Assumption \ref{a1} holds.
\end{lem}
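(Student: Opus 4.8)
The plan is to follow the same two-step template as the preceding lemma for $\mathbb{K}^{\vee}$: first convert the carrier condition into a recursion for the function $M^u(S^x)$ via Lemma \ref{lem:pitman}(a), and then read off each clause of Assumption \ref{a1} from that recursion.

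For part (a), I would substitute the second coordinate $F_n^{(2)}(a,b)=b-\tfrac{a}{2}+\log(e^{-a/2}+e^{-b})$ of $K^{\sum}$ into the identity of Lemma \ref{lem:pitman}(a), taking $a=S^x_n-S^x_{n-1}$ and $b=M^u(S^x)_{n-1}-S^x_{n-1}$. After cancellation this reads
\[
M^u(S^x)_n=M^u(S^x)_{n-1}+\frac{S^x_n-S^x_{n-1}}{2}+\log\!\left(e^{-(S^x_n-S^x_{n-1})/2}+e^{-(M^u(S^x)_{n-1}-S^x_{n-1})}\right),
\]
and multiplying through by the appropriate exponentials collapses the right-hand side to $e^{M^u(S^x)_{n-1}}+e^{(S^x_n+S^x_{n-1})/2}$, which is precisely \eqref{m-kdv-d}. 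This is a routine computation with no hidden subtlety.

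For part (b), the key observation is that \eqref{m-kdv-d} exhibits $(e^{M^u(S^x)_n})_n$ as a running sum: iterating between indices $N\le n$ gives $e^{M^u(S^x)_n}=e^{M^u(S^x)_N}+\sum_{m=N+1}^n e^{(S^x_m+S^x_{m-1})/2}$. Clause (a) then follows because, if $\lim_{n\to-\infty}S^x_n=\infty$, the summands $e^{(S^x_m+S^x_{m-1})/2}$ do not tend to $0$, the series diverges, and no finite $M^u(S^x)_n$ (equivalently, no carrier) can exist. Clause (b) follows because for $S^x\in\mathcal{S}^{lin}_-$ the series $\sum_{m\le n}e^{(S^x_m+S^x_{m-1})/2}$ converges (geometric decay of the tail), so $M^{\sum}(S^x)_n=\log\sum_{m\le n}e^{(S^x_m+S^x_{m-1})/2}$ is finite, satisfies \eqref{m-kdv-d}, and thus $W_\pi^{\sum}(x)$ is a carrier with $\liminf_{n\to-\infty}M^{\sum}(S^x)_n=-\infty$; and if $u$ is any other carrier, letting $N\to-\infty$ in the running-sum identity yields $e^{M^u(S^x)_n}=e^{M^{\sum}(S^x)_n}+\varepsilon$ with $\varepsilon:=\lim_{N\to-\infty}e^{M^u(S^x)_N}>0$ strictly (since $u\neq W_\pi^{\sum}(x)$), whence $\liminf_{n\to-\infty}M^u(S^x)_n\ge\log\varepsilon>-\infty$. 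Clauses (c) and (d) reduce to showing that $\log\sum_{m\le n}e^{(S^x_m+S^x_{m-1})/2}$ has the same asymptotic slope as $S^x$ at $-\infty$ and at $+\infty$ respectively; the lower bound is immediate from keeping only the top term, $M^{\sum}(S^x)_n\ge\tfrac12(S^x_n+S^x_{n-1})$, and the matching upper bound comes from splitting the sum at indices beyond which $S^x$ lies within $\varepsilon|n|$ of its asymptotic line and bounding each resulting piece by a convergent geometric series, giving $\sum_{m\le n}e^{(S^x_m+S^x_{m-1})/2}\le C\,e^{(c\pm\varepsilon)n}$ once $n$ is far enough out.

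I expect the only genuine work to lie in clauses (c) and (d), and especially (d): there the log-sum-exp runs over all $m\le n$, so one must control simultaneously the contribution of the tail near $-\infty$ (governed by the slope of $S^x$ there) and of the part near $+\infty$ (governed by the slope $c$), while the increments $x_n=S^x_n-S^x_{n-1}$ are not individually bounded — only their Cesàro averages converge. The point that makes this go through is that the uncontrolled increments only affect a finite block of indices, which contributes a bounded constant to the sum and therefore vanishes after taking $\log$ and dividing by $n$; letting $\varepsilon\downarrow0$ then pins the slope to exactly $c$ and closes the argument.
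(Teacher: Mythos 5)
Your proposal is correct and follows essentially the same route as the paper: part (a) is the same substitution of $(K^{\sum})^{(2)}$ into Lemma \ref{lem:pitman}(a), and part (b) verifies each clause of Assumption \ref{a1} from the additive recursion for $\exp(M^u(S^x)_n)$ exactly as the paper does (your telescoping identification of the constant $\varepsilon=\lim_{N\to-\infty}e^{M^u(S^x)_N}$ is equivalent to the paper's observation that $\exp(M^u(S^x)_n)-\exp(M^{\sum}(S^x)_n)$ is constant below $n_0$). The only difference is that you spell out clauses (c) and (d), which the paper dismisses as straightforward; your lower bound from the top term and upper bound from splitting the sum at the threshold where $S^x_m$ is within $\varepsilon|m|$ of its asymptote are precisely the intended check.
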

\begin{proof}
By Lemma \ref{lem:pitman}(a) and the definition of $K^{\sum}$, $u$ is a carrier for $x$ if and only if
\begin{align*}
M^u(S^x)_n & =S^x_n+M^u(S^x)_{n-1}-S^x_{n-1}-\frac{S^x_n-S^x_{n-1}}{2}\\
&\qquad+\log\left(\exp\left(-\frac{S^x_n-S^x_{n-1}}{2}\right)+\exp\left(-M^u(S^x)_{n-1}+S^x_{n-1}\right)\right)\\
&=\frac{S^x_n-S^x_{n-1}}{2}+M^u(S^x)_{n-1}+\log\left(\exp\left(-\frac{S^x_n-S^x_{n-1}}{2}\right)+\exp\left(-M^u(S^x)_{n-1}+S^x_{n-1}\right)\right)\\
& = \log \left(\exp\left(M^u(S^x)_{n-1}\right)+\exp\left(\frac{S^x_n+S^x_{n-1}}{2}\right)\right),
\end{align*}
which establishes part (a) of the lemma. For part (b), we check the various requirements of Assumption \ref{a1} as follows:\\
(a) Suppose $x\in\mathbb{R}^{\Z}$ satisfies $\lim_{n \to -\infty}S^x_n=\infty$. From \eqref{m-kdv-d}, if $u$ is a carrier, then $\exp(M^u(S^x)_n) \ge \sum_{m \le n}  \exp(\frac{S^x_m+S^x_{m-1}}{2})$ for all $n$. Hence $M^u(S^x)_n = \infty$ for all $n$, but this contradicts $M^u(S^x)_n=S^x_n+u_n<\infty$. Thus there does not exist a carrier for $x$.\\
(b) Let $x\in\mathcal{X}^{lin}_-$. Since $M^{\sum}(S^x)=M^{W_\pi^{\sum}(x)}(S^x)$ satisfies $\exp( M^{\sum}(S^x)_n)= \exp( M^{\sum}(S^x)_{n-1})+ \exp(\frac{S^x_n+S^x_{n-1}}{2})$, $W_\pi^{\sum}(x)$ is a carrier for $x$. Also, that $\liminf_{n \to -\infty}M^{\sum}(S^x)=-\infty$ is obvious from the definition. Next, suppose that $u$ is a carrier for $x$, but $u \neq W_\pi^{\sum}(x)$. Since $M^u$ must satisfy \eqref{m-kdv-d}, $M^u(S^x)_n \ge M^{\sum}(S^x)_n$ for all $n$. In conjunction with the assumption that $u \neq W_\pi^{\sum}(x)$, this implies the existence of an $n_0$ such that $M^u(S^x)_{n_0} > M^{\sum}(S^x)_{n_0}$. Hence, by applying \eqref{m-kdv-d} again, we find that $\exp(M^u(S^x)_{n})-\exp(M^{\sum}(S^x)_{n})= \exp(M^u(S^x)_{n_0})-\exp(M^{\sum}(S^x)_{n_0}) >0$ for all $n \le n_0$, and so $\liminf_{n \to -\infty}\exp(M^u(S^x)_{n})>0$, which implies in turn that $\liminf_{n \to -\infty}M^u(S^x)_n > -\infty$.\\
(c),(d) These properties are straightforward to check from the definition of $M^{\sum}$.
\end{proof}

\begin{lem} If the locally-defined dynamics are given by $\mathbb{K}^{\vee^*}$, then:\\
(a) for $x\in\mathbb{R}^\Z$, $u$ is a carrier for $x$ if and only if
\begin{equation}\label{m-toda}
M^u(S^x)_{n}:=\left\{\begin{array}{ll}
 \max\{ M^u(S^x)_{n-2}, S^x_{n-1}\}, & n\mbox{ odd}, \\
  \frac{M^u(S^x)_{n-1}+M^u(S^x)_{n+1}}{2}, &  n\mbox{ even};
  \end{array}
\right.
\end{equation}
(b) Assumption \ref{a1} holds.
\end{lem}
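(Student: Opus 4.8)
The plan is to follow the template of the two preceding lemmas (for $\mathbb{K}^{\vee}$ and $\mathbb{K}^{\sum}$), with the extra bookkeeping forced by the alternating maps and the shift $m=1$. Write $N:=M^u(S^x)$ throughout, and recall from Lemma \ref{lem:pitman}(a) that $u$ is a carrier for $x$ precisely when $N_n=S^x_n+F_n^{(2)}(S^x_n-S^x_{n-1},\,N_{n-1}-S^x_{n-1})$ for all $n$. For part (a), I would substitute the explicit second components: for $n$ even, $F_n=K^{\vee^*}$ gives $F_n^{(2)}(a,b)=b-\tfrac a2-\tfrac12\min\{a,b\}$, while for $n$ odd, $F_n=(K^{\vee^*})^{-1}$ and Lemma \ref{lem:explicitmaps}(a) gives $F_n^{(2)}(a,b)=-a+\max\{b,0\}$. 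A one-line computation then yields, for odd $n$, the \emph{one-step} identity $N_n=\max\{N_{n-1},S^x_{n-1}\}$, and, for even $n$, the identity $2N_n-N_{n-1}=\max\{N_{n-1},S^x_n\}$; in particular every such $N$ is non-decreasing.

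To pass from these one-step identities to the two-step recursion in the statement, I would use that the odd identity at index $n+1$ reads exactly $N_{n+1}=\max\{N_n,S^x_n\}$, and then observe (a short case split on the sign of $N_{n-1}-S^x_n$, or equivalently using monotonicity of $N$) that $\max\{N_{n-1},S^x_n\}=\max\{N_n,S^x_n\}=N_{n+1}$, so the even identity becomes $N_n=\tfrac12(N_{n-1}+N_{n+1})$. Feeding the even identity at index $n-1$ into the odd identity at index $n$ then turns $N_n=\max\{N_{n-1},S^x_{n-1}\}$ into $N_n=\max\{N_{n-2},S^x_{n-1}\}$ (here monotonicity again forces the $\max$ to collapse correctly). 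Reversing these manipulations recovers the carrier equations from the recursion, which gives the ``only if'' and ``if'' directions respectively.

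For part (b) I would verify the four clauses of Assumption \ref{a1} using part (a) together with the fact that $w:=W_\pi^{\vee^*}(x)$ has $M^w(S^x)=M^{\vee^*}(S^x)$, for $M^{\vee^*}$ as in Table \ref{Mtable} (the odd case of the recursion for $M^{\vee^*}$ being $\sup_{m\le(n-1)/2}S_{2m}=\max\{\sup_{m\le(n-3)/2}S_{2m},\,S_{n-1}\}$, the even case being its definition, so $w$ is indeed a carrier). Clause (a): if $\lim_{n\to-\infty}S^x_n=\infty$ then $\sup_{m\le j}S^x_{2m}=\infty$ for every $j$, so iterating the odd identity forces $N_n=\infty$ on odd $n$, contradicting $N_n=S^x_n+u_n<\infty$. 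Clause (b): $\liminf_{n\to-\infty}M^{\vee^*}(S^x)_n=-\infty$ because $x\in\mathcal{X}^{lin}_-$ forces $S^x_n\to-\infty$, hence $\sup_{m\le j}S^x_{2m}\to-\infty$; and for \emph{any} carrier $u$ the recursion forces $N_n\ge\sup_{m\le(n-1)/2}S^x_{2m}=M^{\vee^*}(S^x)_n$ on odd indices and then, by averaging, on all indices, so if $u\ne w$ there is an $n_0$ (odd, after possibly passing to a neighbour) with $N_{n_0}>M^{\vee^*}(S^x)_{n_0}\ge S^x_{n_0-1}$, whence the odd identity forces $N_{n_0-2}=N_{n_0}$, and inductively $N_n=N_{n_0}$ for all $n\le n_0$, giving $\liminf_{n\to-\infty}N_n\ge N_{n_0}>-\infty$. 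Clauses (c),(d): on odd indices $M^{\vee^*}(S^x)_n=\sup_{m\le(n-1)/2}S^x_{2m}$, and $S^x_k/k\to c$ at the relevant end (with $S^x_k\to-\infty$ at $-\infty$ ensuring the supremum is finite and, in the $+\infty$ case, attained near the top of the index range) gives $M^{\vee^*}(S^x)_n/n\to c$; even indices are averages of two neighbouring odd-index values and inherit the limit.

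The main obstacle I anticipate is the reconciliation in part (a): the carrier equation supplied by Lemma \ref{lem:pitman}(a) only ever relates $N_n$ to $N_{n-1}$, whereas the recursion in the statement is ``forward-looking'', with the even entries expressed through the \emph{following} odd entry and the odd entries skipping two steps. The crux is recognising that, once monotonicity of $N$ is noted, the even one-step identity is literally $2N_n-N_{n-1}=N_{n+1}$; everything after that is routine $\max$/$\min$ bookkeeping, and part (b) is then essentially a transcription of the $\mathbb{K}^{\vee}$ argument with suprema of the form $\sup_{m\le(n-1)/2}S^x_{2m}$ in place of $\sup_{m\le n}\tfrac{S^x_m+S^x_{m-1}}{2}$.
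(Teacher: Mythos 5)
Your proposal is correct and follows essentially the same route as the paper: extract the one-step identities $N_n=\max\{N_{n-1},S^x_{n-1}\}$ (odd $n$) and $2N_n-N_{n-1}=\max\{N_{n-1},S^x_n\}$ (even $n$) from Lemma \ref{lem:pitman}(a) and the explicit forms of $K^{\vee^*}$ and $(K^{\vee^*})^{-1}$, combine them into the two-step recursion \eqref{m-toda}, and then verify the four clauses of Assumption \ref{a1} exactly as in the $\mathbb{K}^{\vee}$ case. The only difference is cosmetic — you derive the even-index averaging relation first via the odd identity at $n+1$, whereas the paper first collapses the odd-index relation to $N_n=\max\{N_{n-2},S^x_{n-1}\}$ and then substitutes back — and the supporting case splits you indicate are the right ones.
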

\begin{proof}
By Lemma \ref{lem:pitman}(a), the definition of $K^{\vee^*}$, and the description of $(K^{\vee^*})^{-1}$ from Lemma \ref{lem:explicitmaps}, $u$ is a carrier for $x$ if and only if
\begin{align}
M^u(S^x)_{n} & =S^x_{n}+M^u(S^x)_{n-1}-S^x_{n-1}-\frac{S^x_{n}-S^x_{n-1}}{2}-\frac{\min\left\{S^x_{n}-S^x_{n-1},M^u(S^x)_{n-1}-S^x_{n-1}\right\}}{2}\nonumber\\
& = \frac{S^x_{n}-S^x_{n-1}}{2}+M^u(S^x)_{n-1}-\min\left\{\frac{S^x_{n}-S^x_{n-1}}{2}, \frac{M^u(S^x)_{n-1}-S^x_{n-1}}{2}\right\}\nonumber\\
&=\max\left\{M^u(S^x)_{n-1},\frac{M^u(S^x)_{n-1}+S^x_{n}}{2}\right\}\nonumber\\
&=\frac{M^u(S^x)_{n-1}}{2}+\frac{\max\left\{M^u(S^x)_{n-1},S^x_{n}\right\}}{2}\label{even}
\end{align}
for $n$ even, and
\begin{align}
M^u(S^x)_{n} & =S^x_{n}-(S^x_{n}-S^x_{n-1})-\min\left\{-M^u(S^x)_{n-1}+S^x_{n-1},0 \right\}\nonumber\\
& =S^x_{n-1} -\min\left\{-M^u(S^x)_{n-1}+S^x_{n-1},0 \right\}\nonumber\\
& = \max\left\{ M^u(S^x)_{n-1}, S^x_{n-1}\right\}\label{odd}
\end{align}
for $n$ odd. Suppose that both \eqref{even} and \eqref{odd} hold. Substituting \eqref{even} into \eqref{odd}, we obtain
\[M^u(S^x)_{n}=\max \left\{ \max\left\{ M^u(S^x)_{n-2}, \frac{M^u(S^x)_{n-2} +S^x_{n-1}}{2}\right\}, S^x_{n-1}\right\}=\max\left\{M^u(S^x)_{n-2} , S^x_{n-1}\right\}\]
for $n$ odd. Moreover, substituting this relation into \eqref{even}, we find
\[M^u(S^x)_{n}=\frac{M^u(S^x)_{n+1}+M^u(S^x)_{n-1}}{2}\]
for $n$ even. Hence \eqref{even} and \eqref{odd} together imply \eqref{m-toda}. The converse is clear by the explicit expression for $M^u(S^x)_{n}$ given by \eqref{m-toda}, and thus we have established part (a) of the lemma. For part (b), we check the various requirements of Assumption \ref{a1} as follows:\\
(a) Suppose $x\in\mathbb{R}^{\Z}$ satisfies $\lim_{n \to -\infty}S^x_n=\infty$. From \eqref{m-toda}, if $u$ is a carrier, then $M^u(S^x)_n \ge \max_{m \le \frac{n-1}{2}} S^x_{2m}$ for all $n$ odd. Hence $M^u(S^x)_n = \infty$ for all $n$, but this contradicts $M^u(S^x)_n=S^x_n+u_n<\infty$. Thus there does not exist a carrier for $x$.\\
(b) Let $x\in\mathcal{X}^{lin}_-$. Since $M^{\vee^*}(S^x)=M^{W_\pi^{\vee^*}(x)}(S^x)$ satisfies \eqref{m-toda}, $W_\pi^{\vee^*}(x)$ is a carrier for $x$. Also, that $\liminf_{n \to -\infty}M^{\vee^*}(S^x)=-\infty$ is obvious from the definition. Next, suppose that $u$ is a carrier for $x$, but $u \neq W_\pi^{\vee^*}(x)$. Since $M^u$ must satisfy \eqref{m-toda}, $M^u(S^x)_{n} \ge M^{\vee^*}(S^x)_{n}$ for all $n$. In conjunction with the assumption that $u \neq W_\pi^{\vee^*}(x)$, this implies the existence of an odd $n_0$ such that $M^u(S^x)_{n_0} > M^{\vee^*}(S^x)_{n_0}$. Hence, by applying \eqref{m-kdv-d} again, we find that $M^u(S^x)_{n}=M^u(S^x)_{n_0}$ for all $n \le n_0$, and so $\liminf_{n \to -\infty}M^u(S^x)_n > -\infty$.\\
(c),(d) These properties are straightforward to check from the definition of $M^{\vee^*}$.
\end{proof}

\begin{lem} If the locally-defined dynamics are given by $\mathbb{K}^{\sum^*}$, then:\\
(a) for $x\in\mathbb{R}^\Z$, $u$ is a carrier for $x$ if and only if
\begin{equation}\label{m-toda-d}
M^u(S^x)_{n}:=\left\{\begin{array}{ll}
 \exp(M^u(S^x)_{n-2})+\exp(S^x_{n-1}), & n\mbox{ odd}, \\
  \frac{M^u(S^x)_{n-1}+M^u(S^x)_{n+1}}{2}, &  n\mbox{ even};
  \end{array}
\right.
\end{equation}
(b) Assumption \ref{a1} holds.
\end{lem}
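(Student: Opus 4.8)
The plan is to follow the template set by the three preceding lemmas, the one genuinely new ingredient being the algebra required to absorb the square root appearing in the formula for $(K^{\sum^*})^{-1}$ in Lemma \ref{lem:explicitmaps}. Throughout I abbreviate $M_n:=M^u(S^x)_n$. For part (a), I would start from Lemma \ref{lem:pitman}(a), according to which $u$ is a carrier for $x$ if and only if $M_n = S^x_n + F_n^{(2)}(S^x_n-S^x_{n-1},M_{n-1}-S^x_{n-1})$ for every $n$, where $F_n = K^{\sum^*}$ for $n$ even and $F_n = (K^{\sum^*})^{-1}$ for $n$ odd. Substituting the definition of $K^{\sum^*}$ and pulling a factor $e^{S^x_{n-1}}$ out of the logarithm, the even-index equations simplify to $e^{2M_n-M_{n-1}} = e^{M_{n-1}}+e^{S^x_n}$, equivalently $e^{2M_n} = e^{2M_{n-1}}+e^{M_{n-1}+S^x_n}$. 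For odd $n$, I would first rewrite the inverse more conveniently via $\frac{\sqrt{1+4e^{2b}}-1}{2e^{2b}} = \frac{2}{1+\sqrt{1+4e^{2b}}}$, so that $\big((K^{\sum^*})^{-1}\big)^{(2)}(a,b) = -a+\log\tfrac{1+\sqrt{1+4e^{2b}}}{2}$; substituting, isolating the radical and squaring, the odd-index equations simplify to $e^{2M_n} = e^{2M_{n-1}}+e^{M_n+S^x_{n-1}}$.

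It then remains to show that these two families of relations are together equivalent to \eqref{m-toda-d}. For the forward implication, fix an even index $m$ and regard the odd-index relation at $m+1$ as a quadratic in $e^{M_{m+1}}$, with positive root $e^{M_{m+1}} = \tfrac12\big(e^{S^x_m}+\sqrt{e^{2S^x_m}+4e^{2M_m}}\big)$. The even-index relation at $m$ in the form $e^{2M_m} = e^{2M_{m-1}}+e^{M_{m-1}+S^x_m}$ makes the discriminant a perfect square, $e^{2S^x_m}+4e^{2M_m} = \big(e^{S^x_m}+2e^{M_{m-1}}\big)^2$, so the radical can be evaluated, giving $e^{M_{m+1}} = e^{M_{m-1}}+e^{S^x_m}$, which is the odd-index part of \eqref{m-toda-d}; feeding this back into the even-index relation yields $2M_m-M_{m-1} = M_{m+1}$, the even-index part. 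The converse is routine: assuming \eqref{m-toda-d}, the even-index relation follows from $2M_m-M_{m-1}=M_{m+1}$ and $e^{M_{m+1}}=e^{M_{m-1}}+e^{S^x_m}$, while the odd-index relation follows by writing $e^{2M_{m-1}} = e^{M_{m-2}+M_m} = (e^{M_m}-e^{S^x_{m-1}})e^{M_m} = e^{2M_m}-e^{M_m+S^x_{m-1}}$. This proves part (a), and identifies $M^{\sum^*}(S^x)$ — which satisfies \eqref{m-toda-d} by its definition in Table \ref{Mtable} — as the value of $M^{W_\pi^{\sum^*}(x)}(S^x)$, confirming that $W_\pi^{\sum^*}$ is a carrier function.

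For part (b), I would check the four conditions of Assumption \ref{a1} just as in the $\mathbb{K}^{\sum}$ case. For the first: if $\lim_{n\to-\infty}S^x_n=\infty$, iterating the odd-index part of \eqref{m-toda-d} downwards forces $e^{M_n}\ge\sum_{\text{even }k\le n-1}e^{S^x_k}=\infty$, contradicting $M_n=S^x_n+u_n<\infty$, so no carrier exists. For the second: when $x\in\mathcal{X}^{lin}_-$ the sums defining $M^{\sum^*}(S^x)$ converge (since $S^x_{2m}\to-\infty$ at a linear rate as $m\to-\infty$), and $M^{\sum^*}(S^x)$ satisfies \eqref{m-toda-d}, so $W_\pi^{\sum^*}(x)$ is a carrier with $\liminf_{n\to-\infty}M^{\sum^*}(S^x)_n=-\infty$; for any other carrier $u$ one has $e^{M_n}=\lim_{k\to-\infty}e^{M_k}+\sum_{m\le(n-1)/2}e^{S^x_{2m}}\ge e^{M^{\sum^*}(S^x)_n}$ (the limit existing and being nonnegative because $e^{M_k}$ decreases along odd $k$ as $k\to-\infty$), and if $u\ne W_\pi^{\sum^*}(x)$ this inequality is strict at some odd index, hence — the odd-index increments of $e^{M}$ being fixed by $S^x$ — strict with a constant positive gap at every odd index, giving $\liminf_{n\to-\infty}M_n>-\infty$. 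The third and fourth conditions are straightforward from the definition of $M^{\sum^*}$: the bound $M^{\sum^*}(S^x)_n\ge S^x_{n-1}$ gives the lower estimate for $\lim_{n\to\pm\infty}M^{\sum^*}(S^x)_n/n$, while bounding the defining sum above by a convergent geometric series (using asymptotic linearity together with $x\in\mathcal{X}^{lin}_-$ to control the tail at $-\infty$) gives the matching upper estimate.

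I expect the combination step in part (a) to be the main obstacle: unlike the three earlier cases, the odd-index carrier equation here is quadratic, owing to the square root in $(K^{\sum^*})^{-1}$, and what makes it collapse to \eqref{m-toda-d} is the perfect-square identity $e^{2S^x_m}+4e^{2M_m}=(e^{S^x_m}+2e^{M_{m-1}})^2$, which holds precisely on the carrier locus by virtue of the even-index relation.
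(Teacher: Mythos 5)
Your proof is correct and takes essentially the same route as the paper's: both start from Lemma \ref{lem:pitman}(a), reduce the even- and odd-indexed carrier equations to the relations $e^{2M_n}=e^{2M_{n-1}}+e^{M_{n-1}+S^x_n}$ ($n$ even) and an odd-indexed relation involving $(K^{\sum^*})^{-1}$, combine them via the perfect-square identity $e^{2S^x_m}+4e^{2M_m}=(e^{S^x_m}+2e^{M_{m-1}})^2$ to obtain \eqref{m-toda-d}, and then verify the four parts of Assumption \ref{a1} exactly as in the $\mathbb{K}^{\sum}$ case (constancy of $e^{M^u_n}-e^{M^{\sum^*}(S^x)_n}$ along odd indices for the uniqueness part). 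The only difference is cosmetic: you rationalize $(K^{\sum^*})^{-1}$ and square away the radical to get the symmetric form $e^{2M_n}=e^{2M_{n-1}}+e^{M_n+S^x_{n-1}}$ before combining, whereas the paper keeps the radical and evaluates it directly from the even relation --- the same perfect square underlies both.
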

\begin{proof}
By Lemma \ref{lem:pitman}(a), the definition of $K^{\sum^*}$, and the description of $(K^{\sum^*})^{-1}$ from Lemma \ref{lem:explicitmaps}, $u$ is a carrier for $x$ if and only if
\begin{align}
M^u(S^x)_{n} & =S^x_{n}+M^u(S^x)_{n-1}-S^x_{n-1}-\frac{S^x_{n}-S^x_{n-1}}{2}\nonumber\\
&\qquad+\frac{\log\left(\exp\left(-S^x_{n}+S^x_{n-1}\right)+\exp\left(-M^u(S^x)_{n-1}+S^x_{n-1}\right)\right)}{2}\nonumber\\
& = \frac{S^x_{n}-S^x_{n-1}}{2}+M^u(S^x)_{n-1}\nonumber\\
&\qquad+\frac{\log\left(\exp\left(-S^x_{n}+S^x_{n-1}\right)+\exp\left(-M^u(S^x)_{n-1}+S^x_{n-1}\right)\right)}{2}\nonumber\\
& = \frac{1}{2}\log\left( \exp\left(2M^u(S^x)_{n-1}\right)+\exp\left(M^u(S^x)_{n-1}+S^x_{n}\right)\right)\nonumber\\
&=\frac{M^u(S^x)_{n-1}}{2}+\frac{\log\left( \exp\left(M^u(S^x)_{n-1}\right)+\exp\left(S^x_{n}\right)\right)}{2}\label{bbb3}
\end{align}
for $n$ even, and
\begin{align}
M^u(S^x)_{n}& =S^x_{n}-\left(S^x_{n}-S^x_{n-1}\right)-\log\left(\frac{\sqrt{1+4\exp\left(2M^u(S^x)_{n-1}-2S^x_{n-1}\right)}-1}{2\exp\left(2M^u(S^x)_{n-1}-2S^x_{n-1}\right)}\right)\nonumber\\
& =S^x_{n-1}-\log\left(\frac{\sqrt{1+4\exp\left(2M^u(S^x)_{n-1}-2S^x_{n-1}\right)}-1}{2\exp\left(2M^u(S^x)_{n-1}-2S^x_{n-1}\right)}\right)\label{bbb4}
\end{align}
for $n$ odd. Note that \eqref{bbb3} can be rewritten as
\begin{equation}\label{cal1}
\exp\left(2M^u(S^x)_{n}\right)=\exp\left(M^u(S^x)_{n-1}\right)\left(\exp\left(M^u(S^x)_{n-1}\right)+\exp\left(S_{n}\right)\right)
\end{equation}
for $n$ even. Suppose that \eqref{bbb4} and \eqref{cal1} both hold. From \eqref{cal1},
\begin{align*}
\sqrt{1+4\exp\left(2M^u(S^x)_{n}-2S^x_{n}\right)} & = \sqrt{1+ 4\exp\left(2M^u(S^x)_{n-1}-2S^x_{n}\right)+4\exp\left(M^u(S^x)_{n-1}-S^x_{n}\right)}\\
&=2\exp\left(M^u(S^x)_{n-1}-S^x_{n}\right)+1
\end{align*}
for $n$ even. Hence
\[\log\left(\frac{\sqrt{1+4\exp\left(2M^u(S^x)_{n}-2S^x_{n}\right)}-1}{2\exp\left(2M^u(S^x)_{n}-2S^x_{n}\right)}\right)=M^u(S^x)_{n-1}+S^x_{n}-2M^u(S^x)_n\]
for $n$ even. In conjunction with \eqref{bbb4}, this implies that
\begin{equation}\label{ccc1}
M^u(S^x)_{n}+M^u(S^x)_{n-2}=2M^u(S^x)_{n-1}
\end{equation}
for $n$ odd. Moreover, plugging this back into \eqref{cal1} yields
\begin{equation}\label{ccc2}
\exp\left(M^u(S^x)_{n}\right)=\exp\left(M^u(S^x)_{n-2}\right)+\exp\left(S^x_{n-1}\right)
\end{equation}
for $n$ odd, and clearly \eqref{ccc1} and \eqref{ccc2} together give \eqref{m-toda-d}. The converse is clear by the explicit expression for $M^u(S^x)_{n}$ given by \eqref{m-toda-d}, and thus we have established part (a) of the lemma. For part (b), we check the various requirements of Assumption \ref{a1} as follows:\\
(a) Suppose $x\in\mathbb{R}^{\Z}$ satisfies $\lim_{n \to -\infty}S^x_n=\infty$. From \eqref{m-toda-d}, if $u$ is a carrier, then $\exp(M^u(S^x)_n) \ge \sum_{m \le \frac{n-1}{2}}\exp(S^x_{2m})$ for all $n$ odd. Hence $M^u(S^x)_n = \infty$ for all $n$, but this contradicts $M^u(S^x)_n=S^x_n+u_n<\infty$. Thus there does not exist a carrier for $x$.\\
(b) Let $x\in\mathcal{X}^{lin}_-$. Since $M^{\sum^*}(S^x)=M^{W_\pi^{\sum^*}(x)}(S^x)$ satisfies \eqref{m-toda-d}, $W_\pi^{\sum^*}(x)$ is a carrier for $x$. Also, that $\liminf_{n \to -\infty}M^{\sum^*}(S^x)=-\infty$ is obvious from the definition. Next, suppose that $u$ is a carrier for $x$, but $u \neq W_\pi^{\sum^*}(x)$. Since $M^u$ must satisfy \eqref{m-toda-d}, $M^u(S^x)_{n} \ge M^{\sum^*}(S^x)_{n}$ for all $n$. In conjunction with the assumption that $u \neq W_\pi^{\sum^*}(x)$, this implies the existence of an odd $n_0$ such that $M^u(S^x)_{n_0} > M^{\sum^*}(S^x)_{n_0}$. Hence, by applying \eqref{m-toda-d} again, we find that $\exp(M^u(S^x)_{n})-\exp(M^{\sum^*}(S^x)_{n})= \exp(M^u(S^x)_{n_0})-\exp(M^{\sum^*}(S^x)_{n_0})>0$ for all $n \le n_0$, and so it must hold that $\liminf_{n \to -\infty}M^u(S^x)_n > -\infty$.\\
(c),(d) These properties are straightforward to check from the definition of $M^{\sum^*}$.
\end{proof}

To finish this section, we present a corollary that is the culmination of our preceding results, specifically Theorems \ref{t16}, \ref{t17} and \ref{thm:examples}

\begin{cor}\label{c25} The conclusions of Theorems \ref{t16} and Theorems \ref{t17} apply to each of the models $\mathbb{K}^{\vee}$, $\mathbb{K}^{\sum}$, $\mathbb{K}^{\vee^*}$ and $\mathbb{K}^{\sum^*}$, with dynamics on the path space $\mathcal{S}^0\cap\mathcal{S}^{lin}_-$, i.e.\ the map $S\mapsto \mathcal{T}_S(S)$, being given by the operators
\begin{align*}
T^{\vee}_0(S)&:=T^{\vee}(S)-T^{\vee}(S)_0,\\
T^{\sum}_0(S)&:=T^{\sum}(S)-T^{\sum}(S)_0,\\
\mathcal{T}^{\vee^*}_0(S)&:=\mathcal{T}^{\vee^*}(S)-\mathcal{T}^{\vee^*}(S)_0,\\
\mathcal{T}^{\sum^*}_0(S)&:=\mathcal{T}^{\sum^*}(S)-\mathcal{T}^{\sum^*}(S)_0,
\end{align*}
respectively, where $T^{\vee}$, $T^{\sum}$, $\mathcal{T}^{\vee^*}$ and $\mathcal{T}^{\sum^*}$ are defined in Table \ref{Mtable}. Moreover, each of the above operators is a bijection on $\mathcal{S}^0\cap\mathcal{S}^{lin}$, with inverse given by \eqref{tsinv}.
\end{cor}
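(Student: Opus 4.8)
The plan is simply to assemble the pieces that are already in place. First I would observe that, by Lemma \ref{lem:explicitmaps}(b,c), each of $\mathbb{K}^{\vee}$, $\mathbb{K}^{\sum}$, $\mathbb{K}^{\vee^*}$, $\mathbb{K}^{\sum^*}$ is a system of self-reverse, Pitman-type, locally-defined dynamics on $\mathbb{R}^{\Z}$ (recalling that the inverse of a P-map is again a P-map, so the odd-indexed maps in the Toda cases are P-maps too), and that by Theorem \ref{thm:examples} each satisfies Assumption \ref{a1}, with the canonical carrier function on $\mathcal{X}^{lin}_-$, and on $\mathcal{X}^{lin}$, being $W_\pi^{\vee}$, $W_\pi^{\sum}$, $W_\pi^{\vee^*}$, $W_\pi^{\sum^*}$, respectively, as recorded in Corollary \ref{c20}. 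Hence the hypotheses of Theorems \ref{t16} and \ref{t17} are met in all four cases, and the only substantive task remaining is to identify the associated path-encoding operator $\mathcal{T}_S=\pi\circ\mathcal{T}\circ\pi^{-1}$ explicitly.

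For this identification I would invoke Lemma \ref{lem:pitman}(b): taking $u=W(x)$ for the relevant canonical carrier $W$, formula \eqref{pitu} gives, for any $S=S^x\in\mathcal{S}^0\cap\mathcal{S}^{lin}_-$,
\[\mathcal{T}_S(S) = \theta^m\circ T^{W(x)}(S) - \left(\theta^m\circ T^{W(x)}(S)\right)_0,\]
where $T^{W(x)}(S)=2M^{W(x)}(S)-S$ and $M^{W(x)}(S)_n=S_n+W(x)_n$. The key point is that, since $W_\pi^{\vee}(x)_n=M^{\vee}(S^x)_n-S^x_n$ by \eqref{wpiv} (and analogously for the other three models via the displays following it), one has $M^{W_\pi^{\vee}(x)}(S^x)=M^{\vee}(S^x)$, whence $T^{W_\pi^{\vee}(x)}(S^x)=2M^{\vee}(S^x)-S^x=T^{\vee}(S^x)$, and likewise $T^{W_\pi^{\sum}(x)}(S^x)=T^{\sum}(S^x)$, $T^{W_\pi^{\vee^*}(x)}(S^x)=T^{\vee^*}(S^x)$, $T^{W_\pi^{\sum^*}(x)}(S^x)=T^{\sum^*}(S^x)$. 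Substituting into the displayed formula, and using $m=0$ for the two KdV-type models and $m=1$ together with $\theta\circ T^{\vee^*}=\mathcal{T}^{\vee^*}$, $\theta\circ T^{\sum^*}=\mathcal{T}^{\sum^*}$ for the two Toda-type models, yields exactly $\mathcal{T}_S(S)=T^{\vee}_0(S)$, $T^{\sum}_0(S)$, $\mathcal{T}^{\vee^*}_0(S)$, $\mathcal{T}^{\sum^*}_0(S)$, respectively; the subtraction of the value at $0$ is automatic because $\mathcal{T}_S$ takes values in $\mathcal{S}^0$.

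With the path-encoding dynamics so identified, the conclusions of Theorems \ref{t16} and \ref{t17} transfer verbatim: in particular Theorem \ref{t17}(a) gives that $\mathcal{T}_S$ is a bijection from $\mathcal{S}^0\cap\mathcal{S}^{lin}$ to itself with inverse $R^{\mathcal{S}}\circ\mathcal{T}_S\circ R^{\mathcal{S}}$, which is precisely the asserted bijectivity of $T^{\vee}_0$, $T^{\sum}_0$, $\mathcal{T}^{\vee^*}_0$, $\mathcal{T}^{\sum^*}_0$ together with the inverse formula \eqref{tsinv}. There is essentially no hard step: the proof is bookkeeping, and the only place that requires any care is tracking the shift $\theta^m$ and the normalisation at $0$ through \eqref{pitu} — which is exactly what accounts for the extra left-shift appearing in the Toda-type operators but not the KdV-type ones.
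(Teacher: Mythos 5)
Your proposal is correct and follows essentially the same route as the paper, which states Corollary \ref{c25} as an immediate consequence of Theorems \ref{t16}, \ref{t17} and \ref{thm:examples} without writing out a separate proof. Your explicit identification of $\mathcal{T}_S$ via Lemma \ref{lem:pitman}(b) with $u=W(x)$, using $M^{W_\pi^{\vee}(x)}(S^x)=M^{\vee}(S^x)$ (and its analogues) together with the shift $\theta^m$, is exactly the bookkeeping the paper leaves implicit.
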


\begin{rem}\label{udreme}
Foreshadowing the discussion of ultra-discretization that will appear in Subsection \ref{udsec}, for P-maps $(F^{(\varepsilon)})_{\varepsilon>0}$ and $F$, we say that $F$ is the \emph{ultra-discretization} of $(F^{(\varepsilon)})_{\varepsilon>0}$ if it is the case that
\begin{equation}\label{fudis}
\lim_{\epsilon \downarrow 0}  \epsilon F^{(\varepsilon)}\left(\frac{x}{\epsilon} , \frac{u}{\epsilon}\right)=F(x,u),\qquad \forall x,u\in\mathbb{R}.
\end{equation}
Or, if it is additionally the case that $F^{(\varepsilon)}=G$ for all $\varepsilon$, then we simply say $F$ is the ultra-discretization of $G$. In particular, similarly to the proofs of Theorems \ref{rrr} and \ref{t319} below, one can check that $K^{\vee}$ is the ultra-discretization of $K^{\sum}$, $K^{\vee^*}$ is the ultra-discretization of $K^{\sum^*}$, and $(K^{\vee^*})^{-1}$ is the ultra-discretization of $(K^{\sum^*})^{-1}$.

The P-map ultra-discretization property extends naturally to path encodings. Indeed, suppose that for each $n$, the P-map $F_n$ is the ultra-discretization of a sequence of P-maps $(F_n^{(\varepsilon)})_{\varepsilon>0}$, with $F_n$ being continuous and the convergence at \eqref{fudis} holding uniformly on compacts. Moreover, suppose that for each $\varepsilon>0$, the locally-defined dynamics $(m,(F_n^{(\varepsilon)})_{n\in\mathbb{Z}})$ admit a canonical carrier $U^{(\varepsilon)}:\mathcal{X}^U\rightarrow \mathbb{R}^\mathbb{Z}$, where $m$ and $\mathcal{X}^U\subseteq \mathbb{R}^\mathbb{Z}$ do not depend on $\varepsilon$, and also
\[U(x)_n:=\lim_{\varepsilon\downarrow0}\varepsilon U^{(\varepsilon)}\left(\varepsilon^{-1}x\right)_n,\qquad\forall x\in\mathcal{X}^U,\:n\in\mathbb{Z}.\]
It is then the case that $U:\mathcal{X}^U\rightarrow \mathbb{R}^\mathbb{Z}$ is a carrier for $(m,(F_n)_{n\in\mathbb{Z}})$, and the associated dynamics satisfy $\mathcal{T}(\mathcal{X}^U)\subseteq \mathcal{X}^U$ and
\[\mathcal{T}(x)_n:=\lim_{\varepsilon\downarrow0}\varepsilon \mathcal{T}^{(\varepsilon)}\left(\varepsilon^{-1}x\right)_n,\qquad\forall x\in\mathcal{X}^U,\:n\in\mathbb{Z}.\]
In particular, applying this in conjunction with the conclusion of the previous paragraph, with $\mathcal{X}^U=\mathcal{X}^{lin}_-$, we find that $T_0^{\vee}$ is the ultra-discretization of $T_0^{\sum}$, and $\mathcal{T}_0^{\vee^*}$ is the ultra-discretization of $\mathcal{T}_0^{\sum^*}$ (cf.\ Remarks \ref{r37} and \ref{r322} below).
\end{rem}

\section{Proof of main results}\label{proofsec}

In this section, we explain how the models of Section \ref{dissec} fit into the framework of Section \ref{general}, and in particular apply the conclusions of the latter section to establish our main results for the \eqref{UDKDV}, \eqref{DKDV}, \eqref{UDTODA} and \eqref{DTODA} systems, namely Theorems \ref{udkdvthm}, \ref{dkdvthm}, \ref{udtodathm} and \ref{dtodathm}. Although the arguments for each will be similar, for the sake of clarity we consider each of the models of interest separately. Moreover, we finish the section with a remark that relates the conservation of $a-2b$ that appears in the definition of a P-map to the conserved quantities of the original maps, as discussed in Subsection \ref{conssec}.

\subsection{Ultra-discrete KdV equation}\label{515151} To reexpress the \eqref{UDKDV} system of Subsection \ref{udsec1} in the notation of Section \ref{general}, for fixed $L\in\mathbb{R}$, let $\mathbb{F}^{(L)}_{udK}=(m,(F_n)_{n \in \Z})$, where $m=0$, and $F_n=F_{udK}^{(L)}$ for all $n$, and we recall the map $F_{udK}^{(L)}$ from Section \ref{lmsec}. To obtain locally-defined dynamics, as per Definition \ref{ldd}, we take the configuration space to be $\mathbb{R}^\mathbb{Z}$, with typical elements being written $(\eta_n)_{n\in\mathbb{Z}}$, and the carrier space to also be $\mathbb{R}^\mathbb{Z}$, with typical elements being denoted $(U_n)_{n\in\mathbb{Z}}$. We also introduce the following maps:
\vspace{-20pt}
\begin{multicols}{2}

\begin{eqnarray*}
\mathcal{A}^{udK}:\mathbb{R}^\mathbb{Z}&\rightarrow&\mathbb{R}^\mathbb{Z}\\
\left(\eta_n\right)_{n\in\mathbb{Z}}&\mapsto&\left(\mathcal{A}^{udK}_n(\eta_n)\right)_{n\in\mathbb{Z}},
\end{eqnarray*}

\begin{eqnarray*}
\mathcal{B}^{udK}:\mathbb{R}^\mathbb{Z}&\rightarrow&\mathbb{R}^\mathbb{Z}\\
\left(U_n\right)_{n\in\mathbb{Z}}&\mapsto&\left(\mathcal{B}^{udK}_n(U_n)\right)_{n\in\mathbb{Z}},
\end{eqnarray*}
\end{multicols}
\noindent
where $\mathcal{A}^{udK}_n(\eta_n):=L-2\eta_n$ and $\mathcal{B}^{udK}_n(U_n)=U_n-\frac{L}{2}$. (For simplicity, we suppress the dependence on $L$ from notation for $\mathcal{A}^{udK}$ and $\mathcal{B}^{udK}$.) As the next lemma demonstrates, these maps provide the link between $\mathbb{F}^{(L)}_{udK}$ and $\mathbb{K}^{\vee}$, as defined at the start of Subsection \ref{s45}. Since the proof involves only straightforward applications of the definitions of the relevant objects, we omit it. We recall that $\mathcal{C}_{udK}^{(L)}$ was defined at \eqref{cudkdef}, $S_{udK}^{(L)}$ at \eqref{1pe}, and $\pi$ in Definition \ref{pidef}.

\begin{lem}\label{ll1} For any $L\in\mathbb{R}$, the following statements hold.\\
(a) On $\mathbb{R}^2$,
\[\left(\mathcal{A}^{udK}_{n-m}\times\mathcal{B}^{udK}_n\right)\circ F_{udK}^{(L)}=K^{\vee}\circ\left(\mathcal{A}^{udK}_{n}\times\mathcal{B}^{udK}_{n-1}\right).\]
(b) Both $\mathcal{A}^{udK}:\mathbb{R}^\Z\rightarrow\mathbb{R}^\Z$ and $\mathcal{B}^{udK}:\mathbb{R}^\Z\rightarrow\mathbb{R}^\Z$ are bijections.\\
(c) The map $\pi\circ\mathcal{A}^{udK}:\mathbb{R}^\Z\rightarrow \mathcal{S}^{0}$ is a bijection, and satisfies $\pi\circ\mathcal{A}^{udK}=S_{udK}^{(L)}$.\\
(d) It is the case that $\mathcal{A}^{udK}(\mathcal{C}_{udK}^{(L)})=\mathcal{X}^{lin}$.
\end{lem}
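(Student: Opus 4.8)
Parts (a)--(c) of Lemma~\ref{ll1} follow immediately from unwinding the definitions of $F_{udK}^{(L)}$, $K^\vee$, $\mathcal{A}^{udK}$, $\mathcal{B}^{udK}$, $\pi$ and $S_{udK}^{(L)}$, so the only statement requiring a genuine argument is (d). The plan is to pass from configurations to path encodings using part (c) and then compare the two drift conditions directly. Since $\mathcal{A}^{udK}:\mathbb{R}^\Z\to\mathbb{R}^\Z$ is a bijection by part (b), and since, unwinding the definitions of $\mathcal{X}^{lin}_\pm$ and $\mathcal{X}^{lin}$, one has $\mathcal{X}^{lin}=\{x\in\mathbb{R}^\Z:S^x\in\mathcal{S}^{lin}\}$, the assertion $\mathcal{A}^{udK}(\mathcal{C}_{udK}^{(L)})=\mathcal{X}^{lin}$ is equivalent to showing that, for every $\eta\in\mathbb{R}^\Z$,
\[
\eta\in\mathcal{C}_{udK}^{(L)}\quad\Longleftrightarrow\quad S:=S_{udK}^{(L)}(\eta)=\pi\circ\mathcal{A}^{udK}(\eta)\in\mathcal{S}^{lin},
\]
where the identification $\pi\circ\mathcal{A}^{udK}=S_{udK}^{(L)}$ is exactly part (c).

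First I would record a closed form for the path encoding. Starting from $S_0=0$ and $S_n-S_{n-1}=L-2\eta_n$, a straightforward telescoping argument, using the summation convention adopted after \eqref{cudkdef} (namely $\sum_{m=1}^n:=-\sum_{m=n+1}^0$ when $n\le -1$), gives
\[
S_n=nL-2\sum_{m=1}^n\eta_m,\qquad\forall n\in\Z,
\]
and hence $S_n/n=L-2\,n^{-1}\sum_{m=1}^n\eta_m$ for all $n\ne 0$. From this identity it follows that $\lim_{n\to+\infty}S_n/n$ exists if and only if $\lim_{n\to+\infty}n^{-1}\sum_{m=1}^n\eta_m$ does, in which case the former equals $L$ minus twice the latter; the analogous statement holds at $-\infty$.

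Finally I would compare the two sets. By definition $S\in\mathcal{S}^{lin}$ means both limits $\lim_{n\to\pm\infty}S_n/n$ exist and are strictly positive, while $\eta\in\mathcal{C}_{udK}^{(L)}$ means both limits $\lim_{n\to\pm\infty}n^{-1}\sum_{m=1}^n\eta_m$ exist and are strictly less than $L/2$; since $L-2c>0\iff c<L/2$, the displayed formula for $S_n/n$ shows that these two conditions coincide, which yields the equivalence above and hence (d). I do not expect any substantial obstacle here; the only points requiring care are that the closed form for $S_n$ must be checked to hold uniformly in the sign of $n$ (which is precisely what the summation convention for negative indices is designed to ensure), and that both definitions build the \emph{existence} of the limits at $\pm\infty$ into the condition (without demanding that the limits at $-\infty$ and $+\infty$ agree), so that the equivalence is read at the level of the limits existing as well as of their values.
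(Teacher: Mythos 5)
Your proof is correct, and it is exactly the argument the paper has in mind: the authors omit the proof of Lemma \ref{ll1} as "straightforward applications of the definitions", and your verification of (a)--(c) together with the closed form $S_n=nL-2\sum_{m=1}^n\eta_m$ (valid for all $n$ by the stated summation convention) and the resulting identity $S_n/n=L-2n^{-1}\sum_{m=1}^n\eta_m$ is precisely that definition-unwinding. The equivalence of the two drift conditions, including the existence of the limits at $\pm\infty$, follows as you say.
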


\begin{proof}[Proof of Theorem \ref{udkdvthm}] By applying Theorem \ref{thm:uni-ex-f-gp-3}(b) (with $\tilde{\mathcal{X}}=\tilde{\mathcal{U}}=\mathbb{R}^\mathbb{Z}$ and $\mathcal{X}^U=\mathcal{X}^{lin}$), Corollary \ref{c25} and Lemma \ref{ll1}, we obtain that if $\eta\in\mathcal{C}_{udK}^{(L)}$, then there is a unique solution $(\eta_n^t,U_n^t)_{n,t\in\mathbb{Z}}$ to \eqref{UDKDV} that satisfies the initial condition $\eta^0=\eta$, which has configuration component given by
\begin{eqnarray*}
\eta^t&=&\left(\mathcal{A}^{udK}\right)^{-1}\circ\pi^{-1}\circ\left({T}_0^{\vee}\right)^t\circ\pi\circ\mathcal{A}^{udK}(\eta)\\
&=&\left(S_{udK}^{(L)}\right)^{-1}\circ\left({T}_0^{\vee}\right)^t\circ S_{udK}^{(L)} (\eta).
\end{eqnarray*}
Now, it is elementary to see that $({T}^{\vee})^t(S)=({T}_0^{\vee})^t(S)+({T}^{\vee})^t(S)_0$ for any $S\in\mathcal{S}^{lin}$, and so recalling the extension of the definition of $(S_{udK}^{(L)})^{-1}$ from $\mathcal{S}^0$ to $\mathcal{S}$, as given in Subsection \ref{udsec1}, we see that
\[\eta^t=\left(S_{udK}^{(L)}\right)^{-1}\circ\left({T}^{\vee}\right)^t\circ S_{udK}^{(L)} (\eta).\]
Similar manipulations yield the corresponding carrier is given by
\[U^t=\left(\mathcal{B}^{udK}\right)^{-1}\circ W^{\vee}_{\pi}\circ\mathcal{A}^{udK}(\eta^t)=\left(W_{udK}^{(L)}\right)^{-1}\circ W^{\vee}\circ\left({T}^{\vee}\right)^t\circ S_{udK}^{(L)} (\eta),\]
where $W^{\vee}_{\pi}$ was defined at \eqref{wpiv}, and $W^{\vee}$ at \eqref{wveedef}. Moreover, by construction, it holds that $\mathcal{A}^{udK}(\eta^t)\in\mathcal{X}^{lin}$ for all $t$, and so $\eta^t\in \mathcal{C}_{udK}^{(L)}$ for all $t$, as desired.
\end{proof}

\subsection{Discrete KdV equation} We rewrite the \eqref{DKDV} system of Subsection \ref{dsec1} by setting, for $\delta\in(0,\infty)$, $\mathbb{F}^{(\delta)}_{dK}=(m,(F_n)_{n \in \Z})$, where $m=0$, and $F_n=F_{dK}^{(\delta)}$ for all $n$, and we recall the map $F_{dK}^{(\delta)}$ from Section \ref{lmsec}. We take the configuration space to be $(0,\infty)^\mathbb{Z}$, with typical elements being written $(\omega_n)_{n\in\mathbb{Z}}$, and the carrier space to also be $(0,\infty)^\mathbb{Z}$, with typical elements being denoted $(U_n)_{n\in\mathbb{Z}}$. We also introduce the following maps:
\vspace{-20pt}
\begin{multicols}{2}

\begin{eqnarray*}
\mathcal{A}^{dK}:(0,\infty)^\mathbb{Z}&\rightarrow&\mathbb{R}^\mathbb{Z}\\
\left(\eta_n\right)_{n\in\mathbb{Z}}&\mapsto&\left(\mathcal{A}^{dK}_n(\eta_n)\right)_{n\in\mathbb{Z}},
\end{eqnarray*}

\begin{eqnarray*}
\mathcal{B}^{dK}:(0,\infty)^\mathbb{Z}&\rightarrow&\mathbb{R}^\mathbb{Z}\\
\left(U_n\right)_{n\in\mathbb{Z}}&\mapsto&\left(\mathcal{B}^{dK}_n(U_n)\right)_{n\in\mathbb{Z}},
\end{eqnarray*}
\end{multicols}
\noindent
where $\mathcal{A}^{dK}_n(\omega_n):=-\log\delta-2\log \omega_n$ and $\mathcal{B}^{dK}_n(U_n)=\log U_n+\frac{\log\delta}{2}$. (Again, for simplicity, we have suppressed the dependence on the parameter, in this case $\delta$, from the notation.) Corresponding to Lemma \ref{ll1}, we have the following result, which gives the link between $\mathbb{F}^{(\delta)}_{dK}$ and $\mathbb{K}^{\sum}$, as defined at the start of Subsection \ref{s45}. Again, since the proof involves only straightforward applications of the definitions of the relevant objects, we choose not to include it here.

\begin{lem}\label{ll2} For any $\delta\in(0,\infty)$, the following statements hold.\\
(a) On $(0,\infty)^2$,
\[\left(\mathcal{A}^{dK}_{n-m}\times\mathcal{B}^{dK}_n\right)\circ F_{dK}^{(\delta)}=K^{\sum}\circ\left(\mathcal{A}^{dK}_{n}\times\mathcal{B}^{dK}_{n-1}\right).\]
(b) Both $\mathcal{A}^{dK}:(0,\infty)^\mathbb{Z}\rightarrow\mathbb{R}^\Z$ and $\mathcal{B}^{dK}:(0,\infty)^\mathbb{Z}\rightarrow\mathbb{R}^\Z$ are bijections.\\
(c) The map $\pi\circ\mathcal{A}^{dK}:(0,\infty)^\mathbb{Z}\rightarrow \mathcal{S}^{0}$ is a bijection, and satisfies $\pi\circ\mathcal{A}^{dK}=S_{dK}^{(\delta)}$.\\
(d) It is the case that $\mathcal{A}^{dK}(\mathcal{C}_{dK}^{(\delta)})=\mathcal{X}^{lin}$.
\end{lem}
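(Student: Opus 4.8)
The plan is to verify all four parts by direct substitution from the definitions; nothing beyond elementary identities for $\log$ and $\exp$ is needed, together with the facts about $K^{\sum}$ recorded in Lemma~\ref{lem:explicitmaps}. Note that for this model $m=0$, so $\mathcal{A}^{dK}_{n-m}=\mathcal{A}^{dK}_n$ throughout.

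For part (a), I would fix $(\omega,U)\in(0,\infty)^2$ and abbreviate $a:=\mathcal{A}^{dK}_n(\omega)=-\log\delta-2\log\omega$ and $b:=\mathcal{B}^{dK}_{n-1}(U)=\log U+\tfrac12\log\delta$. The crux is the observation that $e^{-a/2}=\delta^{1/2}\omega$ and $e^{-b}=\delta^{-1/2}U^{-1}$, so that $e^{-a/2}+e^{-b}=\delta^{-1/2}\bigl(\delta\omega+U^{-1}\bigr)$. Feeding this into the defining formula for $K^{\sum}$ from Subsection~\ref{s45} should give that the first coordinate of $K^{\sum}(a,b)$ is $-\log\delta+2\log(\delta\omega+U^{-1})=\mathcal{A}^{dK}_n\bigl((\delta\omega+U^{-1})^{-1}\bigr)$, which matches $\mathcal{A}^{dK}_n\bigl((F_{dK}^{(\delta)})^{(1)}(\omega,U)\bigr)$ by the formula for $F_{dK}^{(\delta)}$ in Table~\ref{lmtable}, and that the second coordinate is $\log\bigl(U\omega(\delta\omega+U^{-1})\bigr)+\tfrac12\log\delta=\mathcal{B}^{dK}_n\bigl((F_{dK}^{(\delta)})^{(2)}(\omega,U)\bigr)$. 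This is exactly the claimed intertwining on $(0,\infty)^2$.

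For parts (b) and (c), I would observe that each $\mathcal{A}^{dK}_n$ and $\mathcal{B}^{dK}_n$ is an affine function of $\log$, hence a bijection from $(0,\infty)$ onto $\mathbb{R}$, so $\mathcal{A}^{dK}$ and $\mathcal{B}^{dK}$ are bijections on the relevant product spaces; since $\pi$ is a bijection by Definition~\ref{pidef}, so is $\pi\circ\mathcal{A}^{dK}$. The identity $\pi\circ\mathcal{A}^{dK}=S_{dK}^{(\delta)}$ then follows by comparing the increment relation of $\pi\circ\mathcal{A}^{dK}$, namely $S_n-S_{n-1}=\mathcal{A}^{dK}_n(\omega_n)=-\log\delta-2\log\omega_n$ with $S_0=0$, to the definition~\eqref{2pe} of $S_{dK}^{(\delta)}$.

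For part (d), I would note that for $\omega\in(0,\infty)^\mathbb{Z}$ and $x:=\mathcal{A}^{dK}(\omega)$ one has $S^x_n=-n\log\delta-2\sum_{m=1}^n\log\omega_m$ for all $n\in\mathbb{Z}$ (using the summation convention of Subsection~\ref{udsec1}), so that $S^x_n/n=-\log\delta-2n^{-1}\sum_{m=1}^n\log\omega_m$; hence $\lim_{n\to\pm\infty}S^x_n/n>0$ holds exactly when $\lim_{n\to\pm\infty}n^{-1}\sum_{m=1}^n\log\omega_m<\tfrac{-\log\delta}{2}$, i.e.\ $x\in\mathcal{X}^{lin}$ iff $\omega\in\mathcal{C}_{dK}^{(\delta)}$, and since $\mathcal{A}^{dK}$ is a bijection this yields $\mathcal{A}^{dK}(\mathcal{C}_{dK}^{(\delta)})=\mathcal{X}^{lin}$. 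The only steps requiring any care are the bookkeeping of the $\delta^{1/2}$ factors in the computation of part (a) and the direction of the inequalities in part (d); everything else is mechanical, which is presumably why the authors omit it.
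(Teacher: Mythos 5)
Your proof is correct and is precisely the direct verification the paper has in mind when it omits the argument as "straightforward applications of the definitions": the $\delta^{1/2}$ bookkeeping in part (a) checks out (indeed $e^{-a/2}+e^{-b}=\delta^{-1/2}(\delta\omega+U^{-1})$ yields both coordinates of $K^{\sum}(a,b)$ matching $(\mathcal{A}^{dK}_n\times\mathcal{B}^{dK}_n)\circ F_{dK}^{(\delta)}(\omega,U)$), and parts (b)--(d) follow exactly as you say. No gaps.
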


\begin{proof}[Proof of Theorem \ref{dkdvthm}] Given Lemma \ref{ll2}, the proof of Theorem \ref{dkdvthm} is identical to that of Theorem \ref{udkdvthm}, namely an application of Theorem \ref{thm:uni-ex-f-gp-3}(b) and Corollary \ref{c25}, so we omit it.
\end{proof}

\subsection{Ultra-discrete Toda lattice} Since \eqref{UDTODA} has three variables, to put it in the framework of Section \ref{general} we need to rewrite the equation. To this end, we first define a map $F_{udT^*}:\R^2 \to \R^2$ by setting
\[F_{udT^*}(a,b)=\left(\min\{a,b\}, b-\frac{a}{2}-\frac{\min\{a,b\}}{2}\right).\]
It is easy to check that $F_{udT^*}$ is a bijection with inverse function given by
\[F_{udT^*}^{-1}(a,b)=\left(a+\max\{0,-2b\},a+\max\{0,b\}\right).\]
The connection with the map $F_{udT}$ of Section \ref{lmsec} is provided by the following lemma, the proof of which is a straightforward calculation.

\begin{lem}\label{ls1} For any $a,b,c \in \R$,
\[F_{udT}(a,b,c)=\left(F_{udT^*}^{(1)}(b,c), F_{udT^*}^{-1}\left(a,F_{udT^*}^{(2)}(b,c)\right)\right).\]
In particular, $F_{udT}(a,b,c)=(d,e,f)$ if and only if $F_{udT^*}^{(1)}(b,c)=d$ and $F_{udT^*}^{-1}(a,g)=(e,f)$ where $g:=F_{udT^*}^{(2)}(b,c)$.
\end{lem}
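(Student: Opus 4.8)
The plan is to verify the displayed identity by direct computation, splitting into the two cases $b\le c$ and $b>c$ according to which argument attains the minimum in the definitions of $F_{udT}$ and $F_{udT^*}$. I would begin by recording the explicit formulas: from Table~\ref{lmtable}, $F_{udT}(a,b,c)=(\min\{b,c\},\,a+b-\min\{b,c\},\,a+c-\min\{b,c\})$, while by definition $F_{udT^*}(a,b)=(\min\{a,b\},\,b-\tfrac{a}{2}-\tfrac{\min\{a,b\}}{2})$ and $F_{udT^*}^{-1}(a,b)=(a+\max\{0,-2b\},\,a+\max\{0,b\})$.

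The first component is immediate, since $F_{udT^*}^{(1)}(b,c)=\min\{b,c\}$ coincides with the first component of $F_{udT}(a,b,c)$. For the other two, I would set $g:=F_{udT^*}^{(2)}(b,c)=c-\tfrac{b}{2}-\tfrac{\min\{b,c\}}{2}$ and evaluate $F_{udT^*}^{-1}(a,g)$ in each case. When $b\le c$ one finds $\min\{b,c\}=b$ and $g=c-b\ge0$, whence $F_{udT^*}^{-1}(a,g)=(a,\,a+c-b)$; when $b>c$ one finds $\min\{b,c\}=c$ and $g=\tfrac{c-b}{2}<0$, whence $F_{udT^*}^{-1}(a,g)=(a+b-c,\,a)$. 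In both cases, substituting the relevant value of $\min\{b,c\}$ shows the result equals $(a+b-\min\{b,c\},\,a+c-\min\{b,c\})$, which are exactly the second and third components of $F_{udT}(a,b,c)$. This proves the displayed identity, and the ``in particular'' claim then follows at once: since $F_{udT^*}$, and hence $F_{udT^*}^{-1}$, is a bijection, the system $F_{udT}(a,b,c)=(d,e,f)$ decouples componentwise into $F_{udT^*}^{(1)}(b,c)=d$ together with $F_{udT^*}^{-1}(a,g)=(e,f)$ for $g=F_{udT^*}^{(2)}(b,c)$.

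There is no genuine obstacle here; the computation is routine. The only point that warrants attention is the bookkeeping of the factors of $\tfrac{1}{2}$ appearing in $F_{udT^*}$: these are precisely cancelled by the $\max\{0,-2b\}$ and $\max\{0,b\}$ terms in $F_{udT^*}^{-1}$, so that the net effect reproduces the integer-coefficient expressions $\pm(b-c)$ occurring in $F_{udT}$, and the case split on the sign of $g$ is exactly what makes this transparent.
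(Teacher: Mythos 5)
Your computation is correct and matches what the paper intends: the paper omits the proof as ``a straightforward calculation,'' and your case split on $b\le c$ versus $b>c$, together with the verification that the half-integer coefficients in $F_{udT^*}$ and the $\max\{0,-2b\}$, $\max\{0,b\}$ terms in $F_{udT^*}^{-1}$ recombine to give $a+b-\min\{b,c\}$ and $a+c-\min\{b,c\}$, is exactly that calculation. The only cosmetic point is that the ``in particular'' clause is just the componentwise restatement of the displayed identity, so the appeal to bijectivity of $F_{udT^*}$ is not actually needed there.
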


\begin{rem}\label{rs1} We can understand Lemma \ref{ls1} graphically as a decomposition of the lattice ultra-discrete Toda lattice structure, as discussed in Section \ref{lmsec}. In particular, we replace the single map $F_{udT}$ with three inputs and three outputs by two maps $F_{udT^*}$ and $F_{udT^*}^{-1}$, each with two inputs and two outputs:
\[\xymatrix@C-15pt@R-15pt{\boxed{F_{udT^*}\vphantom{F_{udT^*}^{-1}}} & \min\{b,c\} &\boxed{F_{udT^*}^{-1}}& a+b-\min\{b,c\}&\\
            c \ar[rr] && c-\frac{b}{2}-\frac{\min\{b,c\}}{2}\ar[rr]&& a+c-\min\{b,c\}.\\
             & b \ar[uu]&&a\ar[uu]&}\]
\end{rem}

Lemma \ref{ls1} motivates the introduction of the locally-defined dynamics $\mathbb{F}_{udT}=(m,(F_n)_{n \in \Z})$, where $m=1$, and $F_n=F_{udT^*}$ for $n$ even, and $F_{n}=(F_{udT^*})^{-1}$ for $n$ odd. For this system, we take the configuration space to be $\mathbb{R}^\mathbb{Z}$, with typical elements being written $(\eta_n)_{n\in\mathbb{Z}}$, and the carrier space to also be $\mathbb{R}^\mathbb{Z}$, with typical elements being denoted $(\tilde{U}_n)_{n\in\mathbb{Z}}$. In particular, we have the following ready consequence of the preceding result.

\begin{cor}\label{c57} If $(Q_n^t,E_n^t,U_n^t)_{n,t\in\mathbb{Z}}$ is a solution to \eqref{UDTODA} and we set
\begin{equation}\label{ad1}
\eta_{2n-1}^t=Q_n^t,\qquad \eta_{2n}^t=E_n^t,\qquad \tilde{U}_{2n-1}^t=U_n^t,\qquad \tilde{U}_{2n}^t=F_{udT^*}^{(2)}\left(E_n^t,U_n^t\right),
\end{equation}
then $(\eta_n^t,\tilde{U}_n^t)_{n,t\in\mathbb{Z}}$ is a solution to
\begin{equation}\label{ad2}
\left(\eta_{n-1}^{t+1},\tilde{U}_n^t\right)=F_n\left(\eta_n^t,\tilde{U}_{n-1}^t\right),\qquad \forall n,t\in\mathbb{Z},
\end{equation}
where the maps $(F_n)_{n\in\mathbb{Z}}$ are given by $\mathbb{F}_{udT}$. Moreover, if $(\eta_n^t,\tilde{U}_n^t)_{n,t\in\mathbb{Z}}$ is a solution to \eqref{ad2} and the first three equations of \eqref{ad1} hold, then $(Q_n^t,E_n^t,U_n^t)_{n,t\in\mathbb{Z}}$ is a solution to \eqref{UDTODA}.
\end{cor}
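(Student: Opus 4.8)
The plan is to verify the two directions of the correspondence by direct substitution, using Lemma \ref{ls1} as the bridge between the three-variable map $F_{udT}$ and the pair of two-variable maps $F_{udT^*}$, $F_{udT^*}^{-1}$. First I would unpack what equation \eqref{ad2} says for the two parities of $n$. For $n$ even, $F_n = F_{udT^*}$, so \eqref{ad2} reads $(\eta_{n-1}^{t+1}, \tilde{U}_n^t) = F_{udT^*}(\eta_n^t, \tilde{U}_{n-1}^t)$; writing $n=2k$ and using the identifications $\eta_{2k}^t = E_k^t$, $\eta_{2k-1}^t = Q_k^t$, $\tilde{U}_{2k-1}^t = U_k^t$, $\tilde{U}_{2k}^t = F_{udT^*}^{(2)}(E_k^t,U_k^t)$, this becomes $(Q_k^{t+1}, \tilde{U}_{2k}^t) = F_{udT^*}(E_k^t, U_k^t)$, i.e.\ $Q_k^{t+1} = F_{udT^*}^{(1)}(E_k^t,U_k^t) = \min\{E_k^t,U_k^t\}$ together with the defining relation for $\tilde{U}_{2k}^t$. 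For $n$ odd, $F_n = F_{udT^*}^{-1}$, so \eqref{ad2} reads $(\eta_{n-1}^{t+1},\tilde{U}_n^t) = F_{udT^*}^{-1}(\eta_n^t,\tilde{U}_{n-1}^t)$; writing $n = 2k+1$ and substituting gives $(E_k^{t+1}, U_{k+1}^t) = F_{udT^*}^{-1}(Q_{k+1}^t, \tilde{U}_{2k}^t)$, equivalently $F_{udT^*}(E_k^{t+1}, U_{k+1}^t) = (Q_{k+1}^t, \tilde{U}_{2k}^t)$.

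Next I would show these relations are exactly \eqref{UDTODA}. Given a solution of \eqref{UDTODA}, Lemma \ref{ls1} applied with $(a,b,c) = (U_n^t, E_n^t, Q_{n+1}^t)$ and $(d,e,f) = (Q_n^{t+1}, E_n^{t+1}, U_{n+1}^t)$ tells us that the single equation $F_{udT}(U_n^t,E_n^t,Q_{n+1}^t) = (Q_n^{t+1},E_n^{t+1},U_{n+1}^t)$ holds if and only if $F_{udT^*}^{(1)}(E_n^t,Q_{n+1}^t) = Q_n^{t+1}$ — wait, here one must be slightly careful about the argument order, since in \eqref{UDTODA} the update is $(Q_n^{t+1},E_n^{t+1},U_{n+1}^t) = F_{udT}(Q_{n+1}^t, E_n^t, U_n^t)$ in the lattice-map convention of Subsection \ref{lmsec}; matching the input ordering $(a,b,c)$ of $F_{udT}$ in Table \ref{lmtable}, one reads off $a = Q_{n+1}^t$, $b = E_n^t$, $c = U_n^t$, so Lemma \ref{ls1} gives the equivalence with $F_{udT^*}^{(1)}(E_n^t, U_n^t) = Q_n^{t+1}$ and $F_{udT^*}^{-1}(Q_{n+1}^t, g) = (E_n^{t+1}, U_{n+1}^t)$ where $g = F_{udT^*}^{(2)}(E_n^t, U_n^t)$. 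Setting $g = \tilde U_{2n}^t$ per the fourth identification in \eqref{ad1}, the first relation is precisely the even-$n$ case of \eqref{ad2} found above, and the second (rewritten as $F_{udT^*}(E_n^{t+1},U_{n+1}^t) = (Q_{n+1}^t,\tilde U_{2n}^t)$) is precisely the odd-$n$ case. Conversely, given a solution of \eqref{ad2} together with the first three equations of \eqref{ad1}, the fourth equation of \eqref{ad1} is forced (it is the second component of the even-$n$ instance of \eqref{ad2}), and running the same chain of equivalences backwards through Lemma \ref{ls1} recovers \eqref{UDTODA}.

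The argument is essentially bookkeeping: the only real content is Lemma \ref{ls1}, which has already been granted. The step I expect to require the most care — and which I would write out explicitly rather than leave to the reader — is keeping the index and argument-order conventions consistent: the staggering of space (so that $(E_n^t, Q_{n+1}^t)$ becomes two consecutive path-encoding steps, hence the shift $m=1$), the alternation of $F_n$ between $F_{udT^*}$ and $F_{udT^*}^{-1}$ with $n$, and the fact that the carrier index in \eqref{UDTODA} increments as $U_n^t \to U_{n+1}^t$ across the map while $\tilde U$ runs over all integers. A clean way to organize this is to fix $n$, treat the two cases ``$n$ even'' and ``$n$ odd'' of \eqref{ad2} in turn, substitute \eqref{ad1} throughout, and observe in each case that the resulting identity is one of the pair produced by Lemma \ref{ls1}; doing both the forward and reverse substitutions then yields the ``if and only if'' at the level of the full systems. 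No estimates or limiting arguments are needed, so no genuine obstacle arises beyond this indexing care.
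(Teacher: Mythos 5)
Your proposal is correct and follows exactly the route the paper intends: the paper states Corollary \ref{c57} as a ``ready consequence'' of Lemma \ref{ls1}, and your case split on the parity of $n$ in \eqref{ad2}, combined with the substitution of \eqref{ad1} and the matching of argument order $(a,b,c)=(Q_{n+1}^t,E_n^t,U_n^t)$ in Lemma \ref{ls1}, is precisely the bookkeeping that verification amounts to. The self-correction on the ordering of the inputs to $F_{udT}$ lands on the right convention, and the observation that the fourth identity in \eqref{ad1} is forced by the even-$n$ instance of \eqref{ad2} correctly handles the converse direction.
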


Since the map $\pi_{udT}$ from $(Q,E)\in(\mathbb{R}^2)^\Z$ to $\eta\in\mathbb{R}^\Z$ given by setting $\eta_{2n-1}=Q_n$ and $\eta_{2n}=E_n$ is clearly a bijection, to tackle the initial value problem of Theorem \ref{udtodathm}, it thus suffices to study the corresponding problem for the system $\mathbb{F}_{udT}$. To align with Section \ref{general}, we also introduce the following maps:
\vspace{-20pt}
\begin{multicols}{2}

\begin{eqnarray*}
\mathcal{A}^{udT}:\mathbb{R}^\mathbb{Z}&\rightarrow&\mathbb{R}^\mathbb{Z}\\
\left(\eta_n\right)_{n\in\mathbb{Z}}&\mapsto&\left(\mathcal{A}^{udT}_n(\eta_n)\right)_{n\in\mathbb{Z}},
\end{eqnarray*}

\begin{eqnarray*}
\mathcal{B}^{udT}:\mathbb{R}^\mathbb{Z}&\rightarrow&\mathbb{R}^\mathbb{Z}\\
\left(\tilde{U}_n\right)_{n\in\mathbb{Z}}&\mapsto&\left(\mathcal{B}^{udT}_n(\tilde{U}_n)\right)_{n\in\mathbb{Z}},
\end{eqnarray*}
\end{multicols}
\noindent
where $\mathcal{A}^{udT}_n(\eta_n):=(-1)^n\eta_n$ and $\mathcal{B}^{udT}_n(\tilde{U}_n)=\tilde{U}_n$. As the next lemma demonstrates, these maps provide the link between $\mathbb{F}_{udT}$ and $\mathbb{K}^{\vee^*}$, as defined at the start of Subsection \ref{s45}. The easy proof is omitted.

\begin{lem}\label{ll3} The following statements hold.\\
(a) On $\mathbb{R}^2$,
\[\left(\mathcal{A}^{udT}_{2n-1}\times\mathcal{B}^{udT}_{2n}\right)\circ F_{udT^*}=K^{\vee^*}\circ\left(\mathcal{A}^{udT}_{2n}\times\mathcal{B}^{udT}_{2n-1}\right),\]
and also
\[\left(\mathcal{A}^{udT}_{2n}\times\mathcal{B}^{udT}_{2n+1}\right)\circ F_{udT^*}^{-1}=\left(K^{\vee^*}\right)^{-1}\circ\left(\mathcal{A}^{udT}_{2n+1}\times\mathcal{B}^{udT}_{2n}\right).\]
(b) Both $\mathcal{A}^{udT}:\mathbb{R}^\Z\rightarrow\mathbb{R}^\Z$ and $\mathcal{B}^{udT}:\mathbb{R}^\Z\rightarrow\mathbb{R}^\Z$ are bijections.\\
(c) The map $\pi\circ\mathcal{A}^{udT}:\mathbb{R}^\Z\rightarrow \mathcal{S}^{0}$ is a bijection, and moreover $\pi\circ\mathcal{A}^{udT}\circ\pi_{udT}=S_{udT}$.\\
(d) It is the case that $\mathcal{A}^{udT}\circ\pi_{udT}(\mathcal{C}_{udT})=\mathcal{X}^{lin}$.
\end{lem}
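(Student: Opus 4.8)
The plan is to prove (a)--(d) as a chain of elementary verifications from the definitions, in that order. For part~(a), I would just substitute. Fixing $n\in\mathbb{Z}$ and $(a,b)\in\mathbb{R}^2$, applying $F_{udT^*}$ and then acting by $\mathcal{A}^{udT}_{2n-1}\times\mathcal{B}^{udT}_{2n}$ (which multiplies the first coordinate by $(-1)^{2n-1}=-1$ and leaves the second fixed) produces $(-\min\{a,b\},\,b-\tfrac{a}{2}-\tfrac{1}{2}\min\{a,b\})$; on the other side $\mathcal{A}^{udT}_{2n}\times\mathcal{B}^{udT}_{2n-1}$ acts as the identity on $(a,b)$ since $(-1)^{2n}=1$, and then $K^{\vee^*}$ returns exactly the same pair by its definition, giving the first identity. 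For the second identity I would use the explicit formula $F_{udT^*}^{-1}(a,b)=(a+\max\{0,-2b\},\,a+\max\{0,b\})$ recorded above the lemma together with the formula for $(K^{\vee^*})^{-1}$ from Lemma~\ref{lem:explicitmaps}(a): the left-hand side evaluates to $(a+\max\{0,-2b\},\,a+\max\{0,b\})$, while the right-hand side is $(K^{\vee^*})^{-1}(-a,b)=(a-\min\{0,2b\},\,a-\min\{-b,0\})$, and these agree because $-\min\{0,2b\}=\max\{0,-2b\}$ and $-\min\{-b,0\}=\max\{0,b\}$.

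Parts~(b) and~(c) I would dispatch together. Each $\mathcal{A}^{udT}_n\colon\eta\mapsto(-1)^n\eta$ is an involution of $\mathbb{R}$ and each $\mathcal{B}^{udT}_n$ is the identity, so $\mathcal{A}^{udT}$ and $\mathcal{B}^{udT}$ act coordinatewise by bijections and are bijections of $\mathbb{R}^\mathbb{Z}$; composing $\mathcal{A}^{udT}$ with the bijection $\pi\colon\mathbb{R}^\mathbb{Z}\to\mathcal{S}^0$ of Definition~\ref{pidef} then gives that $\pi\circ\mathcal{A}^{udT}$ is a bijection onto $\mathcal{S}^0$. For the identity, given $(Q,E)$ I set $\eta=\pi_{udT}(Q,E)$, so $\eta_{2n-1}=Q_n$, $\eta_{2n}=E_n$, and $x=\mathcal{A}^{udT}(\eta)$, so $x_{2n-1}=-Q_n$, $x_{2n}=E_n$; then $\pi(x)\in\mathcal{S}^0$ has increments $\pi(x)_m-\pi(x)_{m-1}=x_m$, which is precisely the recursion~\eqref{3pe} defining $S_{udT}(Q,E)$, whence $\pi\circ\mathcal{A}^{udT}\circ\pi_{udT}=S_{udT}$.

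Finally, part~(d) follows by combining~(c) with the property of $S_{udT}$ noted in the paragraph after~\eqref{3pe}, namely that $S_{udT}\colon\mathcal{C}_{udT}\to\mathcal{S}^0\cap\mathcal{S}^{lin}$ is a bijection. Since $\pi$ maps $\mathbb{R}^\mathbb{Z}$ onto $\mathcal{S}^0$ we have $\mathcal{X}^{lin}=\{x:S^x\in\mathcal{S}^{lin}\}=\pi^{-1}(\mathcal{S}^0\cap\mathcal{S}^{lin})$, and by~(c) $\mathcal{A}^{udT}\circ\pi_{udT}=\pi^{-1}\circ S_{udT}$, so $\mathcal{A}^{udT}\circ\pi_{udT}(\mathcal{C}_{udT})=\pi^{-1}\bigl(S_{udT}(\mathcal{C}_{udT})\bigr)=\pi^{-1}(\mathcal{S}^0\cap\mathcal{S}^{lin})=\mathcal{X}^{lin}$. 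If I instead wanted to prove the cited $\mathcal{C}_{udT}\leftrightarrow\mathcal{S}^0\cap\mathcal{S}^{lin}$ correspondence directly, the key identities are $S_{udT}(Q,E)_{2k}=\sum_{m=1}^k(E_m-Q_m)$ and $S_{udT}(Q,E)_{2k+1}=\sum_{m=1}^k(E_m-Q_m)-Q_{k+1}$ (with mirror statements for negative indices), which show that the two-limit conditions in the definition of $\mathcal{C}_{udT}$ are exactly what forces the even- and odd-index subsequences of $n^{-1}S_{udT}(Q,E)_n$ to share a common strictly positive limit at $+\infty$ and at $-\infty$. The argument has no real difficulty; the only place that needs a little care is this index bookkeeping in~(d) — keeping track of how the $Q$- and $E$-variables interleave across even and odd positions, and seeing that the extra term $Q_{n+1}$ (resp.\ $E_n$) in the second limit of each pair is precisely what makes the asymptotic slope of the path encoding independent of parity.
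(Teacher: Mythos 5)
Your verification is correct in every part: the sign bookkeeping in (a) (including the identities $-\min\{0,2b\}=\max\{0,-2b\}$ and $-\min\{-b,0\}=\max\{0,b\}$ matching $F_{udT^*}^{-1}$ against $(K^{\vee^*})^{-1}$), the coordinatewise bijectivity in (b)--(c), the match of the increments of $\pi\circ\mathcal{A}^{udT}\circ\pi_{udT}(Q,E)$ with \eqref{3pe}, and the reduction of (d) to the stated bijection $S_{udT}:\mathcal{C}_{udT}\to\mathcal{S}^0\cap\mathcal{S}^{lin}$ via $\mathcal{X}^{lin}=\pi^{-1}(\mathcal{S}^0\cap\mathcal{S}^{lin})$ are all sound. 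The paper omits this proof as an easy consequence of the definitions, and your argument is exactly the intended direct verification.
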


\begin{proof}[Proof of Theorem \ref{udtodathm}] Let $(Q,E)\in\mathcal{C}_{udT}$. Following the proof of Theorem \ref{udkdvthm}, by applying Theorem \ref{thm:uni-ex-f-gp-3}(b),  Corollary \ref{c25} and Lemma \ref{ll3} (in place of Lemma \ref{ll1}), we find that there exists a unique solution $(\eta_n^t,\tilde{U}_n^t)_{n,t\in\mathbb{Z}}$ to \eqref{ad2} with $\eta^0=\eta:=\pi_{udT}(Q,E)$, which is given by
\[\eta^t=\left(\mathcal{A}^{udT}\right)^{-1}\circ \pi^{-1}\circ \left(T^{\vee^*}_0\right)^t\circ \pi\circ\mathcal{A}^{udT}(\eta),\]
and
\[\tilde{U}^t=\left(\mathcal{B}^{udT}\right)^{-1}\circ W^{\vee^*}_\pi\circ \mathcal{A}^{udT}(\eta^t).\]
Hence, from Corollary \ref{c57}, it follows that the unique solution $(Q^t_n,E^t_n,U^t_n)_{n,t\in\mathbb{Z}}$ to \eqref{UDTODA} with initial condition $(Q^0,E^0)=(Q,E)$ is given by
\begin{align*}
\left(Q^t,E^t\right)&=\pi_{udT}^{-1}\circ\left(\mathcal{A}^{udT}\right)^{-1}\circ \pi^{-1}\circ \left(\mathcal{T}^{\vee^*}_0\right)^t\circ \pi\circ\mathcal{A}^{udT}\circ\pi_{udT}(Q,E)\\
&=\left(S_{udT}\right)^{-1}\circ\left(\mathcal{T}^{\vee^*}\right)^t\circ S_{udT} (Q,E),
\end{align*}
and
\[{U}^t_n= W^{\vee^*}\circ \left(\mathcal{T}^{\vee^*}\right)^t\circ S_{udT} (Q,E)_{2n-1},\]
where, similarly to the proof of Theorem \ref{udkdvthm}, we note that $(\mathcal{T}^{\vee^*})^t(S)=(\mathcal{T}_0^{\vee^*})^t(S)+(\mathcal{T}^{\vee^*})^t(S)_0$ for $S\in\mathcal{S}^{lin}$, and consider the extension of $(S_{udT})^{-1}$ to $\mathcal{S}$. Moreover, by construction $\eta^t\in(\mathcal{A}^{udT})^{-1}(\mathcal{X}^{lin})$ for all $t$, and so $(Q^t,E^t)=\pi_{udT}^{-1}(\eta^t)\in \mathcal{C}_{udT}$ for all $t$.
\end{proof}

\subsection{Discrete Toda lattice} As for the ultra-discrete Toda lattice, to place the discrete Toda lattice into the framework of Section \ref{general}, we first need to decompose the map $F_{dT}$ introduced in Section \ref{lmsec}. Specifically, we define a map $F_{dT^*}:(0,\infty)^2 \to (0,\infty)^2$ by setting
\[F_{dT^*}(a,b)=\left(a+b, \frac{b}{\sqrt{a^2+ab}}\right).\]
This is a bijection with inverse function given by
\[F_{dT^*}^{-1}(a,b)=\left(a\left(\frac{\sqrt{b^2+4}-b}{2}\right)^2 , a\left(\frac{\sqrt{b^4+4b^2}-b^2}{2}\right)\right).\]
The connection with the map $F_{dT}$ is provided by the following lemma, the proof of which is a straightforward calculation.

\begin{lem}\label{ls2} For any $a,b,c \in \R$,
\[F_{dT}(a,b,c)=\left(F_{dT^*}^{(1)}(b,c), F_{dT^*}^{-1}\left(a,F_{dT^*}^{(2)}(b,c)\right)\right).\]
In particular, $F_{dT}(a,b,c)=(d,e,f)$ if and only if $F_{dT^*}^{(1)}(b,c)=d$ and $F_{dT^*}^{-1}(a,g)=(e,f)$ where $g:=F_{dT^*}^{(2)}(b,c)$.
\end{lem}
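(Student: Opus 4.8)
The plan is to prove Lemma~\ref{ls2} by direct verification, checking that the two sides of the asserted identity coincide (after the obvious identification $(0,\infty)\times(0,\infty)^2\cong(0,\infty)^3$); throughout I would work, as the discrete Toda setting requires, with $a,b,c\in(0,\infty)$, so that every radicand below is positive and each square root is the positive branch forced by the domain and codomain $(0,\infty)^2$ of $F_{dT^*}$.

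First I would record the explicit forms of the maps involved. From Table~\ref{lmtable} (equivalently, from \eqref{DTODA}) one has
\[
F_{dT}(a,b,c)=\left(b+c,\ \frac{ab}{b+c},\ \frac{ac}{b+c}\right),
\]
while $F_{dT^*}(a,b)=\bigl(a+b,\ b/\sqrt{a^2+ab}\bigr)$, with inverse $F_{dT^*}^{-1}$ as displayed just above the lemma. Since $F_{dT^*}^{(1)}(b,c)=b+c$ already agrees with the first coordinate of $F_{dT}(a,b,c)$, the content of the lemma reduces to the single assertion
\[
F_{dT^*}^{-1}(a,g)=\left(\frac{ab}{b+c},\ \frac{ac}{b+c}\right),\qquad g:=F_{dT^*}^{(2)}(b,c)=\frac{c}{\sqrt{b^2+bc}}.
\]

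The key step is the simplification of the discriminant appearing in $F_{dT^*}^{-1}$. Using $g^2=c^2/\bigl(b(b+c)\bigr)$, I would compute
\[
g^2+4=\frac{c^2+4b(b+c)}{b(b+c)}=\frac{(c+2b)^2}{b(b+c)},
\]
which is a perfect square, so that $\sqrt{g^2+4}=(c+2b)/\sqrt{b(b+c)}$ and hence
\[
\left(\frac{\sqrt{g^2+4}-g}{2}\right)^2=\left(\frac{b}{\sqrt{b(b+c)}}\right)^2=\frac{b}{b+c},
\]
giving the first coordinate of $F_{dT^*}^{-1}(a,g)$ as $ab/(b+c)$. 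For the second coordinate I would use $\sqrt{g^4+4g^2}=g\sqrt{g^2+4}$ to get $\sqrt{g^4+4g^2}-g^2=2c/(b+c)$, whence that coordinate equals $ac/(b+c)$. Reassembling the three components proves the first displayed identity of the lemma; the ``in particular'' clause is then immediate, since equality of triples is equivalent to the three componentwise equalities, with $g$ as above.

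I do not expect a genuine obstacle here, as the whole argument is elementary algebra. The only place calling for a moment's attention is recognizing that $g^2+4$ (and then $g^4+4g^2$) factors as a perfect square, which is what makes the nested radicals in $F_{dT^*}^{-1}$ collapse; this is of course a reflection of the deliberate algebraic form chosen for $F_{dT^*}$ and its inverse (mirroring the role of $K^{\sum^*}$ in Subsection~\ref{s45}). One should also take care, as noted at the outset, that positivity of $a,b,c$ justifies all the sign choices made when extracting square roots.
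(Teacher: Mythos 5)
Your proposal is correct and is exactly the "straightforward calculation" that the paper omits: a direct componentwise verification that $F_{dT^*}^{-1}(a,g)$ with $g=c/\sqrt{b^2+bc}$ recovers $\bigl(ab/(b+c),\,ac/(b+c)\bigr)$, with the key simplification $g^2+4=(c+2b)^2/\bigl(b(b+c)\bigr)$ carried out correctly. Your remark that the computation really lives on $(0,\infty)^3$ (despite the lemma's "$a,b,c\in\R$" phrasing) is also apt, since that is the domain on which $F_{dT^*}$ is defined.
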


\begin{rem}\label{rs2} Lemma \ref{ls2} is the discrete analogue of the ultra-discrete Lemma \ref{ls1}, and we have a similar graphical interpretation in this case. In particular, Lemma \ref{ls2} shows that we can decompose the single map $F_{dT}$ into two maps $F_{dT^*}$ and $F_{dT^*}^{-1}$:
\[\xymatrix@C-15pt@R-15pt{\boxed{F_{dT^*}\vphantom{F_{udT^*}^{-1}}} & b+c &\boxed{F_{dT^*}^{-1}}& \frac{ab}{b+c}&\\
            c \ar[rr] &&\frac{c}{\sqrt{b^2+bc}} \ar[rr]&& \frac{ac}{b+c}.\\
             & b \ar[uu]&&a\ar[uu]&}\]
\end{rem}

In light of Lemma \ref{ls2}, we introduce locally-defined dynamics $\mathbb{F}_{dT}=(m,(F_n)_{n \in \Z})$, where $m=1$, and $F_n=F_{dT^*}$ for $n$ even, and $F_{n}=(F_{dT^*})^{-1}$ for $n$ odd. For this system, we take the configuration space to be $(0,\infty)^\mathbb{Z}$, with typical elements being written $(\eta_n)_{n\in\mathbb{Z}}$, and the carrier space to also be $(0,\infty)^\mathbb{Z}$, with typical elements being denoted $(\tilde{U}_n)_{n\in\mathbb{Z}}$. In particular, corresponding to Corollary \ref{c57} in the discrete case, we have the following.

\begin{cor}\label{c58} If $(I_n^t,J_n^t,U_n^t)_{n,t\in\mathbb{Z}}$ is a solution to \eqref{DTODA} and we set
\begin{equation}\label{ad3}
\eta_{2n-1}^t=I_n^t,\qquad \eta_{2n}^t=J_n^t,\qquad \tilde{U}_{2n-1}^t=U_n^t,\qquad \tilde{U}_{2n}^t=F_{dT^*}^{(2)}\left(J_n^t,U_n^t\right),
\end{equation}
then $(\eta_n^t,\tilde{U}_n^t)_{n,t\in\mathbb{Z}}$ is a solution to
\begin{equation}\label{ad4}
\left(\eta_{n-1}^{t+1},\tilde{U}_n^t\right)=F_n\left(\eta_n^t,\tilde{U}_{n-1}^t\right),\qquad \forall n,t\in\mathbb{Z},
\end{equation}
where the maps $(F_n)_{n\in\mathbb{Z}}$ are given by $\mathbb{F}_{dT}$. Moreover, if $(\eta_n^t,\tilde{U}_n^t)_{n,t\in\mathbb{Z}}$ is a solution to \eqref{ad4} and the first three equations of \eqref{ad3} hold, then $(I_n^t,J_n^t,U_n^t)_{n,t\in\mathbb{Z}}$ is a solution to \eqref{DTODA}.
\end{cor}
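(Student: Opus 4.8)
\emph{Proof proposal.} The plan is to deduce Corollary \ref{c58} from Lemma \ref{ls2} by a purely formal index-relabelling argument, exactly parallel to the one establishing Corollary \ref{c57} in the ultra-discrete case (with $F_{dT^*}$ in place of $F_{udT^*}$). First I would record the compact lattice form of \eqref{DTODA}, namely $(I_{n}^{t+1},J_n^{t+1},U_{n+1}^t)=F_{dT}(I_{n+1}^t,J_n^t,U_{n}^t)$ for all $n,t\in\Z$, and apply Lemma \ref{ls2} with $a=I_{n+1}^t$, $b=J_n^t$, $c=U_n^t$ to see that this single identity is equivalent to the pair of identities
\[F_{dT^*}^{(1)}(J_n^t,U_n^t)=I_n^{t+1},\qquad F_{dT^*}^{-1}\bigl(I_{n+1}^t,\,g_n^t\bigr)=(J_n^{t+1},U_{n+1}^t),\qquad g_n^t:=F_{dT^*}^{(2)}(J_n^t,U_n^t),\]
holding for all $n,t\in\Z$.

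Next I would expand the target equation \eqref{ad4}, recalling that $\mathbb{F}_{dT}$ has shift parameter $m=1$, with $F_{2k}=F_{dT^*}$ and $F_{2k+1}=(F_{dT^*})^{-1}$. Substituting \eqref{ad3} and taking $n=2k$ gives $(\eta_{2k-1}^{t+1},\tilde U_{2k}^t)=F_{dT^*}(J_k^t,U_k^t)$: the second coordinate is just the defining relation $\tilde U_{2k}^t=g_k^t$ and so is automatic, while the first coordinate reads $I_k^{t+1}=F_{dT^*}^{(1)}(J_k^t,U_k^t)$. Taking $n=2k+1$ gives $(\eta_{2k}^{t+1},\tilde U_{2k+1}^t)=(F_{dT^*})^{-1}(I_{k+1}^t,g_k^t)$, i.e.\ $(J_k^{t+1},U_{k+1}^t)=F_{dT^*}^{-1}(I_{k+1}^t,g_k^t)$. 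As $n$ ranges over $\Z$ and we split into the two residue classes modulo $2$, these two families of equations are precisely the two identities displayed above, over all $k$ and $t$. This immediately gives the forward direction: if $(I,J,U)$ solves \eqref{DTODA} and $(\eta,\tilde U)$ is defined by \eqref{ad3}, then \eqref{ad4} holds.

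For the converse I would reverse this computation: assuming \eqref{ad4} together with only the first three identities of \eqref{ad3}, the even equations $n=2k$ force $\tilde U_{2k}^t=F_{dT^*}^{(2)}(J_k^t,U_k^t)$ (so the fourth identity of \eqref{ad3} is recovered rather than assumed) and $I_k^{t+1}=F_{dT^*}^{(1)}(J_k^t,U_k^t)$, and then the odd equations $n=2k+1$ yield $F_{dT^*}^{-1}(I_{k+1}^t,g_k^t)=(J_k^{t+1},U_{k+1}^t)$; feeding these two facts into the `if' part of Lemma \ref{ls2} gives $F_{dT}(I_{k+1}^t,J_k^t,U_k^t)=(I_k^{t+1},J_k^{t+1},U_{k+1}^t)$, which is \eqref{DTODA}.

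There is no analytic difficulty here: every step is an application of the definitions together with Lemma \ref{ls2}. The only thing requiring genuine care is the bookkeeping under the shift $m=1$ and the interleaving of $F_{dT^*}$ (at even sites) and $(F_{dT^*})^{-1}$ (at odd sites); in particular one must track that the even-site carrier variable $\tilde U_{2k}^t$ coincides with the auxiliary quantity $g$ appearing in Lemma \ref{ls2}, whereas the odd-site carrier variable $\tilde U_{2k-1}^t=U_k^t$ coincides with the carrier of the original \eqref{DTODA} system. This is the step I would be most careful with, and I would present it (as for Corollary \ref{c57}) by simply writing out \eqref{ad4} explicitly for $n$ even and $n$ odd.
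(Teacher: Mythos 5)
Your argument is correct and is exactly the intended justification: the paper states Corollary \ref{c58} as a ready consequence of Lemma \ref{ls2} without writing out the details, and your even/odd case split of \eqref{ad4} (with the even-site carrier matching the auxiliary quantity $g$ of Lemma \ref{ls2} and the odd-site carrier matching $U$) is precisely the omitted bookkeeping. Both directions, including the recovery of the fourth identity of \eqref{ad3} in the converse, check out.
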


The map $\pi_{dT}$ from $(I,J)\in((0,\infty)^2)^\Z$ to $\eta\in(0,\infty)^\Z$ given by setting $\eta_{2n-1}=I_n$ and $\eta_{2n}=J_n$ is clearly a bijection, and so to tackle the initial value problem of Theorem \ref{dtodathm}, it thus suffices to study the corresponding problem for the system $\mathbb{F}_{dT}$. To align with Section \ref{general}, we introduce the following maps:
\vspace{-20pt}
\begin{multicols}{2}

\begin{eqnarray*}
\mathcal{A}^{dT}:(0,\infty)^\mathbb{Z}&\rightarrow&\mathbb{R}^\mathbb{Z}\\
\left(\eta_n\right)_{n\in\mathbb{Z}}&\mapsto&\left(\mathcal{A}^{dT}_n(\eta_n)\right)_{n\in\mathbb{Z}},
\end{eqnarray*}

\begin{eqnarray*}
\mathcal{B}^{dT}:(0,\infty)^\mathbb{Z}&\rightarrow&\mathbb{R}^\mathbb{Z}\\
\left(\tilde{U}_n\right)_{n\in\mathbb{Z}}&\mapsto&\left(\mathcal{B}^{dT}_n(\tilde{U}_n)\right)_{n\in\mathbb{Z}},
\end{eqnarray*}
\end{multicols}
\noindent
where $\mathcal{A}^{dT}_n(\eta_n):=(-1)^{n+1}\log\eta_n$ and $\mathcal{B}^{dT}_n(\tilde{U}_n)=-\log\tilde{U}_n$. As the next lemma demonstrates, these maps provide the link between $\mathbb{F}_{dT}$ and $\mathbb{K}^{\sum^*}$, as defined at the start of Subsection \ref{s45}. The easy proof is omitted.

\begin{lem}\label{ll4} The following statements hold.\\
(a) On $(0,\infty)^2$,
\[\left(\mathcal{A}^{dT}_{2n-1}\times\mathcal{B}^{dT}_{2n}\right)\circ F_{dT^*}=K^{\sum^*}\circ\left(\mathcal{A}^{dT}_{2n}\times\mathcal{B}^{dT}_{2n-1}\right),\]
and also
\[\left(\mathcal{A}^{dT}_{2n}\times\mathcal{B}^{dT}_{2n+1}\right)\circ F_{dT^*}^{-1}=\left(K^{\sum^*}\right)^{-1}\circ\left(\mathcal{A}^{dT}_{2n+1}\times\mathcal{B}^{dT}_{2n}\right).\]
(b) Both $\mathcal{A}^{dT}:(0,\infty)^\Z\rightarrow\mathbb{R}^\Z$ and $\mathcal{B}^{dT}:(0,\infty)^\Z\rightarrow\mathbb{R}^\Z$ are bijections.\\
(c) The map $\pi\circ\mathcal{A}^{dT}:(0,\infty)^\Z\rightarrow \mathcal{S}^{0}$ is a bijection, and moreover $\pi\circ\mathcal{A}^{dT}\circ\pi_{dT}=S_{dT}$.\\
(d) It is the case that $\mathcal{A}^{dT}\circ\pi_{dT}(\mathcal{C}_{dT})=\mathcal{X}^{lin}$.
\end{lem}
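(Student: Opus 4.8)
The plan is to verify the four assertions of Lemma~\ref{ll4} in turn; each is a direct consequence of the explicit formulas for $F_{dT^*}$, $K^{\sum^*}$ and the coordinate maps, so the only real content is the intertwining identity of part (a), and everything else follows from it together with bijectivity bookkeeping. Note first that, taking into account the shift parameter $m=1$ and the alternating structure of $\mathbb{F}_{dT}$ and $\mathbb{K}^{\sum^*}$, the two displayed identities of part (a) together say precisely that $(\mathcal{A}^{dT}_{n-1}\times\mathcal{B}^{dT}_n)\circ F_n=K_n\circ(\mathcal{A}^{dT}_n\times\mathcal{B}^{dT}_{n-1})$ for every $n$, where $(F_n)$ are the maps of $\mathbb{F}_{dT}$ and $(K_n)$ those of $\mathbb{K}^{\sum^*}$; this is exactly the change-of-coordinates hypothesis required by Subsection~\ref{s43}, and is what makes the proof of Theorem~\ref{dtodathm} (via Theorem~\ref{thm:uni-ex-f-gp-3} and Corollary~\ref{c25}) go through.

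For part (a), I would apply both sides of the first identity to an arbitrary $(a,b)\in(0,\infty)^2$. On the left, $F_{dT^*}(a,b)=(a+b,\,b/\sqrt{a^2+ab})$, and applying $\mathcal{A}^{dT}_{2n-1}=\log$ to the first coordinate and $\mathcal{B}^{dT}_{2n}=-\log$ to the second yields $\big(\log(a+b),\,-\log b+\tfrac12\log a+\tfrac12\log(a+b)\big)$. On the right, $(\mathcal{A}^{dT}_{2n}\times\mathcal{B}^{dT}_{2n-1})(a,b)=(-\log a,-\log b)$, and substituting $\alpha=-\log a$, $\beta=-\log b$ into $K^{\sum^*}(\alpha,\beta)$ and using $e^{-\alpha}=a$, $e^{-\beta}=b$ reproduces the very same pair. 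The second identity then follows by inverting the first and rearranging: since $\mathcal{B}^{dT}_k=-\log$ for all $k$ and $\mathcal{A}^{dT}_k$ depends only on the parity of $k$, one has $\mathcal{A}^{dT}_{2n+1}=\mathcal{A}^{dT}_{2n-1}$ and $\mathcal{B}^{dT}_{2n+1}=\mathcal{B}^{dT}_{2n-1}$, so inverting $(\mathcal{A}^{dT}_{2n-1}\times\mathcal{B}^{dT}_{2n})\circ F_{dT^*}=K^{\sum^*}\circ(\mathcal{A}^{dT}_{2n}\times\mathcal{B}^{dT}_{2n-1})$ gives $(\mathcal{A}^{dT}_{2n}\times\mathcal{B}^{dT}_{2n+1})\circ F_{dT^*}^{-1}=(K^{\sum^*})^{-1}\circ(\mathcal{A}^{dT}_{2n+1}\times\mathcal{B}^{dT}_{2n})$. (Alternatively, one can check this directly from the explicit formula for $F_{dT^*}^{-1}$ in Subsection~\ref{dsec1toda} and for $(K^{\sum^*})^{-1}$ in Lemma~\ref{lem:explicitmaps}.)

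For parts (b) and (c): in (b), each factor map $\mathcal{A}^{dT}_n=(-1)^{n+1}\log$ and $\mathcal{B}^{dT}_n=-\log$ is a bijection $(0,\infty)\to\mathbb{R}$ (a composition of $\log$ with a sign flip), so the product maps $\mathcal{A}^{dT}$ and $\mathcal{B}^{dT}$ are bijections of $(0,\infty)^{\mathbb{Z}}$ onto $\mathbb{R}^{\mathbb{Z}}$. In (c), $\pi$ is a bijection by Definition~\ref{pidef}, $\mathcal{A}^{dT}$ by part (b), and $\pi_{dT}$ by the remark preceding the lemma, so $\pi\circ\mathcal{A}^{dT}:(0,\infty)^{\mathbb{Z}}\to\mathcal{S}^0$ is a bijection; for the identity $\pi\circ\mathcal{A}^{dT}\circ\pi_{dT}=S_{dT}$, I would trace $(I,J)$ through the three maps: $\pi_{dT}$ interleaves it as $\eta_{2n-1}=I_n$, $\eta_{2n}=J_n$; applying $\mathcal{A}^{dT}$ produces $x$ with $x_{2n-1}=\log I_n$, $x_{2n}=-\log J_n$; and $\pi$ records $S^x_0=0$ with increments $S^x_k-S^x_{k-1}=x_k$, which is exactly the defining relation \eqref{4pe} for $S_{dT}(I,J)$. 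Finally, for part (d), by (c) we have $\mathcal{A}^{dT}\circ\pi_{dT}=\pi^{-1}\circ S_{dT}$ on $((0,\infty)^2)^{\mathbb{Z}}$, so $\mathcal{A}^{dT}\circ\pi_{dT}(\mathcal{C}_{dT})=\pi^{-1}\big(S_{dT}(\mathcal{C}_{dT})\big)$; since $S_{dT}:\mathcal{C}_{dT}\to\mathcal{S}^0\cap\mathcal{S}^{lin}$ is a bijection (as recorded in Subsection~\ref{dsec1toda}) and $\pi$ maps $\mathbb{R}^{\mathbb{Z}}$ onto $\mathcal{S}^0$, this equals $\pi^{-1}(\mathcal{S}^0\cap\mathcal{S}^{lin})=\{x:\pi(x)\in\mathcal{S}^{lin}\}=\mathcal{X}^{lin}$, as claimed. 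If one wants a self-contained argument, one can instead unwind the density conditions defining $\mathcal{C}_{dT}$ via $S^x_{2n}=\sum_{m=1}^n(\log I_m-\log J_m)$ and $S^x_{2n\pm1}=S^x_{2n}\pm\log I_{\cdot}$, noting that the pair of limit equalities in the definition is exactly what forces $S^x_n/n$ to have a common positive limit along even and odd $n$.

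There is no genuine obstacle here: the lemma is a bookkeeping exercise. The one place to take care is tracking the parity-dependent signs in $\mathcal{A}^{dT}_n=(-1)^{n+1}\log$ together with the alternation of $F_{dT^*}$ versus $F_{dT^*}^{-1}$ (respectively $K^{\sum^*}$ versus $(K^{\sum^*})^{-1}$) over even versus odd $n$, so that the two identities in part (a) correctly reflect the $m=1$ shift. Once the indices are aligned, the square-root expressions in $F_{dT^*}$ and $F_{dT^*}^{-1}$ collapse cleanly under $-\log$ into precisely the logarithmic terms appearing in $K^{\sum^*}$ and $(K^{\sum^*})^{-1}$, and the mild computation needed for the direct verification of part (d) can be bypassed entirely using the bijection already recorded for $S_{dT}$.
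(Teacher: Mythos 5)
Your proof is correct, and it supplies exactly the routine verification that the paper omits (the lemma is stated with ``the easy proof is omitted''): the computation for the first identity in (a) checks out, the observation that the second identity follows by inverting the first because $\mathcal{A}^{dT}_k$ and $\mathcal{B}^{dT}_k$ depend only on the parity of $k$ is valid, and parts (b)--(d) correctly reduce to the bijectivity bookkeeping and the facts about $S_{dT}$ already recorded in Subsection \ref{dsec1toda}. No issues.
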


\begin{proof}[Proof of Theorem \ref{dtodathm}] Since the proof is essentially identical to that of Theorem \ref{udtodathm}, but applying Corollary \ref{c58} and Lemma \ref{ll4} in place of Corollary \ref{c57} and Lemma \ref{ll3}, respectively, we omit it.
\end{proof}

\begin{rem}\label{consrem} For the KdV-type systems, it is clear how the conserved quantities of the related P-map relate to conserved quantities of the original system. Indeed, in the ultra-discrete case, the conserved quantity for $K^\vee$ written in terms of the original variables is equal to
\[\mathcal{A}_n^{udK}(\eta_n)-2\mathcal{B}_b^{udK}(U_{n-1})=L-1-2(\eta_n+U_{n-1}),\]
and thus the P-map property is equivalent to the conservation of $\eta_n+U_{n-1}$ by $F_{udK}^{(L)}$. Similarly, in the discrete case, the conserved quantity for $K^{\sum}$ written in terms of the original variables is equal to
\[\mathcal{A}_n^{dK}(\omega_n)-2\mathcal{B}_b^{dK}(U_{n-1})=-2\log\delta-2(\log\omega_n+\log U_{n-1}),\]
and thus the P-map property is equivalent to the conservation of $\log\omega_n+\log U_{n-1}$ by $F_{dK}^{(\delta)}$. For the Toda-type maps, the situation is not quite so straightforward, but we can still demonstrate that the P-map property is related to a conserved quantity for the original map. Indeed, in the ultra-discrete case, the conservation laws for $K^{\vee^*}$ and $(K^{\vee^*})^{-1}$ can be written in terms of the original variables as:
\begin{eqnarray*}
\mathcal{A}^{udT}_{2n}(E^t_n)-2\mathcal{B}^{udT}_{2n-1}(U^t_n)&=&\mathcal{A}^{udT}_{2n-1}(Q^{t+1}_n)-2\mathcal{B}^{udT}_{2n}\left(F_{udT*}^{(2)}(E_n^t,U_n^t)\right),\\
\mathcal{A}^{udT}_{2n+1}(Q^t_{n+1})-2\mathcal{B}^{udT}_{2n}\left(F_{udT*}^{(2)}(E_n^t,Q_n^t)\right)&=&\mathcal{A}^{udT}_{2n}(E^{t+1}_n)-2\mathcal{B}^{udT}_{2n+1}(U^t_{n+1}).
\end{eqnarray*}
Eliminating the term $2\mathcal{B}^{udT}_{2n}(F_{udT*}^{(2)}(E_n^t,Q_n^t))$, this gives the following conservation law for the original lattice variables:
\[E_n^t-Q_{n+1}^t-2U_n^t=E_n^{t+1}-Q_n^{t+1}-2U_{n+1}^t,\]
which, as we observed in Subsection \ref{conssec}, is obtained by combining the conservation laws for the mass and length of the interval to which the local dynamics applies. An essentially similar argument yields the corresponding result in the case of the discrete Toda lattice.
\end{rem}

\section{Discussion and modifications of main results}\label{discussionsec}

In this section, we discuss other possible boundary conditions for the equations \eqref{UDKDV}, \eqref{DKDV}, \eqref{UDTODA} and \eqref{DTODA}, various adaptations of the conclusions of Section \ref{dissec}, and how our results incorporate and extend known results, including how solutions for the ultra-discrete systems can be obtained from those for discrete systems via ultra-discretization.

\subsection{Boundary conditions}\label{boundarysec}

Given the lattice structures that \eqref{UDKDV}, \eqref{DKDV}, \eqref{UDTODA} and \eqref{DTODA} give rise to, it is easy to see that if we are given a boundary condition that involves configuration and carrier data along a down-right path from the upper-left corner of the plane to the lower-right corner (see Figure \ref{bcfig}(a) for an example), then the remaining configuration and carrier variables are determined uniquely; this might be seen as a solution to a kind of `Cauchy problem'. Similarly, if one has a boundary condition that runs along a down-right path from $+\infty$ in the vertical direction to $+\infty$ in the horizontal direction (as shown in Figures \ref{bcfig}(b,c)), one can solve the equations uniquely in the upper-right part of the plane. (Note that the configuration of Figure \ref{bcfig}(b) gives what might be described as a `Goursat problem'.) And, applying the self-inverse symmetry discussed in Subsection \ref{sisec}, one can solve the equations uniquely in the lower-left part of the plane if we have a boundary condition that runs along a down-right path from $-\infty$ in the horizontal direction to $-\infty$ in the vertical direction.

The horizontal boundary condition shown in Figure \ref{bcfig}(d), which is the one considered in this article, presents more of a challenge. As we have already noted, in this case, both the existence and uniqueness of solutions are not immediate. Whilst our main results tackle this problem for configurations whose path encoding $S$ lies in $\mathcal{S}^0\cap\mathcal{S}^{lin}$, our arguments also imply that one can construct a carrier for the initial configuration if and only if the associated past maximum for the path encoding, $M(S)_n$ (where $M$ represents the relevant functional for the model, as shown in Table \ref{Mtable}), is finite for $n\in\mathbb{Z}$.
Indeed, this claim readily follows from the argument we give to show non-existence of a carrier when $\lim_{n\rightarrow-\infty}S_n=\infty$, which is Assumption \ref{a1}(a) and essentially checked for each of the models in Theorem \ref{thm:examples}. Hence one can not even start the forward-in-time dynamics from configurations for which $\lim_{n\rightarrow-\infty}S_n=\infty$ holds. Similarly, the backward-in-time dynamics can not be started from configurations with path encodings satisfying $\lim_{n\rightarrow\infty}S_n=-\infty$. That we can construct one time step of the dynamics, however, does not guarantee we can continue to iterate the procedure for all time; see the next subsection for continuation of this point.

\begin{figure}
\begin{multicols}{2}
\raggedright

(a)\medskip

\includegraphics[width = 0.45\textwidth]{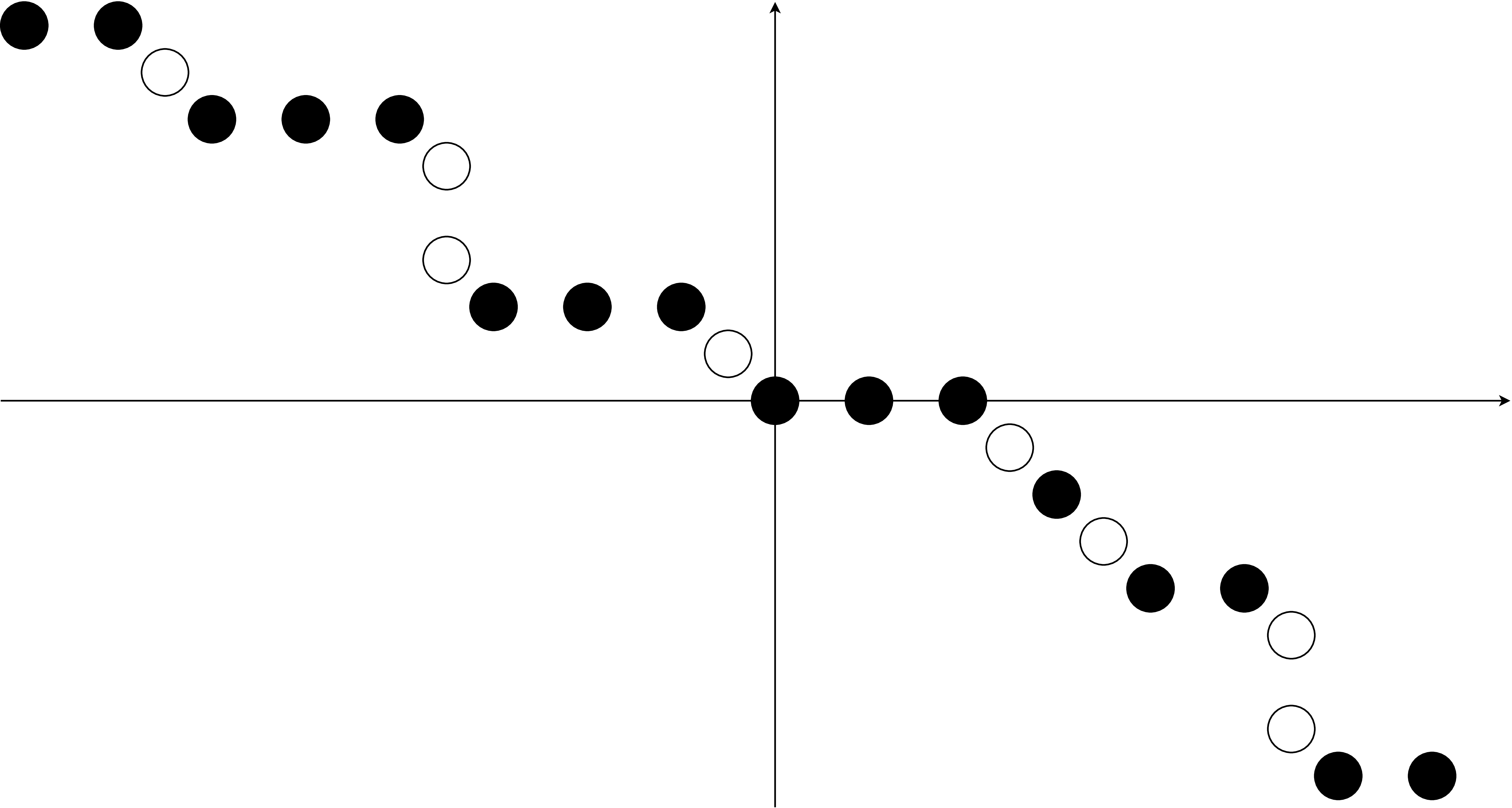}

(b)\medskip

\includegraphics[width = 0.45\textwidth]{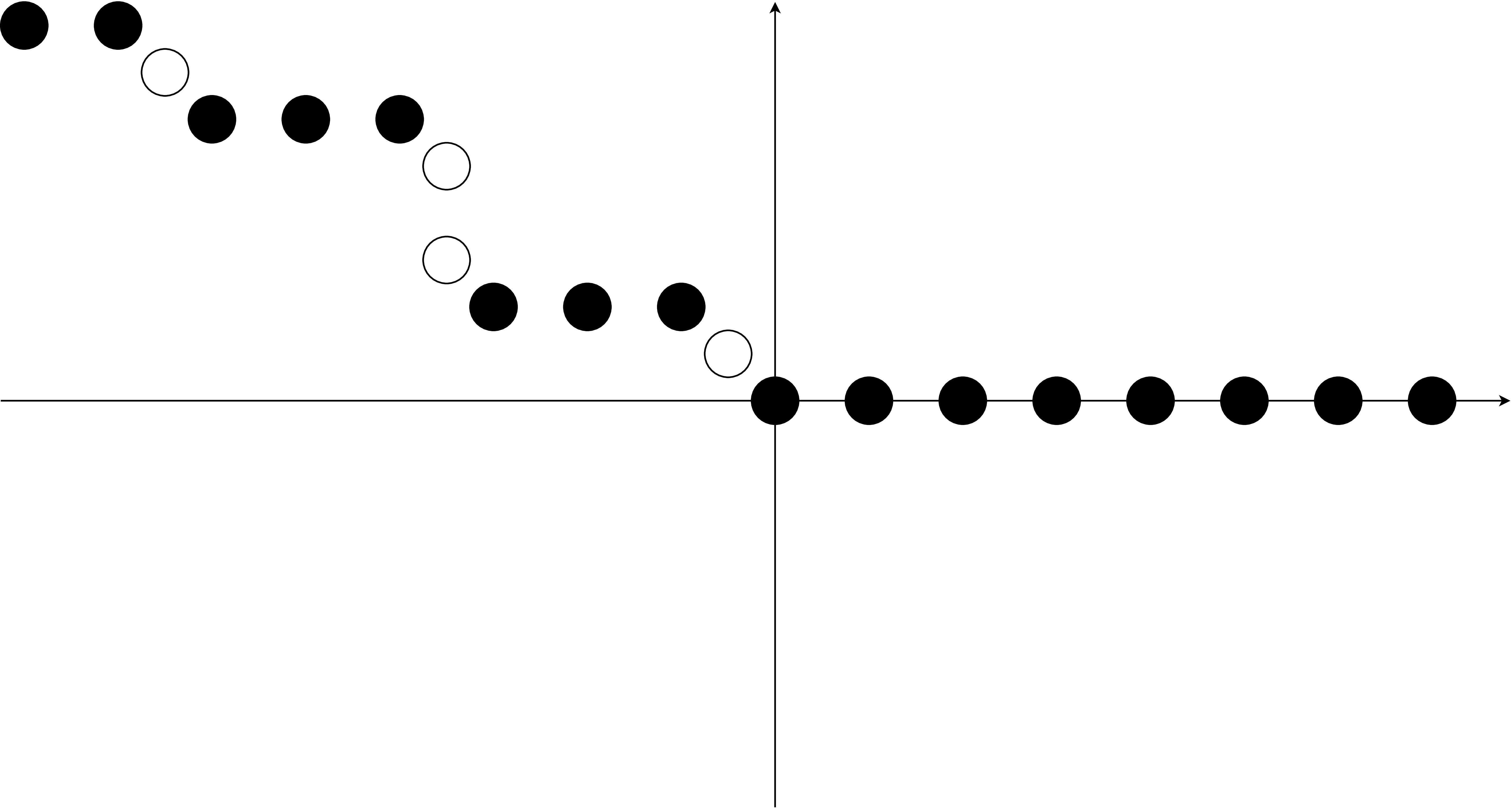}

(c)\medskip

\includegraphics[width = 0.45\textwidth]{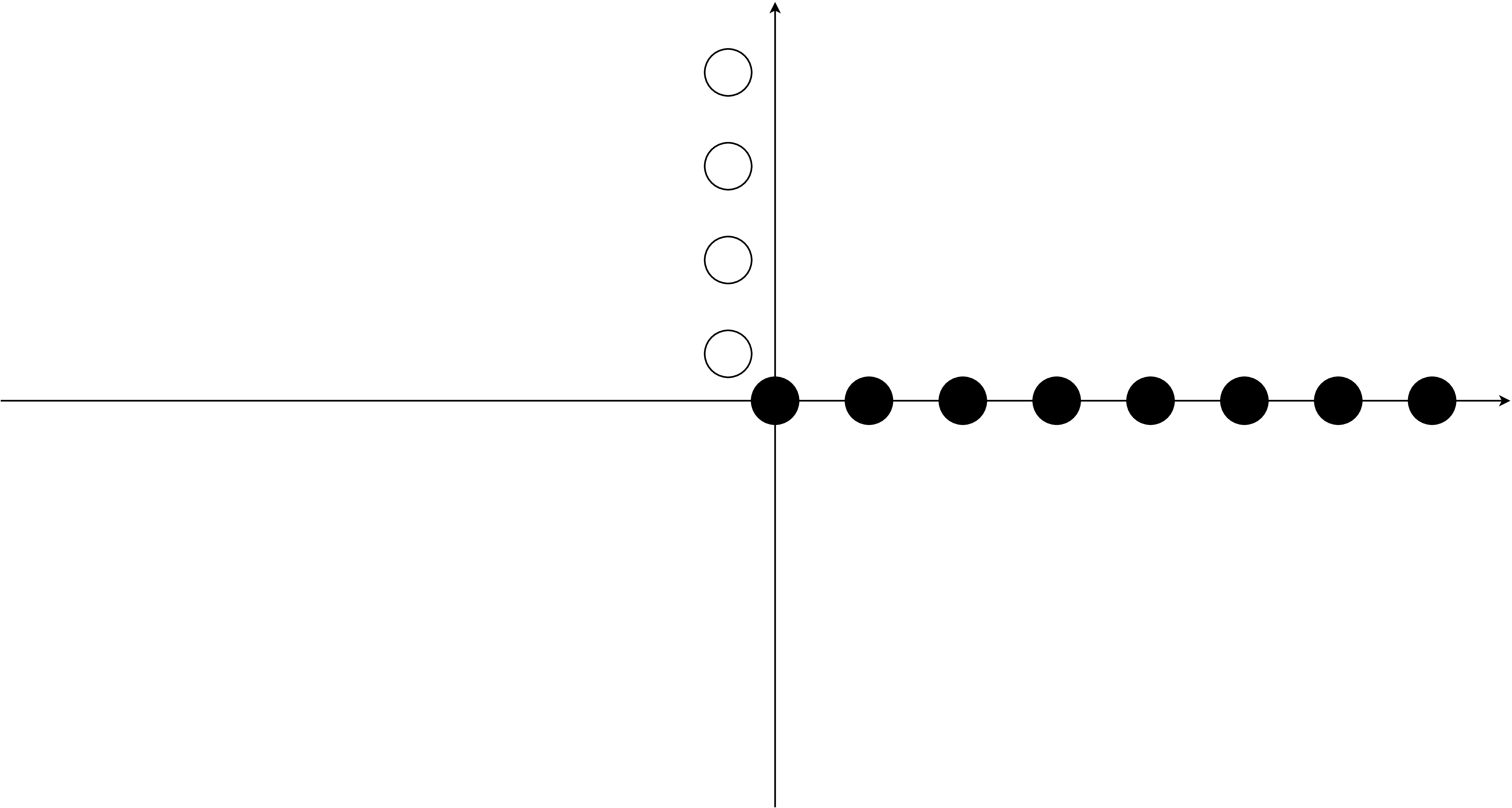}

(d)\medskip

\includegraphics[width = 0.45\textwidth]{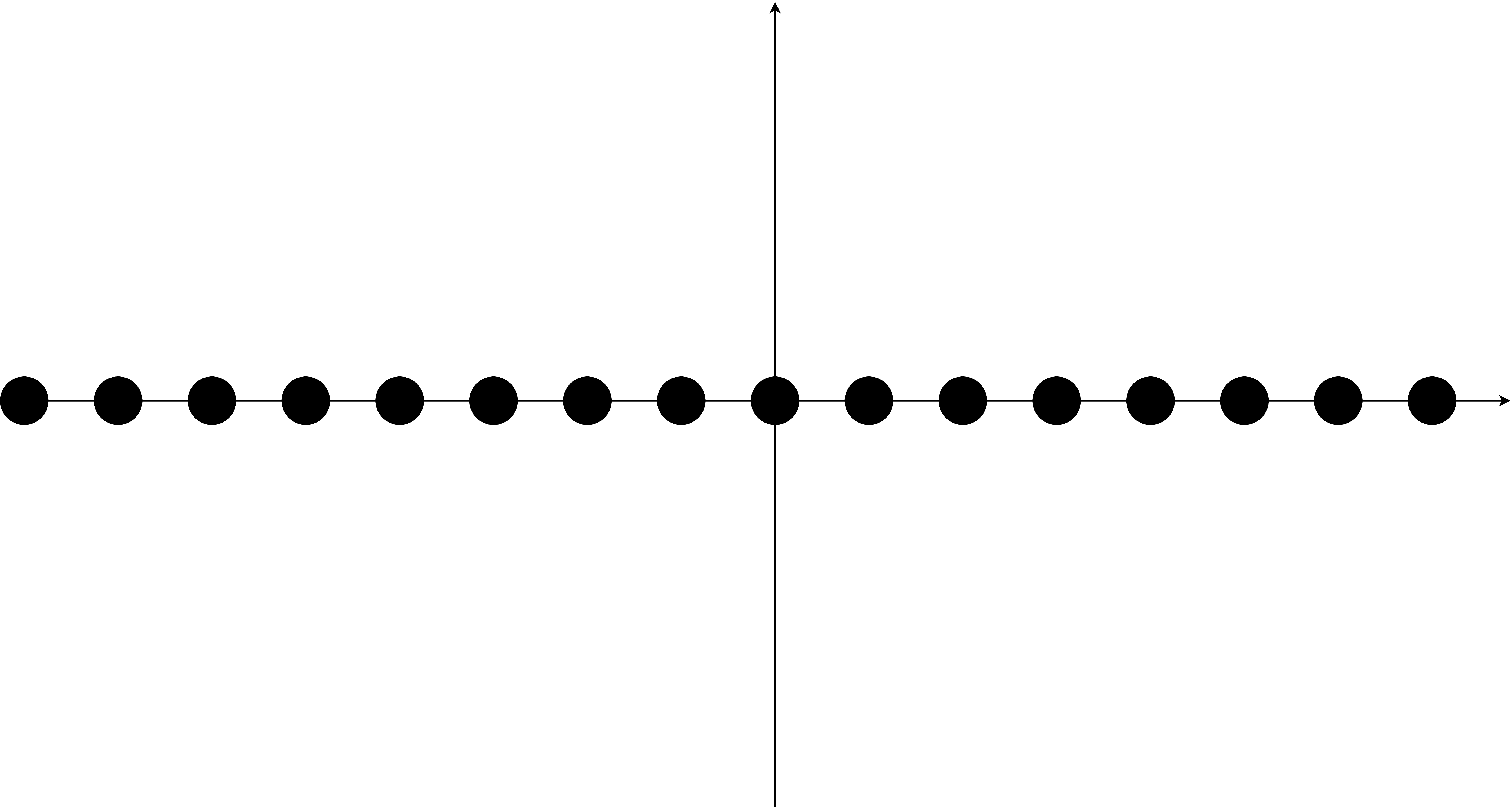}
\end{multicols}
\caption{Schematic representation of some possible boundary conditions. Black circles represent lattice locations corresponding to elements of the configuration space (in the Toda case, these circles correspond to a two-dimensional variable), and white circles to elements of the carrier space.}\label{bcfig}
\end{figure}

\subsection{Global and local solutions}

For each of the four systems studied here, it is straightforward to construct configurations for which the path encoding satisfies $M(S)_n<\infty$ for $n\in\mathbb{Z}$, and so one forward-in-time step of the dynamics is possible, but for which the subsequent configuration does not admit a carrier. In particular, for the ultra-discrete systems, any function $S$ which satisfies $\lim_{n\rightarrow-\infty}M(S)_n\in\mathbb{R}$ and $\liminf_{n\rightarrow-\infty}(S_n+S_{n-1})=-\infty$ has this property. For the discrete systems, since it is always the case that $\lim_{n\rightarrow-\infty}M(S)_n=-\infty$ (when $M(S)_n$ is finite for $n\in\mathbb{Z}$), one has to work slightly harder, but still it is an elementary exercise to find an $S$ admitting a carrier and also satisfying $M(2M(S)-S)_n=\infty$ for any $n\in\mathbb{Z}$, which implies there is no carrier after one time step.
Hence the existence of local solutions does not imply the existence of global solutions for the initial value problem.

In the other direction, we remark that whilst the existence of global solutions clearly implies the existence of local solutions, the uniqueness of the former does not imply the uniqueness of the latter. Indeed, again pointing to \cite[Remark 2.11]{CKST} for concrete examples, we expand on the comment in the introduction to note that for BBS configurations with $\limsup_{n\rightarrow-\infty}S_n<\limsup_{n\rightarrow+\infty}S_n$, there exist multiple carriers (in fact, infinitely many), which means that multiple possible interpretations of the dynamics are possible (as many as the difference between the two $\limsup$s).
However, if $\lim_{n\rightarrow-\infty}S_n=-\infty$, then at most one choice of these carriers results in a new configuration that also has a carrier. In the general setting of this work, we have taken the possibility of continuing the dynamics beyond one step as definitive, using it to introduce the notion of a canonical carrier (recall Definition \ref{carrierdef}(c) above), and built our study of the dynamics of the relevant systems on this concept. In the context of the BBS, alternative definitions of a canonical carrier were given in \cite{CS,CKST}. These definitions are perhaps more straightforward to justify physically, in that they are clearly presented in terms of the current state of the system, rather than a consideration of its future evolution, and they also make clear that the choice of canonical carrier is related to the prohibition of mass entering the system from $-\infty$. We stress, however, that the various definitions of canonical carrier coincide on the intersection of the models under consideration.

\subsection{Forward and backward equations}\label{fbeq} For a given initial condition, it is clear that solving any of the initial value problems \eqref{UDKDV}, \eqref{DKDV}, \eqref{UDTODA} and \eqref{DTODA} is equivalent to solving simultaneously both the corresponding forward, i.e.\ $t\in\mathbb{Z}_+$, and backward, i.e.\ $t\in\mathbb{Z}_-$, versions of the problem, where we define $\mathbb{Z}_-:=-\mathbb{N}=\{\dots,-2,-1\}$. Moreover, given the self-inverse property of the various systems (as discussed in Subsection \ref{sisec}), the forward and backward problems are themselves equivalent. Hence one might alternatively have chosen to present as the main results of this article conclusions regarding the forward problems, and the aim of this subsection is to present a statement along these lines. In this direction, we start by noting that the structure of the lattices means that the issue of existence and uniqueness of solutions to the forward problem depends on the behaviour of the configuration at $-\infty$, and we correspondingly introduce the subset of path encodings
\[\mathcal{S}_-^{lin}:=\left\{S\in\mathcal{S}:\:\lim_{n\rightarrow-\infty}\frac{S_n}{n}>0\right\},\]
and the corresponding configurations
\begin{equation}\label{onesidedC}
\mathcal{C}_{udK,-}^{(L)}:=\left(S_{udK}^{(L)}\right)^{-1}\left(\mathcal{S}_-^{lin}\right),\qquad\mathcal{C}_{dK,-}^{(\delta)}:=\left(S_{dK}^{(\delta)}\right)^{-1}\left(\mathcal{S}_-^{lin}\right),
\end{equation}
\[\mathcal{C}_{udT,-}:=\left(S_{udT}\right)^{-1}\left(\mathcal{S}_-^{lin}\right),\qquad\mathcal{C}_{dT,-}:=\left(S_{dT}\right)^{-1}\left(\mathcal{S}_-^{lin}\right).\]
This notation allows us to present the following result, which is proved in Section \ref{pmodsec}.

\begin{thm}\label{fthm}
The forward (i.e.\ $t\in\mathbb{Z}_+$) versions of the initial value problems \eqref{UDKDV}, \eqref{DKDV}, \eqref{UDTODA} and \eqref{DTODA} have a unique solution whenever the initial condition lies in $\mathcal{C}_{udK,-}^{(L)}$, $\mathcal{C}_{dK,-}^{(\delta)}$, $\mathcal{C}_{udT,-}$ and $\mathcal{C}_{dT,-}$, respectively. For the range of $t$ considered, these solutions have the same form as the solutions identified in Theorems \ref{udkdvthm}, \ref{dkdvthm}, \ref{udtodathm} and \ref{dtodathm}, and remain in the sets $\mathcal{C}_{udK,-}^{(L)}$, $\mathcal{C}_{dK,-}^{(\delta)}$, $\mathcal{C}_{udT,-}$ and $\mathcal{C}_{dT,-}$.
\end{thm}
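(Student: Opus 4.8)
The plan is to transport each forward problem, via its path encoding, to the iteration of the associated Pitman-type operator on the one-sided class $\mathcal{S}_-^{lin}$, and then read the conclusion back off the encoding. Write $S_\bullet$ for the relevant path encoding ($S_{udK}^{(L)}$, $S_{dK}^{(\delta)}$, $S_{udT}$ or $S_{dT}$), $T$ for the associated operator ($T^\vee$, $T^{\sum}$, $\mathcal{T}^{\vee^*}$ or $\mathcal{T}^{\sum^*}$), $M$ for the functional of Table \ref{Mtable}, and $W(S):=M(S)-S$ for the corresponding carrier; by \eqref{onesidedC} the configuration class $\mathcal{C}_{\bullet,-}$ is exactly $(S_\bullet)^{-1}(\mathcal{S}_-^{lin})$. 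Two facts from the general framework of Section \ref{general} (applied to the four models in Section \ref{proofsec}) are used as black boxes: (i) one forward step of the lattice dynamics from a configuration with encoding $S$ is possible precisely when $M(S)_n$ is finite for every $n\in\mathbb{Z}$, in which case the updated configuration is encoded by $2M(S)-S$ (equivalently $T(S)$, incorporating the shift $\theta$ in the Toda cases), with canonical carrier $W(S)$; and (ii) when $\lim_{n\to-\infty}S_n=-\infty$, the canonical carrier $W(S)$ is the only carrier whose use yields a configuration that again admits a carrier — this being the content of the canonical-carrier notion, Definition \ref{carrierdef}(c), together with the non-existence statement behind Assumption \ref{a1}(a) and Theorem \ref{thm:examples}.

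The one genuinely new ingredient is an analytic lemma: the operator $T$ maps $\mathcal{S}_-^{lin}$ into itself, preserving the left-hand asymptotic slope. Indeed, if $S\in\mathcal{S}_-^{lin}$ then $\lim_{n\to-\infty}S_n/n=c$ for some $c>0$, so $S_n\to-\infty$ as $n\to-\infty$; since $M(S)_n$ depends only on $\{S_m:m\le n\}$ (through a supremum or a log-sum of exponentials), it is finite for every $n$, so a forward step is always available by (i). A routine two-sided estimate — for $\varepsilon\in(0,c)$ one has $(c+\varepsilon)m\le S_m\le(c-\varepsilon)m$ for all sufficiently negative $m$, and because $c-\varepsilon>0$ the $m=n$ term dominates the sup/log-sum defining $M(S)_n$ — gives $\lim_{n\to-\infty}M(S)_n/n=c$ as well, hence $\lim_{n\to-\infty}(2M(S)-S)_n/n=2c-c=c>0$; prepending the left-shift in the Toda cases changes no asymptotic slope. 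This extends the two-sided assertion of Corollary \ref{c25} to $\mathcal{S}_-^{lin}$.

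Granting this, existence is immediate. Let $\zeta$ denote the given initial condition, put $S^0:=S_\bullet(\zeta)\in\mathcal{S}_-^{lin}$ and $S^t:=T^t(S^0)$, which lies in $\mathcal{S}_-^{lin}$ for all $t\in\mathbb{Z}_+$, and define the configuration and carrier variables at each time by the explicit formulas of the relevant one of Theorems \ref{udkdvthm}--\ref{dtodathm}; that these satisfy the lattice equations for $t\in\mathbb{Z}_+$ is the same local-in-$t$ verification as there, each step being legitimate by the previous paragraph, and $\zeta^t\in\mathcal{C}_{\bullet,-}$ because $S^t\in\mathcal{S}_-^{lin}$. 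For uniqueness, let $\tilde\zeta$ be any forward solution with $\tilde\zeta^0=\zeta$; I argue by induction on $t\in\mathbb{Z}_+$ that $\tilde\zeta^t=\zeta^t$, the base case being the initial condition. Assuming $\tilde\zeta^t=\zeta^t$, so that $\tilde\zeta^t$ has encoding $S^t\in\mathcal{S}_-^{lin}$ and hence $\lim_{n\to-\infty}S^t_n=-\infty$, note that $\tilde\zeta^{t+1}$ is produced from $\tilde\zeta^t$ via some carrier $\tilde U^t$, and that $\tilde\zeta^{t+1}$ must itself admit a carrier since the equations hold at time $t+1$; by (ii) the only carrier for $\tilde\zeta^t$ with this property is $W(S^t)$, so $\tilde U^t=W(S^t)$, whence $\tilde\zeta^{t+1}=\zeta^{t+1}$ with encoding $T(S^t)=S^{t+1}\in\mathcal{S}_-^{lin}$, closing the induction. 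Since $S^t\in\mathcal{S}_-^{lin}$ throughout, the solution remains in $\mathcal{C}_{\bullet,-}$, as claimed.

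The main obstacle, in my view, is localized in two places: proving the analytic lemma cleanly for all four (and in particular the two-step, shifted Toda) functionals, and — more conceptually — ensuring that it is exactly the persistence of the one-sided decay $\lim_{n\to-\infty}S^t_n=-\infty$ along the orbit, which the slope-preservation supplies, that licenses the single application of fact (ii) at each step of the uniqueness induction. Once the orbit is confined to $\mathcal{S}_-^{lin}$ and the carrier at each step is thereby forced, agreement of the two solutions and invariance of the classes $\mathcal{C}_{udK,-}^{(L)}$, $\mathcal{C}_{dK,-}^{(\delta)}$, $\mathcal{C}_{udT,-}$ and $\mathcal{C}_{dT,-}$ follow with no further work.
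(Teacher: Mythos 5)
Your proposal is correct and follows essentially the same route as the paper: the paper proves Theorem \ref{fthm} by noting that $\mathcal{A}^{udK}(\mathcal{C}_{udK,-}^{(L)})=\dots=\mathcal{X}_-^{lin}$ and rerunning the proofs of Theorems \ref{udkdvthm}--\ref{dtodathm} through Theorem \ref{t16} (i.e.\ Theorem \ref{thm:uni-ex-f-gp} plus Proposition \ref{linearsp}, whose hypotheses are Assumption \ref{a1}(a)--(c) as verified in Theorem \ref{thm:examples}). Your ``analytic lemma'' is precisely Assumption \ref{a1}(c) combined with the slope computation in Proposition \ref{linearsp}, and your uniqueness induction is the proof of Theorem \ref{thm:uni-ex-f-gp}; the only minor imprecision is that fact (ii) requires $S^t\in\mathcal{S}_-^{lin}$ (so that Assumption \ref{a1}(b) applies), not merely $\lim_{n\to-\infty}S^t_n=-\infty$, which your slope-preservation step supplies anyway.
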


\begin{rem}
The above result covers the `supercritical solutions' for the original BBS studied in \cite[Section 4.1]{CKST} and \cite{LLP}, for example, in which the particle density of a semi-infinite configuration is greater than that of empty boxes. As is discussed in \cite{CKST}, this leads to particles being transported out of the system on the first step of the dynamics, which is a non-reversible outcome. The article \cite{LLP} discusses the sizes of the solitons that appear in the case of an independent and identically distributed supercritical configuration.
\end{rem}

As a corollary, we present a similar statement for the backward problem. For the statement of this, we set
\[\mathcal{S}_+^{lin}:=\left\{S\in\mathcal{S}:\:\lim_{n\rightarrow+\infty}\frac{S_n}{n}>0\right\},\]
and define the configurations $\mathcal{C}_{udK,+}^{(L)}$, $\mathcal{C}_{dK,+}^{(\delta)}$, $\mathcal{C}_{udT,+}$ and $\mathcal{C}_{dT,+}$ analogously to \eqref{onesidedC}.

\begin{cor}\label{bcor} The backward (i.e.\ $t\in\mathbb{Z}_-$) versions of the initial value problems \eqref{UDKDV}, \eqref{DKDV}, \eqref{UDTODA} and \eqref{DTODA} have a unique solution whenever the initial condition lies in $\mathcal{C}_{udK,+}^{(L)}$, $\mathcal{C}_{dK,+}^{(\delta)}$, $\mathcal{C}_{udT,+}$ and $\mathcal{C}_{dT,+}$, respectively. For the range of $t$ considered, these solutions have the same form as the solutions identified in Theorems \ref{udkdvthm}, \ref{dkdvthm}, \ref{udtodathm} and \ref{dtodathm}, and remain in the sets $\mathcal{C}_{udK,-}^{(L)}$, $\mathcal{C}_{dK,-}^{(\delta)}$, $\mathcal{C}_{udT,-}$ and $\mathcal{C}_{dT,-}$.
\end{cor}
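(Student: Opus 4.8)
The plan is to derive Corollary \ref{bcor} from Theorem \ref{fthm} by combining the self-inverse property of the local maps recorded in Subsection \ref{sisec} with a spatial reflection that interchanges the roles of $+\infty$ and $-\infty$; this is the reduction already anticipated in Subsection \ref{fbeq}. Informally: solving the backward ($t\in\mathbb{Z}_-$) problem amounts to solving the forward problem for the lattice of Table \ref{lmtable} read with time running in the opposite direction, and since each local map equals its own inverse, the reversed lattice carries \emph{the same} local rule but with the carrier arrow reversed; a further reflection in the spatial direction restores the original left-to-right carrier flow, so the reversed-and-reflected problem is again an instance of the forward problem covered by Theorem \ref{fthm}.

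To make this precise I would work at the level of path encodings. For the KdV-type models, define $\rho:\mathcal{S}\to\mathcal{S}$ by $\rho(S)_n:=-S_{-n}$; for the Toda-type models, take the analogous reflection adapted to the alternating two-step structure of the encodings \eqref{3pe}, \eqref{4pe} (reflecting about a half-integer so that $Q$- and $E$-type increments are interchanged appropriately, and similarly the $I$- and $J$-type increments), in each case followed by subtraction of the value at $0$ so as to stay in $\mathcal{S}^0$. One checks directly from \eqref{slindef} and the definitions of $\mathcal{S}_\pm^{lin}$ that $\rho$ is an involution with $\rho(\mathcal{S}_+^{lin})=\mathcal{S}_-^{lin}$, and hence, transporting through the bijections $S_{udK}^{(L)},S_{dK}^{(\delta)},S_{udT},S_{dT}$ and using \eqref{onesidedC}, that $\rho$ sends $\mathcal{C}_{udK,+}^{(L)}$ onto $\mathcal{C}_{udK,-}^{(L)}$ and likewise for the other three classes. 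The key algebraic input is then the conjugation identity $\rho\circ T^\vee\circ\rho=(T^\vee)^{-1}$ on $\mathcal{S}_+^{lin}$ (and its analogues with $T^\vee$ replaced by $T^{\sum}$, $\mathcal{T}^{\vee^*}$, $\mathcal{T}^{\sum^*}$), which I would obtain either by a short direct computation — $M^\vee(\rho S)$ turns a past supremum into a future infimum, which is exactly what inverting the Pitman-type transform produces — or, more in the spirit of the paper, by observing that $\rho$-conjugation is precisely the passage to the reversed-and-reflected lattice, which falls under the general framework of Section \ref{general}. Granting this, one checks that $(\eta^t,U^t)_{t\in\mathbb{Z}_-}$ solves the backward version of \eqref{UDKDV} with $\eta^0=\eta$ if and only if the spatially reflected, time-reversed array solves the forward version of \eqref{UDKDV} with initial condition the $\rho$-image of $\eta$ (here the self-inverse identity $(F_{udK}^{(L)})^{-1}=F_{udK}^{(L)}$ is used cell by cell), and the same for the other three systems. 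Applying Theorem \ref{fthm} to the $\rho$-image of the initial data, which lies in the relevant $\mathcal{C}_{\cdot,-}$ class, then yields existence, uniqueness, membership of the solution in the appropriate class for all times, and — after transporting back through $\rho$ and the conjugation identity — the stated closed form, now with $(T^\vee)^t$ for $t\in\mathbb{Z}_-$ (i.e.\ negative powers, equivalently iterates of $(T^\vee)^{-1}=\rho\circ T^\vee\circ\rho$), exactly as in Theorems \ref{udkdvthm}--\ref{dtodathm}.

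I expect the main obstacle to be purely bookkeeping: carrying the reflection argument through uniformly for all four systems, where the Toda-type cases \eqref{UDTODA} and \eqref{DTODA} require the most care. There one must (i) choose the reflection on $\mathcal{S}$ so that it exchanges the two species of path increment correctly under \eqref{3pe}, \eqref{4pe}, (ii) track the interaction of $\rho$ with the left-shift, checking $\rho\circ\theta=\theta^{-1}\circ\rho$ and hence that $\rho$ conjugates $\mathcal{T}^{\vee^*}=\theta\circ T^{\vee^*}$ to $(\mathcal{T}^{\vee^*})^{-1}$ rather than to $(T^{\vee^*})^{-1}\circ\theta^{-1}$, and (iii) verify that the carrier index $2n-1$ appearing in Theorems \ref{udtodathm} and \ref{dtodathm} is sent to the correct index by the reflection, so that the explicit carrier formulae in \eqref{explicit3}, \eqref{explicit4} transform as claimed. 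Once these compatibilities are confirmed, the remainder is the formal transfer of Theorem \ref{fthm} described above.
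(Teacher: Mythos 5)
Your proposal is correct and follows essentially the same route as the paper: the paper's proof likewise reduces the backward problem to the forward one via the reversal operators $R^{\mathcal{X}}$, $R^{\mathcal{S}}$ (your $\rho$ is exactly $R^{\mathcal{S}}(S)_n=-S_{m-n}$, including the half-integer reflection for the Toda cases) together with the self-reverse property, invoking Corollary \ref{bivpun} in place of Theorem \ref{thm:uni-ex-f-gp} and the identification $\mathcal{X}_+^{lin}=R^{\mathcal{X}}(\mathcal{X}_-^{lin})$. The conjugation identity you plan to verify is already available as \eqref{tsinv} in Theorem \ref{t17}(a), so no new computation is needed.
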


\subsection{Closed form equations for discrete systems}\label{cfsec}

It is possible to rewrite both the discrete KdV and Toda equations in terms of the variables $\omega$ and $(I,J)$ alone, i.e.\ without the auxiliary carrier variable $U$ (see equations \eqref{DKomega1} and \eqref{DToda2} below). In this subsection, we describe consequences of our main theorems for the existence and uniqueness of solutions to the initial value problems based on these closed form equations. We first present our result for the discrete KdV model.

\begin{thm}\label{ttt1} (a) Suppose $\omega \in \mathcal{C}_{dK}^{(\delta)}$. It then holds that if $(\omega^t)_{t\in\mathbb{Z}}$ is defined as at
\eqref{omegatdef}, then $(\omega^t_n)_{n,t \in \Z}$ solves the following quad-equation:
\begin{align}\label{DKomega1}
\begin{cases}
& \omega^0  =\omega, \\
& \frac{1}{\omega_{n+1}^{t+1}} -\frac{1}{\omega_n^t} =\delta(\omega_{n+1}^{t}-\omega_n^{t+1}),
\end{cases}
\end{align}
for all $n,t\in\mathbb{Z}$.\\
(b) Suppose $\omega \in \mathcal{C}_{dK}^{(\delta)}$ and $(\omega^t)_{t\in\mathbb{Z}}$ are as in part (a). It then holds that  $(\omega^t_n)_{n,t \in \Z}$ is the unique solution to \eqref{DKomega1} with $\omega^t \in \mathcal{C}_{dK}^{(\delta)}$ for all $t\in\mathbb{Z}$.\\
(c) If $\hat{\omega} \in (0,\infty)^{\Z}$ satisfies
\begin{equation}\label{ahjk}
\lim_{n \to \pm \infty}\frac{\sum_{k=1}^{n}\log \hat{\omega}_k}{n} > \frac{-\log \delta}{2}
\end{equation}
then there exists a unique solution $(\hat{\omega}^t_n)_{n,t \in \Z}$ to \eqref{DKomega1} with initial condition $\hat{\omega}$, and such that $\hat{\omega}^t$ satisfies \eqref{ahjk} for all $t\in\mathbb{Z}$.\\
(d) If the $\omega$ and $\hat{\omega}$ of parts (b) and (c) are related by $\hat{\omega}_n=\frac{1}{\delta \omega_n}$, $n\in\mathbb{Z}$, then it holds that $\hat{\omega}_n^t=\frac{1}{\delta\omega^{-t}_n}$ for all $n,t\in\mathbb{Z}$.
\end{thm}

\begin{rem}\label{nouni} It is easy to see that non-uniqueness does not hold under \eqref{DKomega1} alone. Indeed, suppose $\omega\in  \mathcal{C}_{dK}^{(\delta)}$, and define $(\omega^t_n)_{n,t \in \Z}$ by setting $\omega^0=\omega$ and via the relation $\omega^{t+1}_n=\frac{1}{\delta \omega^t_n}$ for all $n,t\in\mathbb{Z}$ otherwise. It is then easy to check that \eqref{DKomega1} holds, but $\omega^t\not\in\mathcal{C}_{dK}^{(\delta)}$ for odd $t$. Hence we obtain a different solution to that given by \eqref{omegatdef}. An analogous comment applies when the initial condition is chosen to satisfy \eqref{ahjk}. These alternating dynamics are reminiscent of the trivial dynamics seen under invariant measures for the BBS at critical density (see \cite[Theorem 1.4]{CKST}), whereby, on each time step, particles and empty boxes are exchanged. The latter dynamics are consistent with \eqref{bbsudkdv2} if we suppose the carrier always takes the value $\infty$. As one can see from the form of the equation \eqref{DKDV}, we similarly see that setting $U_n^t=\infty$ for all $n,t\in\mathbb{Z}$ in \eqref{DKDV} yields the solution for \eqref{DKomega1} just described.
\end{rem}

\begin{rem}\label{per} The path encodings of the configurations $\omega$ and $\hat{\omega}$ of part (d) above satisfy $S_{dK}^{(\delta)}(\hat{\omega})=-S_{dK}^{(\delta)}({\omega})$. In particular, the condition at \eqref{ahjk} gives a path encoding with negative drift, and so it is not possible to define a carrier for $\hat{\omega}$ as in Subsection \ref{dsec1}. Nonetheless, the result shows that we can still understand the dynamics of the system via a Pitman-type transformation. Indeed, it follows from Theorems \ref{dkdvthm} and \ref{ttt1} that
\[\hat{\omega}^t=\left(-S_{dK}^{(\delta)}\right)^{-1}\circ\left(T^{\sum}\right)^{-t}\circ \left(-S_{dK}^{(\delta)}\right)\left(\hat{\omega}\right),\]
for which the operation on the path encoding can be thought of as repeated reflection in the `future maximum' (where we interpret `future maximum' in the same loose sense we interpreted `past maximum' in Table \ref{Mtable}).
\end{rem}

\begin{rem}The ultra-discrete KdV equation \eqref{UDKDV} is formally rewritten as
\begin{equation}\label{formalrem}
\eta_n^{t+1}=\min\left\{L-\eta^t_n,\sum_{m=-\infty}^{n-1}(\eta_m^t-\eta_m^{t+1})\right\}
\end{equation}
(cf.\ \eqref{bbsudkdv1}), but the infinite sum does not converge in a generality that would mean it makes sense to study it rigourously.
\end{rem}

We next turn to the discrete Toda system.

\begin{thm}\label{ttt2}
(a) Suppose $(I,J) \in \mathcal{C}_{dT}$. It then holds that if $(I^t,J^t)_{t\in\mathbb{Z}}$ is defined as at
\eqref{ijij}, then $(I^t_n,J^t_n)_{n,t \in \Z}$ solves
\begin{equation}\label{DToda2}
\begin{cases}
(I^0,J^0)=(I,J), \\
I^{t+1}_n =I^t_n+J_n^t-J_{n-1}^{t+1}, \\
J_n^{t+1} =\frac{I^t_{n+1}J_n^t}{I_n^{t+1}},
\end{cases}
\end{equation}
for all $n,t\in\mathbb{Z}$.\\
(b) Suppose $(I,J) \in \mathcal{C}_{dT}$ and $(I^t,J^t)_{t\in\mathbb{Z}}$ are as in part (a). It then holds that  $(I^t_n,J^t_n)_{n,t \in \Z}$ is the unique solution to \eqref{DToda2} with $(I^t,J^t) \in \mathcal{C}_{dT}$ for all $t\in\mathbb{Z}$.\\
(c) Let $(\hat{I},\hat{J}) \in ((0,\infty)^2)^{\Z}$, and define $(I,J)\in((0,\infty)^2)^{\Z}$ by setting
\begin{equation}
I_{n+1}=\hat{J}_{n},\qquad J_{n}=\hat{I}_{n},\qquad \forall n\in\mathbb{Z}.\label{huil}
\end{equation}
If $(I,J)\in\mathcal{C}_{dT}$, then there exists a unique solution $(\hat{I}^t_n,\hat{J}^t_n)_{n,t \in \Z}$ to \eqref{DToda2} with initial condition $(\hat{I},\hat{J})$, and such that if $({I}^{-t},{J}^{-t})$ is defined from $(\hat{I}^t,\hat{J}^t)$ by setting
\begin{equation}
I_{n+1+t}^{-t}=\hat{J}^{t}_{n},\qquad J_{n+t}^{-t}=\hat{I}^{t}_{n},\qquad \forall n\in\mathbb{Z},\label{huil2}
\end{equation}
then $({I}^t,{J}^t)\in\mathcal{C}_{dT}$ for all $t\in\mathbb{Z}$.\\
(d) If the $(I,J)$ and $(\hat{I},\hat{J})$ of parts (b) and (c) are related by \eqref{huil}, then the corresponding $(I^{-t},J^{-t})$ and $(\hat{I}^t,\hat{J}^t)$ are related by \eqref{huil2} for all $t\in\mathbb{Z}$.
\end{thm}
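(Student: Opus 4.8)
The plan is to follow the template of Theorem~\ref{ttt1}, exploiting that \eqref{DToda2} is precisely \eqref{DTODA} with the carrier variable eliminated. For part~(a), I would first record the purely algebraic fact that any solution $(I_n^t,J_n^t,U_n^t)_{n,t\in\mathbb{Z}}$ of \eqref{DTODA} yields a solution $(I_n^t,J_n^t)_{n,t\in\mathbb{Z}}$ of \eqref{DToda2}: adding the second and third equations of \eqref{DTODA} taken at $(n-1,t)$ and using the first equation at $(n-1,t)$ gives $U_n^t=I_n^t-J_{n-1}^{t+1}$, and substituting this into $I_n^{t+1}=J_n^t+U_n^t$ produces the first line of \eqref{DToda2}, while the second line is verbatim the second equation of \eqref{DTODA}. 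Since, by Theorem~\ref{dtodathm}, the sequence defined at \eqref{ijij} (together with the carrier constructed there) solves \eqref{DTODA} when $(I,J)\in\mathcal{C}_{dT}$, part~(a) follows at once; alternatively one can verify it directly from the explicit formulae \eqref{explicit4}.

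For part~(b), let $(I^t,J^t)_{t\in\mathbb{Z}}$ be any solution of \eqref{DToda2} with $(I^t,J^t)\in\mathcal{C}_{dT}$ for every $t$. Reversing the computation of part~(a), I would set $U_n^t:=I_n^{t+1}-J_n^t$ (equivalently $U_n^t=I_n^t-J_{n-1}^{t+1}$ by \eqref{DToda2}) and check that $(I^t,J^t,U^t)$ satisfies all three equations of \eqref{DTODA}; once we know this triple lies in the class on which the dynamics of Theorem~\ref{dtodathm} are uniquely defined, it must coincide with \eqref{ijij}, which gives uniqueness. The delicate point — and the step I expect to be the main obstacle — is that a priori $U_n^t$ need not be positive, so it is not immediate that $(I^t,J^t,U^t)$ is a solution of \eqref{DTODA} at all (that system being posed for strictly positive variables). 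Here I would use that the relation $U_{n+1}^t=I_{n+1}^tU_n^t/I_n^{t+1}$ forces $U^t_\cdot$ to have a constant sign in $n$ for each fixed $t$, and that $(I^{t+1},J^{t+1})\in\mathcal{C}_{dT}$ rules out the non-positive sign: if $U^t_\cdot\le 0$ (with the case $U^t_\cdot\equiv 0$ giving equality and $U^t_\cdot<0$ a strict inequality), then a short computation of increments gives $S_{dT}(I^{t+1},J^{t+1})_n\le S_{dT}(I^t,J^t)_1-S_{dT}(I^t,J^t)_{n+1}$ for $n>0$, so the updated path encoding cannot have positive asymptotic slope at $+\infty$ and hence leaves $\mathcal{S}^{lin}$, contradicting $(I^{t+1},J^{t+1})\in\mathcal{C}_{dT}$. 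More conceptually, this is the content of the uniqueness of the canonical carrier established in Section~\ref{general}: since $S_{dT}(I^t,J^t)$ and $S_{dT}(I^{t+1},J^{t+1})$ both lie in $\mathcal{S}^{lin}$ and are linked by the lattice dynamics, the intervening carrier must be the canonical one, so that $S_{dT}(I^{t+1},J^{t+1})=\mathcal{T}^{\sum^*}(S_{dT}(I^t,J^t))$ and $U^t$ is encoded by $W^{\sum^*}$; inducting forwards and backwards in $t$ then identifies $(I^t,J^t)$ with \eqref{ijij}.

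Parts~(c) and~(d) I would deduce from~(a) and~(b) by combining the index substitution \eqref{huil}--\eqref{huil2} with a time reversal. Given $(I,J)\in\mathcal{C}_{dT}$, take $(I^t,J^t)_{t\in\mathbb{Z}}$ as in \eqref{ijij} and define $(\hat I^t,\hat J^t)$ by inverting \eqref{huil2}, i.e.\ $\hat J_n^t:=I_{n+1+t}^{-t}$ and $\hat I_n^t:=J_{n+t}^{-t}$; at $t=0$ this returns $(\hat I,\hat J)$ by \eqref{huil}. A direct relabelling (put $m=n+t+1$ and $s=-t-1$) shows that the two equations of \eqref{DToda2} for the hatted variables at time $t$ are exactly the two equations of \eqref{DToda2} for the unhatted variables at time $s$, so part~(a) (applied to $(I,J)\in\mathcal{C}_{dT}$) guarantees that $(\hat I^t,\hat J^t)$ solves \eqref{DToda2}; this gives existence in~(c). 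For uniqueness, any solution of \eqref{DToda2} with initial data $(\hat I,\hat J)$ whose back-transform via \eqref{huil2} stays in $\mathcal{C}_{dT}$ for all $t$ yields, by the same relabelling, a solution of \eqref{DToda2} lying in $\mathcal{C}_{dT}$ for all $t$, which by part~(b) must equal \eqref{ijij}; hence the hatted solution is the one just constructed. This simultaneously proves uniqueness in~(c) and the statement of~(d), the only genuine work beyond the bookkeeping being to keep the spatial shifts in \eqref{huil2} consistent, which the relabelling handles automatically.
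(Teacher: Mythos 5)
Your proposal is correct and follows essentially the same route as the paper's proof: part (a) is the same algebraic elimination of the carrier, part (b) rests on the same observation that $U_n^t:=I_n^{t+1}-J_n^t=I_n^t-J_{n-1}^{t+1}$ has constant sign in $n$ (via the recursion $U_{n+1}^t=I_{n+1}^tU_n^t/I_n^{t+1}$) with the non-positive case excluded by the density condition in $\mathcal{C}_{dT}$, and parts (c), (d) are the same change of variables. The only cosmetic difference is that you phrase the contradiction in part (b) through asymptotic slopes of the path encoding, whereas the paper writes it directly as an inequality between Ces\`{a}ro averages of $\log I$ and $\log J$; these are the same computation.
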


\begin{rem} Similarly to Remark \ref{nouni}, we have that non-uniqueness does not hold under \eqref{DToda2} alone. Indeed, suppose $(I,J)\in  \mathcal{C}_{dT}$, and define $(I^t_n,J^t_n)_{n,t \in \Z}$ by setting $(I^0,J^0)=(I,J)$ and via the relations $I^{t+1}_n=J_n^t$, $J_n^{t+1}=I^t_{n+1}$ for all $n,t\in\mathbb{Z}$ otherwise. It is then easy to check that \eqref{DToda2} holds, but $(I^t,J^t)\not\in\mathcal{C}_{dT}$ for odd $t$. Hence we obtain a different solution to that given by \eqref{ijij}. A similar comment applies when the initial condition is chosen as per $(\hat{I},\hat{J})$ of part (c) of the above result. Moreover, we note that these alternating dynamics arise from setting $U_n^t=0$ for all $n,t\in\mathbb{Z}$ in \eqref{DTODA} (cf.\ the final comment in Remark \ref{nouni}).
\end{rem}

\begin{rem}[Cf.\ Remark \ref{per}.] The path encodings of the configurations $(I,J)$ and $(\hat{I},\hat{J})$ of part (d) above satisfy $S_{dT}(\hat{I},\hat{J})=-\theta\circ S_{dT}({I,J})$. In particular, since $(I,J)\in\mathcal{C}_{dT}$, the path encoding of $(\hat{I},\hat{J})$ has negative drift, and so it is not possible to define a carrier for $(\hat{I},\hat{J})$ as in Subsection \ref{dsec1toda}. Nonetheless, the result shows that we can still understand the dynamics of the system via a Pitman-type transformation. Indeed, it follows from Theorems \ref{dtodathm} and \ref{ttt2} that
\begin{eqnarray*}
-\theta\circ S_{dT}\left(\hat{I}^t,\hat{J}^t\right)&=&\theta^{2(t+1)}\circ S_{dT}\left({I}^{-t},{J}^{-t}\right)\\
&=&\theta^{2(t+1)}\circ\left(\theta \circ T^{\sum^*}\right)^{-t}\circ S_{dT}\left({I},{J}\right)\\
&=&\theta^2\circ\left(\theta^{-1} \circ T^{\sum^*}\right)^{-t}\circ \left(-\theta^{-1}\circ S_{dT}\right)\left(\hat{I},\hat{J}\right).
\end{eqnarray*}
This yields
\[\left(\hat{I}^t,\hat{J}^t\right)=\left(-\theta^{-1}\circ S_{dT}\right)^{-1}\circ\left(\theta^{-1}\circ T^{\sum^*}\right)^{-t}\circ \left(-\theta^{-1}\circ S_{dT}\right)\left(\hat{I},\hat{J}\right),\]
for which the operation on the path encoding can again be thought of as repeated reflection in the future maximum (combined with an appropriate shift).
\end{rem}

\begin{rem} As for the ultra-discrete Toda lattice equation \eqref{UDTODA}, this has the formal closed form
\[\begin{cases}
Q_{n}^{t+1}=\min \left\{\sum_{m=-\infty}^nQ_m^t-\sum_{m=-\infty}^{n-1}Q_m^{t+1},E_n^t\right\}, \\
E_{n}^{t+1}=Q_{n+1}^t+E_{n}^t-Q_{n}^{t+1},\\
\end{cases}\]
(cf. \eqref{formalrem}), which again incorporates infinite sums that do not converge in broad generality, and so we do not pursue it further.
\end{rem}

\begin{rem} The forward versions of Theorems \ref{ttt1}(a,b) and \ref{ttt2}(a,b) hold for $\omega\in\mathcal{C}_{dK,-}^{(\delta)}$, $(I,J)\in\mathcal{C}_{dT,-}$, respectively, and similarly backward versions for $\omega\in\mathcal{C}_{dK,+}^{(\delta)}$, $(I,J)\in\mathcal{C}_{dT,+}$. It is further straightforward to deduce one-sided (in time) versions of Theorems \ref{ttt1}(c,d) and \ref{ttt2}(c,d).
\end{rem}

\subsection{Special cases of configurations}\label{specialsec}

For the two ultra-discrete systems, there are important subclasses of solutions which represent special cases of the models in question. Indeed, the original \eqref{UDKDV} equation corresponds to solutions $\eta\in[0,L]^\mathbb{Z}$, and the BBS with box capacity $L$ to solutions $\eta\in\{0,1,\dots,L\}^\mathbb{Z}$ \cite{TH, TMcar}. Moreover, it is also the case that the original BBS, with box capacity $1$, can be described by \eqref{UDTODA} with configurations such that $(Q_n,E_n)\in\mathbb{N}^2$ \cite{NTT}. And, it was for configurations such that $(Q_n,E_n)\in(0,\infty)^2$ for which \eqref{UDTODA} was originally formulated. The following theorem demonstrates that these particular examples can be included in our framework, and we expand on the connections in the subsequent remark. To state the result, we will say a subset of the configuration space is \emph{closed} if when the dynamics are started from a configuration in this subset, the systems remains there for all time. More precisely, for the \eqref{UDKDV} system, we say $\mathcal{C}\subseteq \mathcal{C}_{udK}^{(L)}$ is closed if $\eta^0\in \mathcal{C}$ implies that the global solution of Theorem \ref{udkdvthm} satisfies $\eta^t\in\mathcal{C}$ for all $t\in\mathbb{Z}$, and similarly for the other systems.

\begin{thm}\label{specialthm} (a) The following subsets are closed for \eqref{UDKDV}:
\begin{enumerate}
  \item[(i)] $[L_0,L-L_0]^\mathbb{Z}\cap\mathcal{C}_{udK}^{(L)}$, for any $L_0<L/2$;
  \item[(ii)] $(\alpha\mathbb{Z})^\Z\cap\mathcal{C}_{udK}^{(L)}$, for any $\alpha>0$ such that $L\in\alpha\mathbb{Z}$;
  \item[(iii)] $\{0,1,\dots,L\}^\mathbb{Z}\cap\mathcal{C}_{udK}^{(L)}$, for any $L\in\mathbb{N}$.
\end{enumerate}
Moreover, if the configuration takes a value in these subsets, then the carrier corresponding to the global solution of Theorem \ref{udkdvthm} takes values in, respectively:
\begin{enumerate}
  \item[(i)] $[L_0,\infty)^\mathbb{Z}$;
  \item[(ii)] $(\alpha\mathbb{Z})^\Z$;
  \item[(iii)] $\mathbb{Z}_+^\mathbb{Z}$.
\end{enumerate}
(b) The following subsets are closed for \eqref{UDTODA}:
\begin{enumerate}
  \item[(i)] $((0,\infty)^2)^\mathbb{Z}\cap\mathcal{C}_{udT}$;
  \item[(ii)] $((\alpha\mathbb{Z})^2)^\Z\cap\mathcal{C}_{udT}$, for any $\alpha>0$;
  \item[(iii)] $(\mathbb{N}^2)^\mathbb{Z}\cap\mathcal{C}_{udT}$.
\end{enumerate}
Moreover, if the configuration takes a value in these subsets, then the carrier corresponding to the global solution of Theorem \ref{udtodathm} takes values in, respectively:
\begin{enumerate}
  \item[(i)] $(0,\infty)^\mathbb{Z}$;
  \item[(ii)] $(\alpha\mathbb{Z})^\Z$;
  \item[(iii)] $\mathbb{N}^\mathbb{Z}$.
\end{enumerate}
\end{thm}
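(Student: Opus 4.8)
The plan is to deduce the theorem from the explicit solution formulas of Theorems~\ref{udkdvthm} and~\ref{udtodathm} together with two elementary facts: one about the local maps $F_{udK}^{(L)}$, $F_{udT}$ (see Table~\ref{lmtable}), and one about the canonical carrier. Recall that in each case the global solution satisfies the relevant lattice equation, that the configuration at every time $t$ remains in $\mathcal{C}_{udK}^{(L)}$ (resp.\ $\mathcal{C}_{udT}$), and that the carrier at time $t$ is read off from $W^{\vee}$ (resp.\ $W^{\vee^*}$) applied to the $t$-th Pitman iterate of the path encoding. Hence it suffices to establish: (I) the local map sends (configuration subset) $\times$ (carrier subset) into itself; and (II) whenever a configuration lies in the configuration subset and in $\mathcal{C}_{udK}^{(L)}$ (resp.\ $\mathcal{C}_{udT}$), the canonical carrier built from its path encoding lies in the carrier subset. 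Granting (I) and (II), an induction on $t\geq 0$ finishes the forward case: given $\eta^{t}$ in the configuration subset, (II) puts $U^{t}$ in the carrier subset, then the lattice equation expresses $\eta^{t+1}$ (resp.\ $(Q^{t+1},E^{t+1})$) as the sitewise image of $(\eta^{t},U^{t})$ under the local map, so (I) puts $\eta^{t+1}$ in the configuration subset, and one iterates. The case $t\leq 0$ then follows from the self-inverse property of the local maps (Subsection~\ref{sisec})—equivalently, by running the same argument for the backward problem of Corollary~\ref{bcor}, whose carrier is governed by a ``future maximum'' of the path encoding—using that every subset appearing in the statement is invariant under the spatial reflection that intertwines the forward and backward problems. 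Finally, item (iii) in each part is obtained by intersecting item (i) (taking $L_{0}=0$) with item (ii) (taking $\alpha=1$), an intersection of closed subsets being closed, with carrier in the corresponding intersection.

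For (I), I would write $F_{udK}^{(L)}(a,b)=(\min\{L-a,b\},\,a+b-\min\{L-a,b\})$ and note that $a+b-\min\{L-a,b\}=\max\{2a+b-L,\,a\}\geq a$ while $\min\{L-a,b\}\leq L-a$: thus if $a\in[L_{0},L-L_{0}]$ then $L-a\in[L_{0},L-L_{0}]$, so for $b\geq L_{0}$ the first output lies in $[L_{0},L-L_{0}]$ and the second is $\geq a\geq L_{0}$; and if $a,b,L\in\alpha\mathbb{Z}$ both outputs lie in $\alpha\mathbb{Z}$. Likewise, for $F_{udT}(a,b,c)=(\min\{b,c\},\,a+b-\min\{b,c\},\,a+c-\min\{b,c\})$ with $(a,b,c)$ the $(Q,E,U)$-type inputs: since $b-\min\{b,c\}\geq 0$ and $c-\min\{b,c\}\geq 0$, if $a,b,c>0$ then $\min\{b,c\}>0$ and the other two outputs are $\geq a>0$, while if $a,b,c\in\alpha\mathbb{Z}$ all three outputs lie in $\alpha\mathbb{Z}$. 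These are the only computations with any real content.

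For (II), I would unwind the definition of the carrier. In the udKdV case, with $S=S_{udK}^{(L)}(\eta)$, telescoping the increments $S_{k}-S_{k-1}=L-2\eta_{k}$ gives
\begin{equation*}
U_{n}=M^{\vee}(S)_{n}-S_{n}+\tfrac{L}{2}=\sup_{m\leq n}\Bigl[\eta_{m}+\sum_{k=m+1}^{n}(2\eta_{k}-L)\Bigr].
\end{equation*}
Since $\eta\in\mathcal{C}_{udK}^{(L)}$ forces $S\in\mathcal{S}^{lin}$, we have $S_{m}\to-\infty$ as $m\to-\infty$, so this supremum is attained; the $m=n$ term is $\eta_{n}$, whence $U_{n}\geq\eta_{n}\geq L_{0}$ when $\eta\in[L_{0},L-L_{0}]^{\mathbb{Z}}$, while if $\eta,L\in\alpha\mathbb{Z}$ every term of the supremum lies in the closed discrete set $\alpha\mathbb{Z}$ and hence so does its attained value. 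The udToda case is parallel: with $S=S_{udT}(Q,E)$ one computes
\begin{equation*}
U_{n}=M^{\vee^*}(S)_{2n-1}-S_{2n-1}=Q_{n}+\sup_{j\leq n-1}\ \sum_{l=j+1}^{n-1}(Q_{l}-E_{l}),
\end{equation*}
again an attained supremum, so taking the empty sum ($j=n-1$) gives $U_{n}\geq Q_{n}$, which is $>0$ in case (i), while every term lies in $\alpha\mathbb{Z}$ in case (ii).

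In short, the statement is essentially a corollary of the explicit solution formulas, and no step is a serious obstacle. The points that require a little care are: the verification in (II) that the supremum defining the relevant ``past maximum'' is \emph{attained}, since this is precisely what allows the carrier value to land in a discrete set $\alpha\mathbb{Z}$ and it relies on the asymptotic linearity of the path encoding; and the bookkeeping of the spatial shift and of the odd-indexing of the carrier in the Toda case, together with the reduction of the $t\leq 0$ assertion to the forward problem via time reversal.
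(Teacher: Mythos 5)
Your proposal is correct and follows essentially the same route as the paper's proof: reduce to a single forward time step via the self-reverse property, bound the canonical carrier from below (or locate it in $\alpha\mathbb{Z}$) using the explicit path-encoding formula for $M^{\vee}$ resp.\ $M^{\vee^*}$ — in particular the observation that the $m=n$ (resp.\ empty-sum) term gives $U_n\geq\eta_n$ resp.\ $U_n\geq Q_n$ — and then read off the new configuration from the local map, with item (iii) obtained by intersecting (i) and (ii). Your explicit attention to the attainment of the supremum in the $\alpha\mathbb{Z}$ cases is a point the paper leaves implicit, but the substance is identical.
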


\begin{rem}\label{origbbsrem} (a) As noted above, the example (a)(iii) is the BBS with box capacity $L$ (and unbounded carrier capacity). Moreover, in \cite{CKST} it was shown that the dynamics of the system are given by the action of the Pitman transform $T$, as defined at \eqref{originalpitman}, on the path encoding defined at \eqref{incs}. Whilst the path encoding of \cite{CKST} matches that of \eqref{1pe} in this case, the operator $T$ is in general different to $T^\vee$, as defined at \eqref{tveedef}. However, for the configurations of example (a)(iii) with $L=1$ (which is the original BBS of \cite{takahashi1990}), we have that
\[M^{\vee}_n=\sup_{m \le n} \left(\frac{S_m+S_{m-1}}{2}\right)=\sup_{m \le n}S_m-\frac12=M_n-\frac12,\]
where $M^\vee$ and $M$ were introduced at \eqref{mveedef} and \eqref{originalM}, respectively. Thus
\[T^\vee(S)=T(S)-1,\]
and since the vertical shift of the path encoding is irrelevant to the dynamics, we see that the two frameworks describe the same evolution of the particle system. Indeed, the carriers arising from the two definitions are the same, since
\[\left(W_{udK}^{(1)}\right)^{-1}\circ W^{\vee}(S)=\left(W_{udK}^{(1)}\right)^{-1}\left(M^{\vee}(S)-S\right)=M^{\vee}(S)-S+\frac{1}{2}=M(S)-S=W,\]
where $W$ was defined at \eqref{originalW}.\\
(b) Recently, the BBS model has been extended to include negative solitons, and in particular configurations taking values in $\mathbb{Z}^\Z$ \cite{HNew, WNSRG}. This version of the model is included in example (a)(ii) of the preceding theorem. We note that if $(\eta^n_t,U^n_t)_{n,t\in\mathbb{Z}}$ is a solution of \eqref{UDKDV} with parameter $L$, then for any $c \in \R$, $(\eta^n_t+c, U^n_t+c)_{n,t\in\mathbb{Z}}$ is a solution of \eqref{UDKDV} with parameter $L+2c$. Such a transformation has been used to map the BBS with negative solitons to the traditional box-ball system \cite{KMT}.\\
(c) As noted prior to the theorem, the original \eqref{UDKDV} system corresponds to example (a)(i) with $L_0=L$.\\
(d) Suppose that we have a BBS configuration $\eta\in\{0,1\}^\mathbb{Z}\cap\mathcal{C}_{udK}^{(L)}$ with $\eta_{0}=0$, $\eta_1=1$, and which has infinitely many $0$s and $1$s in either direction. Define $\dots,Q_0,E_0,Q_1, E_1,\dots$ by supposing $Q_1$ is the number of consecutive $1$s in the configuration that extend from $\eta_1$, $E_1$ is the number of $0$s that follow this, $Q_2$ is the number of $1$s that follow this, and so on (similarly in the negative direction). This gives a configuration $(Q,E)\in(\mathbb{N}^2)^\mathbb{Z}$. If the dynamics of $\eta$ are given by the BBS system, but we appropriately shift space in a way that ensures the boundary condition  $\eta_{0}=0$, $\eta_1=1$ is always fulfilled, then the dynamics of $(Q,E)$ are naturally described by \eqref{UDTODA}, i.e.\ the system is covered by example (b)(iii). For background on the connection between the two systems, see \cite[Section 3]{CST}. Note that the relation between the finite particle \eqref{UDKDV} and \eqref{UDTODA} systems (both of which will be discussed further in the subsequent subsection) was previously known, see \cite{NTT} and \cite[Subsection 3.2]{TTeng}.\\
(e) An alternative path encoding for (a subset of) configurations of the ultra-discrete Toda lattice with values in $(0,\infty)^\mathbb{Z}$, i.e.\ example (b)(i), was introduced in \cite{CST}, which consisted of piecewise linear paths whose gradient alternates between $-1$ and $1$. For this path encoding, the ultra-discrete Toda dynamics are described by the original Pitman transform \eqref{originalpitman}, together with a shift to maintain a local maximum at the origin. The appeal of the picture of \cite{CST} is that it gives a natural continuous state space version of example (a)(iii), but it does not allow the generalization to configurations that admit negative values, as the framework of this article does.\\
(f) With reference to Theorem \ref{fthm} or Corollary \ref{bcor}, one can define a system as being closed for the forward in time or backward in time dynamics, respectively. Replacing $\mathcal{C}_{udK}^{(L)}$ by  $\mathcal{C}_{udK,-}^{(L)}$ or  $\mathcal{C}_{udK,+}^{(L)}$, and $\mathcal{C}_{udT}$ by $\mathcal{C}_{udT,-}$ or $\mathcal{C}_{udT,+}$ (recalling the relevant definitions from Subsection \ref{fbeq}), in the above theorem gives the corresponding forward or backward versions.
\end{rem}

\subsection{Known solutions to initial value problems}\label{knownsec}

In this subsection, we discuss three classes of known solutions to the initial value problems for our integrable systems. Avoiding the issues that come with handling an infinite configuration as we do in this article, the first of these is the class of periodic solutions. The second, for which we will focus on the ultra-discrete KdV system, is the class of finite configurations. (See Remark \ref{fintodarem} for comments on finite configuration solutions to \eqref{UDTODA} and \eqref{DTODA}.) Third, we give a particular class of solutions to \eqref{DKDV} for which the carrier can be expressed explicitly in terms of the configuration. The latter class includes the analogues of the finite configurations of the ultra-discrete model.

\subsubsection{Periodic configurations}

To state our result concerning periodic solutions of \eqref{UDKDV}, \eqref{DKDV}, \eqref{UDTODA} and \eqref{DTODA}, we first introduce the space of path encodings with periodic increments by setting, for $N\in\mathbb{N}$,
\[\mathcal{S}^{per(N)}:=\left\{S\in\mathcal{S}:\:S_{N+n}-S_N=S_n-S_0,\:\forall n\in\mathbb{Z}\right\}.\]
Recalling the definition of `closedness' from the previous subsection, we then have the following.

\begin{thm}\label{perthm} For $N\in\mathbb{N}$, the subsets
\begin{align*}
\mathcal{C}_{udK,per(N)}^{(L)}&:=\mathcal{C}_{udK}^{(L)}\cap\left(S_{udK}^{(L)}\right)^{-1}\left(\mathcal{S}^{per(N)}\right),\\
\mathcal{C}_{dK,per(N)}^{(\delta)}&:=\mathcal{C}_{dK}^{(\delta)}\cap\left(S_{dK}^{(\delta)}\right)^{-1}\left(\mathcal{S}^{per(N)}\right),\\
\mathcal{C}_{udT,per(N)}&:=\mathcal{C}_{udT}\cap\left(S_{udT}\right)^{-1}\left(\mathcal{S}^{per(2N)}\right),\\
\mathcal{C}_{dT,per(N)}&:=\mathcal{C}_{dT}\cap\left(S_{dT}\right)^{-1}\left(\mathcal{S}^{per(2N)}\right),
\end{align*}
are closed for \eqref{UDKDV}, \eqref{DKDV}, \eqref{UDTODA}, \eqref{DTODA}, respectively.
\end{thm}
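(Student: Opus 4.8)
The plan is to reduce the statement to a single assertion about path encodings, namely that each of the Pitman-type operators $T^{\vee}$, $T^{\sum}$, $T^{\vee^*}$ and $T^{\sum^*}$ maps the space $\mathcal{S}^{per(N)}$ (resp.\ $\mathcal{S}^{per(2N)}$ in the Toda cases) into itself, and then to invoke Theorems \ref{udkdvthm}--\ref{dtodathm} to transfer this back to the configuration level. Indeed, by construction, the configuration subsets in the statement are exactly the preimages under the relevant path-encoding bijections of $\mathcal{S}^{lin}\cap\mathcal{S}^{per(N)}$ (or $\mathcal{S}^{lin}\cap\mathcal{S}^{per(2N)}$), intersected with $\mathcal{S}^0$; since the already-established theorems tell us that the dynamics on configurations are conjugate, via these bijections, to iteration of the path operators, and that the iterates remain in $\mathcal{S}^{lin}$, the only thing left to check is that periodicity of increments is preserved by one step of the path dynamics. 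The spatial shift $\theta$ appearing in the Toda operators $\mathcal{T}^{\vee^*}$ and $\mathcal{T}^{\sum^*}$ visibly preserves the $2N$-periodicity of increments (a shift of a sequence with $2N$-periodic increments again has $2N$-periodic increments), so it suffices to treat the unshifted operators $T^{\vee^*}$ and $T^{\sum^*}$, and the $2N$ rather than $N$ period there simply reflects that one Toda lattice step corresponds to two steps of the path.

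The key computation, then, is the following. Suppose $S\in\mathcal{S}^{lin}$ has $S_{N+n}-S_N = S_n - S_0$ for all $n$; write $c:=S_N-S_0$, which since $S\in\mathcal{S}^{lin}$ satisfies $c>0$ (the drift is positive). I claim that the corresponding past-maximum functional $M(S)$ — meaning $M^{\vee}$, $M^{\sum}$, $M^{\vee^*}$ or $M^{\sum^*}$ as appropriate — satisfies $M(S)_{N+n} = M(S)_n + c$ for all $n$. For $M^{\vee}$ this is immediate from $\sup_{m\le N+n}\tfrac{S_m+S_{m-1}}{2} = \sup_{k\le n}\tfrac{S_{k+N}+S_{k+N-1}}{2} = \sup_{k\le n}\tfrac{S_k+S_{k-1}}{2}+c$, using the change of variables $m=k+N$ together with the increment-periodicity; the key point is that positivity of $c$ guarantees the supremum over $m\le N+n$ is attained on the same relative portion of the path as the supremum over $m\le n$, so no convergence issue arises (and finiteness is already guaranteed because $S\in\mathcal{S}^{lin}$). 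For $M^{\sum}$ the same substitution in the sum $\sum_{m\le N+n}\exp(\tfrac{S_m+S_{m-1}}{2})$ pulls out a factor $e^{c}$, which becomes the additive constant $c$ after the $\log$; convergence of the series is again guaranteed by $S\in\mathcal{S}^{lin}$. For the starred (Toda) operators one argues identically on the odd-indexed sub-sequence $(S_{2m})_m$, noting that $2N$-periodicity of the increments of $S$ gives $N$-periodicity of the increments of $(S_{2m})_m$ with the same constant $c$, and then the even-indexed values of $M^{\vee^*}$, $M^{\sum^*}$ inherit the relation by their defining averaging formula. Granting the relation $M(S)_{N+\cdot} = M(S)_\cdot + c$, we get for $TS = 2M(S)-S$ that $(TS)_{N+n} - (TS)_N = 2(M(S)_{N+n}-M(S)_N) - (S_{N+n}-S_N) = 2c - c - (S_n-S_0) + \text{(nothing)}$... more precisely $(TS)_{N+n}-(TS)_N = 2M(S)_{n} - S_n + 2c - c - (2M(S)_0 - S_0 + 2c - c) = (TS)_n - (TS)_0$, so $TS\in\mathcal{S}^{per(N)}$ as required. (One then also checks $TS\in\mathcal{S}^{lin}$, which is already part of Corollary \ref{c25} / the cited bijectivity statements.)

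Having established that each path operator preserves the relevant periodic-increment class, the proof concludes by assembling: given $\eta^0$ in one of the listed configuration subsets, its path encoding $S^0$ lies in $\mathcal{S}^0\cap\mathcal{S}^{lin}\cap\mathcal{S}^{per(N)}$ (resp.\ $\mathcal{S}^{per(2N)}$), the iterates $S^t = (T^{\cdot})^t(S^0)$ stay in $\mathcal{S}^{lin}\cap\mathcal{S}^{per(N)}$ by the above (applied to $T^{\cdot}$ and its inverse — the inverse also preserves the class, since it is again a reflection-in-a-maximum-type operator, or one can simply note $S\in\mathcal{S}^{per(N)}\iff TS\in\mathcal{S}^{per(N)}$ from the displayed equivalence because $T$ is an involution up to the structure we have), and hence by Theorems \ref{udkdvthm}--\ref{dtodathm} the configurations $\eta^t = (S_{\cdot}^{(\cdot)})^{-1}(S^t)$ lie in $\mathcal{C}_{udK}^{(L)}\cap(S_{udK}^{(L)})^{-1}(\mathcal{S}^{per(N)})$, etc., for all $t\in\mathbb{Z}$. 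That is exactly the assertion that these subsets are closed. The main obstacle, such as it is, is bookkeeping rather than conceptual: one must be careful in the Toda cases that the shift $\theta$ changes the base point (it sends $S_0$ to $S_1$) and so the period-$2N$ bookkeeping for $\mathcal{T}^{\vee^*}$, $\mathcal{T}^{\sum^*}$ must track both the reflection step and the shift step, and one must confirm that the two-step structure of the Toda path encoding is compatible with an even period $2N$ (it is, precisely because $S_{udT}$ and $S_{dT}$ encode each configuration coordinate using two path increments, so an $N$-periodic configuration corresponds to a $2N$-periodic-increment path). Beyond this indexing care, the argument is routine once the single displayed identity $M(S)_{N+\cdot}=M(S)_\cdot+c$ is in hand.
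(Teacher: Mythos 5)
Your argument is correct, but it takes a genuinely different route from the paper. The paper's proof is a one-line appeal to the uniqueness statements of Theorems \ref{udkdvthm}--\ref{dtodathm}: since the local dynamics are invariant under spatial translation by $N$ (respectively $2N$ in the Toda cases, where the shift must respect the odd/even alternation), the translate of the unique solution started from a periodic configuration is again a solution with the same initial data, and uniqueness forces the solution to coincide with its translate at every time. You instead work directly on the path encodings and prove the quantitative identity $M(S)_{N+n}=M(S)_n+c$ with $c=S_N-S_0>0$ for each of the four `past maximum' functionals, from which $T(S)_{N+n}=T(S)_n+c$ and hence preservation of $\mathcal{S}^{per(N)}$ follow; your computations here are all valid (the change of variables $m\mapsto m+N$ in the supremum and in the exponential sum, the reduction to the even-indexed subsequence for the starred operators, and the observation that $\theta$ preserves increment-periodicity). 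What your approach buys is an explicit structural fact about the operators -- the covariance $M(S)_{N+\cdot}=M(S)_\cdot+c$ -- which the paper's argument never exposes; what it costs is the extra bookkeeping and one slightly shaky step: your parenthetical claim that ``$T$ is an involution up to the structure we have'' is not right (none of $T^{\vee}$, $T^{\sum}$, $T^{\vee^*}$, $T^{\sum^*}$ is an involution), so to cover negative times you should instead invoke \eqref{tsinv}, i.e.\ $\mathcal{T}_S^{-1}=R^{\mathcal{S}}\circ\mathcal{T}_S\circ R^{\mathcal{S}}$, together with the evident fact that the reversal $R^{\mathcal{S}}$ maps $\mathcal{S}^{per(N)}$ to itself; your alternative remark that the inverse is ``again a reflection-in-a-maximum-type operator'' is exactly this observation and does close the gap once made precise. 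Also, the positivity of $c$ plays no role in the substitution itself (the identity for the supremum is exact); it is only needed, via $S\in\mathcal{S}^{lin}$, to ensure finiteness of the suprema and convergence of the sums, so the emphasis you place on it there is slightly misplaced though harmless.
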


\begin{rem} The subsets of the configuration spaces that appear in the statement of Theorem \ref{perthm} are in one-to-one correspondence with the sets
\[\left\{\left(\eta_1,\dots,\eta_N\right)\in\mathbb{R}^{N}:\:\sum_{m=1}^N\eta_m<\frac{NL}{2}\right\},\]
\[\left\{\left(\omega_1,\dots,\omega_N\right)\in(0,\infty)^{N}:\:\sum_{m=1}^N\log\omega_m<\frac{-N\log\delta}{2}\right\},\]
\[\left\{\left(Q_1,E_1,\dots,Q_N,E_N\right)\in\mathbb{R}^{2N}:\:\sum_{m=1}^NQ_m<\sum_{m=1}^NE_m\right\},\]
\[\left\{\left(I_1,J_1,\dots,I_N,J_N\right)\in(0,\infty)^{2N}:\:\sum_{m=1}^N\log I_m>\sum_{m=1}^N\log J_m\right\},\]
respectively. Indeed, for periodic configurations, to check that the `density conditions' specified in the definitions of $\mathcal{C}_{udK}^{(L)}$, $\mathcal{C}_{dK}^{(\delta)}$, $\mathcal{C}_{udT}$ and $\mathcal{C}_{dT}$ are satisfied, it is enough to confirm that the analogous conditions hold within a single period. The images of the \eqref{UDKDV}, \eqref{DKDV}, \eqref{UDTODA} and \eqref{DTODA} dynamics on the above finite dimensional spaces give the periodic versions of the four systems. In particular, Theorems \ref{udkdvthm}, \ref{dkdvthm}, \ref{udtodathm} and \ref{dtodathm} give the existence and uniqueness of solutions to the corresponding periodic initial value problems. Thus our results include the periodic solutions discussed in \cite{TT, TTeng}, for example. Moreover, by combining Theorem \ref{perthm} with Theorem \ref{specialthm}, we see that our framework includes the periodic version of the original BBS. In this case, the relevant density condition ensures that the carrier empties itself at least once within a given period.
\end{rem}

\subsubsection{Finite configuration solutions to \eqref{UDKDV}}\label{finconfsec}

Our result concerning finite configurations in the ultra-discrete KdV system is as follows.

\begin{thm}\label{finthm} Let $a\in\mathbb{R}$ be such that $L-2a>0$. It is then the case that
\begin{equation}\nonumber
\mathcal{C}_{udK,fin(a)}^{(L)}:=\left\{\eta_n=a\mbox{ for all but finitely many }n\in\mathbb{Z}\right\}
\end{equation}
is a subset of $\mathcal{C}_{udK}^{(L)}$, and is closed for \eqref{UDKDV}. Moreover, if the initial configuration is taken in $\mathcal{C}_{udK,fin(a)}^{(L)}$, then the carrier of Theorem \ref{udkdvthm} satisfies: for each $t\in\mathbb{Z}$, $U_n^t=a$ eventually as $n\rightarrow\pm\infty$.
\end{thm}

\begin{rem} The particular case with $a=0$ is the finite configuration case, and has been widely researched. In such work, the property that $U_n^t=0$ eventually as $n\rightarrow\infty$ (or indeed that $U_n^t=0$ for $n<\inf\{m:\:\eta^t_m\neq0\}$) is typically taken as an assumption in order to find solutions to \eqref{UDKDV} recursively. For example, this was the case in the original BBS paper of \cite{takahashi1990}, which has dynamics given by \eqref{UDKDV} with $L=1$.
\end{rem}

\begin{rem}\label{fintodarem}
Although we do not pursue it in detail here, by making a relatively minor extension of the framework of this article, it would also be possible to include finite/one-sided infinite configurations for \eqref{UDTODA} or \eqref{DTODA}. Indeed, for \eqref{UDTODA}, one could consider
\begin{align*}
\lefteqn{\tilde{\mathcal{C}}_{udT}:=}\\
&\left\{(Q,E)\in(\mathbb{R}\times(\mathbb{R}\cup\{\infty\})^\mathbb{Z}\::\:
 \begin{array}{l}
   \lim_{n \to \infty}\frac{\sum_{m=1}^{n}(Q_m-E_m)}{n} =  \lim_{n \to \infty}\frac{\sum_{m=1}^{n}(Q_m-E_m)+Q_{n+1}}{n} <0,\\
 \lim_{n \to -\infty}\frac{\sum_{m=1}^{n}(Q_m-E_m)}{n}=  \lim_{n \to -\infty}\frac{\sum_{m=1}^{n}(Q_m-E_m)+E_{n}}{n} <0
 \end{array}\right\},
 \end{align*}
where the limits are: defined to be equal to $-\infty$ if $E_m=\infty$ infinitely often in the relevant direction; obtained by replacing $\lim_{n\rightarrow\infty}n^{-1}\sum_{m=1}^n$ by $\lim_{n\rightarrow\infty}n^{-1}\sum_{m=n_0+1}^n$ if $n_0:=\sup\{m:\:E_m=\infty\}<\infty$; and obtained by replacing $\lim_{n\rightarrow-\infty}n^{-1}\sum_{m=1}^n$ by $\lim_{n\rightarrow-\infty}n^{-1}\sum_{m=n_0}^n$ if $n_0:=\inf\{m:\:E_m=\infty\}>-\infty$. Via suitable modifications of our arguments, it is then possible to verify that for any $(Q,E) \in \tilde{\mathcal{C}}_{udT}$, there exists a unique solution $(Q^t_n,E^t_n, U^t_n)_{n,t \in \Z}$ to \eqref{UDTODA} such that $(Q^t,E^t) \in \tilde{\mathcal{C}}_{udT}$ and $U^t_n \in \R$ for all $n,t$. It is also possible to establish natural forward and backward versions of this statement, as well as showing that the system is closed when the variables are restricted to taking integer values (when they are not infinite). The finite ultra-discrete Toda lattice (or the Toda description of the finite BBS) is obtained in this formulation with $Q_n>0$ (or $Q_n \in \mathbb{N}$) for $n=1,\dots,N$, $E_n>0$ (or $E_n \in \mathbb{N}$) for $n=1,\dots,N-1$, and $Q_n =0$ and $E_n = \infty$ otherwise.

For \eqref{DTODA}, the extension would be to consider
\begin{align*}
\lefteqn{\tilde{\mathcal{C}}_{dT}:=}\\
&\left\{(I,J)\in((0,\infty)\times[0,\infty))^\mathbb{Z}:
 \begin{array}{l}
   \lim\limits_{n \to \infty}\frac{\sum_{m=1}^{n}(\log J_m-\log I_m)}{n} =  \lim\limits_{n \to \infty}\frac{\sum_{m=1}^{n}(\log J_m-\log I_m)-\log I_{n+1}}{n}<0,\\
\lim\limits_{n \to-\infty}\frac{\sum_{m=1}^{n}(\log J_m-\log I_m)}{n}=  \lim\limits_{n \to -\infty}\frac{\sum_{m=1}^{n}(\log J_m-\log I_m)-\log J_{n}}{n} <0
 \end{array}\right\},
 \end{align*}
i.e.\ allow 0 as a value for the $J$ variables, where we again have to modify the definitions of the relevant limits. For any $(I,J) \in \tilde{\mathcal{C}}_{dT}$, there exists a unique solution $(I^t_n,J^t_n)_{n,t \in \Z}$ to \eqref{DTODA} satisfying $(I^t,J^t) \in \tilde{\mathcal{C}}_{dT}$. In particular, the finite discrete Toda lattice is obtained in this formulation with $I_n>0$ for $n=1,\dots,N$, $J_n>0$ for $n=1,\dots,N-1$, and $I_n=1$ and $J_n = 0$ otherwise \cite[Section 3.2]{TTeng}.
\end{rem}

\subsubsection{Product convergent solutions to \eqref{DKDV}} In the case $\delta\in(0,1)$, for the special class $\mathcal{C}_{dK,conv}$ of initial configurations for \eqref{DKDV} for which $\prod_{m=-\infty}^0\omega_m$ converges, we are able to deduce an explicit expression for the carrier. We also show that the class is closed for the forward dynamics (i.e.\ for such $\omega$, it holds that $\omega\in\mathcal{C}_{dK,-}^{(\delta)}$, and moreover $\omega^t\in\mathcal{C}_{dK,conv}$ for all $t\in\mathbb{Z}_+$). We note that the class $\mathcal{C}_{dK,conv}$ clearly contains configurations that satisfy $\omega_m=1$ eventually as $m\rightarrow-\infty$, which can be seen as the \eqref{DKDV} analogue of the finite configurations for \eqref{UDKDV} of Subsection \ref{finconfsec}. Moreover, it is possible to check that the class $\mathcal{C}_{dK,conv}$ contains certain known soliton solutions for \eqref{DKDV}.
%DC: This was in the tau function section - we can add a reference to that paper later when the preprint is prepared.

\begin{thm}\label{yoyoyo} Let $\delta\in(0,1)$. The set
\begin{equation}\nonumber
\mathcal{C}_{dK,conv}:=\left\{\omega \in (0,\infty)^\mathbb{Z}:\:\prod_{m=-\infty}^0\omega_m\mbox{ converges in }(0,\infty)\right\}
\end{equation}
is a subset of $\mathcal{C}_{dK,-}^{(\delta)}$, and is closed for the forward version of \eqref{DKDV}. Moreover, if the initial configuration is taken in $\mathcal{C}_{dK,conv}$, then the carrier of Theorem \ref{fthm} is given by
\begin{equation}\label{untexp}
U_n^t=\sum_{m=-\infty}^n\delta^{n-m}\omega^t_{m}\prod_{k=m+1}^n(\omega^t_k)^2,\qquad\forall n\in\mathbb{Z},\:t\in\mathbb{Z}_+,
\end{equation}
where $\prod_{k=n+1}^n \omega_k^2 := 1$, and, for each $t\in\mathbb{Z}_+$, $U^t_n$ converges to $(1-\delta)^{-1}$ as $n\rightarrow-\infty$.
\end{thm}

\subsection{Ultra-discretization}\label{udsec}

As shown on Figure \ref{dis}, it is known that \eqref{UDKDV} and \eqref{UDTODA} can be obtained from \eqref{DKDV} and \eqref{DTODA}, respectively, by a procedure known as ultra-discretization, see \cite{TTMS} and \cite[Sections 3.1 and 3.2]{TTeng}. Roughly speaking, if one has a sequence of solutions to the discrete systems parameterized by $\varepsilon>0$, then one can hope to find solutions to the ultra-discrete systems by considering the limits of $\varepsilon \log x(\varepsilon)$ as $\varepsilon\downarrow0$, where $x(\varepsilon)$ represents one of the variables in the relevant discrete system. This procedure was discussed in the generality of the present article in Remark \ref{udreme} and, in the following two results, we make precise how it applies to the solutions that we have presented for the four discrete integrable systems of principal interest here.

\begin{thm}\label{rrr} Suppose that for each $\varepsilon\in(0,1)$, $(\omega(\epsilon)_n^t,U(\epsilon)_n^t)_{n,t\in\mathbb{Z}}$ is a solution to \eqref{DKDV} with model parameter $\delta(\varepsilon)\in(0,\infty)$, and the following limits exist: for $n,t\in\mathbb{Z}$,
\begin{align*}
\eta^t_n := \lim_{\epsilon \downarrow 0} \epsilon \log \omega(\epsilon)_n^t, \qquad U^t_n := \lim_{\epsilon \downarrow 0} \epsilon \log U(\epsilon)_n^t, \qquad
L:= \lim_{\epsilon \downarrow 0}-\epsilon \log \delta(\varepsilon).
\end{align*}
It is then the case that $(\eta^t_n,U^t_n)_{n,t\in\mathbb{Z}}$ satisfies \eqref{UDKDV} with parameter $L$.
\end{thm}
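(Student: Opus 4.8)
The plan is to apply the rescaled logarithm $x\mapsto\varepsilon\log x$ to both lines of \eqref{DKDV} and then let $\varepsilon\downarrow0$, using the elementary ultra-discretization limit that for strictly positive $a(\varepsilon),b(\varepsilon)$ for which the relevant limits exist,
\[
\lim_{\varepsilon\downarrow0}\varepsilon\log\bigl(a(\varepsilon)+b(\varepsilon)\bigr)=\max\left\{\lim_{\varepsilon\downarrow0}\varepsilon\log a(\varepsilon),\ \lim_{\varepsilon\downarrow0}\varepsilon\log b(\varepsilon)\right\}.
\]
I would record this as a one-line preliminary observation (or cite the standard fact, cf.\ \cite{TTMS} and \cite[Sections 3.1 and 3.2]{TTeng}): it follows from the sandwich $\max\{a,b\}\le a+b\le2\max\{a,b\}$ together with $\varepsilon\log2\to0$ and $\varepsilon\log\max\{a,b\}=\max\{\varepsilon\log a,\varepsilon\log b\}$. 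All quantities that appear are strictly positive because $(\omega(\varepsilon)_n^t,U(\varepsilon)_n^t)_{n,t}$ solves \eqref{DKDV}, whose variables take values in $(0,\infty)$, so the rescaled logarithm is always well-defined; and the hypotheses guarantee convergence of $\varepsilon\log\omega(\varepsilon)_n^t$, $\varepsilon\log U(\varepsilon)_n^t$ and $\varepsilon\log\delta(\varepsilon)$, hence also of sums, differences and inverses of these.

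For the carrier update, I would first take $\varepsilon\log$ of the second line of \eqref{DKDV},
\[
\varepsilon\log U(\varepsilon)_n^t=\varepsilon\log U(\varepsilon)_{n-1}^t+\varepsilon\log\omega(\varepsilon)_n^t-\varepsilon\log\omega(\varepsilon)_n^{t+1},
\]
and let $\varepsilon\downarrow0$; since all four limits exist by assumption, this yields $U_n^t=U_{n-1}^t+\eta_n^t-\eta_n^{t+1}$, i.e.\ the second line of \eqref{UDKDV}. For the configuration update, take $\varepsilon\log$ of the first line of \eqref{DKDV},
\[
\varepsilon\log\omega(\varepsilon)_n^{t+1}=-\varepsilon\log\left(\delta(\varepsilon)\,\omega(\varepsilon)_n^t+\bigl(U(\varepsilon)_{n-1}^t\bigr)^{-1}\right).
\]
Here $\varepsilon\log\bigl(\delta(\varepsilon)\omega(\varepsilon)_n^t\bigr)\to-L+\eta_n^t$ and $\varepsilon\log\bigl(U(\varepsilon)_{n-1}^t\bigr)^{-1}\to-U_{n-1}^t$, so the limit identity above gives $\varepsilon\log\bigl(\delta(\varepsilon)\omega(\varepsilon)_n^t+(U(\varepsilon)_{n-1}^t)^{-1}\bigr)\to\max\{\eta_n^t-L,\,-U_{n-1}^t\}$, and therefore
\[
\eta_n^{t+1}=-\max\{\eta_n^t-L,\,-U_{n-1}^t\}=\min\{L-\eta_n^t,\,U_{n-1}^t\},
\]
which is the first line of \eqref{UDKDV}. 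Since $n,t\in\mathbb{Z}$ were arbitrary, $(\eta_n^t,U_n^t)_{n,t\in\mathbb{Z}}$ solves \eqref{UDKDV} with parameter $L$.

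There is no substantial obstacle here; the argument is a direct termwise passage to the limit. The only point requiring any care is that the ultra-discretization limit genuinely applies at each invocation, which amounts to checking positivity of the $\varepsilon$-dependent arguments (automatic from \eqref{DKDV}) and convergence of their rescaled logarithms (exactly the standing hypotheses, inherited for the products and inverses that occur). I would also note that no uniformity in $\varepsilon$ is needed, since the whole computation is performed pointwise in $(n,t)$.
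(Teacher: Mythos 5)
Your proposal is correct and follows essentially the same route as the paper: the paper's proof invokes the identity $\lim_{\varepsilon\downarrow0}\varepsilon\log\sum_i e^{\varepsilon^{-1}x_i}=\sup_i x_i$ and applies it componentwise to the local map $F_{dK}^{(\delta)}$, which is exactly your termwise $\varepsilon\log$ passage through the two lines of \eqref{DKDV} with the two-term $\max$ limit. The only cosmetic difference is that you justify the elementary limit by the sandwich $\max\{a,b\}\le a+b\le 2\max\{a,b\}$ rather than citing it, which is fine.
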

\begin{proof} For any finite or countable sequence $(x_i)$ such that $\sum_{i}e^{x_i}$ converges, we have
\begin{equation}\label{logsumexp}
\lim_{\varepsilon\downarrow 0}\varepsilon\log\sum_{i}e^{\varepsilon^{-1}x_i}=\sup_{i}x_i.
\end{equation}
From this, we readily obtain that if $a=\lim_{\epsilon \downarrow 0}\epsilon \log a(\epsilon)$, $b=\lim_{\epsilon \downarrow 0}\epsilon \log b(\epsilon)$ and also $L=\lim_{\epsilon \downarrow 0}-\epsilon \log \delta(\epsilon)$, then
\[\lim_{\epsilon \downarrow 0}\epsilon \log \left(F_{dK}^{(\delta(\varepsilon))}\right)^{(i)}(a(\epsilon),b(\epsilon)) =\left(F_{udK}^{(L)}\right)^{(i)}(a,b)\]
for $i=1,2$, where we again write $F_{udK}^{(L)}(a,b)=((F_{udK}^{(L)})^{(1)}(a,b),(F_{udK}^{(L)})^{(2)}(a,b))$, and use similar notation for the components of $F_{dK}^{(\delta)}$. The result follows.
\end{proof}

\begin{rem}\label{r37} The preceding conclusion gives an ultra-discretization result for path encodings. In particular, suppose $S(\varepsilon)^t$ is the \eqref{DKDV} path encoding for $\omega(\varepsilon)^t$, as given by \eqref{2pe}, then
\begin{equation}\label{der}
S^t_n:=\lim_{\varepsilon\downarrow0}\varepsilon S(\varepsilon)^t_n,\qquad \forall n\in\mathbb{Z},
\end{equation}
gives the \eqref{UDKDV} path encoding of $\eta^t$, as defined at \eqref{1pe}. Moreover, it is an elementary exercise to relate the Pitman-type transformations of the discrete and ultra-discrete systems. Indeed, for any $S\in\mathcal{S}$ such that $\sum_{m\leq n}\exp(\frac{S_m+S_{m-1}}{2})$ converges, which includes any $S\in\mathcal{S}^{lin}$, we have from \eqref{logsumexp} that
\[T^\vee(S)_n=\lim_{\varepsilon\downarrow0}\varepsilon T^{\sum}\left(\varepsilon^{-1}S\right)_n,\qquad \forall n\in\mathbb{Z}.\]
\end{rem}

\begin{thm}\label{t319} Suppose that for each $\varepsilon\in(0,1)$, $(I(\epsilon)_n^t,J(\epsilon)_n^t,U(\epsilon)_n^t)_{n,t\in\mathbb{Z}}$ is a solution \eqref{DTODA} and the following limits exist: for $n,t\in\mathbb{Z}$,
\begin{align*}
Q^t_n := \lim_{\epsilon \downarrow 0} -\epsilon \log I(\epsilon)_n^t, \qquad
E^t_n:= \lim_{\epsilon \downarrow 0}-\epsilon \log J(\epsilon)_n^t, \qquad
U^t_n:= \lim_{\epsilon \downarrow 0}-\epsilon \log U(\epsilon)_n^t.
\end{align*}
Then, $(Q^t,E^t,U^t)$ satisfies \eqref{UDTODA}.
\end{thm}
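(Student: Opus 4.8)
The plan is to follow the proof of Theorem~\ref{rrr} essentially verbatim, replacing $\varepsilon\log$ by $-\varepsilon\log$ throughout, and working with the local map $F_{dT}$ recorded in Table~\ref{lmtable}. The only analytic input needed is the `min-plus' form of \eqref{logsumexp}, obtained by the substitution $x_i\mapsto -x_i$: for any finite sequence $(x_i)$,
\[
\lim_{\varepsilon\downarrow0}\left(-\varepsilon\log\sum_i e^{-\varepsilon^{-1}x_i}\right)=\min_i x_i .
\]

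The key step is the componentwise convergence of the rescaled local dynamics. Precisely, if $a(\varepsilon),b(\varepsilon),c(\varepsilon)\in(0,\infty)$ satisfy $a=\lim_{\varepsilon\downarrow0}(-\varepsilon\log a(\varepsilon))$, $b=\lim_{\varepsilon\downarrow0}(-\varepsilon\log b(\varepsilon))$ and $c=\lim_{\varepsilon\downarrow0}(-\varepsilon\log c(\varepsilon))$, then one wants
\[
\lim_{\varepsilon\downarrow0}\left(-\varepsilon\log \left(F_{dT}\right)^{(i)}(a(\varepsilon),b(\varepsilon),c(\varepsilon))\right)=\left(F_{udT}\right)^{(i)}(a,b,c),\qquad i=1,2,3,
\]
with the obvious notation for the components of the two maps. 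For $i=1$ this reads $-\varepsilon\log(b(\varepsilon)+c(\varepsilon))\to\min\{b,c\}$, which follows from the squeeze $e^{-\varepsilon^{-1}m(\varepsilon)}\le b(\varepsilon)+c(\varepsilon)\le 2e^{-\varepsilon^{-1}m(\varepsilon)}$ with $m(\varepsilon):=\min\{-\varepsilon\log b(\varepsilon),-\varepsilon\log c(\varepsilon)\}\to\min\{b,c\}$ (note this works even though the exponents depend on $\varepsilon$). For $i=2,3$ we have $(F_{dT})^{(2)}(a,b,c)=ab/(b+c)$ and $(F_{dT})^{(3)}(a,b,c)=ac/(b+c)$, so applying $-\varepsilon\log$ turns the products and quotients into sums and differences of the quantities handled in the $i=1$ case; the limits are $a+b-\min\{b,c\}$ and $a+c-\min\{b,c\}$, which are exactly $(F_{udT})^{(2)}(a,b,c)$ and $(F_{udT})^{(3)}(a,b,c)$ by Table~\ref{lmtable}.

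To conclude, recall from Subsection~\ref{lmsec} that \eqref{DTODA} is equivalent to $(I_n^{t+1},J_n^{t+1},U_{n+1}^t)=F_{dT}(I_{n+1}^t,J_n^t,U_n^t)$ and \eqref{UDTODA} to $(Q_n^{t+1},E_n^{t+1},U_{n+1}^t)=F_{udT}(Q_{n+1}^t,E_n^t,U_n^t)$. Applying the displayed componentwise convergence with $a(\varepsilon)=I(\varepsilon)_{n+1}^t$, $b(\varepsilon)=J(\varepsilon)_n^t$, $c(\varepsilon)=U(\varepsilon)_n^t$, whose $-\varepsilon\log$ limits are $Q_{n+1}^t$, $E_n^t$, $U_n^t$ by hypothesis, then immediately gives the three equations of \eqref{UDTODA} for every $n,t\in\mathbb{Z}$.

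I do not expect a genuine obstacle. As in the proof of Theorem~\ref{rrr}, the only delicate point is the $\varepsilon$-dependence of the exponents in the log-sum-exp limit, which is dispatched by the elementary squeeze above; and since only the first component of $F_{dT}$ involves an honest sum while the other two are purely multiplicative, the entire ultra-discretization in fact reduces to that single instance. As a by-product, one obtains (in the spirit of Remark~\ref{r37}) that the rescaled path encodings $\varepsilon S_{dT}(I(\varepsilon)^t,J(\varepsilon)^t)$ converge, via \eqref{4pe} and \eqref{3pe}, to the \eqref{UDTODA} path encoding of $(Q^t,E^t)$, with $T^{\vee^*}$ and $\mathcal{T}^{\vee^*}$ arising as the corresponding limits of $T^{\sum^*}$ and $\mathcal{T}^{\sum^*}$.
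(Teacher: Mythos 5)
Your proposal is correct and follows essentially the same route as the paper: the paper's proof of Theorem \ref{t319} likewise reduces the claim to the componentwise ultra-discretization limit $\lim_{\epsilon\downarrow0}-\epsilon\log F_{dT}^{(i)}(a(\epsilon),b(\epsilon),c(\epsilon))=F_{udT}^{(i)}(a,b,c)$ for $i=1,2,3$, exactly as in the proof of Theorem \ref{rrr}. Your additional details (the squeeze for the single additive component and the observation that the other two components are purely multiplicative) are a correct elaboration of what the paper leaves implicit.
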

\begin{proof} The proof is essentially the same as that of Theorem \ref{rrr}, namely we simply observe that if $a=\lim_{\epsilon \downarrow 0}-\epsilon \log a(\epsilon)$, $b=\lim_{\epsilon \downarrow 0}-\epsilon \log b(\epsilon)$ and $c=\lim_{\epsilon \downarrow 0}-\epsilon \log c(\epsilon)$, then
\[\lim_{\epsilon \downarrow 0}-\epsilon \log F_{dT}^{(i)}(a(\epsilon),b(\epsilon),c(\epsilon)) =F_{udT}^{(i)}(a,b,c)\]
for $i=1,2,3$, where we use the superscript $(i)$ to denote the index of the component of the relevant map.
\end{proof}

\begin{rem}\label{r322} Similarly to Remark \ref{r37}, we have a corresponding ultra-discretization result for path encodings. In particular, suppose $S(\varepsilon)^t$ is the \eqref{DTODA} path encoding for $(I(\varepsilon)^t,J(\varepsilon)^t)$, as given by \eqref{4pe}, then \eqref{der} gives the \eqref{UDTODA} path encoding of $(Q^t,E^t)$, as defined at \eqref{3pe}. Moreover, for any $S\in\mathcal{S}$ such that $\sum_{m\leq \frac{n-1}{2}}\exp(S_{2m})$ converges, which includes any $S\in\mathcal{S}^{lin}$, we have from \eqref{logsumexp} that
\[T^{\vee^*}(S)_n=\lim_{\varepsilon\downarrow0}\varepsilon T^{\sum^*}\left(\varepsilon^{-1}S\right)_n,\qquad \forall n\in\mathbb{Z}.\]
\end{rem}

\subsection{Proofs of modifications of main results}\label{pmodsec}

In this subsection, we prove Theorems \ref{fthm}, \ref{ttt1}, \ref{ttt2}, \ref{specialthm}, \ref{perthm}, \ref{finthm}, \ref{yoyoyo} and Corollary \ref{bcor}.

\begin{proof}[Proof of Theorem \ref{fthm}] On replacing the claims of Lemmas \ref{ll1}(d), \ref{ll2}(d), \ref{ll3}(d) and \ref{ll4}(d) with the simple observations that
\[\mathcal{A}^{udK}\left(\mathcal{C}_{udK,-}^{(L)}\right)=\mathcal{A}^{dK}\left(\mathcal{C}_{dK,-}^{(\delta)}\right)=\mathcal{A}^{udT}\left(\mathcal{C}_{udT,-}\right)=\mathcal{A}^{dT}\left(\mathcal{C}_{dT,-}\right)=\mathcal{X}_-^{lin},\]
Theorem \ref{fthm} follows by applying exactly the same argument as was used to prove Theorems \ref{udkdvthm}, \ref{dkdvthm}, \ref{udtodathm} and \ref{dtodathm}.
\end{proof}

\begin{proof}[Proof of Corollary \ref{bcor}] Given the involutive property of the locally-defined dynamics for the four systems, and the fact that
\[\mathcal{A}^{udK}\left(\mathcal{C}_{udK,+}^{(L)}\right)=\mathcal{A}^{dK}\left(\mathcal{C}_{dK,+}^{(\delta)}\right)=\mathcal{A}^{udT}\left(\mathcal{C}_{udT,+}\right)=\mathcal{A}^{dT}\left(\mathcal{C}_{dT,+}\right)=\mathcal{X}_+^{lin}=R^\mathcal{X}\left(\mathcal{X}_-^{lin}\right),\]
one can adopt the same approach as in the forward case to deduce the result, but using Corollary \ref{bivpun} in place of Theorem \ref{thm:uni-ex-f-gp} at the relevant point of the argument.
\end{proof}

\begin{proof}[Proof of Theorem \ref{ttt1}] Let $\omega\in\mathcal{C}_{dK}^{(\delta)}$, and $(\omega^t,U^t)_{t\in\mathbb{Z}}$ be as at \eqref{omegatdef}. Since we have from \eqref{DKDV} that $(U^t_{n-1})^{-1}=\frac{1}{\omega_{n}^{t+1}}-\delta \omega_n^t$, it must be the case that $\omega^t$ and $\omega^{t+1}$ satisfy
\[\frac{1}{\omega_{n+1}^{t+1}}-\delta \omega_{n+1}^t=\left(\frac{1}{\omega_{n}^{t+1}}-\delta \omega_n^t\right)\frac{\omega_n^{t+1}}{\omega_n^t}=\frac{1}{\omega^t_n}-\delta \omega^{t+1}_n,\qquad\forall n\in\mathbb{Z},\]
or equivalently
\[\frac{1}{\omega^{t+1}_{n+1}}-\frac{1}{\omega^t_n}=\delta\left(\omega^t_{n+1}-\omega^{t+1}_n\right),\qquad\forall n\in\mathbb{Z},\]
and thus we have established part (a).

For part (b), suppose that a pair $\omega, \tilde{\omega}\in(0,\infty)^\mathbb{Z}$ satisfies
\[\frac{1}{\tilde{\omega}_{n+1}}-\frac{1}{\omega_n}=\delta(\omega_{n+1}-\tilde{\omega}_n),\qquad\forall n\in\mathbb{Z}.\]
It is an elementary exercise to check that it must consequently hold that either $\frac{1}{\tilde{\omega}_{n}}-\delta\omega_n > 0$ for all $n\in\mathbb{Z}$, or $\frac{1}{\tilde{\omega}_{n}}-\delta\omega_n = 0$ for all $n\in\mathbb{Z}$, or $\frac{1}{\tilde{\omega}_{n}}-\delta\omega_n < 0$ for all $n\in\mathbb{Z}$. If the second or third case occur, then $\delta \omega_n \tilde{\omega}_n \ge 1$ for all $n\in\mathbb{Z}$. Hence $\log \delta +\frac{1}{n}\sum_{m=1}^n \log \omega_m + \frac{1}{n}\sum_{m=1}^n  \log \tilde{\omega}_m \ge 0$ for all $n\in\mathbb{Z}$, and so either $\omega \notin \mathcal{C}_{dK}^{(\delta)}$ or $\tilde{\omega}\notin \mathcal{C}_{dK}^{(\delta)}$. Therefore, if $\omega^t$ is a solution to \eqref{DKomega1} satisfying $\omega^t \in \mathcal{C}_{dK}^{(\delta)}$ for all $t\in\mathbb{Z}$, then $\frac{1}{\omega^{t+1}_{n}}-\delta\omega^t_n > 0$ for all $n,t\in\mathbb{Z}$. By defining $U^t_{n-1}:=(\frac{1}{\omega^{t+1}_{n}}-\delta\omega^t_n)^{-1}$, we obtain a solution to \eqref{DKDV}, and thus the uniqueness of the solution to \eqref{DKomega1} follows from Theorem \ref{dkdvthm}.

Let $\hat{\omega}\in(0,\infty)^\mathbb{Z}$ satisfy \eqref{ahjk}, and define ${\omega}\in(0,\infty)^\mathbb{Z}$ by setting $\omega_n:=\frac{1}{\delta\hat{\omega}_n}$. It is then readily checked that $\omega\in\mathcal{C}_{dK}^{\delta}$, and so we can apply part (b) to deduce the uniqueness of the solution $(\omega^t)_{t\in\mathbb{Z}}$ to \eqref{DKomega1} with $\omega^t\in \mathcal{C}_{dK}^{\delta}$ for all $t\in\mathbb{Z}$. Setting $\hat{\omega}_n^t:=\frac{1}{\delta\omega_n^{-t}}$, we find a solution to \eqref{DKomega1} with initial condition $\hat{\omega}$ and such that $\hat{\omega}^t$ satisfies \eqref{ahjk} for all $t\in\mathbb{Z}$. Moreover, the uniqueness also follows, since if we have another solution $(\hat{\hat{\omega}}^t)_{t\in\mathbb{Z}}$ for which $\hat{\hat{\omega}}^t$ satisfies \eqref{ahjk} for all $t\in\mathbb{Z}$, then defining $(\tilde{\omega}^t)_{t\in\mathbb{Z}}$ via the relation $\hat{\hat{\omega}}_n^t=\frac{1}{\delta\tilde{\omega}_n^{-t}}$ implies that $(\tilde{\omega}^t)_{t\in\mathbb{Z}}$ solves \eqref{DKDV} and $\tilde{\omega}^t\in\mathcal{C}_{dK}^{(\delta)}$ for all $t\in\mathbb{Z}$. Hence $(\tilde{\omega}^t)_{t\in\mathbb{Z}}$ must be equal to $({\omega}^t)_{t\in\mathbb{Z}}$, and it follows that $(\hat{\hat{\omega}}^t)_{t\in\mathbb{Z}}$ is actually equal to $({\hat{\omega}}^t)_{t\in\mathbb{Z}}$. This completes the proof of part (c), and also yields part (d).
\end{proof}

\begin{proof}[Proof of Theorem \ref{ttt2}] Let $(I,J)\in\mathcal{C}_{dT}$, and $(I^t,J^t,U^t)$ the corresponding solution to \eqref{DTODA}, as given by Theorem \ref{dtodathm}. The first and third conditions of \eqref{DToda2} are then automatically satisfied. Moreover,
\[I_n^{t+1}=J_n^t+U_n^t=J_n^t+\frac{I_{n}^tU_{n-1}^t}{I_{n-1}^{t+1}}=J_n^t+\frac{I_{n}^t\left(I^{t+1}_{n-1}-J_{n-1}^t\right)}{I_{n-1}^{t+1}}=I_n^t+J_n^t-J_{n-1}^{t+1},\]
which is the second condition of \eqref{DToda2}, which confirms part (a).

For part (b), suppose a pair $(I,J), (\tilde{I},\tilde{J})\in((0,\infty)^2)^\mathbb{Z}$ satisfies
\begin{align*}
\begin{cases}
\tilde{I}_n &=I_n+J_n-\tilde{J}_{n-1}, \\
\tilde{J}_n &=\frac{I_{n+1}J_n}{\tilde{I}_n},
\end{cases}
\end{align*}
for all $n\in\mathbb{Z}$. It is then the case that
\[\tilde{I}_n - J_n= I_n-\tilde{J}_{n-1}=\frac{I_n}{\tilde{I}_{n-1}}\left(\tilde{I}_{n-1} - \frac{\tilde{I}_{n-1}}{I_n}\tilde{J}_{n-1}\right)= \frac{I_n}{\tilde{I}_{n-1}}\left(\tilde{I}_{n-1} - J_{n-1}\right)\]
for all $n\in\mathbb{Z}$, and so either $\tilde{I}_n - J_n = I_n-\tilde{J}_{n-1} > 0$ for all $n\in\mathbb{Z}$, or $\tilde{I}_n - J_n = I_n-\tilde{J}_{n-1}= 0$ for all $n\in\mathbb{Z}$, or $\tilde{I}_n - J_n = I_n-\tilde{J}_{n-1}< 0$ for all $n\in\mathbb{Z}$. If the second or third case occurs, then $I_n\tilde{I}_n\le J_n\tilde{J}_{n-1}$ for all $n\in\mathbb{Z}$. Hence
\[\frac{1}{n}\sum_{m=1}^n \log I_m + \frac{1}{n}\sum_{m=1}^n  \log \tilde{I}_m \le \frac{1}{n}\sum_{m=1}^n\log J_m + \frac{1}{n}\sum_{m=1}^n  \log \tilde{J}_{m-1}\]
for all $n\in\mathbb{Z}$, and so either $(I,J) \notin \mathcal{C}_{dT}$ or $(\tilde{I},\tilde{J}) \notin \mathcal{C}_{dT}$. Therefore, if $(I^t,J^t)$ is a solution to \eqref{DToda2} satisfying $(I^t,J^t) \in \mathcal{C}_{dT}$ for all $t\in\mathbb{Z}$, then, ${I}^{t+1}_n - J_n = I_n-{J}^{t+1}_{n-1} > 0$ for all $n,t\in\mathbb{Z}$. By defining $U^t_n:=I^{t+1}_n - J^t_n$, we obtain a solution to \eqref{DTODA}, and thus the uniqueness of the solution to \eqref{DToda2} follows from Theorem \ref{dtodathm}.

Finally, applying the given change of variables, parts (c) and (d) are straightforward to check (cf.\ the proof of Theorem \ref{ttt1}(c,d)).
\end{proof}

\begin{proof}[Proof of Theorem \ref{specialthm}] Given the self-reverse property of the \eqref{UDKDV} and \eqref{UDTODA} systems, in each case it will suffice to check that if the initial configuration is in one of the described subsets, then so is the configuration after one step of the dynamics (as per Theorem \ref{udkdvthm} or \ref{udtodathm}), and also the associated carrier is in the subset indicated in the statement of the result.

First, let $\eta^0=\eta\in[L_0,L-L_0]^\mathbb{Z}\cap\mathcal{C}_{udK}^{(L)}$ for some $L_0<L/2$. Writing $S=S_{udK}^{(L)}(\eta^0)$, we then have that
\[U^0_n= M^{\vee}(S)_n- S_n +\frac{L}{2}\geq\frac{S_n+S_{n-1}}{2}-S_n+\frac{L}{2}=\eta^0_n\geq L_0,\qquad \forall n\in\mathbb{Z}.\]
Moreover, it follows that
\[L_0\leq \eta^1_n=\min \left\{L-\eta^0_n, U_n^0\right\}\leq L-L_0,\qquad \forall n\in\mathbb{Z},\]
i.e.\ $\eta^1\in[L_0,L-L_0]^\mathbb{Z}\cap\mathcal{C}_{udK}^{(L)}$, which completes the proof of (a)(i). Second, let $\eta^0=\eta\in(\alpha\mathbb{Z})^\mathbb{Z}\cap\mathcal{C}_{udK}^{(L)}$ for some $\alpha>0$ with $L\in\alpha\mathbb{Z}$. Again writing $S=S_{udK}^{(L)}(\eta^0)$, we then have that $L-2\eta^0_n\in \alpha\mathbb{Z}$ for all $n\in\mathbb{Z}$, and it follows that
\begin{eqnarray*}
U^0_n&=& M^{\vee}(S)_n- S_n +\frac{L}{2}\\
&=&\sup_{m\leq n}\left(S_{m-1}+\frac{L-2\eta^0_m}{2}\right)-S_n+\frac{L}{2}\\
&=&\sup_{m\leq n}\left(S_{m-1}-S_n+L-\eta^0_m\right)\\
&=&\sup_{m\leq n}\left(-\sum_{k=m}^n(L-2\eta^0_k)+L-\eta_m\right)\\
\end{eqnarray*}
is an element of $\alpha\mathbb{Z}$ for all $n\in\mathbb{Z}$. From this, we readily obtain that $\eta^1_n=\min \left\{L-\eta^0_n, U_n^0\right\}\in\alpha\mathbb{Z}$ for all $n\in\mathbb{Z}$, as desired for (a)(ii). Given (a)(i) and (a)(ii), we can simply take the intersection of the sets (with $L_0=0$ and $\alpha=1$) to obtain (a)(iii).

For part (b), first let $(Q^0,E^0)=(Q,E)\in((0,\infty)^2)^\mathbb{Z}$. Then, from \eqref{UDTODA}, we have that
\[E^1_{n}=Q_{n+1}^0+E^0_n-Q_{n}^1\geq Q_{n+1}^0+\min\{U_n^0,E^0_n\}-Q_{n}^1=Q_{n+1}^0>0,\]
and also
\[U^0_{n+1}=U^0_n+Q_{n+1}^0-Q_{n}^1\geq\min\{U_n^0,E^0_n\}+ Q_{n+1}^0-Q_{n}^1=Q_{n+1}^0>0,\]
for all $n\in\mathbb{Z}$. Hence, we also find that $Q_n^1=\min\{U_n^0,E^0_n\}\geq \min\{Q_n^0,E^0_n\}>0$, which establishes (b)(i). Next, let $(Q^0,E^0)=(Q,E)\in((\alpha\mathbb{Z})^2)^\mathbb{Z}$, and $S$ be the associated path encoding. We then have that
\[U_{n}^0=M^{\vee^*}(S)_{2n-1}-S_{2n-1}=\sup_{m\leq n-1}S_{2m}\in\alpha\mathbb{Z},\qquad\forall n\in\mathbb{Z}.\]
From this, it is readily seen that also $(Q^1,E^1)=(Q,E)\in((\alpha\mathbb{Z})^2)^\mathbb{Z}$, as required for (b)(ii). Also, similarly to the previous paragraph, given (b)(i) and (b)(ii), we can simply take the intersection of the sets (with $\alpha=1$) to obtain (b)(iii).
\end{proof}

\begin{rem}
By considering appropriate choices of $\tilde{\mathcal{X}}$ and $\tilde{\mathcal{U}}$ in Subsection \ref{s43}, Theorem \ref{specialthm} can also be proved directly from the results of Section \ref{general}.
\end{rem}

\begin{proof}[Proof of Theorem \ref{perthm}] The result readily follows from the uniqueness of solutions to the relevant equations, as given by Theorems \ref{udkdvthm}, \ref{dkdvthm}, \ref{udtodathm} and \ref{dtodathm}.
\end{proof}

\begin{proof}[Proof of Theorem \ref{finthm}] Let $\eta\in\mathcal{C}_{udK,fin(a)}^{(L)}$, where $L-2a>0$, and write $S=S_{udK}^{(L)}(\eta)$. By definition, there exists an $N_0\in\mathbb{N}$ such that $S_n-S_{n-1}=L-2a$ whenever $|n|\geq N_0$. It follows that there exists an $N_1\in\mathbb{N}$ such that $M^{\vee}(S)_n=S_n-\frac{L-2a}{2}$ whenever $|n|\geq N_1$, and consequently $T^{\vee}(S)_n=S_n-(L-2a)$ whenever $|n|\geq N_1$. Hence if $\eta^1$ is the configuration after one time step of the \eqref{UDKDV} dynamics started from initial condition $\eta$, as given by Theorem \ref{udkdvthm}, then $\eta^1\in \mathcal{C}_{udK,fin(a)}^{(L)}$. Iterating this argument and applying the self-reverse property of the system yields moreover that $\eta^t\in \mathcal{C}_{udK,fin(a)}^{(L)}$ for all $t\in\mathbb{Z}$, which confirms that $\mathcal{C}_{udK,fin(a)}^{(L)}$ is closed for \eqref{UDKDV}. Moreover, the argument implies that: for each $t\in\mathbb{Z}$,
\[U_n^t=\frac{S_n^{t+1}-S_n^t+L}{2}=\frac{-(L-2a)+L}{2}=a\]
eventually as $n\rightarrow\pm\infty$, where $S^t,S^{t+1}\in\mathcal{S}$ are defined as in the statement of Theorem \ref{udkdvthm}.
\end{proof}

\begin{proof}[Proof of Theorem \ref{yoyoyo}] If $\omega\in \mathcal{C}_{dK,conv}$, then
\[\lim_{n\rightarrow-\infty}\frac{\sum_{m=1}^n\log \omega_m}{n}=0<\frac{-\log\delta}{2},\]
and so $\mathcal{C}_{dK,conv}\subseteq\mathcal{C}_{dK,-}^{(\delta)}$. Suppose next that the \eqref{DKDV} dynamics (as per Theorem \ref{fthm}) are started from $\omega$. The carrier $(U_n)_{n\in\mathbb{Z}}$ associated with $\omega$ then satisfies
\[\log U_n+\frac{\log\delta}{2}=\log\sum_{m=-\infty}^n\exp\left(\frac{S_m-S_{m-1}}{2}\right)-S_n,\]
where $S$ is the path encoding of $\omega$, as given by \eqref{2pe}. We rewrite this as follows:
\begin{eqnarray*}
\log U_n+\frac{\log\delta}{2}&=&\log\sum_{m=-\infty}^n\exp\left(\frac{S_m-S_{m-1}-2S_n}{2}\right)\\
&=&\log\sum_{m=-\infty}^n\exp\left(\sum_{k=m+1}^n(\log \delta+2\log\omega_k)+\frac{\log\delta+2\log\omega_m}{2}\right)\\
&=&\log\sum_{m=-\infty}^n\delta^{n-m+1/2}\omega_m\prod_{k=m+1}^n\omega_k^2,
\end{eqnarray*}
which yields the expression at \eqref{untexp}. Moreover, we have that
\[S_{n}=-n\log\delta+2\sum_{m={n+1}}^0\log\omega_m=-n\log\delta+C+\varepsilon_n,\]
where $C\in(0,\infty)$ is a constant, and $\varepsilon_n\rightarrow0$ as $n\rightarrow-\infty$. Hence
\[U_n=\delta^{-1/2}\sum_{m=-\infty}^n\exp\left(\frac{S_m-S_{m-1}-2S_n}{2}\right)=\sum_{m=-\infty}^n\delta^{n-m+\frac{\varepsilon_{m}+\varepsilon_{m-1}-2\varepsilon_n}{2}}\rightarrow\frac{1}{1-\delta},\]
as $n\rightarrow-\infty$. Finally, if $\omega^1$ is the configuration after one time step, then \eqref{DKDV} implies $\omega^1_n=U_{n-1}\omega_n/U_n$, which together with the above limiting result yields
\[\prod_{m=-\infty}^0\omega_m^1=(1-\delta)U_{-1}\prod_{m=-\infty}^0\omega_m\in(0,\infty).\]
Thus $\omega^1\in\mathcal{C}_{dK,conv}$, and we can proceed by iteration to complete the proof of the theorem.
\end{proof}

\section*{Acknowledgments}

The research was supported by JSPS Grant-in-Aid for Scientific Research (B), 19H01792. The research of DC was also supported by JSPS Grant-in-Aid for Scientific Research (C), 19K03540, and by the Research Institute for Mathematical Sciences, an International Joint Usage/Research Center located in Kyoto University. This work was completed while MS was being kindly hosted by the Courant Institute, New York University.

\section*{Conflicts of interest}

The sources of financial support received by the authors for this research are listed in the acknowledgments section above.

\bibliographystyle{amsplain}
\bibliography{kdv01}

\end{document}